\DeclareMathOperator*{\E}{\mathbb{E}}
\let\Pr\relax
\DeclareMathOperator*{\Pr}{\mathbb{P}}
\DeclareMathOperator{\tr}{tr}
\DeclareMathOperator{\rank}{rank}
\DeclareMathOperator{\nnz}{nnz}
\DeclareMathOperator{\grad}{\nabla}
\DeclareMathOperator*{\argmin}{arg\,min}
\newcommand{\Sum}{\mathcal{S}}
\newcommand{\R}{\mathbb{R}}
\newcommand{\algA}{\mathcal{A}}
\newcommand{\algB}{\mathcal{B}}
\newcommand{\algW}{\mathcal{W}}
\newcommand{\poly}{\mathop\mathrm{poly}}
\newcommand{\eqdef}{\mathbin{\stackrel{\rm def}{=}}}
\newcommand{\norm}[1]{\left \|#1 \right \|}
\newcommand{\abs}[1]{\left|#1\right|}
\newcommand{\bv}[1]{\mathbf{#1}}
\newcommand{\indicVec}{\bv{1}}
\newcommand{\onesVec}{\bv{1}}
\newcommand{\zeroVec}{\bv{0}}
\newif\iftodo
\newcommand{\todo}[1]{\textcolor{blue}{TODO: #1}}
\newcommand{\sidford}[1]{{\bf \color{green} Aaron : #1}}
\newcommand{\todo}[1]{}
\newcommand{\sidford}[1]{}
\newtheorem*{rep@theorem}{\rep@title}
\newcommand{\newreptheorem}[2]{%
\newenvironment{rep#1}[1]{%
 \def\rep@title{#2 \ref{##1}}%
 \begin{rep@theorem}}%
 {\end{rep@theorem}}}
\newtheorem{theorem}{Theorem}
\newtheorem{corollary}[theorem]{Corollary}
\newtheorem{lemma}[theorem]{Lemma}
\newtheorem*{lemma*}{Lemma}
\newtheorem{fact}[theorem]{Fact}
\newtheorem{claim}[theorem]{Claim}
\newtheorem{definition}[theorem]{Definition}
\title{Spectrum Approximation Beyond Fast Matrix Multiplication: \\ Algorithms and Hardness}
\author{
 Cameron Musco \\
  MIT \\
  \texttt{cnmusco@mit.edu}
\and
Praneeth Netrapalli\\
Microsoft Research\\
\texttt{praneeth@microsoft.com}
\and
Aaron Sidford\\
Stanford University\\
 \texttt{sidford@stanford.edu}
\and
Shashanka Ubaru\\
University of Minnesota\\
\texttt{ubaru001@umn.edu}
\and
David P. Woodruff\\
Carnegie Mellon University \\
\texttt{dwoodruf@cs.cmu.edu}
}
\date{}
\begin{document}

\begin{titlepage}
  \maketitle
\begin{abstract}
	Understanding the singular value spectrum of a matrix $\bv{A} \in \R^{n \times n}$ is a fundamental task in countless numerical computation and data analysis applications.
	In matrix multiplication time, it is possible to perform a full SVD of $\bv{A}$ and directly compute the singular values $\sigma_1,\ldots,\sigma_n$. However, little is known about algorithms that break this runtime barrier.
	
	Using tools from stochastic trace estimation, polynomial approximation, and fast linear system solvers, we show how to efficiently isolate different ranges of $\bv{A}$'s spectrum and approximate the number of singular values in these ranges. We thus effectively compute an \emph{approximate histogram} of the spectrum, which can stand in for the true singular values in many applications.

	We use our histogram primitive to give the first algorithms for approximating a wide class of symmetric matrix norms and spectral sums \emph{faster than the best known runtime for matrix multiplication}.
	For example, we show how to obtain a $(1 + \epsilon)$ approximation to the Schatten $1$-norm (i.e. the nuclear or trace norm) in just $\tilde O((\nnz(\bv{A})n^{1/3} + n^2)\epsilon^{-3})$ time for $\bv{A}$ with uniform row sparsity or $\tilde O(n^{2.18} \epsilon^{-3})$ time for dense matrices. The runtime scales smoothly for general Schatten $p$-norms, notably becoming $\tilde O (p \nnz(\bv{A}) \epsilon^{-3})$ for any real $p \ge 2$.
	
	At the same time, we show that the complexity of spectrum approximation is inherently tied to fast matrix multiplication in the small $\epsilon$ regime. 
	We use fine-grained complexity to 
	give conditional lower bounds for spectrum approximation, showing that achieving milder $\epsilon$ dependencies in our algorithms would imply triangle detection algorithms for general graphs running in faster than state of the art matrix multiplication time. This further implies, through a reduction of \cite{williams2010subcubic}, that highly accurate spectrum approximation algorithms running in subcubic time can be used to give subcubic time matrix multiplication. As an application of our bounds,
we show that precisely computing all effective resistances in a graph
in less than matrix multiplication time is likely difficult, barring a major algorithmic breakthrough. 
\end{abstract}

\thispagestyle{empty}
\end{titlepage}

\section{Introduction}\label{sec:intro}

Given $\bv A \in \R^{n \times d}$, a central primitive in numerical computation and data 
analysis is to compute $\bv{A}$'s spectrum: the  singular values
$\sigma_1(\bv{A}) \ge \ldots \ge \sigma_{d}(\bv{A}) \ge 0$. 
These values can reveal matrix structure and low effective dimensionality,
which can be exploited in a wide range of spectral data analysis
methods~\cite{jolliffe2002principal,ubaru2016fast}. 
The singular values are also used as 
tuning parameters in many numerical algorithms performed on $\bv{A}$~\cite{golub2012matrix}, 
and in general,
to determine some of the most well-studied matrix functions~\cite{higham2008functions}. 
For example, for any $f: \R^+ \rightarrow \R^+$, we can define the \emph{spectral sum}:
\begin{align*}
\Sum_f(\bv{A}) \eqdef \sum_{i=1}^d f(\sigma_i(\bv{A})).
\end{align*}
Spectral sums often serve as snapshots of $\bv{A}$'s spectrum and are important in many applications.
They encompass, for example, the log-determinant, the trace inverse, 
the Schatten $p$-norms, including the important nuclear norm, and general
Orlicz norms (see Section \ref{sec:prior} for details).

While the problem of computing a few of the largest or smallest
singular values of $\bv{A}$ has been exhaustively studied~\cite{parlett1998symmetric,saad2011numerical},
much less is known about algorithms that approximate the full spectrum, and in particular,
allow for the computation of summary statistics such as spectral sums. In $n^\omega$ time,
it is possible to perform a full SVD and compute the singular values
exactly.\footnote{ 
Note that an exact SVD is incomputable even with exact arithmetic \cite{trefethen1997numerical}.
Nevertheless, direct methods for the SVD obtain superlinear convergence 
rates and hence are often considered to be `exact'.} Here, and throughout, $\omega \approx 2.3729$ denotes the \emph{current}
best exponent of fast matrix multiplication \cite{williams2012multiplying}. However, even if
one simply desires, for example, a constant factor approximation to the
nuclear norm $\norm{\bv{A}}_1$, no $o(n^\omega)$ time algorithm is known. 
We study the question of spectrum approximation, asking whether obtaining
an accurate picture of $\bv{A}$'s spectrum is 
truly as hard as matrix multiplication, or if it is possible to break this barrier. 
We focus on spectral sums as a motivating application. 

\subsection{Our Contributions}

\paragraph{Upper Bounds:}
On the upper bound side, we show that significant information about $\bv{A}$'s spectrum 
can be determined in $o(n^\omega)$ time, for the current value of $\omega$. 
We show how to compute a histogram of the spectrum, which gives approximate 
counts of the number of squared singular values in the ranges
$[(1-\alpha)^{t} \sigma_1^2(\bv{A}), (1-\alpha)^{t-1} \sigma_1^2(\bv{A})]$ for
some width parameter $\alpha$ and for $t$ ranging from $0$ to some maximum $T$. 
Specifically  our algorithm satisfies the following:

\begin{theorem}[Histogram Approximation -- Informal]\label{thm:histogramIntro} Given $\bv{A} \in \R^{n \times d}$, let $b_t$ be the number of squared singular values of $\bv{A}$ on the range $[(1-\alpha)^{t} \sigma_1^2(\bv{A}), (1-\alpha)^{t-1} \sigma_1^2(\bv{A})]$. Then given error parameter $\epsilon > 0$, with probability $99/100$, Algorithm \ref{algo:histogram} outputs for all $t \in \{0,...,T \}$, $\tilde b_t$ satisfying:
\begin{align*}
(1-\epsilon) b_t \le \tilde b_t \le (1+\epsilon) b_t + \epsilon (b_{t-1} + b_{t+1}).
\end{align*}
For input parameter $k \in \{1,...,d\}$, let $\bar \kappa \eqdef \frac{k\sigma_k^2(\bv{A}) + \sum_{i=k+1}^d \sigma_i^2(\bv{A})}{d \cdot (1-\alpha)^T}$ and $\hat \kappa \eqdef \frac{\sigma_{k+1}^2(\bv{A})}{(1-\alpha)^T}$. Let  $d_s(\bv{A})$  be the maximum number of nonzeros  in a row of $\bv{A}$.
The algorithm's runtime can be bounded by:
\small{
\begin{align*}
\tilde O \left (\frac{\nnz(\bv{A})k + dk^{\omega-1} + \sqrt{\nnz(\bv{A}) [d\cdot d_s(\bv{A}) + dk ] \bar \kappa}}{\poly(\epsilon,\alpha)} \right ) \text{\hspace{.5em} or \hspace{.5em}} \tilde O \left (\frac{\nnz(\bv{A})k + dk^{\omega-1} + (\nnz(\bv{A})+dk) \lceil \sqrt{\hat \kappa}\rceil}{\poly(\epsilon,\alpha)} \right )
\end{align*}}
\normalsize
 for sparse $\bv{A}$ or $ \tilde O \left (\frac{nd^{\gamma-1} + n^{1/2}d^{3/2} \sqrt{\bar \kappa})}{\poly(\epsilon,\alpha)} \right )$ for dense $\bv{A}$, where $d^\gamma$ is the time it takes to multiply a $d \times d$ matrix by a $d\times k$ matrix using fast matrix multiplication.
\end{theorem}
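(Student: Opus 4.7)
The plan is to combine three tools mentioned in the abstract---stochastic trace estimation, polynomial/rational approximation of spectral windows, and fast (shifted) linear system solvers---with a deflation step that removes the top-$k$ part of the spectrum. First, I would compute an (approximate) orthonormal basis $\bv{V}_k \in \R^{d \times k}$ for the top-$k$ right singular subspace of $\bv{A}$ together with the associated top-$k$ singular values, via a block Krylov or simultaneous-iteration method in time $\tilde O(\nnz(\bv{A}) k + dk^{\omega-1})$. These top-$k$ values can be slotted directly into the histogram at negligible extra cost. Let $\bv{A}_{-k} \eqdef \bv{A}(\bv{I} - \bv{V}_k \bv{V}_k^\top)$ denote the deflated matrix, whose nonzero squared singular values are $\sigma_{k+1}^2, \sigma_{k+2}^2, \ldots$, all bounded above by $\sigma_{k+1}^2$. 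The remainder of the histogram is estimated on $\bv{A}_{-k}$.

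For each bucket $t \in \{0, \ldots, T\}$ I would design a $[0,1]$-valued window function $w_t : \R_{\ge 0} \to [0,1]$ satisfying (i) $w_t \equiv 1$ on $[(1-\alpha)^t \sigma_1^2, (1-\alpha)^{t-1}\sigma_1^2]$, (ii) $w_t \equiv 0$ outside the three-bucket interval $[(1-\alpha)^{t+1}\sigma_1^2, (1-\alpha)^{t-2}\sigma_1^2]$, and (iii) a smooth transition on the two $\poly(\alpha)$-width boundary zones. Then
\begin{equation*}
b_t \;\le\; \tr\!\bigl(w_t(\bv{A}_{-k}^\top \bv{A}_{-k})\bigr) \;\le\; b_t + b_{t-1} + b_{t+1},
\end{equation*}
so a $(1 \pm \epsilon)$ relative-error estimate of this trace directly yields the required two-sided bound on $\tilde b_t$. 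To estimate the trace, I would use a Hutchinson estimator with $\tilde O(1/\epsilon^2)$ Gaussian probes $\bv{g}_j$ and average $\bv{g}_j^\top w_t(\bv{A}_{-k}^\top \bv{A}_{-k}) \bv{g}_j$, reducing everything to evaluating $w_t(\bv{A}_{-k}^\top \bv{A}_{-k})\bv{g}$. I would next replace $w_t$ by a rational approximation with $\poly(1/\alpha,1/\epsilon)$ poles, so that each evaluation becomes a short sequence of shifted linear solves $(\bv{A}_{-k}^\top \bv{A}_{-k} + \lambda_j \bv{I})^{-1} \bv{v}$.

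Finally, these shifted solves are handled by accelerated iterative methods, and the two stated runtimes correspond to two choices of solver. The $\sqrt{\bar\kappa}$ bound uses an accelerated stochastic (SVRG-style) solver whose effective condition number scales with $\tr(\bv{A}_{-k}^\top \bv{A}_{-k})/\lambda_j$, which explains both the appearance of $k\sigma_k^2 + \sum_{i>k} \sigma_i^2$ in the numerator of $\bar\kappa$ (after accounting for the rank-$k$ deflation) and the $\sqrt{\nnz(\bv{A})[d\,d_s(\bv{A}) + dk]\,\bar\kappa}$ term. The $\sqrt{\hat\kappa}$ bound uses deterministic accelerated conjugate gradient applied to the deflated normal matrix, whose iteration count scales with $\sqrt{\sigma_{k+1}^2/(1-\alpha)^T}$ and whose per-iteration cost is $\nnz(\bv{A}) + dk$. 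In the dense case, the same strategy applies with $\nnz$-based products replaced by rectangular fast matrix multiplication of cost $d^\gamma$ for the initial deflation step.

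The main obstacle is keeping the final bucket-wise error precisely in the form $(1-\epsilon)b_t \le \tilde b_t \le (1+\epsilon)b_t + \epsilon(b_{t-1}+b_{t+1})$ rather than the weaker estimate one naively obtains after chaining four layers of approximation (window shape, rational approximation of $w_t$, truncated linear solver, Hutchinson estimator). Each layer must be calibrated so that its error is either absorbed into the $\epsilon b_t$ slack or charged against the neighboring bucket counts $b_{t\pm 1}$; in particular, the Hutchinson variance must be controlled by $\tr(w_t(\bv{A}_{-k}^\top \bv{A}_{-k}))^2$ rather than by $n^2$, so that relative error is achievable at cost $\poly(1/\epsilon)$ and not $\poly(n)$. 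Tracking these error propagations carefully, while simultaneously ensuring that the rational approximation degree and solver iteration count depend only polylogarithmically on $\sigma_1/\sigma_d$ and polynomially on $1/\alpha,1/\epsilon$, is where the bulk of the technical work goes.
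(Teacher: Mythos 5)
Your overall pipeline (top-$k$ deflation, soft spectral windows evaluated through shifted/ridge solves, Hutchinson trace estimation, and the two solver choices giving the $\bar\kappa$ and $\hat\kappa$ runtimes) matches the paper's architecture, but there is a genuine gap at the step you yourself flag as ``the main obstacle,'' and the idea that resolves it is missing. Your window $w_t$ equals $1$ on the bucket and transitions to $0$ across the neighboring buckets, so all you can say is $b_t \le \tr\bigl(w_t(\bv{A}_{-k}^\top\bv{A}_{-k})\bigr) \le b_t + b_{t-1} + b_{t+1}$; a $(1\pm\epsilon)$ estimate of that trace then only gives $\tilde b_t \le (1+\epsilon)(b_t + b_{t-1} + b_{t+1})$, i.e.\ coefficient $\approx 1$ on the neighbors, not the required $\epsilon(b_{t-1}+b_{t+1})$. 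With fixed bucket boundaries this cannot be repaired by calibration alone: an adversarial spectrum can cluster many singular values exactly in a transition zone, and each contributes an uncontrolled amount in $[0,1]$. The paper's fix is structural: it makes the transition regions very narrow (steepness $\gamma = \Theta(\epsilon\alpha)$, so the cost of the step functions scales as $1/\gamma$) \emph{and} randomly shifts the boundary grid (the random $a_1 \in [1-\alpha/4,1]$ in Algorithm \ref{algo:histogram}), so that each singular value lands in some boundary zone with probability $O(\gamma/\alpha) = O(\epsilon)$; a Markov/union-bound argument then converts the boundary blur into the $\epsilon(b_{t-1}+b_t+b_{t+1})$ additive term of Theorem \ref{thm:histogram}. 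Without the random shift (or an equivalent averaging device) your scheme does not prove the stated error form.

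A secondary, smaller divergence: you deflate explicitly, forming $\bv{A}_{-k}=\bv{A}(\bv{I}-\bv{V}_k\bv{V}_k^\top)$ and inserting the top-$k$ values into the histogram by hand, whereas the paper keeps the full matrix in all window computations and uses the approximate top-$k$ subspace only inside a preconditioner for the ridge solves (Theorems \ref{accPreCondSvrgMainBody} and \ref{preCondIterMainBody}). Your route is workable but needs two extra arguments you do not supply: (i) since $\bv{V}_k$ is only approximate, you must show the tail spectrum of $\bv{A}_{-k}$ tracks $\sigma_{k+1},\ldots,\sigma_d$ closely enough that bucket counts and the effective condition number are preserved (the paper needs an argument of this flavor only in Appendix \ref{sec:Krylovpreserve}, for entropy), and the approximate top-$k$ values must be assigned to buckets despite boundary ambiguity; and (ii) the rows of $\bv{A}_{-k}$ are dense, so the row-sampling (SVRG-style) solver must be run on the factored form $\bv{A}(\bv{I}-\bv{V}_k\bv{V}_k^\top)$ with per-step cost $O(d_s(\bv{A})+k)$ and a variance bound matching $\bar\kappa$ --- this bookkeeping is exactly what the paper's preconditioner-based analysis in Appendix \ref{sec:solverAppendix} carries out, and it does not follow for free from citing an off-the-shelf SVRG bound.
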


This primitive is useful on its own, giving an accurate summary of $\bv{A}$'s spectrum 
which can be used in many downstream applications. Setting the parameter $k$ appropriately to balance costs (see overview in Section \ref{sec:approach}), we use it to
give the first $o(n^\omega)$ algorithms for computing $(1\pm \epsilon)$ relative
error approximations to a broad class of spectral sums for functions $f$, which are a) smooth 
and b) quickly growing, so that very small singular values cannot make a significant contribution
to $\Sum_f(\bv{A})$.
This class includes for example the Schatten $p$-norms for all $p > 0$, the SVD entropy,
the Ky Fan norms, and many general Orlicz norms.

For a summary of our $p$-norm results see Table \ref{resultsTable}.
Focusing for simplicity on square matrices, with uniformly sparse rows, and assuming $\epsilon$, $p$ are constants,
our algorithms approximate $\norm{\bv{A}}_p^p$ in $\tilde O(\nnz (\bv{A}) )$ time
for any real $p \ge 2$.\footnote{For any $\bv{A} \in \R^{n \times d}$, $\nnz(\bv{A})$ denotes the number of nonzero entries in $\bv{A}$.} For $p \le 2$, we achieve
\small $\tilde O \left (\nnz(\bv{A}) n^{\frac{1/p-1/2}{1/p+1/2}} + n^{\frac{4/p-1}{2/p+1}}\sqrt{\nnz(\bv{A})}\right )$ \normalsize runtime.
In the important case of  $p=1$, this becomes
$\tilde O \left (\nnz(\bv{A}) n^{1/3} + n\sqrt{\nnz(\bv{A})}\right)$. 
Note that $n\sqrt{\nnz(\bv{A})} \le n^2$, and for sparse enough $\bv{A}$, this bound is subquadratic. For dense $\bv{A}$, we use 
fast matrix multiplication, achieving time 
$\tilde O \left (n^{\frac{2.3729 - .0994p}{1 + .0435p}} \right )$ 
for all $p < 2$. For $p=1$, this gives $\tilde O(n^{2.18})$. 
Even without fast matrix multiplication, the runtime is $\tilde O(n^{2.33})$, and so $o(n^\omega)$ for $\omega \approx 2.3729$.

\paragraph{Lower Bounds:}
On the lower bound side, we show that obtaining $o(n^\omega)$ time
spectrum approximation algorithms with very high accuracy may be difficult.
Our runtimes all depend polynomially on the error $\epsilon$, and we show that improving 
this dependence, e.g., to $\log(1/\epsilon)$, or even to a better polynomial, would give faster
algorithms for the well studied Triangle Detection problem.

Specifically, for a broad class of spectral sums, including all Schatten $p$-norms with $p \neq 2$, 
SVD entropy, $\log \det(\bv{A})$, $\tr(\bv{A}^{-1})$, and $\tr(\exp(\bv{A}))$, we show that
any $(1 \pm \epsilon)$ approximation algorithm running in $O (n^\gamma \epsilon^{-c})$ time 
yields an algorithm for triangle detection running in $O (n^{\gamma+O(c)})$ time. 
For $\gamma < \omega$ and sufficiently small $c$, such an algorithm would improve the state
of the art in triangle detection, which currently requires $\Theta(n^\omega)$ time on dense graphs.
Furthermore, 
through a reduction of \cite{williams2010subcubic}, any subcubic time triangle detection 
algorithm yields a subcubic time algorithm for Boolean Matrix Multiplication (BMM). 
Thus, any spectral sum algorithm achieving subcubic runtime and $\frac{1}{\epsilon^c}$ 
accuracy for small enough constant $c$, must (implicitly) implement fast matrix multiplication.
This is in stark contrast to the fact that, for $c =3$, for many spectral sums, including all Schatten $p$-norms with $p \ge 1/2$, we obtain subcubic, and in fact $o(n^\omega)$ for $\omega = 2.3729$, runtimes without using fast matrix multiplication (see Table \ref{resultsTable} for precise $\epsilon$ dependencies).

\begin{table}
\small{
\begin{center}
  \begin{tabular}{| c | c | c | c | c |}
    \hline
    \textbf{$p$ range} & \textbf{Sparsity} & \textbf{Approx.} & \textbf{Runtime} & \textbf{Theorem} \\ \hline
    $p > 2$ & uniform & $(1+ \epsilon)$ & $\tilde O \left (\nnz (\bv{A}) \cdot p/\epsilon^{3} \right )$ & Thm \ref{thm:largep} \\\hline
    $p \le 2$ & uniform & $(1+ \epsilon)$ & \footnotesize $\tilde O \left (\frac{1}{f(p,\epsilon)} \left [\nnz(\bv{A}) n^{\frac{1/p-1/2}{1/p+1/2}} + n^{\frac{4/p-1}{2/p+1}}\sqrt{\nnz(\bv{A})}\right ] \right )$ \small & Thm \ref{thm:smallp} \\\hline
    $p \le 2$ & dense & $(1+ \epsilon)$ & \footnotesize $\tilde O \left (\frac{1}{f(p,\epsilon)} n^{\frac{2.3729-.0994p}{1+.0435p}} \right )$ or $\tilde O \left (\frac{1}{f(p,\epsilon)} \cdot  n^{\frac{3 + p/2}{1 + p/2}} \right )$ w/o FMM \small & Thm \ref{finalthm:dense}\\\hline
    $p > 0$ & general & $(1+\epsilon)$ & $\tilde O \left (\frac{1}{f(p,\epsilon)} \left [\nnz(\bv{A}) n^{\frac{1}{1+p}} + n^{1+\frac{2}{1+p}} \right ] \right )$ & Thms \ref{thm:largep}, \ref{thm:smallp} \\ \hline
    $p > 2$ & general & $1/\gamma$ & $\tilde O(p \nnz(\bv{A}) \cdot  n^\gamma )$ & Thm \ref{uniformSparsity} \\ \hline
    $p < 2$ & general & $1/\gamma$ & $\tilde O \left(\frac{1}{p^3} \left [\nnz(\bv{A})n^{\frac{1/p-1/2}{1/p+1/2}+\gamma/2}+\sqrt{\nnz(\bv{A})} \cdot n^{\frac{4/p-1}{2/p-1}}\right ]\right)$ & Thm \ref{uniformSparsity} \\\hline
  \end{tabular}
\end{center}

\caption{\small{Summary of our results for approximating the Schatten $p$-norms. We define $f(p,\epsilon) = \min \{1,p^3\} \cdot \epsilon^{\max\{3,1+1/p\}}$, which appears as a factor in many of the bounds.
The uniform sparsity assumption is that the maximum row sparsity $d_s(\bv{A}) \le \frac{\xi}{n} \nnz(\bv{A})$ for some constant $\xi$. In our theorems, we give general runtimes, parameterized by $\xi$. When we do not have the uniform sparsity assumption, we are still able to give a $(1 + \epsilon)$ approximation in $\tilde O(\epsilon^{-3}\nnz(\bv{A})\sqrt{n} + n^2)$ time for example for $\norm{\bv{A}}_1$. We can also give $1/\gamma$ approximation for any constant $\gamma < 1$ by paying an $n^{\gamma/2}$ factor in our runtime.
Note that for dense matrices, for all $p$ we obtain $o(n^\omega)$ runtime, or $o(n^3)$ runtime if we do not use fast matrix multiplication. 
}}

\label{resultsTable}}
\end{table}
\normalsize

Our lower bounds hold even for well-conditioned matrices and structured matrices 
like symmetric diagonally dominant (SDD) systems, both of which admit nearly linear time algorithms for system solving \cite{spielman2004nearly}. 
This illustrates a dichotomy between linear algebraic primitives like 
applying $\bv{A}^{-1}$ to a vector and spectral summarization tasks like precisely 
computing $\tr(\bv{A}^{-1})$, which in some sense require more global 
information about the matrix. Our analysis has ramifications regarding natural open
problems in graph theory and numerical computation. For example, for graph Laplacians,
we show that accurately computing all \emph{effective resistances} yields an 
accurate algorithm for computing $\tr(\bv{A}^{-1})$ of certain matrices, 
which is enough to give triangle detection. 

\subsection{Related Work on Spectral Sums}\label{sec:prior}

The applications of approximate spectral sum computation are very broad. 
When $\bv{A}$ is positive semidefinite (PSD) and $f(x) = \log(x)$, $\Sum_f(\bv{A})$ is
the log-determinant, which is important in machine learning and inference
applications \cite{rasmussen2004gaussian,davis2007information,friedman2008sparse}.
For $f(x) = 1/x$, $\Sum_f(\bv{A})$ is the trace of the inverse, used in uncertainty
quantification \cite{bekas2009low} and quantum chromodynamics \cite{stathopoulos2013hierarchical}.

When $f(x) = x^p$, $S_f(\bv{A}) = \norm{\bv{A}}_p^p$ where $\norm{\bv{A}}_p$ is the
Schatten $p$-norm of $\bv{A}$.
Computation of the Schatten $1$-norm, also known as the nuclear or trace norm, is required in a wide variety of applications. It is often used in place of the matrix rank in matrix completion algorithms and other convex relaxations of rank-constrained optimization problems \cite{cr12,dtv11,jain2013low,netrapalli2014non}. It
appears as the `graph energy' in theoretical chemistry \cite{gutman1992total,gutman2001energy}, 
the `singular value bound' in differential privacy \cite{hlm10,lm12}, and in 
rank aggregation and collaborative ranking \cite{lu2014individualized}.

Similar to the nuclear norm, general Schatten $p$-norms are used in convex
relaxations for rank-constrained optimization~\cite{nie2012low,nie2012robust}. They also appear in image processing applications such as
denoising and background subtraction~\cite{xie2015weighted}, 
classification~\cite{luo2014schatten}, restoration~\cite{xiehyperspectral}, and 
feature extraction~\cite{du2015two}.

When $f(x) = -x \log x$ (after $\bv{A}$ is normalized by 
$\norm{\bv{A}}_1$), $\Sum_f(\bv{A})$ is the SVD entropy~\cite{alter2000singular},
which is used in feature selection~\cite{varshavsky2006novel,banerjee2014feature},
financial data analysis~\cite{caraiani2014predictive,gu2015does}, 
and genomic data~\cite{alter2000singular} applications.

Despite their importance, prior to our work, few algorithms for fast computation of
spectral sums existed. Only a few special cases of the Schatten $p$-norms were
known to be computable efficiently in $o(n^\omega)$ time. These include the 
Frobenius norm ($p=2$) which is trivially computed in $O(\nnz(\bv{A}))$ time,
the spectral norm ($p=\infty$) which can be estimated via the Lanczos method
in $\tilde O(\nnz(\bv{A}) \epsilon^{-\frac{1}{2}})$ time \cite{kuczynski1992estimating}, 
and the Schatten-$p$ norms for {\it even integers} 
$p > 2$, or general integers with PSD $\bv{A}$. These norms can be approximated 
in $O(\nnz(\bv{A}) \epsilon^{-2} )$ time via trace estimation \cite{w14,bcky16}, since
when $p$ is even or $\bv{A}$ is PSD, ${\bf A}^p$ is PSD and so its trace equals $\norm{\bv{A}}_p^p$.

There are a number of works which consider estimating matrix norms in 
sublinear space and with a small number of passes over $\bv{A}$ \cite{lnw14,akr15,lw16,lw16b,bcky16,lw17b}.
However, in these works, the main focus is on space complexity,
and no non-trivial runtime bounds are given. We seem to be the first to tackle 
the arguably more fundamental problem of obtaining the best time complexity for
simple norms like the Schatten-$p$ norms, irrespective of space and pass complexity.

Another interesting line of works tries to estimate the 
Schatten-$p$ norms of an underlying covariance matrix from a
small number of samples from the distribution \cite{kv16}, or 
from entrywise sampling under various incoherence assumptions \cite{ko17}.
This model is different from ours, as we do not assume an underlying distribution
or any incoherence properties. Moreover, even with such assumptions,
these algorithms also only give non-trivial sample complexity
when either $\bv{A}$ is PSD and $p$ is an integer, or $\bv{A}$ is a 
general matrix but $p$ is an even integer, which as mentioned above 
are easy to handle from the perspective of time complexity alone.

A number of works have focused on computing spectral sums when $\bv{A}$ has 
bounded condition number, and relative error results exist for example for the 
log-determinant, $\tr(\text{exp}(\bv{A}))$, $\tr(\bv{A}^{-1})$, and 
the Schatten $p$-norms~\cite{boutsidis2015randomized,han2016approximating,shash17}. 
We are the first to give relative error results in $o(\omega)$ time for 
general matrices, without the condition number dependency.
Our histogram approach resembles spectral filtering and spectral density estimation techniques 
that have been considered in the numerical computation 
literature~\cite{zhang2015distributed,di2016efficient,lin2016approximating,ubaru2016fast,
shash17,ubaru2017fast}. However, this literature typically requires assuming gaps between the singular values and existing work is not enough 
to give relative error spectral sum approximation for general matrices

\subsection{Algorithmic Approach}\label{sec:approach}

\paragraph{Spectral Sums via Trace Estimation:}

A common approach to spectral sum approximation is to reduce to a trace estimation problem 
involving the PSD matrix $\bv{A}^T\bv{A}$, using the fact that the trace of this matrix equals the 
sum of its singular values. In fact, this has largely been the only known technique, other than 
the full SVD, for obtaining aggregate information about $\bv{A}$'s singular 
values \cite{hutchinson1990stochastic,stathopoulos2013hierarchical,wu2015estimating,DBLP:journals/corr/WimmerWZ14,DBLP:journals/corr/Roosta-KhorasaniA13,DBLP:journals/corr/FitzsimonsORF16,boutsidis2015randomized,han2016approximating}.
The idea is, letting $g(x) = f(x^{1/2})$, we have $\Sum_f(\bv{A}) = \Sum_g(\bv{A}^T\bv{A}).$
Writing the SVD $\bv{A}^T\bv{A} = \bv{U}\bv{\Lambda}\bv{U}^T$,  and
defining the matrix function $g(\bv{A}^T\bv{A}) \eqdef \bv{U} g(\bv{\Lambda})\bv{U}^T$ 
where $\left[g(\bv{\Lambda})\right]_{i,i} = g([\bv{\Lambda}]_{i,i})$, we 
have $\Sum_g(\bv{A}^T\bv{A}) = \tr(g(\bv{A}^T\bv{A}))$ since, if $g(\cdot)$ is 
nonnegative, $g(\bv{A}^T\bv{A})$ is PSD and its trace equals the sum of its singular values. 

It is well known that this trace can be approximated up to $(1 \pm \epsilon)$ 
accuracy by averaging $\tilde O(\epsilon^{-2})$ samples of the form $\bv{x}^T g(\bv{A}^T\bv{A}) \bv{x}$
where $\bv{x}$ is a random Gaussian or sign vector \cite{hutchinson1990stochastic,avron2011randomized}.
 While $g(\bv{A}^T\bv{A})$ cannot be explicitly computed without a full SVD, 
 a common approach is to approximate $g$ with a low-degree polynomial 
 $\phi$ \cite{boutsidis2015randomized,han2016approximating}. If $\phi$ has degree $q$,
 one can apply $\phi(\bv{A}^T\bv{A})$ to any vector $\bv{x}$ in $O(\nnz(\bv{A})\cdot q)$ time, 
 and so estimate its trace in just  $O(\nnz(\bv{A})\cdot \frac{q}{\epsilon^2})$ time. 
Unfortunately, for many of the functions most important in applications, 
e.g., $f(x) = x^p$ for odd $p$, $f(x) = x \log x$, $f(x) = x^{-1}$, $g(x)$ has a discontinuity at $x = 0$ and \emph{cannot} be approximated well by a low-degree polynomial near zero. While the approximation only needs to be good on the range $[\sigma_n^2(\bv{A}), \sigma_1^2(\bv{A})]$, the required degree $q$ will still typically depend on $\sqrt{\kappa}$ where $\kappa \eqdef \frac{\sigma_1^2(\bv{A})}{\sigma_n^2(\bv{A})}$ is the condition number, which can be unbounded for general matrices. 

\paragraph{Singular Value Deflation for Improved Conditioning:}
Our first observation is that, for many functions, it is not necessary to approximate $g(x)$ 
on the full spectral range. For example, for $g(x) = x^{p/2}$ (i.e., when $\Sum_g(\bv{A}^T\bv{A}) =\norm{\bv{A}}_p^p$), setting $\lambda = (\frac{\epsilon}{n} \norm{\bv{A}}_p^p)^{1/p}$, we have:
$$\sum_{\{i | \sigma_i(\bv{A}) \le \lambda\}} \sigma_i(\bv{A})^p \le n \cdot \frac{\epsilon}{n} \norm{\bv{A}}_p^p \le \epsilon \norm{\bv{A}}_p^p.$$
Hence we can safely `ignore' any $\sigma_i(\bv{A}) \le \lambda$ and still obtain a 
relative error approximation to $\Sum_g(\bv{A}^T\bv{A}) = \norm{\bv{A}}_p^p$. The larger $p$ is, the larger we can set $\lambda$ (corresponding to $(1-\alpha)^T$ in Theorem \ref{thm:histogramIntro})  to be, since, after powering, the singular values below this threshold do not contribute significantly to $\norm{\bv{A}}_p^p$.
For $\norm{\bv{A}}_p^p$, our `effective condition number' for 
approximating $g(x)$ becomes
$\hat \kappa = \frac{\sigma_1^2(\bv{A})}{\lambda^2} = (\frac{n}{\epsilon})^{2/p} \cdot \frac{\sigma_1^2(\bv{A})}{\norm{\bv{A}}_p^2}$. Unfortunately, in the worst case, we may have $\sigma_1(\bv{A}) \approx \norm{\bv{A}}_p$ and hence $\sqrt{\hat \kappa} = (\frac{n}{\epsilon})^{1/p}$. Hiding $\epsilon$ dependences, this gives runtime $\tilde O( \nnz(\bv{A}) \cdot n)$ when $p=1$.

To improve the effective condition number, we can apply \emph{singular vector deflation}.
Our above bound on $\hat \kappa$ is only tight when the first singular value is very large and so dominates $\norm{\bv{A}}_p$. We can remedy this by flattening $\bv{A}$'s spectrum by deflating off the top $k$ singular vectors (corresponding to $k$  in Theorem \ref{thm:histogramIntro}), and including their singular values in the spectral sum directly.

Specifically, letting $\bv{P}_k$ be the projection onto the top $k$ singular vectors of $\bv{A}$, we consider the deflated matrix $ \bv{\bar A} \eqdef \bv{A}(\bv{I} - \bv{P}_k)$, which has $\sigma_1(\bv{\bar A}) = \sigma_{k+1}(\bv{A})$. Importantly, $\sigma_{k+1}^p(\bv{A}) \le \frac{1}{k} \norm{\bv{A}}_p^p$, and thus this singular value cannot dominate the $p$-norm.
As an example, considering $p=1$ and ignoring $\epsilon$ dependencies, our effective condition number after deflation is 
\begin{align}\label{eq:initial_condition}
\hat \kappa = \frac{n^2 \cdot \sigma_{k+1}^2(\bv{A})}{\norm{\bv{A}}_1^2} \le \frac{n^2}{k^2}
\end{align}
The runtime required to approximate $\bv{P}_k$ via
an iterative method (ignoring possible gains from fast matrix multiplication) is roughly $O(\nnz(\bv{A})k + nk^{2})$. We then require $\tilde O(\nnz(\bv{A}) \sqrt{\hat \kappa} + nk\sqrt{\hat \kappa})$ time to approximate
the polynomial trace of $\bv{\bar A}^T\bv{\bar A}$. The $nk\sqrt{\hat \kappa}$ term 
comes from projecting off the top singular directions with each application of 
$\bv{\bar A}^T\bv{\bar A}$. Setting $k = \sqrt{n}$ to balance the costs, we 
obtain runtime $\tilde O(\nnz(\bv{A})\sqrt{n} + n^2)$. 

 For $p \neq 1$ a similar argument gives runtime $\tilde O(\nnz(\bv{A})n^{\frac{1}{p+1}} + n^{2 + \frac{1}{p+1}})$. This is already a significant improvement over a full SVD. As $p$ grows larger, the runtime approaches $\tilde O(\nnz(\bv{A}))$ reflecting the fact that for larger $p$ we can ignore a larger and larger portion of the small singular values in $\bv{A}$ and correspondingly deflate off fewer and fewer top values. 
 
 Unfortunately, we get stuck here. Considering the important Schatten-$1$ norm, 
 for a matrix with $\sqrt{n}$ singular values each equal to $\sqrt{n}$ and $\Theta(n)$ 
 singular values each equal to $1$, the tail of small singular values contributes a 
 constant fraction of $\norm{\bv{A}}_1 = \Theta(n)$. However, there is no good polynomial
 approximation to $g(x) = x^{1/2}$ on the range $[1,n]$ with degree $o(\sqrt{n})$ 
 (recall that we pick this function since $\Sum_g(\bv{A}^T\bv{A}) = \norm{\bv{A}}_1$). 
 So to accurately approximate $g(\bv{A}^T\bv{A})$, we either must deflate off all $\sqrt{n}$ 
 top singular values, requiring $\Theta(\nnz(\bv{A})\sqrt{n})$ time, or apply a $\Theta(\sqrt{n})$
 degree polynomial approximation, requiring the same amount of time.
 
\paragraph{Further Improvements with Stochastic Gradient Descent:}
To push beyond this barrier, we look to \emph{stochastic gradient} methods for linear systems. When using polynomial approximation, our bounds depend on the condition number of the interval over which we must approximate $g(\bv{A}^T\bv{A})$, after ignoring the smallest singular values and deflating off the largest. This is analogous to the condition number dependence of iterative linear system solvers like conjugate gradient or accelerated gradient descent, which approximate $f(\bv{A}^T\bv{A})$ for $f = 1/x$ using a polynomial of $\bv{A}^T\bv{A}$.

However, recent advances in convex optimization offer an alternative. Stochastic gradient methods \cite{johnson2013accelerating,shalev2014accelerated} sample one row, $\bv{a}_i$, of $\bv{A}$ at a time, updating the current iterate by adding a multiple of $\bv{a}_i$. They trade a larger number of iterations for updates that take $O(\nnz(\bv{a}_i))$ time, rather than $O(\nnz(\bv{A}))$ time to multiply $\bv{A}$ by a vector. At the same time, these methods give much finer  dependencies on the singular value spectrum. Specifically, it is possible to approximately apply $(\bv{A}^T\bv{A})^{-1}$ to a vector with the number of iterations dependent on the \emph{average condition number:} $$\bar \kappa = \frac{\frac{1}{n}\sum_{i=1}^n \sigma_i^2(\bv{A})}{\sigma^2_{n}(\bv{A})}.$$
$\bar \kappa$ is always at  most the standard condition number, $\kappa = \frac{\sigma_1^2(\bv{A})}{\sigma_n^2(\bv{A})}$. It  can be significantly smaller when $\bv{A}$ has a quickly decaying spectrum, and hence $\frac{1}{n} \sum_{i=1}^n \sigma_i^2(\bv{A}) \ll \sigma_1^2(\bv{A})$. Further, the case of a quickly decaying spectrum with a few large and many small singular values is \emph{exactly the hard case for our earlier approach}. If we can understand how to translate improvements on linear system solvers to spectral sum approximation, we can handle this hard case.

\paragraph{From Linear System Solvers to Histogram Approximation:}
The key idea to translating the improved average condition number bounds for linear systems to our problem of approximating $\Sum_f(\bv{A})$ is to note that linear system solvers can be used to apply threshold functions to $\bv{A}^T\bv{A}$. 

Specifically, given any vector $\bv{y}$, we can first compute $\bv{A}^T\bv{A} \bv{y}$. We can then apply a fast system solver to approximate $(\bv{A}^T\bv{A} + \lambda \bv{I})^{-1}\bv{A}^T\bv{A} \bv{y}$. The matrix function $r_\lambda(\bv{A}^T\bv{A}) \eqdef (\bv{A}^T\bv{A} + \lambda \bv{I})^{-1}\bv{A}^T\bv{A}$ has a number of important properties. All its singular values are between $0$ and $1$. Further, any singular value in $\bv{A}^T\bv{A}$ with value $\ge \lambda$ is mapped to a singular value in $r_\lambda(\bv{A}^T\bv{A})$ which is $\ge 1/2$. Correspondingly, any singular value $< \lambda$ is mapped to $<1/2$.

 Thus, we can apply a low degree polynomial approximation to a step function at $1/2$ to $r_\lambda(\bv{A}^T\bv{A})$ to obtain $s_\lambda(\bv{A}^T\bv{A})$, which approximates a step function at $\lambda$ \cite{frostig2016principal}. For some steepness parameter $\gamma$ which affects the degree of the polynomial approximation, for $x \ge (1+\gamma) \lambda$ we have $s_\lambda(x) \approx \ 1$ and for $x < (1-\gamma)\lambda$,  $s_\lambda(x) \approx 0$. On the intermediate range $x \in [(1-\gamma)\lambda,(1+\gamma)\lambda]$, $s_\lambda(x)$ falls somewhere between $0$ and $1$.

By composing these approximate threshold functions at different values of $\lambda$, it is possible to `split' our spectrum into a number of small spectral windows. For example, $s_a(\bv{A}^T\bv{A}) \cdot (\bv{I}- s_b(\bv{A}^T\bv{A}))$ is $\approx 1$ on the range $[a,b]$ and $\approx0$ outside this range, with some ambiguity near $a$ and $b$.

Splitting our spectrum into windows of the form $[(1-\alpha)^{t-1}, (1-\alpha)^t]$ for a width parameter $\alpha$, and applying trace estimation on each window lets us produce an approximate spectral histogram. Of course, this histogram is not exact and in particular, the `blurring' of our windows at their boundaries can introduce significant error. However, by applying a random shifting technique and setting the steepness parameter $\gamma$ small enough (i.e., $1/\poly(\alpha,\epsilon)$), we can ensure that most of the spectral weight falls outside these boundary regions with good probability, giving Theorem \ref{thm:histogramIntro}. 

\paragraph{From Histogram Approximation to Spectral Sums:}

If $\alpha$ is small enough, and $f(\cdot)$ (the function in the spectral sum)
and correspondingly $g(\cdot)$ (where $g(x)=f(x^{1/2})$) are smooth enough, we can 
approximate $\Sum_f(\bv{A})=\Sum_g(\bv{A}^T\bv{A})$
by simply summing over each bucket in the histogram, approximating $g(x)$ by its value at
one end of the bucket. 

This technique can be applied for any spectral sum.
The number of windows required (controlled by $\alpha$) and the histogram accuracy $\epsilon$ scale with the smoothness of $f(\cdot)$ and the desired accuracy in computing the sum, 
introducing polynomial runtime dependencies on these parameters. 

However, the most important factor determining the final runtime is
the smallest value $\lambda$ (corresponding to $(1-\alpha)^T$ in Theorem \ref{thm:histogramIntro}) which we must include in our histogram in order to 
approximate $S_f(\bv{A})$. The cost of computing the last bucket of the histogram is proportional to the cost of applying $s_\lambda(\bv{A}^T\bv{A})$, and hence of approximately 
computing $(\bv{A}^T\bv{A} + \lambda \bv{I})^{-1} \bv{A}^T\bv{A} \bv{y}$. Using stochastic gradient descent 
this depends on the average condition number of $(\bv{A}^T\bv{A} + \lambda \bv{I})$.

Again considering the Schatten $1$-norm for illustration, we can ignore any singular values with $\sigma_i(\bv{A}) \le \frac{\epsilon}{n} \norm{\bv{A}}_1$. Hiding $\epsilon$ dependence, this means that in our histogram, we must include any singular values of $\bv{A}^T\bv{A}$ with value $\sigma_i(\bv{A}^T\bv{A}) = \sigma_i^2(\bv{A}) \ge \frac{1}{n^2} \norm{\bv{A}}_1^2$. This gives us effective average condition number after deflating off the top $k$ singular values:
\begin{align}\label{eq:second_condition}
\bar \kappa = \frac{n^2 \sum_{i=k+1}^n \sigma_i^2(\bv{A})}{n\norm{\bv{A}}_1^2} \le \frac{n^2 \sigma_{k+1}(\bv{A}) \cdot \sum_{i=k+1}^n \sigma_i(\bv{A})}{n\norm{\bv{A}}_1^2} \le \frac{n}{k}
\end{align}
where the last inequality follows from the observation that $\sigma_{k+1}(\bv{A}) \le \frac{1}{k} \norm{\bv{A}}_1$ and $\sum_{i=k+1}^n \sigma_i(\bv{A}) \le \norm{\bv{A}}_1$. Comparing to \eqref{eq:initial_condition}, this bound is better by an $n/k$ factor.

Ignoring details and using a simplification of the runtimes in Theorem \ref{thm:histogramIntro},
we obtain an
algorithm running in $\tilde O(\nnz(\bv{A})k + nk^{2})$ time 
to deflate $k$ singular vectors, plus 
$\tilde O \left (\nnz(\bv{A}) \sqrt{\bar \kappa} + \sqrt{\nnz(\bv{A}) n k \bar \kappa} \right )$
time to approximate the spectral sum over the deflated matrix. Choosing $k$ to balance
these costs, gives our final runtimes. For the nuclear norm, using the bound on
$\bar \kappa$ from \eqref{eq:second_condition}, we set $k = n^{1/3}$ which gives 
$\bar \kappa = n^{2/3}$ and runtime $\tilde O(\nnz(\bv{A}) n^{1/3} + n^{3/2}\sqrt{d_s})$
where $d_s \le n$ is the maximum row sparsity. For dense $\bv{A}$ this is $\tilde O(n^{2.33})$, which 
is faster than state of the art matrix multiplication time. It can be further accelerated using 
fast matrix multiplication methods. See details in Section~\ref{sec:upper}.

Returning to our original hard example for intuition, we have $\bv{A}$ with $\sqrt{n}$ singular values at $\sqrt{n}$  and $\Theta(n)$ singular values at $1$. Even without deflation, we have (again ignoring $\epsilon$ dependencies) $\bar \kappa = \frac{\sum_{i=1}^n \sigma_i^2(\bv{A})}{n \lambda} = \frac{n \norm{\bv{A}}_F^2}{\norm{\bv{A}}_1^2}.$
Since $\norm{\bv{A}}_F^2 = \Theta(n^{3/2})$ and 
$\norm{\bv{A}}_1^2 = \Theta(n^2)$, this gives $\bar \kappa = \Theta( \sqrt{n}).$ Thus,
 we can actually approximate 
$\norm{\bv{A}}_1$ in just $\tilde O(\nnz(\bv{A}) n^{1/4})$ time for this matrix.

With average condition number dependence, our performance is limited by a new hard case. Consider $\bv{A}$ with $n^{1/3}$ singular values at $n^{2/3}$ and $\Theta(n)$ at $1$. The average condition number without deflation is $\frac{n \norm{\bv{A}}_F^2}{\norm{\bv{A}}_1^2} = \Theta \left ( \frac{n^{5/3}}{n}\right) = \Theta(n^{2/3})$ giving $\sqrt{\bar \kappa} =  \Theta(n^{1/3})$. Further, we can see that unless we deflate off nearly all $n^{1/3}$ top singular vectors, we do not improve this bound significantly.

\subsection{Lower Bound Approach}
\label{sec:lower_bound_approach}
We now shift focus to our lower bounds, which explore the fundamental limits of spectrum approximation using fine-grained complexity approaches.
Fine-grained complexity has had much success for graph problems,
string problems, and problems in other areas (see, e.g., \cite{w15} for a survey),
and is closely tied to understanding the complexity of matrix multiplication. However,
to the best of our knowledge it has not been applied
broadly to problems in linear algebra. 

Existing hardness results for linear
algebraic problems tend to apply to restricted computational models such as
arithmetic circuits \cite{baur1983complexity},
bilinear circuits
or circuits with bounded coefficients and number of divisions 
\cite{morgenstern1973note,r03,s03,rs03}, algorithms for
dense linear systems that can only add multiples of rows to each other \cite{kk65,k70}, and 
algorithms with restrictions on the dimension of certain manifolds defined in terms of the input
\cite{w70,w87,d13}.
In contrast, we obtain conditional lower bounds for arbitrary polynomial time algorithms
by showing that faster algorithms for them imply faster algorithms for canonical hard problems in
fine-grained complexity.

\paragraph{From Schatten $3$-norm to Triangle Detection:}
We
start with the fact that the number of triangles in any unweighted graph $G$ is equal to $\tr(\bv{A}^3)/6$, where $\bv{A}$ is the adjacency matrix. Any algorithm for approximating $\tr(\bv{A}^3)$ to high enough accuracy therefore gives an algorithm for detecting if a graph has at least one triangle. 

$\bv{A}$ is not PSD, so $\tr(\bv{A}^3)$ is actually not a function of $\bv{A}$'s singular values -- it depends on the signs of $\bv{A}$'s eigenvalues. However, the graph Laplacian given by $\bv{L}  = \bv{D}-\bv{A}$ where $\bv{D}$ is the diagonal degree matrix, is PSD and we have:
$$\norm{\bv{L}}_3^3 = \tr(\bv{L}^3) = \tr(\bv{D}^3) - 3\tr(\bv{D}^2\bv{A}) + 3\tr(\bv{D}\bv{A}^2) - \tr(\bv{A}^3).$$
$\tr(\bv{D}^2 \bv{A}) = 0$ since $\bv A$ has an all $0$ diagonal. Further, it is not hard to see that $\tr(\bv{D}\bv{A}^2) = \tr(\bv{D}^2)$. So this term and $\tr(\bv{D}^3)$ are easy to compute exactly. Thus, if we approximate $\norm{\bv L}_3^3$ up to additive error $6$, we can determine if $\tr(\bv{A}^3) = 0$ or $\tr(\bv{A}^3) \ge 6$ and so detect if $G$ contains a triangle. $\norm{\bv{L}}_3^3 \le 8n^4$ for any unweighted graph on $n$ nodes, and hence computing this norm  up to $(1 \pm \epsilon)$ relative error for $\epsilon = 3/(6n^4)$ suffices to detect a triangle. If we have an $O(n^\gamma \epsilon^{-c})$ time $(1\pm \epsilon)$ approximation algorithm for the Schatten $3$-norm, we can thus perform triangle detection in $O(n^{\gamma+4c})$ time.

Our strongest algorithmic result for the Schatten $3$-norm requires just $\tilde O(n^2/\epsilon^3)$ time for dense matrices. Improving the $\epsilon$ dependence to $o(1/\epsilon^{(\omega-2)/4} ) = O(1/\epsilon^{.09})$ for the current value of $\omega$, would yield an algorithm for triangle detection running in $o(n^\omega)$ time for general graphs, breaking a longstanding runtime barrier for this problem. Even a $1/\epsilon^{1/3}$ dependence would give a sub-cubic time triangle detection algorithm, and hence could be used to give a subcubic time matrix multiplication algorithm via the reduction of \cite{williams2010subcubic}.

\paragraph{Generalizing to Other Spectral Sums}

We can generalize the above approach to the Schatten $4$-norm by adding $\lambda$ self-loops to each node of $G$, which corresponds to replacing $\bv{A}$ with $\lambda \bv{I} + \bv{A}$. We then consider $\tr((\lambda \bv{I} + \bv{A})^4) = \norm{\lambda \bv{I} + \bv{A}}_4^4$. This is the sum over all vertices of the number of paths that start at $v_i$ and return to $v_i$ in four steps. All of these paths are either (1) legitimate  four cycles, (2) triangles combined with self loops, or (3) combinations of self-loops and two-step paths from a vertex $v_i$ to one of its neighbors and back. The number of type (3) paths is exactly computable using the node degrees and number of self loops. Additionally, if the number of self loops $\lambda$ is large enough, the number of type $(2)$ paths will dominate the number of type $(1)$ paths, even if there is just a single triangle in the graph. Hence, an accurate approximation to $\norm{\lambda \bv{I} + \bv{A}}_4^4$ will give us the number of type $(2)$ paths, from which we can 
easily compute the number of triangles.

This argument extends to a very broad class of spectral sums by considering a power series expansion of $f(x)$ and showing that for large enough $\lambda$, $\tr \left ( f(\lambda \bv{I }+ \bv{A}) \right )$ is dominated by $\tr(\bv{A}^3)$ along with some exactly computable terms. Thus, an accurate approximation to this spectral sum allows us to determine the number of triangles in $G$. This approach works for any $f(x)$ that can be represented as a power series, with reasonably well-behaved coefficients on some interval of $\R^+$, giving bounds for all $\norm{\bv A}_p$ with $p \neq 2$, the SVD entropy, $\log \det (\bv{A})$, $\tr(\bv{A}^{-1})$, and $\tr(\exp(\bv{A}))$. 

We further show that approximating $\tr(\bv{A}^{-1})$ for the $\bv{A}$ used in our lower bound can be reduced to computing all effective resistances of a certain graph Laplacian up to $(1 \pm \epsilon)$ error. Thus, we rule out highly accurate (with $1/\epsilon^c$ dependence for small $c$) approximation algorithms for all effective resistances, despite the existence of linear time system solvers (with $\log(1/\epsilon)$ error dependence) for Laplacians \cite{spielman2004nearly}.
Effective resistances and leverage scores are quantities that have recently been crucial to achieving algorithmic improvements to fundamental problems like graph sparsification \cite{spielmanS08} and regression \cite{LiMP13,CohenLMMPS15}. While crude multiplicative approximations to the quantities suffice for these problems, more recently computing these quantities has been used to achieve breakthroughs in solving maximum flow and linear programming \cite{LeeS14}, cutting plane methods \cite{LeeSW15}, and sampling random spanning trees \cite{KelnerM09, MadryST15}. In each of these settings having more accurate estimates would be a natural route to either simplify or possibly improve existing results; we show that this is unlikely to be successful if the precision requirements are two high.

\subsection{Paper Outline}

\medskip
\noindent \textbf{Section \ref{sec:prelim}: Preliminaries}.
We review notations that will be used throughout.

\medskip
\noindent \textbf{Section \ref{sec:stepApprox}: Spectral Window Approximation}.
We show how to approximately restrict the spectrum of a matrix to a small window, 
which will be our main primitive for accessing the spectrum.

\medskip
\noindent \textbf{Section \ref{sec:stepDis}: Spectral Histogram Approximation}.
We show how our spectral window algorithms can be used to compute an approximate spectral histogram. We give applications to approximating general spectral sums, including the Schatten $p$-norms, Orlicz norms, and Ky Fan norms.

\medskip
\noindent \textbf{Section \ref{sec:lower}: Lower Bounds}.
We prove lower bounds showing that highly accurate spectral sum algorithms can be used to give  algorithms for triangle detection and matrix multiplication.

\medskip
\noindent \textbf{Section \ref{sec:poly}: Improved Algorithms via Polynomial Approximation}.
We demonstrate how to tighten $\epsilon$ dependencies in our runtimes using a more general polynomial approximation approach.

\medskip
\noindent \textbf{Section \ref{sec:upper}: Optimized Runtime Bounds}.
We instantiate the techniques of Section \ref{sec:poly} give our best runtimes for the Schatten $p$-norms and SVD entropy.

\section{Preliminaries}\label{sec:prelim}
Here we outline notation and conventions used throughout the paper.

\medskip
\noindent 
\textbf{Matrix Properties:} For $\bv{A} \in \R^{n \times d}$ we assume without loss of generality that $d \le n$. We let $\sigma_1(\bv{A}) \geq \ldots \geq \sigma_d(\bv{A}) \geq 0$ denote the matrix's singular values, $\nnz(\bv{A})$ denote the number of non-zero entries, and $d_s(\bv{A})$ denote the maximum number of non-zero entries in any row. Note that $d_s(\bv{A}) \in [\nnz(\bv{A}) / n, d]$. 

\medskip
\noindent 
\textbf{Fast Matrix Multiplication:}
Let $\omega \approx 2.3729$ denote the current best exponent of fast matrix multiplication \cite{williams2012multiplying,gall2017improved}. Additionally,
let $\omega(\gamma)$ denote the exponent such that it takes $O\left (d^{\omega(\gamma)}\right)$ time to multiply a $d \times d$ matrix by a $d\times d^\gamma$ matrix for any $\gamma \le 1$. $\omega(\gamma) = 2$ for $\gamma < {\alpha}$ where $\alpha > 0.31389$ and $\omega(\gamma) = 2 + (\omega-2)\frac{\gamma - \alpha}{1-\alpha}$ for $\gamma \ge {\alpha}$ \cite{le2012faster,gall2017improved}. For $\gamma =1$, $\omega(\gamma) = \omega$.

\medskip
\noindent 
\textbf{Asymptotic Notation:}
We use $\tilde O(\cdot)$ notation to hide poly-logarithmic factors in the input parameters, including dimension, failure probability, and error $\epsilon$. We use `with high probability' or `w.h.p.' to refer to events happening with probability at least $1-1/d^c$ for some constant $c$, where $d$ is our smaller input dimension.

\medskip
\noindent 
\textbf{Other:}
We denote $[d] \eqdef \{0,\ldots,d\}$. For any $\bv{y} \in \mathbb{R}^d$ and PSD $\bv{N} \in \mathbb{R}^{d\times d}$, we denote $\norm{\bv{y}}_{\bv{N}} \eqdef \sqrt {\bv{y}^T \bv{N} \bv{y}}.$
\section{Approximate Spectral Windows via Ridge Regression}\label{sec:stepApprox}

In this section, we give state-of-the-art results for approximating spectral windows over $\bv{A}$. As discussed, our algorithms will split $\bv{A}$'s spectrum into small slices using these window functions, performing trace estimation to estimate the number of singular values on each window and producing an approximate spectral histogram.

In Section~\ref{sec:tools:step} we show how to efficiently apply smooth approximations to threshold functions of the spectrum provided we have access to an algorithm for solving regularized regression problems with the matrix. In Section~\ref{sec:tools:regress} we then provide the fastest known algorithms for the regression problems in the given parameter regimes using both stochastic gradient methods and traditional solvers.
Departing from our high level description in Section \ref{sec:approach}, we actually incorporate singular vector deflation directly into our system solvers to reduce condition number. This simplifies our final algorithms but has the same effect as the deflation techniques discussed in Section \ref{sec:approach}.
Finally,  in Section~\ref{sec:tools:window} we provide algorithms and runtime analysis for applying smooth approximations to window functions of the spectrum, which is the main export of this section.

\subsection{Step Function Approximation}
\label{sec:tools:step}

To compute a window over $\bv{A}$'s spectrum, we will combine two threshold functions at the boundaries of the window. We begin by discussing how to compute these threshold functions.

Let $s_\lambda: [0,1] \rightarrow [0,1]$ be the threshold function at $\lambda$. $s_\lambda(x) = 1$ for $x \in [\lambda,1]$ and $0$ for $x \in [0,\lambda)$. For some gap $\gamma$ we define a soft step function by:
\begin{definition}[Soft Step Function]\label{def:softstep}
$s_\lambda^\gamma: [0,1] \rightarrow [0,1]$ is a $\gamma$-soft step at $\lambda > 0$ if:
\begin{align}\label{polyApproxGoal}
s_\lambda^\gamma(x)= \begin{cases} 0 \text{ for }x \in [0,(1-\gamma)\lambda] \\
1 \text{ for }x \in [\lambda,1]
\end{cases}\text{ and  }
s_\lambda^\gamma(x) \in [0,1]\text{ for }x \in [(1-\gamma)\lambda, \lambda].
\end{align}
\end{definition}
We use the strategy from \cite{frostig2016principal}, which, for $\bv{A}$ with $\norm{\bv{A}}_2 \le 1$ shows how to efficiently multiply a $\gamma$-soft step $s^\gamma_\lambda(\bv{A}^T\bv{A})$ by any $\bv{y} \in \mathbb{R}^d$ using ridge regression. The trick is to first approximately compute $\bv{A}^T\bv{A}(\bv{A}^T\bv{A} + \lambda \bv{I})^{-1} \bv{y} = r_\lambda(\bv{A}^T\bv{A}) \bv{y}$ where $r_\lambda(x) \eqdef \frac{x}{x + \lambda}$.
Then, note that $s_{1/2}(r_\lambda(x)) = s_\lambda(x)$. Additionally, the symmetric step function $s_{1/2}$ can be well approximated with a low degree polynomial. Specifically, there exists a polynomial of degree $O(\gamma^{-1} \log(1/(\gamma \epsilon)))$ that is within additive $\epsilon$ of a true $\gamma$-soft step at $1/2$ and can be applied stably such that any error in computing $r_\lambda(\bv{A}^T \bv{A})$ remains bounded. The upshot, following from Theorem 7.4 of \cite{allen2016faster} is:

\begin{lemma}[Step Function via Ridge Regression]\label{ridge2step}
Let $\algA(\bv{A},\bv{y},\lambda,\epsilon)$ be an algorithm
that on input $\bv{A} \in \R^{n \times d}$, $\bv{y} \in \R^d$, $\lambda,\epsilon > 0$ returns $\bv{x} \in \R^{d}$ such that $\norm{\bv{x}-(\bv{A}^T\bv{A} + \lambda \bv{I})^{-1} \bv{y}}_2 \le \epsilon \norm{\bv{y}}_2$ with high probability. 
Then there is an algorithm $\algB(\bv{A},\bv{y},\lambda,\gamma,\epsilon)$ which on input $\bv{A} \in \R^{n \times d}$ with $\norm{\bv{A}}_2 \le 1$, $\bv{y} \in \R^d$, $\lambda \in (0,1)$, and $\gamma, \epsilon > 0$, returns $\bv{x} \in \R^d$ with 
$$\norm{\bv{x} - s^\gamma_\lambda(\bv{A}^T\bv{A})\bv{y}}_2 \le \epsilon \norm{\bv{y}}_2$$
where $s^\gamma_\lambda$ is a $\gamma$-soft step at $\lambda$ (i.e. satisfies Definition \ref{def:softstep}). $\algB(\bv{A},\bv{y},\lambda,\gamma,\epsilon)$ requires $O  (\gamma^{-1} \log(1/\epsilon\gamma))$ calls to $\algA(\bv{A},\bv{y},\lambda,\epsilon')$ along with $O(\nnz(\bv{A}))$ additional runtime, where $\epsilon' = \poly (1/ (\gamma\epsilon))$.
\end{lemma}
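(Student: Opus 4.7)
The plan is to realize $s^\gamma_\lambda(\bv{A}^T\bv{A})\bv{y}$ as a polynomial in the ridge filter $r_\lambda(\bv{A}^T\bv{A})=\bv{A}^T\bv{A}(\bv{A}^T\bv{A}+\lambda\bv{I})^{-1}$, exactly along the lines sketched above the lemma. First I would observe that $r_\lambda:[0,1]\to[0,1)$ is monotone and sends $\lambda\mapsto 1/2$, so that the composition $\tilde s(x)\eqdef \tilde p(r_\lambda(x))$ of a $\gamma'$-soft step at $1/2$ with $r_\lambda$ is itself a $\gamma$-soft step at $\lambda$ for an appropriate $\gamma' = \Theta(\gamma)$ (one only needs to check that $r_\lambda((1-\gamma)\lambda)\le 1/2-\gamma'/2$, which holds since $r_\lambda'(\lambda)=\Theta(1/\lambda)$ and $r_\lambda$ is concave on the relevant interval). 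Thus the problem reduces to applying a soft step at $1/2$ to the operator $r_\lambda(\bv{A}^T\bv{A})$, whose matrix-vector product we can compute approximately using $\algA$: given $\bv{z}$, form $\bv{A}^T\bv{A}\bv{z}$ in $O(\nnz(\bv{A}))$ time and then call $\algA(\bv{A},\bv{A}^T\bv{A}\bv{z},\lambda,\epsilon')$ to obtain a vector within $\epsilon'\norm{\bv{A}^T\bv{A}\bv{z}}_2\le \epsilon'\norm{\bv{z}}_2$ of $r_\lambda(\bv{A}^T\bv{A})\bv{z}$, using $\norm{\bv{A}}_2\le 1$.

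Next I would invoke the polynomial approximation to the symmetric step at $1/2$: there is a polynomial $\tilde p$ of degree $q=O(\gamma^{-1}\log(1/(\gamma\epsilon)))$ with $|\tilde p(x)-s^{\gamma'}_{1/2}(x)|\le \epsilon/2$ for all $x\in[0,1]$, constructible by a Chebyshev expansion truncated after $q$ terms (this is exactly the polynomial used in Frostig et al.\ and in Theorem~7.4 of Allen-Zhu--Li). I would evaluate $\tilde p$ at $r_\lambda(\bv{A}^T\bv{A})$ applied to $\bv{y}$ by the Clenshaw recurrence on the shifted Chebyshev basis, so that each step requires one application of $r_\lambda(\bv{A}^T\bv{A})$, i.e., one call to $\algA$ plus $O(\nnz(\bv{A}))$ additional work, for a total of $q$ calls and the stated $O(\nnz(\bv{A}))$ overhead per call.

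The main obstacle is the stability analysis: naively, errors of size $\epsilon'$ per matrix-vector product can be amplified by $\tilde p$'s coefficients, which in the monomial basis grow exponentially in $q$. The standard remedy, which is the content of the cited Theorem~7.4, is that the Chebyshev partial sums have operator norm $O(1)$ on the interval $[0,1]$, so a Clenshaw-type evaluation is \emph{backward stable}: the produced vector equals $\hat p(\hat{\bv{M}})\bv{y}$ for some $\hat p$ close to $\tilde p$ and some $\hat{\bv{M}}$ close to $r_\lambda(\bv{A}^T\bv{A})$, with error aggregating only as $\poly(q)\cdot\epsilon'\cdot\norm{\bv{y}}_2$. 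Choosing $\epsilon'$ polynomially small in $\gamma\epsilon$ (as stated) therefore drives the total error below $\epsilon\norm{\bv{y}}_2/2$; combining with the $\epsilon/2$ polynomial approximation error yields the claimed bound $\norm{\bv{x}-s^\gamma_\lambda(\bv{A}^T\bv{A})\bv{y}}_2\le \epsilon\norm{\bv{y}}_2$.

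Finally, I would simply collect the resource counts: $q=O(\gamma^{-1}\log(1/(\gamma\epsilon)))$ calls to $\algA$ at accuracy $\epsilon'=\poly(1/(\gamma\epsilon))^{-1}$, plus $O(\nnz(\bv{A}))$ additional arithmetic per iteration to apply $\bv{A}^T\bv{A}$ and update the Clenshaw state, which is absorbed into the per-call cost and gives exactly the statement of the lemma. The only genuinely non-routine piece is the backward-stability claim, and that is imported wholesale from the reference, so the proof reduces to assembling the pieces above.
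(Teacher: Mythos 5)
Your proposal is correct and follows essentially the same route as the paper: reduce to applying a soft step at $1/2$ to $r_\lambda(\bv{A}^T\bv{A})$ computed via the ridge-regression oracle, approximate that step by a degree-$O(\gamma^{-1}\log(1/(\gamma\epsilon)))$ polynomial, and import the stable-evaluation guarantee from Theorem 7.4 of \cite{allen2016faster} (following \cite{frostig2016principal}), which is precisely what the paper's proof does by citation. The extra details you supply (the check that $r_\lambda((1-\gamma)\lambda)\le (1-\gamma')/2$ with $\gamma'=\Theta(\gamma)$, the Clenshaw evaluation, and the accounting of per-call error using $\norm{\bv{A}}_2\le 1$) are consistent with that argument.
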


\subsection{Ridge Regression}
\label{sec:tools:regress}

Given Lemma~\ref{ridge2step}, to efficiently compute $s^\gamma_\lambda(\bv{A}^T\bv{A}) \bv{y}$ for $s^\gamma_\lambda( \cdot )$ satisfying Definition \ref{def:softstep}, it suffices to quickly approximate $(\bv{A}^T\bv{A} + \lambda \bv{I})^{-1} \bv{y}$ (i.e. to provide the algorithm $\algA(\bv{A},\bv{y},\lambda,\epsilon)$ used in the lemma). In this section we provide two theorems which give the state-of-the-art ridge regression running times achievable in our parameter regime, using sampling, acceleration, and singular value deflation.

Naively, computing $(\bv{A}^T\bv{A} + \lambda \bv{I})^{-1} \bv{y}$ using an iterative system solver involves a dependence on the condition number $\sigma_1^2(\bv{A})/\lambda$. In our theorems, this condition number is replaced by a deflated condition number depending on $\sigma^2_{k}(\bv{A})$ for some input parameter $k \in [d]$. We achieve this improved dependence following the techniques presented in~\cite{gonen2016solving}. We first approximate the top $k$ singular vectors of $\bv{A}$ and then construct a preconditioner based on this approximation, which significantly flattens the spectrum of the matrix. By using this preconditioner in conjunction with a stochastic gradient based linear system solver, we further enjoy an average condition number dependence.
The following theorem summarizes the results.

\begin{theorem}[Ridge Regression -- Accelerated Preconditioned SVRG]\label{accPreCondSvrgMainBody} For any $\bv A \in \mathbb{R}^{n \times d}$ and $\lambda > 0$, let $\bv{M}_\lambda \eqdef \bv{A}^T\bv{A} + \lambda \bv{I}$. Let $ \bar \kappa \eqdef \frac{k \sigma_k^2(\bv{A}) + \sum_{i=k+1}^d \sigma_i^2(\bv{A})}{d\lambda}$ where $k \in [d]$ is an input parameter.
There is an algorithm that builds a preconditioner for $\bv{M}_\lambda$ using precomputation time $\tilde O(\nnz(\bv{A})k+ dk^{\omega-1})$ for sparse $\bv{A}$ or $\tilde O(nd^{\omega(\log_d k)-1})$ time for dense $\bv{A}$, and for any input $\bv{y} \in \R^d$, returns
$\bv{x}$ such that with high probability
$\norm{\bv{x}-\bv{M}_\lambda^{-1} \bv{y}}_{\bv{M}_\lambda} \le \epsilon \norm{\bv{y}}_{\bv{M}_\lambda^{-1}}$
in
$$\tilde O \left (\nnz(\bv{A}) + \sqrt{\nnz(\bv{A}) [d\cdot d_s(\bv{A}) + dk ] \bar \kappa} \right )$$
time for sparse $\bv{A}$ or 
$\tilde O \left (nd + n^{1/2}d^{3/2} \sqrt{\bar \kappa}) \right)$
time for dense $\bv{A}$.
\end{theorem}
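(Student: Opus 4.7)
The plan is to combine approximate top-$k$ singular vector deflation, used to build a preconditioner, with an accelerated, importance-sampled SVRG solver applied to the preconditioned system, following the framework of~\cite{gonen2016solving} together with the accelerated SVRG analysis of~\cite{allen2016faster}. The argument proceeds in three phases.

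\emph{Phase 1: deflation subspace.} I would first run a randomized block Krylov method on $\bv{A}$ starting from a random $d\times k$ Gaussian block. After $\tilde O(1)$ iterations this returns an orthonormal $\bv{Z}\in\R^{d\times k}$ whose column span approximates the top-$k$ right singular subspace of $\bv{A}$ to polynomially small error, together with approximate squared top-$k$ singular values $\tilde{\bv{D}}\in\R^{k\times k}$. In the sparse case, the $\tilde O(\nnz(\bv{A})k)$ term comes from applying $\bv{A}$ and $\bv{A}^\top$ to $d\times k$ blocks, and the $\tilde O(dk^{\omega-1})$ term absorbs the QR factorizations and a $k\times k$ eigensolve inside the Krylov subspace. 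In the dense case the same block multiplications are carried out by fast rectangular matrix multiplication at cost $\tilde O(nd^{\omega(\log_d k)-1})$.

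\emph{Phase 2: preconditioner.} Define $\bv{P}\eqdef \bv{Z}\tilde{\bv{D}}\bv{Z}^\top + \lambda\bv{I}$. Since $\bv{P}-\lambda\bv{I}$ has rank $k$, both $\bv{P}^{\pm 1/2}$ can be applied to any vector in $O(dk)$ time via Sherman--Morrison--Woodbury. The spectral claim to prove is that the preconditioned operator $\widehat{\bv{M}}\eqdef\bv{P}^{-1/2}\bv{M}_\lambda\bv{P}^{-1/2}$ satisfies $\|\widehat{\bv{M}}\|_2=O(1)$ and
\[
\tr(\widehat{\bv{M}})\ =\ O\!\left(d+\tfrac{1}{\lambda}\bigl(k\sigma_k^2(\bv{A})+\textstyle\sum_{i>k}\sigma_i^2(\bv{A})\bigr)\right).
\]
The first bound follows from a multiplicative spectral equivalence $\bv{P}\asymp\bv{M}_\lambda$ using the block-Krylov guarantee on $\bv{Z}$; the second follows by splitting $\bv{A}$ along $\bv{Z}$ into captured and uncaptured components and showing that the captured part contributes only through an $O(k\sigma_k^2)$ residual (because $\bv{Z}$ approximates the true top-$k$ subspace up to a controlled tangent), while the uncaptured part contributes the $\sum_{i>k}\sigma_i^2$ term. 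Dividing by $d\lambda$ recovers $\bar\kappa$ and shows that $\widehat{\bv{M}}$ has average condition number $O(\bar\kappa)$.

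\emph{Phase 3: accelerated SVRG.} Solve $\widehat{\bv{M}}\bv{z}=\bv{P}^{-1/2}\bv{y}$ with accelerated SVRG, treating the quadratic objective as a finite sum over the $n$ rows of $\bv{A}$ plus a strongly convex ridge term, with rows sampled in proportion to $\|\bv{a}_i\|_2^2$. Importance sampling makes the stochastic-gradient variance depend on the trace bound from Phase~2, so after acceleration the total number of stochastic-gradient evaluations is $\tilde O(\sqrt{\bar\kappa\cdot N})$, where $N$ is the effective SVRG sample budget. Each evaluation costs $O(d_s(\bv{A}))$ for the row access plus $O(k)$ for applying the rank-$k$ preconditioner, and there is an additive $\tilde O(\nnz(\bv{A}))$ per-epoch snapshot gradient. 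Bookkeeping these costs and balancing the inner and outer loops yields exactly $\tilde O(\nnz(\bv{A})+\sqrt{\nnz(\bv{A})(d\,d_s(\bv{A})+dk)\bar\kappa})$ in the sparse case, and the matching $\tilde O(nd+n^{1/2}d^{3/2}\sqrt{\bar\kappa})$ bound in the dense case (specializing $d_s(\bv{A})=d$ and $\nnz(\bv{A})=nd$). Converting the $\widehat{\bv{M}}$-norm error back to the required $\bv{M}_\lambda$-norm error uses only the equivalence $\bv{P}\preceq O(1)\bv{M}_\lambda$ guaranteed by Phase~2.

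\emph{Main obstacle.} The delicate step is the trace bound on $\widehat{\bv{M}}$ in Phase~2. A purely multiplicative preconditioner bound gives only a $k\sigma_1^2/\lambda$-type trace, which is far too weak: the whole point of deflation is that the leakage from imperfectly captured top-$k$ directions should contribute at the $\sigma_k^2$ scale, not the $\sigma_1^2$ scale. Verifying this requires a careful perturbation analysis of the block Krylov subspace, exploiting the fact that the tangent between $\bv{Z}$ and the true top-$k$ right singular subspace is suppressed polynomially after only $\tilde O(1)$ iterations, so that running Krylov to poly-logarithmic accuracy suffices to produce only the claimed $O(k\sigma_k^2/\lambda)$ residual in $\tr(\widehat{\bv{M}})$. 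Once this trace bound is in hand, the rest is a plug-and-play combination of the rank-$k$ preconditioner and the accelerated importance-sampled SVRG machinery.
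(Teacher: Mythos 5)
Your overall architecture matches the paper's: approximate top-$k$ deflation via block Krylov, a rank-$k$ preconditioner applied in $O(dk)$ time, and an importance-sampled, accelerated stochastic solver whose complexity is governed by a trace-to-strong-convexity ratio of the preconditioned system (the paper accelerates via the black-box outer acceleration of \cite{frostig2015regularizing} rather than a directly accelerated SVRG, and it keeps the per-step cost at $O(d_s(\bv{A})+k)$ through an implicit representation of the iterate, but those are differences of packaging, not substance). The genuine gap is in your Phase 2. First, your preconditioner $\bv{P}=\bv{Z}\tilde{\bv{D}}\bv{Z}^\top+\lambda\bv{I}$ scales the complement of $\mathrm{span}(\bv{Z})$ by $\lambda$, whereas the paper (following \cite{gonen2016solving}, its Lemma~\ref{traceBound}) scales it by $\tilde\sigma_k^2+\lambda$. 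This is not cosmetic: with only the gap-free guarantees that block Krylov actually provides ($\norm{\bv{A}-\bv{Z}\bv{Z}^T\bv{A}}_2^2\le 2\sigma_{k+1}^2(\bv{A})$ and Rayleigh-quotient accuracy $|\tilde\sigma_i^2-\sigma_i^2(\bv{A})|\le 2\sigma_{k+1}^2(\bv{A})$), the leakage of a top singular direction into $\mathrm{span}(\bv{Z})$ can have overlap of order $\sigma_{k+1}/\sigma_1$ with a direction $\bv{u}$ on which $\bv{A}$ is (near) zero; then $\bv{u}^\top\bv{P}\bv{u}\approx\sigma_{k+1}^2+\lambda$ while $\bv{u}^\top\bv{M}_\lambda\bv{u}=\lambda$, so $\lambda_{\min}(\bv{P}^{-1/2}\bv{M}_\lambda\bv{P}^{-1/2})$ can be as small as $\lambda/\sigma_{k+1}^2$ even though the trace stays $O(d+d\bar\kappa)$. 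The quantity that drives the SVRG epoch length is exactly this trace-to-smallest-eigenvalue ratio, so your effective condition number can blow up by a factor $\sigma_{k+1}^2/\lambda$, destroying the claimed runtime. (Your subsidiary claim $\norm{\widehat{\bv{M}}}_2=O(1)$ is also false for this $\bv{P}$, but that is not what the analysis needs.)

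Second, the repair you propose for exactly this obstacle --- that after $\tilde O(1)$ block Krylov iterations the tangent between $\mathrm{span}(\bv{Z})$ and the true top-$k$ right singular subspace is polynomially small --- is unattainable in general: gap-free Krylov/power methods cannot recover the top-$k$ subspace to small angle when $\sigma_k(\bv{A})\approx\sigma_{k+1}(\bv{A})$ (the subspace is not even stable), and any angle bound would need $\poly(1/\mathrm{gap})$ iterations, which is incompatible with the $\tilde O(\nnz(\bv{A})k+dk^{\omega-1})$ precomputation budget. The paper avoids needing any subspace-angle statement: by capping the complement at $\tilde\sigma_k^2+\lambda$, extra leakage at scale $\sigma_{k+1}^2\le\sigma_k^2$ in complement directions is harmless, and the ratio bound $\tr(\bv{P}^{-1/2}\bv{M}_\lambda\bv{P}^{-1/2})/\lambda_d(\bv{P}^{-1/2}\bv{M}_\lambda\bv{P}^{-1/2})=O\bigl(d+\frac{k\sigma_k^2(\bv{A})+\sum_{i>k}\sigma_i^2(\bv{A})}{\lambda}\bigr)$ follows from the gap-free guarantees alone. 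So either adopt that preconditioner (and its Lemma~\ref{traceBound}-style analysis), or supply a new robustness argument for yours; as written, the key spectral step of your proof does not go through.
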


\begin{proof} We give a proof in Appendix \ref{sec:solverAppendix}.
Note that the $\epsilon$ dependence in the runtime is $\log(1/\epsilon)$ and so is hidden by the $\tilde O(\cdot )$ notation.
\end{proof}

When $\bv{A}$ is dense, the runtime of Theorem \ref{accPreCondSvrgMainBody} is essentially the best known. Due to its average condition number dependence, the method always outperforms traditional iterative methods, like conjugate gradient, up to $\log$ factors. However, in the sparse case, traditional approaches can give faster runtimes if the rows of $\bv{A}$ are not uniformly sparse and $d_s(\bv{A})$ is large. We have the following, also proved in Appendix \ref{sec:solverAppendix} using the same deflation-based preconditioner as in Theorem \ref{accPreCondSvrgMainBody}:

\begin{theorem}[Ridge Regression -- Preconditioned Iterative Method]\label{preCondIterMainBody} For any $\bv A \in \mathbb{R}^{n \times d}$ and $\lambda > 0$, let $\bv{M}_\lambda \eqdef \bv{A}^T\bv{A} + \lambda \bv{I}$ and $ \hat \kappa \eqdef  \frac{\sigma_{k+1}^2(\bv{A})}{\lambda}$ where $k \in [d]$ is an input parameter.
There is an algorithm that builds a preconditioner for $\bv{M}_\lambda$ using precomputation time $\tilde O(\nnz(\bv{A})k+ dk^{\omega-1})$, and for any input $\bv{y} \in \R^d$, returns
$\bv{x}$ such that with high probability
$\norm{\bv{x}-\bv{M}_\lambda^{-1} \bv{y}}_{\bv{M}_\lambda} \le \epsilon \norm{\bv{y}}_{\bv{M}_\lambda^{-1}}$ in
$\tilde O \left ((\nnz(\bv{A})+dk) \lceil \sqrt{\hat \kappa}\rceil \right )$ time.
\end{theorem}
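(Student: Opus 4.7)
The plan is to precondition $\bv{M}_\lambda$ using an approximation of the top-$k$ right singular subspace of $\bv{A}$, and then run an accelerated iterative method (e.g., preconditioned Chebyshev iteration or preconditioned conjugate gradient) on the preconditioned system. This mirrors the deflation-based preconditioner used in Theorem~\ref{accPreCondSvrgMainBody}; the only change is that the inner solver is a deterministic full-gradient iterative method instead of SVRG. This trade-off is the right one when rows of $\bv{A}$ are not uniformly sparse, since we then pay $O(\nnz(\bv{A}))$ per matrix–vector product but no longer have an $\sqrt{d \cdot d_s(\bv{A})}$ factor inside the square root.

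For the precomputation phase, I would compute an orthonormal $\bv{U}_k \in \R^{d \times k}$ whose column span is close to the top-$k$ right singular subspace of $\bv{A}$, together with approximate squared top singular values $\tilde\lambda_1 \ge \cdots \ge \tilde\lambda_k$, using a gap-free block Krylov method (e.g., that of Musco–Musco). The running time is $\tilde O(\nnz(\bv{A}) k)$ for the Krylov iterations plus $\tilde O(dk^{\omega-1})$ for the orthogonalization/QR step on the resulting $d \times O(k)$ block, matching the precomputation cost stated in the theorem and identical to that used in Theorem~\ref{accPreCondSvrgMainBody}.

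From this, I would build the preconditioner
\[
\bv{P} \eqdef \bv{U}_k \bigl(\operatorname{diag}(\tilde\lambda_i) + \lambda \bv{I}\bigr) \bv{U}_k^T + \bigl(\tilde\sigma_{k+1}^2 + \lambda\bigr)\bigl(\bv{I} - \bv{U}_k \bv{U}_k^T\bigr),
\]
where $\tilde\sigma_{k+1}^2$ is a constant-factor upper bound on $\sigma_{k+1}^2(\bv{A})$ extracted from the block Krylov output. Assuming the subspace approximation is set to sufficiently small polynomial accuracy, a Rayleigh quotient calculation shows that the eigenvalues of $\bv{P}^{-1/2} \bv{M}_\lambda \bv{P}^{-1/2}$ lie in $[\Omega(1), O(\hat\kappa + 1)]$. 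Applying $\bv{P}^{-1}$ via the Sherman–Morrison–Woodbury formula costs $O(dk)$, and applying $\bv{M}_\lambda$ costs $O(\nnz(\bv{A}))$, so each preconditioned iteration runs in $O(\nnz(\bv{A}) + dk)$ time. Standard analysis of Chebyshev iteration or CG in the $\bv{M}_\lambda$-norm then yields $\epsilon$ relative error in $O(\sqrt{\hat\kappa + 1} \log(1/\epsilon))$ iterations, producing the claimed $\tilde O((\nnz(\bv{A}) + dk)\lceil \sqrt{\hat\kappa}\rceil)$ solve-phase runtime.

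The step I expect to be the main obstacle is verifying that the \emph{approximate} subspace and singular value estimates are accurate enough to guarantee the $O(\hat\kappa + 1)$ preconditioned condition number without assuming any spectral gap at $\sigma_k(\bv{A})$. The gap-free bound is delicate because the theorem allows $\sigma_{k+1}$ to be arbitrarily close to $\sigma_k$, so one has to argue that the Krylov method still captures the span of the top-$k$ singular vectors well enough in the metric induced by $\bv{A}^T\bv{A} + \lambda \bv{I}$, and then propagate this through the Rayleigh quotient calculation for $\bv{P}^{-1}\bv{M}_\lambda$. This is essentially the analysis of \cite{gonen2016solving} applied in our setting, which I would adapt and verify carefully in the appendix alongside the proof of Theorem~\ref{accPreCondSvrgMainBody}.
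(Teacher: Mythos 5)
Your proposal matches the paper's proof essentially step for step: the paper also builds the deflation-based preconditioner of \cite{gonen2016solving} from an approximate top-$k$ subspace computed via block Krylov iteration (precomputation $\tilde O(\nnz(\bv{A})k + dk^{\omega-1})$), invokes Lemmas 2 and 4 of \cite{gonen2016solving} for the $O(\sigma_{k+1}^2(\bv{A})/\lambda)$ preconditioned condition number without any gap assumption, and then runs a preconditioned accelerated solver (Chebyshev/CG/AGD) with per-iteration cost $O(\nnz(\bv{A}) + dk)$ since $\bv{P}^{-1/2}$ applies in $O(dk)$ time. The gap-free subspace-accuracy issue you flag as the main obstacle is exactly the part the paper delegates to the cited analysis of \cite{gonen2016solving}, so your plan is sound and essentially identical.
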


\subsection{Overall Runtimes For Spectral Windows}
\label{sec:tools:window}

Combined with Lemma~\ref{ridge2step}, the ridge regression
routines above let us efficiently compute soft step functions of $\bv{A}$'s spectrum. Composing step functions  then gives our key computational primitive: the ability to
approximate soft window functions that restrict $\bv{A}$'s spectrum to a specified range. We first define our notion of soft window functions and then discuss  runtimes. The corresponding Theorem~\ref{thm:window} is our main tool for spectrum approximation in the remainder of the paper.

\begin{definition}[Soft Window Function]\label{def:window}
Given $a, b > 0$ with $a < b$, and $\gamma \in [0,1]$,
$h^\gamma_{[a,b]} :[0,1]\rightarrow [0,1]$ is a $\gamma$-soft window for the range $[a,b]$ if:
\begin{align*}
h^\gamma_{[a,b]}(x)  = \begin{cases} 1 \text{ for }x \in [a,b]\\
0 \text{ for }x \in [0,(1-\gamma)a] \cup [(1+\gamma)b,1]
\end{cases}\\\text{ and  }
h^\gamma_{[a,b]}(x)  \in [0,1]\text{ for }x \in [(1-\gamma)a, a] \cup [b,(1+\gamma) b].
\end{align*}
\end{definition}

\begin{theorem}[Spectral Windowing]\label{thm:window} For any $\bv{A} \in \mathbb{R}^{n \times d}$ with $\norm{\bv{A}}_2 \le 1$,  $\bv{y} \in \R^d$, and $a,b,\gamma,\epsilon \in (0,1]$, with $a < b$, there is an algorithm $\algW(\bv{A},\bv{y},a,b,\gamma,\epsilon)$ that returns $\bv{x}$ satisfying w.h.p.: 
$$\norm{\bv{x} - h^\gamma_{[a,b]}(\bv{A}^T\bv{A}) \bv{y}}_2 \le \epsilon \norm{\bv{y}}_2$$
where $h^\gamma_{[a,b]}$ is a soft window function satisfying Def.~\ref{def:window}. Let $\bar \kappa \eqdef \frac{k\sigma_k^2(\bv{A}) + \sum_{i=k+1}^d \sigma_i^2(\bv{A})}{d \cdot a}$ and $\hat \kappa \eqdef \frac{\sigma_{k+1}^2(\bv{A})}{a}$ where $k \in [d]$ is an input parameter.
The algorithm uses precomputation time $\tilde O(\nnz(\bv{A})k + dk^{\omega-1})$ for sparse $\bv{A}$ or $\tilde O(nd^{\omega(\log_d k)-1})$ for dense $\bv{A}$ after which given any $\bv{y}$ it returns $\bv{x}$ in time:
\begin{align*}
\tilde O \left (\frac{\nnz(\bv{A}) + \sqrt{\nnz(\bv{A}) [d\cdot d_s(\bv{A}) + dk ] \bar \kappa}}{\gamma} \right ) \text{\hspace{1em} or \hspace{1em}} \tilde O \left (\frac{(\nnz(\bv{A})+dk) \lceil \sqrt{\hat \kappa}\rceil}{\gamma} \right )
\end{align*} for sparse $\bv{A}$ or $ \tilde O \left (\frac{nd + n^{1/2}d^{3/2} \sqrt{\bar \kappa})}{\gamma} \right )$ for dense $\bv{A}$.
\end{theorem}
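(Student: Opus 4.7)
The plan is to reduce computing a soft window to two applications of Lemma~\ref{ridge2step}, using one of the ridge regression solvers from Section~\ref{sec:tools:regress} as the subroutine $\algA$. The key algebraic observation is that a $\gamma$-soft window at $[a,b]$ can be written as a difference of two soft steps. Taking $s_1 := s^\gamma_a$, a $\gamma$-soft step at $a$, and $s_2 := s^{\gamma'}_{(1+\gamma)b}$, a $\gamma'$-soft step at $(1+\gamma)b$ with $\gamma' = \gamma/(1+\gamma)$, one verifies $s_2 = 0$ on $[0,b]$ and $s_2 = 1$ on $[(1+\gamma)b,1]$. A direct case check then shows that $h := s_1 - s_2$ satisfies Definition~\ref{def:window}: it equals $1$ on $[a,b]$ (where $s_1 = 1$, $s_2 = 0$), equals $0$ on $[0,(1-\gamma)a] \cup [(1+\gamma)b,1]$ (where $s_1 = s_2$), and lies in $[0,1]$ on the two transition intervals.

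Given this decomposition, I would compute $\bv{x}_1 \approx s_1(\bv{A}^T\bv{A})\bv{y}$ and $\bv{x}_2 \approx s_2(\bv{A}^T\bv{A})\bv{y}$, each to additive error $(\epsilon/2)\norm{\bv{y}}_2$, via Lemma~\ref{ridge2step}, and output $\bv{x} := \bv{x}_1 - \bv{x}_2$. The triangle inequality yields the claimed bound $\norm{\bv{x} - h(\bv{A}^T\bv{A})\bv{y}}_2 \le \epsilon \norm{\bv{y}}_2$. The subroutine $\algA$ required by Lemma~\ref{ridge2step} is supplied by Theorem~\ref{accPreCondSvrgMainBody} in the accelerated-SVRG and dense cases and by Theorem~\ref{preCondIterMainBody} in the preconditioned-iterative case.

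For the runtime, the crucial point is that the preconditioner built in Theorems~\ref{accPreCondSvrgMainBody} and~\ref{preCondIterMainBody} depends only on approximate top-$k$ singular vectors of $\bv{A}$ and not on the regularization parameter $\lambda$. Thus the precomputation cost $\tilde O(\nnz(\bv{A})k + dk^{\omega-1})$ (or its dense analogue) is paid once and amortized across the two ridge regression instances with $\lambda = a$ and $\lambda = (1+\gamma)b$. By Lemma~\ref{ridge2step}, each soft-step computation issues $O(\gamma^{-1}\log(1/(\epsilon\gamma)))$ solves at accuracy $\poly(\gamma\epsilon)^{-1}$; since the ridge regression runtimes of Section~\ref{sec:tools:regress} are logarithmic in the target accuracy, this inverse polylogarithmic slack is absorbed into $\tilde O(\cdot)$. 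The step at $(1+\gamma)b$ uses effective gap $\gamma/(1+\gamma) \ge \gamma/2$, so both contributions scale as $1/\gamma$, and since $(1+\gamma)b \ge a$ the condition numbers $\bar\kappa$ and $\hat\kappa$ stated in the theorem (which use $a$ in the denominator) upper bound those encountered in both solves.

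The main obstacle is really bookkeeping: verifying that the preconditioner in Theorems~\ref{accPreCondSvrgMainBody} and~\ref{preCondIterMainBody} is truly $\lambda$-independent so a single precomputation suffices for both soft-step invocations, and tracking how the $\poly(\gamma\epsilon)^{-1}$ accuracy required inside Lemma~\ref{ridge2step} propagates through the solvers without worsening the runtime by more than log factors. Once these are confirmed, the theorem follows immediately by combining the bounds of Section~\ref{sec:tools:regress} with the $O(\gamma^{-1})$-iteration soft-step construction of Lemma~\ref{ridge2step}.
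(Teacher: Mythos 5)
Your proposal is correct and follows essentially the same route as the paper: build the window from two soft steps at $a$ and $(1+\gamma)b$ via Lemma~\ref{ridge2step}, instantiate the solver $\algA$ with Theorems~\ref{accPreCondSvrgMainBody} and~\ref{preCondIterMainBody}, pay the (effectively $\lambda$-independent, since only the deflation basis $\bv{Z}$ is expensive) precomputation once, and observe that the $\lambda=a$ solve dominates since $a<b$. The one genuine difference is the combination rule: you take the difference $h = s^\gamma_a - s^{\gamma'}_{(1+\gamma)b}$ and apply each step to $\bv{y}$ independently, whereas the paper uses the product $h^\gamma_{[a,b]}(x) = s^\gamma_a(x)\cdot\bigl(1 - s^{\gamma/2}_{(1+\gamma)b}(x)\bigr)$ and applies the operators sequentially, arguing no error amplification because both factors have spectral norm at most $1$. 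Your difference form is a valid soft window (on $(a,b]$ cross-terms vanish since $s^{\gamma'}_{(1+\gamma)b}$ is supported on $x>b$ where $s^\gamma_a=1$, so $h\in[0,1]$ everywhere), and its error analysis is marginally simpler, reducing to the triangle inequality. Two small points you should patch: (i) when $b \ge 1/(1+\gamma)$ the threshold $(1+\gamma)b$ exceeds $1$ and Lemma~\ref{ridge2step} (which requires $\lambda \in (0,1)$) does not apply; the paper handles this by simply setting $h^\gamma_{[a,b]} = s^\gamma_a$, since the upper transition then lies above the entire spectrum; (ii) the "bookkeeping" you defer on accuracy propagation includes converting the solvers' $\norm{\cdot}_{\bv{M}_\lambda}$-error guarantee into the $\ell_2$ guarantee Lemma~\ref{ridge2step} demands of $\algA$, which the paper does by tightening the solver accuracy by a factor $\kappa(\bv{M}_\lambda)$, costing only an extra $\log\kappa(\bv{M}_\lambda)$ absorbed in $\tilde O(\cdot)$.
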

\begin{proof}
If $b \ge 1/(1+\gamma)$ then we can simply define $h^\gamma_{[a,b]}(x)  = s_a^{\gamma}(x)$ for any $s_a^\gamma$ satisfying Definition \ref{def:softstep}. Otherwise,
given soft steps $s_a^{\gamma}$ and $s_{(1+\gamma)b}^{\gamma/2}$ satisfying Definition \ref{def:softstep}, we can define $h^\gamma_{[a,b]}(x) = s_a^{\gamma}(x) \cdot (1-s_{(1+\gamma)b}^{\gamma/2} (x))$. Since $\frac{\gamma}{2} \le \frac{\gamma}{1+\gamma}$ we can verify that this will be a valid soft window function for $[a,b]$ (i.e. satisfy Definition \ref{def:window}).
Further, we have for any $\bv{y} \in \R^d$:
\begin{align}\label{windowComputation}
h^\gamma_{[a,b]}(\bv{A}^T\bv{A})\bv{y} = s_a^{\gamma}(\bv{A}^T\bv{A})(\bv{I} - s_{(1+\gamma)b}^{\gamma/2}(\bv{A}^T\bv{A}))\bv{y}.
\end{align}
We can compute $s_a^{\gamma}(\bv{A}^T\bv{A})\bv{y}$ and $s_{(1+\gamma)b}^{\gamma/2}(\bv{A}^T\bv{A})\bv{y}$ each up to error $\epsilon \norm{\bv{y}}_2$ via Lemma \ref{ridge2step}. This gives the error bound in the theorem, since we have both $\norm{s_a^{\gamma}(\bv{A}^T\bv{A})}_2 \le 1$ and $\norm{\bv{I} - s_{(1+\gamma) b}^{\gamma/2}(\bv{A}^T\bv{A})}_2 \le 1$ so the computation in \eqref{windowComputation} does not amplify error. Our runtime follows from combining Theorems \ref{accPreCondSvrgMainBody} and \ref{preCondIterMainBody} with $\lambda = a,b$ with Lemma \ref{ridge2step}. The errors in these theorems are measured with respect to $\norm{\cdot}_{\bv{M}_\lambda}$. To obtain the error in $\norm{\cdot}_2$ as used by Lemma \ref{ridge2step}, we simply apply the theorems with $\epsilon' = \epsilon \kappa(\bv{M}_\lambda)$ which incurs an additional $\log(\kappa(\bv{M}_\lambda))$ cost. Since $a < b$ the runtime is dominated by the computation of $s_a^{\gamma}(\bv{A}^T\bv{A})\bv{y}$, which depends on the condition number $\bar \kappa \eqdef \frac{k\sigma_k^2(\bv{A}) + \sum_{i=k+1}^d \sigma_i^2(\bv{A})}{d \cdot a}$ when using SVRG (Theorem \ref{accPreCondSvrgMainBody}) or $\hat \kappa \eqdef \frac{\sigma_{k+1}^2(\bv{A})}{a}$ for a traditional iterative solver (Theorem \ref{preCondIterMainBody}).
\end{proof}
\section{Approximating Spectral Sums via Spectral Windows}\label{sec:stepDis}

We now use the window functions discussed in Section \ref{sec:stepApprox} to compute an approximate spectral histogram of $\bv{A}$. 
We give our main histogram algorithm and approximation guarantee in Section \ref{sec:histogram}.
In Section \ref{sec:function_eval} we show how this guarantee translates to accurate spectral sum approximation for any smooth and sufficiently quickly growing function $f(x)$. In Section \ref{sec:specific} we apply this general result to approximating the Schatten $p$-norms for all real $p > 0$, bounded Orlicz norms, and the Ky Fan norms.

\subsection{Approximate Spectral Histogram}\label{sec:histogram}

    Our main histogram approximation method is given as Algorithm \ref{algo:histogram}. The algorithm is reasonably simple. Assuming $\norm{\bv{A}}_2 \le 1$ (this is w.l.o.g. as we can just scale the matrix), and given cutoff $\lambda$, below which we will not evaluate $\bv{A}'s$ spectrum, we split the range $[\lambda,1]$ into successive windows $R_0,...,R_T$ where $R_t = [a_1(1-\alpha)^t, a_1(1-\alpha)^{t-1}]$. Here $\alpha$ determines the width of our windows. In our final spectral approximation algorithms, we will set $\alpha = \Theta(\epsilon)$. $a_1$ is a random shift, which insures that, in expectation, the boundaries of our soft windows do not overlap too many singular values. This argument requires that most of the range $[\lambda,1]$ is not covered by boundary regions. Thus, we set the steepness parameter $\gamma = \Theta(\epsilon_2 \alpha)$ where $\epsilon_2$ will control the error introduced by the boundaries. Finally, we iterate over each window, applying trace estimation to approximate the singular value count in each window.
    
    In our final algorithms, the number of windows and samples required for trace estimation will be $\tilde O(\poly(1/\epsilon))$. The dominant runtime cost will come from computing the window for the lowest range $R_T$, which will incur a dependence on the condition number of $\bv{A}^T\bv{A}+ a_T \bv{I}$ with $a_T = \Theta(\lambda)$.

\begin{algorithm}
\caption{Approximate Spectral Histogram}
{\bf Input:} $\bv{A}\in\R^{n\times d}$ with $\norm{\bv{A}}_2 \le 1$, accuracy parameters $\epsilon_1,\epsilon_2 \in (0,1)$, width parameter $\alpha \in(0,1)$, and minimum singular value parameter $\lambda \in (0,1)$.
\\ 
{\bf Output:} Set of range boundaries $a_{T+1} < a_T < ... < a_1 < a_0$ and counts $\{\tilde b_0,...,\tilde b_T\}$ where $\tilde b_t$ approximates the number of squared singular values of $\bv{A}$ on $[a_{t+1},a_{t}]$. 
\begin{algorithmic}
  \State Set $\gamma = c_1 \epsilon_2 \alpha$, $T = \lceil \log_{(1-\alpha)} \lambda \rceil$, and $S = \frac{\log n}{c_2\epsilon_1^2}.$
  
  \State Set $a_0 = 1$ and choose $a_1$ uniformly at random in $[1-\alpha/4,1]$. 
  \State Set $a_t = a_1(1-\alpha)^{t-1}$ for $2 \le t \le T+1$.
  \For {$t = 0 :T$} \Comment{\textcolor{blue}{Iterate over histogram buckets.}}
  	\State Set $\tilde b_t = 0.$ \Comment{\textcolor{blue}{Initialize bucket size estimate.}}
  	\For{$s = 1:S$} \Comment{\textcolor{blue}{Estimate bucket size via trace estimation.}}
		\State Choose $\bv{y} \in \{-1,1\}^d$ uniformly at random.
  		\State Set $\tilde b_t = \tilde b_t + \frac{1}{S} \cdot \bv{y}^T\algW(\bv{A}^T\bv{A},\bv{y},a_{t+1},a_{t},\gamma,c_3\epsilon_1^2/n).$  \Comment{\textcolor{blue}{Apply soft window via Thm \ref{thm:window}.}}
		\State If $\tilde b_t \le 1/2$ set $\tilde b_t = 0$.\Comment{\textcolor{blue}{Round small estimates to ensure relative error.}}
	\EndFor
  \EndFor\\
 \Return $a_1$ and $\tilde b_t$ for $t=0:T$. \Comment{\textcolor{blue}{Output histogram representation.}}
\end{algorithmic}
\label{algo:histogram}
\end{algorithm}

\begin{theorem}[Histogram Approximation]\label{thm:histogram} Let $a_1$, $\tilde b_0,...,\tilde b_{T}$ be output by Algorithm \ref{algo:histogram}. Let $R_0 = [a_1,1]$, $R_t = [a_1(1-\alpha)^t,a_1(1-\alpha)^{t-1}]$ for $t \ge 1$, and $b_t = \left |\{i: \sigma_i^2(\bv{A}) \in R_t \}\right |$ be the number of squared singular values of $\bv{A}$ on the range $R_t$. Then, for sufficiently small constants $c_1,c_2,c_3$, with probability $99/100$, for all $t \in \{0,...,\lceil \log_{(1-\alpha)}  \lambda \rceil  \}$, $\tilde b_t$ output by Algorithm \ref{algo:histogram} satisfies:
\begin{align*}
(1-\epsilon_1) b_t \le \tilde b_t \le (1+\epsilon_1) b_t + \lceil \log_{(1-\alpha)}  \lambda \rceil \cdot \epsilon_2 (b_{t-1} + b_t + b_{t+1}).
\end{align*}
\end{theorem}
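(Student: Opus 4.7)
The plan is to decompose $\tilde b_t - b_t$ into three independently controllable pieces. Let $E_t \eqdef \tr(h^\gamma_{[a_{t+1},a_t]}(\bv{A}^T\bv{A})) = \sum_i h^\gamma_{[a_{t+1},a_t]}(\sigma_i^2(\bv{A}))$ and let $\hat b_t$ denote the idealized Hutchinson average obtained by replacing each call to $\algW$ with an \emph{exact} window multiplication. Then
\begin{align*}
\tilde b_t - b_t = \underbrace{(\tilde b_t - \hat b_t)}_{\text{window approx.}} + \underbrace{(\hat b_t - E_t)}_{\text{trace estimation}} + \underbrace{(E_t - b_t)}_{\text{boundary blur}},
\end{align*}
and I will bound each piece separately and combine them by a union bound over $t$.

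The first two pieces are routine. By Theorem~\ref{thm:window} the output of $\algW$ satisfies $\|\bv{x} - h^\gamma_{[a_{t+1},a_t]}(\bv{A}^T\bv{A})\bv{y}\|_2 \le (c_3\epsilon_1^2/n)\|\bv{y}\|_2$, so Cauchy--Schwarz together with $\|\bv{y}\|_2^2 = d \le n$ yields $|\tilde b_t - \hat b_t| \le c_3 \epsilon_1^2$ per bucket. Since $h^\gamma_{[a_{t+1},a_t]}(\bv{A}^T\bv{A})$ is PSD with spectral norm at most $1$, standard Hutchinson concentration with $S = \Theta(\log n/\epsilon_1^2)$ samples yields $|\hat b_t - E_t| \le \epsilon_1 \max(E_t, 1)$ with probability $1 - 1/\poly(n)$, large enough to absorb a union bound over the $T+1$ buckets into the constant $c_2$. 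The rounding step that zeroes $\tilde b_t \le 1/2$ converts the additive $\epsilon_1$ in $\max(E_t,1)$ into a valid relative-error guarantee when $b_t = 0$.

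The substantive step is bounding the boundary blur $E_t - b_t$ using the random shift $a_1$. Since $h^\gamma_{[a_{t+1},a_t]}$ equals $1$ on $R_t$ and lies in $[0,1]$ elsewhere, $E_t - b_t \ge 0$ and the nonzero contributions come only from singular values in the two boundary strips $[(1-\gamma)a_{t+1}, a_{t+1})$ and $(a_t, (1+\gamma)a_t]$, which are subsets of $R_{t+1}$ and $R_{t-1}$ respectively. For a fixed $\sigma_i^2$, the condition $\sigma_i^2 \in [(1-\gamma)a_{t+1}, a_{t+1})$ pins $a_1$ to an interval of length $\gamma \sigma_i^2/((1-\gamma)(1-\alpha)^t)$, which has fractional measure at most $c_4 \gamma/\alpha = c_4 c_1 \epsilon_2$ inside the length-$(\alpha/4)$ support of $a_1$; summing over the $i$ whose $\sigma_i^2$ lies in the sweep range of either boundary of $R_t$ (a range contained in $R_{t-1} \cup R_t \cup R_{t+1}$) yields $\E_{a_1}[E_t - b_t] \le c_4 c_1 \epsilon_2\,(b_{t-1} + b_{t+1})$.

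A per-bucket Markov inequality at failure probability $1/(300T)$ amplifies this into $E_t - b_t \le c_5 T \epsilon_2 (b_{t-1} + b_{t+1})$, and a union bound over the $T+1$ buckets preserves overall success probability $\ge 99/100$ when combined with the first two pieces; absorbing the $(1+\epsilon_1)$ multiplicative slack into the additive term then yields the stated two-sided bound. The hard part is precisely this randomized boundary analysis, because the bucket ranges $R_t$ and the bucket membership $\tau(i)$ of each singular value both depend on $a_1$, so $b_t$ and the boundary excess $E_t - b_t$ are correlated random variables; the localization of the error to $b_{t-1} + b_t + b_{t+1}$ rather than a global bound scaling with $d$ hinges on the fact that, for each fixed $\sigma_i^2$, the random shift can place $\sigma_i^2$ in the boundary of only a single window, and that window is adjacent to the one containing $\sigma_i^2$. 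The factor $T$ in the theorem's bound is the unavoidable price of amplifying a one-shot expectation bound into a guarantee simultaneously valid across all $T$ buckets via Markov plus union bound.
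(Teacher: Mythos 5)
Your decomposition (window-application error, Hutchinson error, boundary blur) and the per-bucket Markov-plus-union-bound scheme mirror the paper's proof, and the first two pieces are handled correctly. The gap is in the boundary-blur step, which is exactly the delicate point of this theorem. You write $\E_{a_1}[E_t - b_t] \le c_4 c_1 \epsilon_2\,(b_{t-1}+b_{t+1})$ and then ``amplify'' it by Markov to a high-probability bound $E_t - b_t \le c_5 T \epsilon_2 (b_{t-1}+b_{t+1})$. But $b_{t-1}, b_t, b_{t+1}$ are themselves functions of the random shift $a_1$ (the bucket boundaries move with $a_1$), so the expectation inequality with a random right-hand side is not well-formed, and Markov's inequality cannot be applied against a threshold that is a random variable correlated with the quantity being bounded. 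You explicitly name this correlation as ``the hard part,'' but the argument as written does not resolve it --- it is precisely the pitfall that the paper's proof is structured to avoid (and that the acknowledgements indicate was a bug in an earlier version).

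The paper's fix, which your proposal is missing, is to decouple the randomness before invoking Markov: it defines a \emph{deterministic} interval $M_t = [(1-\alpha)^{t+1},\,(1-\alpha/4)(1-\alpha)^{t-2}]$ that contains the extended window $\bar R_t$ for every realization of $a_1$, with deterministic count $m_t = |\{i : \sigma_i^2(\bv{A}) \in M_t\}|$. The boundary excess is bounded by $\sum_{\sigma_i^2 \in M_t} \mathbb{I}[i \in I]$, whose expectation is $(8\gamma/\alpha)\, m_t$ with $m_t$ a fixed number; Markov is then applied against the deterministic threshold $T\epsilon_2\, m_t$, a union bound is taken over the $T+1$ buckets, and only afterwards one uses the pointwise (realization-by-realization) containment $M_t \subset R_{t-1} \cup R_t \cup R_{t+1}$ to conclude $m_t \le b_{t-1}+b_t+b_{t+1}$. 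Your ``sweep range'' is morally this $M_t$, so the repair is short, but without making the deterministic surrogate explicit and ordering the steps this way, the central probabilistic step of your proof does not go through. (Minor related point: the correct containment forces the additive term to include $b_t$ as well, consistent with the theorem statement, rather than only $b_{t-1}+b_{t+1}$.)
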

That is, 
 $\tilde b_t$ approximates the number of singular values of the range $R_t$ up to multiplicative $(1\pm \epsilon_1)$ error and additive error $\lceil \log_{(1-\alpha)} \lambda \rceil\cdot \epsilon_2 (b_{t-1} + b_t + b_{t+1})$. Note that by setting $\epsilon_2 \le \frac{\epsilon_1}{\lceil \log_{(1-\alpha)} \lambda \rceil}$, the error on each bucket is just multiplicative on its size plus the size of the two adjacent buckets, which contain singular values in nearby ranges. For simplicity we assume $\bv{A}$ passed to the algorithm has $\norm{\bv{A}}_2 \le 1$. This is without loss of generality: we can estimate $\norm{\bv{A}}_2$ in $\tilde O(\nnz(\bv{A}))$ time via the power or Lanczos methods \cite{kuczynski1992estimating,musco2015randomized}, and scale down the matrix appropriately. 
 
 Ignoring logarithmic and $\epsilon$ dependencies, the runtime of Algorithm \ref{algo:histogram} is dominated by the calls to $\algW$ for the bucket corresponding to the smallest singular values, with $a_{T} = \Theta(\lambda)$. This runtime is given by Theorem \ref{thm:window}. Since balancing the deflation parameter $k$ in that theorem with the minimum squared singular value $\lambda$ considered can be complex, we wait to instantiate full runtimes until employing Algorithm \ref{algo:histogram} for specific spectral sum computations.
 \begin{proof}
 We use the notation of Algorithm \ref{algo:histogram}, where  $a_0 = 1$, $a_1$ is chosen uniformly in $[1-\alpha/4,1]$ and $a_t = a_1(1-\alpha)^{t-1}$. With this notation, we have $R_t = [a_{t+1},a_t]$.
 Let $h^{\gamma}_{R_t}$ be a $\gamma$-soft window for $R_t$ (Definition \ref{def:window}) and let $\bar R_t =  [(1-\gamma)a_{t+1},(1+ \gamma)a_t]$ be the interval on which $h^\gamma_{R_t}$ is nonzero. $\tilde b_t$ is an estimation of the trace of such a window applied to $\bv{A}^T\bv{A}$. We first show that, if these traces are computed exactly, they accurately estimate the singular value counts in each range $R_t$. We have:
  \begin{align}\label{cnm1}
	\tr(h^{\gamma}_{R_t}(\bv{A}^T \bv{A})) &= \sum_{\sigma_i^2(\bv{A}) \in R_t} h^{\gamma}_{R_t}(\sigma_i^2(\bv{A})) + \sum_{\sigma_i^2(\bv{A}) \in \bar R_t \setminus R_t} h^{\gamma}_{ R_t}(\sigma_i^2(\bv{A}))\nonumber\\
	&= b_t +  \sum_{\sigma_i^2(\bv{A}) \in \bar R_t \setminus R_t} h^{\gamma}_{ R_t}(\sigma_i^2(\bv{A})).
\end{align}
	
We can bound the second term using the random shift $a_1$. Since $\gamma = c_1\epsilon_2 \alpha < \alpha$, each singular value falls within at most two extended ranges: $\bar R_t$ and $\bar R_{t \pm  1}$ for some $t$. Let $I$ be the set of indices whose singular values fall within two ranges.
We have $i \in I$ only if $a_1(1-\alpha)^t \in (1\pm \gamma) \sigma_i^2(\bv{A})$ for some $t$. Letting $d = \lceil \log_{\sigma_i^2(\bv{A})} (1-\alpha) \rceil$, this holds only if 
	$a_1 \in  (1\pm \gamma ) \left (\frac{\sigma_i^2(\bv{A})}{(1-\alpha)^d} \right )$, which occurs with probability at  most $\frac{8\gamma}{\alpha}$ since $a_1$ is chosen uniformly in the range $[1-\alpha/4,1]$. Thus we have $ \Pr[i\in I] \le \frac{8\gamma}{\alpha}$. 
	Further, $a_{t} \eqdef a_1(1-\alpha)^{t-1}$ is distributed uniformly in the range $[(1-\alpha/4)(1-\alpha)^{t-1},(1-\alpha)^{t-1}]$, so we know for certain that if the constant $c_1$ on $\gamma$ is set small enough:
	\begin{align}\label{fixingit}
	\bar R_t &\subset [(1-\gamma)(1-\alpha/4)(1-\alpha)^{t},(1+\gamma)(1-\alpha)^{t-1}]\nonumber\\
	&\subset [(1-\alpha/3)(1-\alpha)^{t}, (1+\alpha/3)(1-\alpha)^{t-1}]\nonumber\\
	&\subset [(1-\alpha)^{t+1}, (1-\alpha/4)(1-\alpha)^{t-2}].
	\end{align}
Let $M_t \eqdef [(1-\alpha)^{t+1}, (1-\alpha/4)(1-\alpha)^{t-2}]$. Note that $M_t$ is fixed (i.e. not a random variable).
\emph{Regardless of the random shift $a_1$}, by \eqref{fixingit}, we always have $\bar R_t \subset M_t$. We also have $M_t \subset R_{t-1}\cup R_t \cup R_{t+1}$.


Let $\mathbb{I}[i \in I]$ be $1$ if $i \in I$ and $0$ otherwise.
We can upper bound the second term of \eqref{cnm1} by:
\begin{align}\label{eq:mbound1}
\sum_{\sigma_i^2(\bv{A}) \in \bar R_t \setminus R_t} h^{\gamma}_{ R_t}(\sigma_i^2(\bv{A})) &\le
\sum_{\sigma_i^2(\bv{A}) \in \bar R_{t}\setminus R_t} \mathbb{I}[i\in I]\nonumber\\
& \le \sum_{\sigma_i^2(\bv{A}) \in M_t \setminus R_t} \mathbb{I}[i\in I] \nonumber\\
& \le  \sum_{\sigma_i^2(\bv{A}) \in M_t} \mathbb{I}[i\in I].
\end{align}
The first bound follows from the fact that for $\sigma_i(\bv{A})^2  \in \bar R_t \setminus R_t$, we have $\sigma_i(\bv{A})^2 \in I$ and that $h^{\gamma}_{ R_t}(\sigma_i^2(\bv{A})) \le 1$.
The second bound follows from \eqref{fixingit}, which shows that $\bar R_t \subset M_t$. Let $m_t = |\{i: \sigma_i^2(\bv{A}) \in M_t\}|$ be the number of squared singular values falling in $M_t$. Note that like $M_t$, $m_t$ is fixed (i.e., not a random variable.) Thus, by  linearity of expectation, we have:
\begin{align*}
\E \left [ \sum_{\sigma_i^2(\bv{A}) \in M_t} \mathbb{I}[i\in I] \right ] = \sum_{\sigma_i^2(\bv{A}) \in M_t} \Pr[i\in I] = \frac{8\gamma}{\alpha} \cdot m_t.
\end{align*}
%
Letting $T = \lceil \log_{(1-\alpha)} \lambda \rceil$ as in Algorithm \ref{algo:histogram}, by a Markov bound, with probability $1-\frac{1}{200(T+1)}$:
\begin{align*}
\sum_{\sigma_i^2(\bv{A}) \in M_t} \mathbb{I}[i\in I] \le \frac{1600(T+1) \gamma}{\alpha}\cdot m_t \le T \epsilon_2 \cdot m_t
\end{align*}
if $c_1$ is set small enough. By a union bound this holds for all $t \in \{0,...,T \}$ simultaneously with probability $\ge 199/200$. Plugging back into \eqref{eq:mbound1}, we have, with probability  $\ge 199/200$, simultaneously  for all $t$:
\begin{align}\label{newBound2}
\sum_{\sigma_i^2(\bv{A}) \in \bar R_t \setminus R_t} h^{\gamma}_{ R_t}(\sigma_i^2(\bv{A}))&\le T \epsilon_2 \cdot m_t\nonumber\\ &\le T\epsilon_2 \cdot (b_{t-1}+b_t+b_{t+1})
\end{align}
where the second bound follows from the fact that, regardless of the setting of the shift $a_1$, $M_t \subset R_{t-1}\cup R_t \cup R_{t+1}$ so $m_t \le (b_{t-1}+b_t+b_{t+1})$. Plugging \eqref{newBound2} into \eqref{cnm1} we have:
\begin{align}\label{cnm2}
b_t \le \tr(h^{\gamma}_{R_t}(\bv{A}^T \bv{A})) \le b_t + T \epsilon_2 (b_{t-1} + b_t + b_{t+1}).
\end{align}
We conclude by showing that, before the final rounding step of Algorithm \ref{algo:histogram}, $\tilde b_t \in (1\pm \epsilon_1) \tr(h^{\gamma}_{R_t}(\bv{A}^T \bv{A})) + \sqrt{c_3} \epsilon_1$ with high probability. In the final rounding step, if $\tr(h^{\gamma}_{R_t}(\bv{A}^T \bv{A})) \le 1/4$ (which can only occur if $b_t = 0$) and $c_1,c_3$ are sufficiently small, then we will have $\tilde b_t \le 1/2$ and so will round down to $\tilde b_t = 0 = b_t$. Otherwise, the $\sqrt{c_3} \epsilon_1$ term will be absorbed into the relative error on $\tr(h^{\gamma}_{R_t}(\bv{A}^T \bv{A}))$.

Overall, combined with  \eqref{cnm2} we will have, with probability $\ge 99/100$ for all $t$: 
\begin{align*}
(1-\epsilon) b_t \le \tilde b_t \le (1+2\epsilon_1)b_t + (1+2\epsilon_1) T \epsilon_2 (b_{t-1} + b_t + b_{t+1})
\end{align*}
which gives the theorem if we adjust $\epsilon_1,\epsilon_2$ by making $c_1,c_2,c_3$ sufficiently small.
Thus we conclude by showing that in fact $\tilde b_t \in (1\pm \epsilon_1) \tr(h^{\gamma}_{R_t}(\bv{A}^T \bv{A})) + \sqrt{c_3} \epsilon_1$ with high probability.
Setting $S = \frac{\log n}{c_2 \epsilon_1^2}$ as in Algorithm \ref{algo:histogram}, for $\bv{y}_1,\ldots,\bv{y}_S$ chosen 
from $\{-1,1\}^d$, with high probability $\frac{1}{S} \sum_{i=1}^S \bv{y}_i^T h_{R_t}^\gamma (\bv{A}^T\bv{A}) \bv{y}_i \in (1\pm \epsilon_1) \tr(h_{R_t}^\gamma (\bv{A}^T\bv{A}))$ by a standard trace estimation result \cite{avron2011randomized}. 

Further, let $\bv{x}_i = \algW(\bv{A}^T\bv{A},\bv{y},a_{t+1},a_{t},\gamma,c_3\epsilon_1^2/n) \bv{y}_i$. By Theorem \ref{thm:window}: $\norm{\bv{x}_i - h_{R_t}^\gamma (\bv{A}^T\bv{A}) \bv{y}_i}_2 \le \frac{c_3\epsilon_1^2}{n^2} \norm{\bv{y}_i}_2$ which by Cauchy-Schwarz gives $| \bv{y}_i^T\bv{x}_i - \bv{y}_i^Th_{R_t}^\gamma (\bv{A}^T\bv{A})\bv{y}_i | \le  \sqrt{c_3/n}\epsilon_1 \norm{\bv{y}_i}_2 = \sqrt{c_3} \epsilon_1$ since $\norm{\bv{y}_i}_2 = \sqrt{n}$. Thus, overall we have, before the rounding step in which $\tilde b_t$ is set to $0$ if $\tilde b_t < 1/2$:
\begin{align*}
\tilde b_t &= \frac{1}{S} \sum_{i=1}^S \bv{y}_i^T \algW(\bv{A}^T\bv{A},\bv{y},a_{t+1},a_{t},\gamma,c_3\epsilon_1^2/n) \bv{y}_i \\
&\in (1\pm \epsilon_1) \tr(h^{\gamma}_{R_t}(\bv{A}^T \bv{A})) + \sqrt{c_3}\epsilon_1.
\end{align*}
 \end{proof}

\subsection{Application to General Spectral Sums}\label{sec:function_eval}

While Theorem \ref{thm:histogram} is useful in its own right, we now apply it to approximate a broad class of spectral sums. We need two assumptions on the sums that we approximate. First, for the histogram discretization to be relatively accurate, we need our function to be  relatively smooth. Second, it is expensive to compute the histogram over very small singular values of $\bv{A}$ (i.e. with $\lambda$ very small in Algorithm \ref{algo:histogram}) as this makes the condition number in Theorem \ref{thm:window} large. So it is important that small singular values cannot contribute significantly to our sum.
We start with the following definition:

\begin{definition}[Multiplicative Smoothness]\label{def:smooth} $f: \R^+ \rightarrow \R^+$ is $\delta_f$-multiplicatively smooth 
if for some $\delta_f \ge 1$, for all $x$, $|f'(x)| \le \delta_f \frac{f(x)}{x}$.
\end{definition}

We have the following claim, proven in Appendix \ref{sec:generalAppendix}.
\begin{claim}\label{claim:hypo}
	Let $f: \R^+ \rightarrow \R^+$ be a $\delta_f$-multiplicatively smooth function. For all $x,y \in \mathbb{R}^+$ and $c \in (0,\frac{1}{3\delta_f})$ $$y \in \left [(1-c)x, (1+c)x\right] \Rightarrow f(y) \in [(1-3\delta_f c) f(x), (1+3\delta_f c) f(x)].$$
\end{claim}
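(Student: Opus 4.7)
The plan is to integrate the logarithmic derivative bound implied by multiplicative smoothness. The definition $|f'(x)| \le \delta_f f(x)/x$ is equivalent to $\bigl|\frac{d}{dx}\log f(x)\bigr| \le \delta_f/x = \delta_f \cdot \frac{d}{dx}\log x$. So $\log f$ is $\delta_f$-Lipschitz as a function of $\log x$. Concretely, integrating from $x$ to $y$ (WLOG assume $y \ge x$; the other case is symmetric) yields
$$\bigl|\log f(y) - \log f(x)\bigr| \le \delta_f \bigl|\log y - \log x\bigr| = \delta_f \left|\log(y/x)\right|.$$

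Next I would bound $|\log(y/x)|$ using the hypothesis $y \in [(1-c)x, (1+c)x]$. Standard inequalities give $\log(1+c) \le c$ and $-\log(1-c) \le c/(1-c)$, so $|\log(y/x)| \le c/(1-c)$. Combining with the previous step, letting $t \eqdef \delta_f c/(1-c)$, we obtain $f(y)/f(x) \in [e^{-t}, e^t]$.

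The final step is to turn this into the additive-style bound $[1 - 3\delta_f c, 1 + 3\delta_f c]$. Here the assumption $c \le 1/(3\delta_f)$ enters: it ensures $\delta_f c \le 1/3$ and hence $t \le (1/3)/(2/3) = 1/2$. For $t \le 1$, one has $e^t - 1 \le t + t^2 \le t(1 + t) \le \tfrac{3}{2}t$ and $1 - e^{-t} \le t$. Plugging in $t = \delta_f c/(1-c) \le \tfrac{3}{2}\delta_f c$ (using $c \le 1/3$) gives $e^t - 1 \le 3\delta_f c$ and $1 - e^{-t} \le \tfrac{3}{2}\delta_f c \le 3\delta_f c$, which yields $f(y) \in [(1-3\delta_f c)f(x), (1+3\delta_f c)f(x)]$ as required.

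The proof is almost entirely routine calculus once one recognizes that multiplicative smoothness is a Lipschitz condition on $\log f$ in the variable $\log x$. The only mild obstacle is bookkeeping the constants so that the final bound comes out cleanly as $3\delta_f c$; the constant $1/(3\delta_f)$ in the hypothesis is precisely what allows both $c/(1-c) \le \tfrac{3}{2}c$ and the second-order exponential term $t^2$ to be absorbed without loss.
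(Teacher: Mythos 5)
Your proof is correct, and it takes a genuinely different route from the paper's. The paper argues additively: it bounds $|f(x)-f(y)| \le |x-y|\cdot\sup_{z\in R}|f'(z)| \le \frac{c\delta_f}{1-c}\sup_{z\in R}f(z)$ over the interval $R$ between $x$ and $y$, and then bootstraps by applying the same estimate to the supremum itself, giving $\sup_{z\in R}f(z) \le \frac{1-c}{1-(\delta_f+1)c}f(x)$ and hence $|f(x)-f(y)| \le \frac{c\delta_f}{1-(\delta_f+1)c}f(x) \le 3\delta_f c\, f(x)$, using $(\delta_f+1)c \le 2/3$. You instead integrate the logarithmic derivative: $|(\log f)'(x)| \le \delta_f/x$ yields $|\log f(y)-\log f(x)| \le \delta_f|\log(y/x)| \le \delta_f c/(1-c) = t \le 1/2$, and your conversion to the linear bound is sound ($e^t-1 \le t(1+t) \le \tfrac{3}{2}t \le \tfrac{9}{4}\delta_f c \le 3\delta_f c$ and $1-e^{-t}\le t \le 3\delta_f c$), with $\delta_f\ge 1$ and $c<1/(3\delta_f)$ used exactly where needed. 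Your route avoids the paper's self-referential bound on $\sup_R f$ and gives cleaner constants; its one extra requirement is that $f$ be strictly positive on the interval so that $\log f$ is defined, which is harmless: if $f$ vanished at some $z>0$, the inequality $|f'|\le \delta_f f/x$ would force $f\equiv 0$ near $z$ (and the claim is trivial when $f\equiv 0$), and in all of the paper's applications $f$ is positive anyway, whereas the paper's additive argument handles a vanishing $f$ with no extra remark.
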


For the example of the Schatten $p$-norm, for $f(x) = x^p$, $f'(x) = p \cdot x^{p-1}$ and so $f(x)$ is $p$-multiplicatively smooth. 

We now give our general approximation theorem, showing that any spectral sum depending on sufficiently smooth and rapidly growing $f$ can be computed using Algorithm \ref{algo:histogram}:

\begin{framed}
\begin{theorem}[Spectral Sums Via Approximate Histogram]\label{thm:histogram_algorithm} Consider any $\bv{A} \in \mathbb{R}^{n \times d}$ and any function $f: \R^+ \rightarrow \R^+$ satisfying:
	\begin{itemize}
		\item Multiplicative Smoothness: For some $\delta_f \ge 1$, $f$ is $\delta_f$-multiplicatively smooth (Defn. \ref{def:smooth}).
		\item Small Tail: For any $\epsilon > 0$ there exists $\lambda_f(\epsilon)$ such that for $x \in [0,\lambda_f(\epsilon)]$, $f(x) \le \frac{\epsilon}{n}\Sum_f(\bv{A})$
	\end{itemize}
	Given error parameter $\epsilon \in (0,1)$ and spectral norm estimate $M \in [ \norm{\bv{A}}_2, 2\norm{\bv{A}}_2]$, for sufficiently small constant $c$, if we run Algorithm \ref{algo:histogram} on $\frac{1}{M} \bv{A}$ with input parameters $\epsilon_1, \epsilon_2 = c \epsilon$, $\alpha = c\epsilon/\delta_f$ and $\lambda = \lambda_f(c\epsilon)^2/M^2$ then with probability $99/100$, letting  $a_1$, $\tilde b_0,...,\tilde b_{T}$ be the outputs of the algorithm and $g(x) = f(x^{1/2})$:
	\begin{align*}
	(1-\epsilon) \Sum_f(\bv{A}) \le \sum_{t=0}^{T} g( M^2\cdot a_1(1-\alpha)^t) \cdot \tilde b_t \le (1+\epsilon) \Sum_f(\bv{A}).
	\end{align*}
For parameter $k \in [d]$, letting $\bar \kappa \eqdef \frac{k\sigma_k^2(\bv{A}) + \sum_{i=k+1}^d \sigma_i^2(\bv{A})}{d \cdot \lambda}$ and $\hat \kappa \eqdef \frac{\sigma_{k+1}^2(\bv{A})}{\lambda}$, the algorithm runs in
\begin{align*}
\tilde O \left (\nnz(\bv{A}) k + dk^{\omega -1} + \frac{\nnz(\bv{A}) + \sqrt{\nnz(\bv{A}) [d\cdot d_s(\bv{A}) + dk ] \bar \kappa}}{\epsilon^5/(\delta_f^2 \log(1/\lambda))} \right ) \\\text{ or } \tilde O \left (\nnz(\bv{A}) k + dk^{\omega -1} + \frac{(\nnz(\bv{A})+dk) \lceil \sqrt{\hat \kappa}\rceil}{\epsilon^5/(\delta_f^2 \log(1/\lambda))} \right )
\end{align*} time for sparse $\bv{A}$ or 
$ \tilde O \left (nd^{\omega(\log_d k)-1} + \frac{nd + n^{1/2}d^{3/2} \sqrt{\bar \kappa})}{\epsilon^5/(\delta_f^2 \log(1/\lambda))} \right )$ for dense $\bv{A}$.
\end{theorem}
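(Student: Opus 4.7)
The approach is to approximate $\Sum_f(\bv{A})$ by the Riemann-sum-like quantity $\hat S \eqdef \sum_{t=0}^T g(M^2 a_1(1-\alpha)^t) \tilde b_t$, where $a_1, \tilde b_0,\ldots,\tilde b_T$ are produced by Algorithm~\ref{algo:histogram} run on the scaled matrix $\bv{A}/M$. Since the buckets $R_t = [a_{t+1}, a_t]$ only cover squared singular values of $\bv{A}/M$ down to $\lambda = \lambda_f(c\epsilon)^2/M^2$, the singular values of $\bv{A}$ below $M\sqrt{\lambda} = \lambda_f(c\epsilon)$ are uncovered; by the small-tail hypothesis these contribute at most $n \cdot (c\epsilon/n) \Sum_f(\bv{A}) = c\epsilon \Sum_f(\bv{A})$, which is absorbed into the target $\epsilon$.

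\textbf{Discretization and estimation error.} Each bucket $R_t$ has multiplicative width $(1-\alpha)^{-1}$, so every $\sigma_i(\bv{A})$ with $\sigma_i(\bv{A})^2/M^2 \in R_t$ satisfies $\sigma_i(\bv{A}) \in (1 \pm O(\alpha)) M\sqrt{a_{t+1}}$. Applying Claim~\ref{claim:hypo} with the $\delta_f$-multiplicative smoothness of $f$ then yields $f(\sigma_i(\bv{A})) \in (1 \pm O(\delta_f \alpha)) g(M^2 a_{t+1}) = (1 \pm O(\epsilon)) g(M^2 a_{t+1})$ since $\alpha = c\epsilon/\delta_f$. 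Summing over $i$ and $t$ shows $\sum_t g(M^2 a_{t+1}) b_t \in (1 \pm O(\epsilon)) \Sum_f(\bv{A})$, where $b_t$ is the true count in $R_t$. Next, Theorem~\ref{thm:histogram} guarantees $(1 - \epsilon_1) b_t \le \tilde b_t \le (1+\epsilon_1) b_t + T\epsilon_2(b_{t-1}+b_t+b_{t+1})$ for all $t$ simultaneously with probability $99/100$. The multiplicative $\epsilon_1$ term perturbs $\hat S$ by a $(1 \pm O(\epsilon_1))$ factor. For the additive term, Claim~\ref{claim:hypo} again implies $g(M^2 a_t), g(M^2 a_{t+1}), g(M^2 a_{t+2})$ are within constant factors of one another, so
\[
    \sum_t g(M^2 a_{t+1}) \cdot T\epsilon_2(b_{t-1} + b_t + b_{t+1}) = O(T \epsilon_2) \cdot \Sum_f(\bv{A}).
\]
Choosing the constant $c$ small enough that $O(\epsilon_1)$ and $O(T\epsilon_2)$ are both at most $\epsilon/3$ gives the desired $(1 \pm \epsilon)$ guarantee.

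\textbf{Runtime.} The cost is $T \cdot S$ calls to $\algW$ plus the one-time deflation preprocessing for the preconditioner, which contributes the $\tilde O(\nnz(\bv{A}) k + dk^{\omega-1})$ (resp.\ $\tilde O(nd^{\omega(\log_d k)-1})$) additive term. The calls for the smallest bucket, where $a_{T+1} = \Theta(\lambda)$, dominate; their condition-number dependence in Theorem~\ref{thm:window} matches the stated $\bar \kappa$ and $\hat \kappa$. With $\gamma = c_1 \epsilon_2 \alpha$, $T = \Theta(\delta_f \log(1/\lambda)/\epsilon)$, and $S = \Theta(\log n / \epsilon^2)$, the overall $T \cdot S / \gamma$ factor multiplying the Theorem~\ref{thm:window} cost becomes $\tilde O(\delta_f^2 \log(1/\lambda)/\epsilon^5)$, matching the claimed bound.

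\textbf{Main obstacle.} The most delicate step is propagating the additive histogram error $T \epsilon_2(b_{t-1}+b_t+b_{t+1})$, which at first glance might appear to swamp the estimator given the size of $T$. The crux is that multiplicative smoothness lets us charge this error against $\Sum_f(\bv{A})$ itself rather than against $n \cdot \max_t g(\cdot)$, after which the balance of the accuracy and runtime parameters just amounts to selecting $c$ small enough without changing the overall scaling of $\gamma$, $T$, and $S$; the random shift $a_1 \in [1 - \alpha/4, 1]$ inside Algorithm~\ref{algo:histogram} is what makes the weaker additive-only guarantee of Theorem~\ref{thm:histogram} suffice at the boundaries of each bucket.
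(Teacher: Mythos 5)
Your overall architecture (discretize via Claim~\ref{claim:hypo}, invoke Theorem~\ref{thm:histogram}, truncate the tail via the small-tail hypothesis, and account for $T\cdot S$ calls to $\algW$) matches the paper's, and your runtime bookkeeping with $\gamma = \Theta(\epsilon_2\alpha)$, $T = \Theta(\delta_f\log(1/\lambda)/\epsilon)$, $S = \tilde\Theta(1/\epsilon^2)$ is correct. The gap is in your treatment of the additive histogram error. You bound the additive contribution by $O(T\epsilon_2)\cdot \Sum_f(\bv{A})$ and then assert that ``choosing the constant $c$ small enough'' makes $O(T\epsilon_2) \le \epsilon/3$. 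This is impossible: with $\alpha = c\epsilon/\delta_f$ and $\epsilon_2 = c\epsilon$ as in the theorem statement, $T = \lceil \log_{1-\alpha}\lambda\rceil = \Theta\bigl(\delta_f\log(1/\lambda)/(c\epsilon)\bigr)$, so $T\epsilon_2 = \Theta\bigl(\delta_f\log(1/\lambda)\bigr)$ — the factor $c\epsilon$ cancels, the quantity is essentially independent of $c$, and it is at least $1$ (and typically $\Omega(\log n)$), nowhere near $\epsilon/3$. So treating Theorem~\ref{thm:histogram} as a black box, your argument only yields a $(1\pm\epsilon)$ guarantee if you instead set $\epsilon_2 = c\epsilon/T$; but then $\gamma$ shrinks by a factor of $T$ and the runtime degrades to $\tilde O(\delta_f^3\log^2(1/\lambda)/\epsilon^6)$ times the regression cost, which is inconsistent with the $\delta_f^2\log(1/\lambda)/\epsilon^5$ dependence you (correctly) compute and that the theorem claims. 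Accuracy and runtime cannot both be obtained the way you have set things up.

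The missing idea is that the factor $T$ in the additive error of Theorem~\ref{thm:histogram} is an artifact of a per-bucket Markov bound union-bounded over the $T+1$ buckets; for the spectral-sum application one does not need every bucket's boundary error to be small, only their total. The paper's proof first carries out exactly your calculation under the temporary assumption $\epsilon_2 = c\epsilon/T$, and then removes the $T$ by going inside the proof of Theorem~\ref{thm:histogram}: using the random shift $a_1$, each singular value lands in a boundary region with probability only $O(\gamma/\alpha) = O(\epsilon_2)$, so the \emph{expected total} additive contribution to the estimator, summed over all buckets and weighted by $g$, is $O(\epsilon_2)\cdot\Sum_f(\bv{A}) = O(\epsilon)\Sum_f(\bv{A})$ (here multiplicative smoothness is used to charge each boundary term against $f(\sigma_i(\bv{A}))$); a single Markov bound on this total then gives error $\le (\epsilon/2)\Sum_f(\bv{A})$ with probability $99/100$, with no loss of a $T$ factor and with $\epsilon_2 = c\epsilon$, preserving the claimed runtime. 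Your ``main obstacle'' paragraph correctly senses that the random shift is what saves the boundaries, but the resolution you offer (charging against $\Sum_f$ and shrinking $c$) is precisely the computation that fails; you never actually exploit the randomness to replace the per-bucket, union-bounded guarantee by an in-expectation bound on the aggregate boundary mass.
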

\end{framed}

That is, we accurately approximate $\Sum_f(\bv{A})$ by discretizing over the histogram output by Algorithm \ref{algo:histogram}. Note that we can boost our probability of success to  $1-\delta$ by repeating the algorithm $\Theta(\log(1/\delta))$ times and taking the median of the outputs. 
\begin{proof} Let $\bv{\bar A} \eqdef \frac{1}{M} \bv{A}$. Note that $\norm{\bv{\bar A}}_2 \le 1$ and so it is a valid matrix on which to apply Algorithm \ref{algo:histogram} and Theorem \ref{thm:histogram}. Recall that we use the notation: $R_0 = [a_1,1]$,
$R_t = [a_1(1-\alpha)^t,a_1(1-\alpha)^{t-1}]$ for $t \ge 1$, $T = \lceil \log_{(1-\alpha)} \lambda \rceil$, and $b_t = \left |\{i: \sigma_i^2(\bv{\bar A}) \in R_t \} \right |$.

Since $g(x) = f(x^{1/2})$, by chain rule:
$$g'(x) = \frac{f'(x^{1/2})}{2x^{1/2}} \le \delta_f \frac{f(x^{1/2})}{2x} = \frac{\delta_f}{2} \frac{g(x)}{x}.$$
So $g$ is $\delta_f/2$ multiplicatively smooth. 
By this smoothness, and Claim \ref{claim:hypo}, for any $i$ with $\sigma_i^2(\bv{\bar A}) \in R_t$:
\begin{align*}
  g(M^2 \cdot a_1(1-\alpha)^t) &\in \left (1 \pm \frac{3\delta_f \alpha}{2}\right )  g( \sigma_i^2(\bv{A}))\\
 &\in (1\pm \epsilon/4) \cdot f( \sigma_i(\bv{A}))
\end{align*}
if the constant $c$ on $\alpha$ is set small enough. Small enough $c$ also ensures $ g(M^2a_1(1-\alpha)^{t-1}) \in (1\pm\epsilon/4) f( \sigma_i(\bv{A}))$ and $g(M^2a_1(1-\alpha)^{t+1})  \in (1\pm \epsilon/4) f( \sigma_i(\bv{A}))$. So, both the multiplicative and additive error terms in Theorem \ref{thm:histogram} do not hurt us significantly. For now, assume that $\epsilon_2 = c\epsilon/T$. We will first prove the result with this assumption and then show that it can be relaxed to $\epsilon_2 = c\epsilon$ as given in the theorem statement. Let $I$ be the set of indices with $\sigma_i^2(\bv{\bar A}) \in R_t$ for some $t \in \{0,...,T\}$. That is, the set of singular values covered by our histogram. Let $I_{T+1}$ be the set of indices with $\sigma_i^2(\bv{\bar A}) \in R_{T+1}$ -- that is, singular values which are not included in the histogram, but may be included in the additive error for the last bucket covering $R_{T}$. Applying Theorem \ref{thm:histogram}:
\begin{align}
 \sum_{t=0}^{T} g(M^2a_1(1-\alpha)^t) \cdot \tilde b_t &\le \sum_{t=0}^{T} g(M^2a_1(1-\alpha)^t) \cdot \left [(1+\epsilon_1) b_t + \epsilon_2 T(b_{t-1} + b_t+ b_{t+1}) \right ]\nonumber\\
 &\le (1+\epsilon/4 ) \sum_{t=0}^{T+1} \left [ (1+c \epsilon )\sum_{\sigma_i^2(\bv{\bar A}) \in R_t} f( \sigma_i(\bv{A})) +  3c\epsilon \sum_{\sigma_i^2(\bv{\bar A}) \in R_{t}} f( \sigma_i(\bv{A}))  \right ] \nonumber\\
 &\le (1+\epsilon) \sum_{i \in I \cup I_{T+1}} f( \sigma_i(\bv{A})) \nonumber\\
 &\le (1+\epsilon) \Sum_f(\bv{A})\label{camupper}
 \end{align}
 if we set $c$ small enough. The second inequality arises because each $\sigma_i(\bv{A})$ contributing to $b_t$ appears at most three times as an additive error term for $\tilde b_{t-1}$, $\tilde b_t$, and $\tilde b_{t+1}$. In this inequality we include bucket $R_{T+1}$ in the histogram, which only increases the right hand side.
 
 On the lower bound side, we use our small tail assumption, that for $i$ with $\sigma_i(\bv{A}) < \lambda_f(c\epsilon)$, $f(\sigma_i(\bv{A})) \le \frac{c\epsilon}{n} \Sum_f(\bv{A})$. Using the notation of Algorithm \ref{algo:histogram}, we have $a_{T+1} = a_1 (1-\alpha)^T \le (1-\alpha)^{\lceil \log_{(1-\alpha)} \lambda \rceil} \le \lambda \eqdef \lambda_f(c\epsilon)^2/M^2$. So for any $i$ with $\sigma_i(\bv{A}) \ge \lambda_f(c\epsilon)$, $\sigma_i^2(\bv{\bar A}) > a_{T+1}$ and thus falls in some bucket of our histogram so $i \in I$. We thus have:
 \begin{align*}
 \sum_{i\in I} f( \sigma_i(\bv{A})) \ge \Sum_f(\bv{A}) - \sum_{i: \sigma_i(\bv{A}) < \lambda_f(c\epsilon)} f( \sigma_i(\bv{A}))\\
 \ge  \Sum_f(\bv{A})  - n \cdot \frac{c\epsilon}{n}  \Sum_f(\bv{A}) = (1-c\epsilon)  \Sum_f(\bv{A}) .
 \end{align*}
 Applying Theorem \ref{thm:histogram} again, we have for sufficiently small $c$:
 \begin{align*}
 \sum_{t=0}^{T} g(a_1(1-\alpha)^t) \cdot \tilde b_t  &\ge (1-\epsilon/4 )(1-c \epsilon )\sum_{t=0}^{T} \sum_{\sigma_i^2(\bv{\bar A}) \in R_t} f( \sigma_i(\bv{A}))\\
 &= (1-\epsilon/4 )(1-c \epsilon )  \sum_{i\in I} f( \sigma_i(\bv{A}))\\
 & \ge (1-\epsilon)  \Sum_f(\bv{A}).
 \end{align*}
 We conclude the theorem by noting that we can actually set $\epsilon_2 = c\epsilon$ instead of $\epsilon_2 = c\epsilon/T$ as used above. The additive error term on each bucket was bounded using a Markov bound, and to union bound over $T$ buckets, we lost a factor of $T$. However, in expectation, the total contribution of additive error to our spectral sum estimation is just $O \left (\epsilon \sum_{i \in I \cup I_{T+1}} f(\sigma_1(\bv{A}))\right) = O \left (\Sum_f(\bv{A}) \right )$ and so by a Markov bound is $\le \epsilon/2 \Sum_f(\bv{A})$ with probability $99/100$ if we set our constants small enough.
 
It just remains to discuss runtime, i.e. to calculate the runtime of Algorithm \ref{algo:histogram} with inputs $\epsilon_1,\epsilon_2 = c\epsilon$, $\alpha = c\epsilon/\delta_f$ and $\lambda = \lambda_f(c\epsilon)^2/M^2$. The number of outer loops of the algorithm is $T = \lceil \log_{(1-\alpha)} \lambda \rceil = \Theta \left (\frac{\delta_f}{\epsilon } \log(1/\lambda) \right )$. The number of inner loops is $S = \tilde O(1/\epsilon^2)$. And finally, within each loop the most expensive operation is computing $\algW(\bv{A}^T\bv{A}, \bv{y}, a_{t+1}, a_t, \gamma, c_3 \epsilon_1^2/n)$. Our final runtime follows from plugging this into Theorem \ref{thm:window} noting that $\gamma = \Theta(\epsilon \alpha) = \Theta(\epsilon^2/\delta_f)$ and $a_{t+1} = \Omega(\lambda)$ for all $t \in  \{ 0,...,T \}$. Note that we perform the precomputation step to construct a preconditioner for $\bv{A}^T\bv{A} + \lambda \bv{I}$ just once, incurring cost $\tilde O(\nnz(\bv{A}) + dk^{\omega-1})$ or $\tilde O (nd^{\omega(\log_d k)-1})$.
\end{proof}

\subsection{Application to Schatten $p$, Orlicz, and Ky Fan Norms}\label{sec:specific}

Theorem \ref{thm:histogram_algorithm} is very general, allowing us to approximate any function satisfying a simple smoothness condition as long as the smaller singular values of $\bv{A}$ cannot contribute significantly to $\Sum_f(\bv{A})$. We now give some specific applications of the result.

\subsubsection*{Schatten $p$-Norms} 

Theorem \ref{thm:histogram_algorithm} already gives the fastest known algorithms for Schatten $p$-norm estimation. We will not go into all runtime tradeoffs now as our best runtimes will be worked out in detail in Sections \ref{sec:poly} and \ref{sec:upper}, but as an example:

\begin{corollary}[Schatten $p$-norms via Histogram Approximation]\label{cor:hist_schatten}
For any $\bv{A} \in \mathbb{R}^{n \times n}$ with uniformly sparse rows (i.e. $d_s(\bv{A}) = O(\nnz(\bv{A})/n)$), given error parameter $\epsilon \in (0,1)$ and $M \in [ \norm{\bv{A}}_2, 2\norm{\bv{A}}_2]$, if we run Algorithm \ref{algo:histogram} on $\frac{1}{M} \bv{A}$ with $\epsilon_1,\epsilon_2 = c\epsilon$, $\alpha = c\epsilon/ \max\{1,p\}$ and $\lambda = \frac{1}{M^2} \left (\frac{c\epsilon}{n} \norm{\bv{A}}^p_p \right )^{2/p}$ for sufficiently small constant $c$ then with probability $99/100$, letting $a_1, \tilde b_0,...,\tilde b_{T}$ be the outputs of the algorithm we have:
\begin{align*}
(1-\epsilon) \norm{\bv{A}}_p^p \le \sum_{t=0}^{T} \left [M^2 a_1(1-\alpha)^t\right ]^{p/2} \cdot \tilde b_t \le (1+\epsilon) \norm{\bv{A}}_p^p.
\end{align*}
Further the algorithm runs in time:
\begin{align*}
\tilde O\left (\frac{\nnz(\bv{A})p^2}{\epsilon^{5+1/p}}\right) \text{ for } p\ge 2 \text{\hspace{1em} and \hspace{1em}}
\tilde O \left (\frac{\nnz(\bv{A}) n^{\frac{1/p-1/2}{1/p+1/2}}+n^{\frac{5/p-1/2}{2/p+1}}\sqrt{d_s(\bv{A})}}{p\cdot \epsilon^{5+1/p}} \right )
\text{ for }p \le 2.
\end{align*}
For dense inputs this can be sped up to $\tilde O \left (\frac{n^{\frac{2.3729 - .1171p}{1 + .0346p}}}{p\cdot \epsilon^{5+1/p}} \right )$ using fast matrix multiplication or $\tilde O \left (\frac{n^{\frac{3+p/2}{1+p/2}}}{p\cdot \epsilon^{5+1/p}} \right )$ without fast matrix multiplication.
\end{corollary}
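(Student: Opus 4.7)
\medskip
\noindent\textbf{Proof proposal.}
The plan is to invoke Theorem~\ref{thm:histogram_algorithm} with $f(x) = x^p$ and then to optimize the deflation parameter $k$ in its runtime. Verifying the hypotheses of that theorem is routine: $f'(x) = p x^{p-1}$ gives $|f'(x)| = p \cdot f(x)/x$, so $\delta_f = \max\{1,p\}$; and setting $\lambda_f(\epsilon) \eqdef (\epsilon \|\bv{A}\|_p^p/n)^{1/p}$ yields $f(x) = x^p \le \frac{\epsilon}{n}\|\bv{A}\|_p^p = \frac{\epsilon}{n}\Sum_f(\bv{A})$ for all $x \le \lambda_f(\epsilon)$. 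With $g(x) \eqdef f(x^{1/2}) = x^{p/2}$, Theorem~\ref{thm:histogram_algorithm} directly yields the approximation guarantee of the corollary, and its input parameters $\alpha = c\epsilon/\delta_f$ and $\lambda = \lambda_f(c\epsilon)^2/M^2$ match those stated.

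Instantiating the theorem's runtime requires bounding $\bar\kappa = \frac{k\sigma_k^2(\bv{A}) + \sum_{i>k}\sigma_i^2(\bv{A})}{n \lambda_f(c\epsilon)^2}$ for a well-chosen $k$. I will use three standard Schatten-norm inequalities: (i) $k\sigma_k^p(\bv{A}) \le \|\bv{A}\|_p^p$, giving $\sigma_k(\bv{A}) \le \|\bv{A}\|_p/k^{1/p}$ and $k\sigma_k^2(\bv{A}) \le k^{1-2/p}\|\bv{A}\|_p^2$; (ii) for $p \le 2$, $\sum_{i>k}\sigma_i^2(\bv{A}) \le \sigma_{k+1}^{2-p}(\bv{A})\|\bv{A}\|_p^p \le k^{1-2/p}\|\bv{A}\|_p^2$; and (iii) for $p \ge 2$, Schatten-norm monotonicity gives $\|\bv{A}\|_F^2 \le n^{1-2/p}\|\bv{A}\|_p^2$. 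For $p \ge 2$ I take $k = 0$; then (iii) gives $\bar\kappa = O(\epsilon^{-2/p})$, and under uniform row sparsity ($n\, d_s(\bv{A}) = O(\nnz(\bv{A}))$) the solve factor $\sqrt{\nnz(\bv{A}) \cdot n d_s(\bv{A}) \cdot \bar\kappa}$ reduces to $O(\nnz(\bv{A})/\epsilon^{1/p})$. Multiplying by the theorem's $\delta_f^2 \log(1/\lambda)/\epsilon^5 = \tilde O(p^2/\epsilon^5)$ overhead yields the claimed $\tilde O(\nnz(\bv{A}) p^2/\epsilon^{5+1/p})$ bound.

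For $p \le 2$, combining (i) and (ii) gives $\bar\kappa = O(k^{1-2/p} n^{2/p-1}/\epsilon^{2/p})$. Under uniform sparsity, the dominant piece of the solve cost is $\nnz(\bv{A}) \sqrt{\bar\kappa} = O(\nnz(\bv{A})(k/n)^{1/2-1/p}/\epsilon^{1/p})$; balancing this against the deflation cost $\nnz(\bv{A}) k$ yields the choice $k = n^{(1/p-1/2)/(1/p+1/2)}/\poly(\epsilon)$, producing the first term of the sparse bound. The second term, with the $\sqrt{d_s(\bv{A})}$ factor, arises from the $\sqrt{\nnz(\bv{A}) \cdot nk \cdot \bar\kappa}$ contribution once $nk$ exceeds $n\, d_s(\bv{A})$; rewriting $\sqrt{\nnz(\bv{A})} = \sqrt{n}\sqrt{d_s(\bv{A})}$ and using the algebraic identity $\tfrac{1}{2} + \tfrac{4/p-1}{2/p+1} = \tfrac{5/p-1/2}{2/p+1}$ gives the stated exponent. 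For dense $\bv{A}$, the precomputation cost becomes $\tilde O(n^{\omega(\log_n k)})$ and the solve cost scales as $n^{3/2 + 1/p} k^{1/2 - 1/p}/\epsilon^{1/p}$; balancing via the piecewise formula $\omega(\gamma) = 2 + (\omega-2)(\gamma-\alpha)/(1-\alpha)$ for $\gamma \ge \alpha$ gives the fast-matrix-multiplication exponent $(2.3729 - 0.1171p)/(1 + 0.0346p)$, while substituting the naive $\omega(\gamma) = 1+\gamma$ yields the $(3+p/2)/(1+p/2)$ bound. I expect the main obstacle to be the case analysis for the dense FMM runtime, verifying that the balanced $\gamma = \log_n k$ lies in the regime of the piecewise $\omega(\gamma)$ where the bound is tight; the $1/p$ factor in the denominator of the stated sparse bound also deserves care, as it arises from $\log(1/\lambda) = O(\log(n/\epsilon)/p)$ rather than being polylogarithmic.
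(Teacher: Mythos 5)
Your proposal is correct and takes essentially the same route as the paper's own proof: invoke Theorem~\ref{thm:histogram_algorithm} with $f(x)=x^p$ (so $\delta_f=\max\{1,p\}$, $\lambda_f(c\epsilon)=(c\epsilon\norm{\bv{A}}_p^p/n)^{1/p}$), bound $\bar\kappa$ via $k\sigma_k^p(\bv{A})\le\norm{\bv{A}}_p^p$ and $\norm{\bv{A}}_F^2\le n^{1-2/p}\norm{\bv{A}}_p^2$, take $k=0$ for $p\ge 2$ and $k=n^{\frac{1/p-1/2}{1/p+1/2}}$ for $p\le 2$, and balance $n^{\omega(\log_n k)}$ against the solve cost in the dense case. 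One caveat: with the paper's stated bound $\alpha>0.31389$ the dense balancing yields the exponent $\frac{2.3729-.0994p}{1+.0435p}$ (which is what the paper's own proof of this corollary derives, and what Theorem~\ref{finalthm:dense} states), not the $\frac{2.3729-.1171p}{1+.0346p}$ appearing in the corollary statement — that constant corresponds to an older value $\alpha\approx 0.303$ — so your assertion that the computation reproduces the latter constants inherits an inconsistency already present in the paper rather than a gap in your argument.
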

For constant $\epsilon, p>2$ the first runtime is $\tilde O(\nnz(\bv{A}))$, and for the nuclear norm ($p=1$), for constant $\epsilon$ the second runtime gives $\tilde O(\nnz(\bv{A})n^{1/3} + n^{3/2}\sqrt{d_s(\bv{A})})$ which is at worst $\tilde O(\nnz(\bv{A})n^{1/3} + n^{2})$. For dense matrices, the nuclear norm estimation time is $\tilde O(n^{2.18})$ using fast matrix multiplication. It is already $\tilde O(n^{2.33}) = o(n^\omega)$, for the current $\omega$, without using fast matrix multiplication. 

Note that we can compute the spectral norm approximation used to scale $\bv{A}$ via the Lanczos or power method in $\tilde O(\nnz(\bv{A}))$ time. $\lambda$ depends on $\norm{\bv{A}}_p$ which we are estimating. However, as we will discuss in the proof, we can use a rough estimate for $\norm{\bv{A}}_p$ which suffices. As a more general strategy, $\lambda$ can be identified via binary search. We can start with $\lambda = \sigma_1(\bv{A})^2/M^2$ and successively decrease $\lambda$ running Algorithm \ref{algo:histogram} up to the stated runtime bounds. If it does not finish in the allotted time, we know that we have set $\lambda$ too small. Thus, we can output the result with the smallest $\lambda$ such that the algorithm completes within in the stated bounds.
\begin{proof}
We invoke Theorem \ref{thm:histogram_algorithm} with $f(x) = x^p$. 
We have $f'(x) = p\frac{f(x)}{x}$ so $\delta_f = \max\{1,p\}$ and our setting of $\alpha = c\epsilon/\max\{1,p\}$ suffices. Additionally, for any $c$, we can set $
\lambda_f(c\epsilon) = \left ( \frac{c \epsilon}{n} \norm{\bv{A}}_p^p \right )^{1/p} = \frac{c^{1/p} \epsilon^{1/p}}{n^{1/p}} \norm{\bv{A}}_p$ and so our setting of $\lambda$ suffices. Thus the accuracy bound follows from Theorem \ref{thm:histogram_algorithm}. 

We now consider runtime. 
For $p \ge 2$:
\begin{align*}
\bar \kappa = \frac{k \sigma_k^2(\bv{A}) + \sum_{i=k+1}^n \sigma_i^2(\bv{A})}{n \lambda} \le \frac{n^{2/p-1}}{\epsilon^{2/p}} \cdot \frac{\norm{\bv{A}}_F^2}{\norm{\bv{A}}_p^2}.
\end{align*}
We can bound $\norm{\bv{A}}_F \le n^{1/2-1/p}\norm{\bv{A}}_p$ and so have $\bar \kappa \le \frac{1}{\epsilon^{2/p}}.$ For $p < 2$ we have:
\begin{align*}
\bar \kappa = \frac{n^{2/p-1}}{\epsilon^{2/p}} \cdot \frac{k \sigma_k^2(\bv{A}) + \sum_{i=k+1}^n \sigma_i^2(\bv{A})}{\norm{\bv{A}}_p^2} \le \frac{n^{2/p-1}}{\epsilon^{2/p}} \cdot \frac{\sigma_k^{2-p}(\bv{A}) \sum_{i=1}^n \sigma_i^p(\bv{A})}{\norm{\bv{A}}_p^2} = \frac{n^{2/p-1}}{\epsilon^{2/p}} \cdot \frac{\sigma_k^{2-p}(\bv{A})}{\norm{\bv{A}}_p^{2-p}}.
\end{align*}
Using the fact that $\sigma_k^p(\bv{A}) \le \frac{1}{k} \norm{\bv{A}}_p^p$ we have the tradeoff between $k$ and $\bar \kappa$:
\begin{align}
\bar \kappa \le \frac{1}{\epsilon^{2/p}} \left ( \frac{n}{k} \right )^{2/p-1}.\label{kKappa}
\end{align}

As mentioned, $\lambda$ depends on the value of $\norm{\bv{A}}_p $. We can simply lower bound $\norm{\bv{A}}_p^p$ by $k\sigma_k^p(\bv{A})$, which we estimate up to multiplicative error when performing deflation. We can use this lower bound to set $\lambda$. Our estimated $\lambda$ will only be smaller than the true  value, giving a better approximation guarantee and the above condition number bound will still hold.

We now analyze our runtime. Recall that for $f(x) = x^p$, $\delta_f = \max \{1,p \}$. Correspondingly, $\log(1/\lambda) = \tilde O( \max \{1, 1/p \})$ and so $\delta_f^2 \log(1/\lambda)= \max \{ p^2,1/p \}$. 
Plugging into the first runtime of Theorem \ref{thm:histogram_algorithm}, using the uniform sparsity assumption and the fact that $\sqrt{x+y} \le \sqrt{x} + \sqrt{y}$ we have:
$$\tilde O \left (\nnz(\bv{A}) k + nk^{\omega -1} + \frac{\nnz(\bv{A})\sqrt{\bar \kappa} + \sqrt{\nnz(\bv{A})nk \bar \kappa}}{\epsilon^5/(\max \{ p^2,1/p \}} \right ).$$
For $p \ge 2$ we just set $k=0$ and have $\tilde O(\nnz(\bv{A})p^2/\epsilon^{5+1/p})$ runtime by our bound $\bar \kappa \le \frac{1}{\epsilon^{2/p}}$. For $p \le 2$, not trying to optimize $\poly(1/\epsilon)$ terms, we write the runtime as 
\begin{align*}
\tilde O \left (nd_s(\bv{A}) k + nk^{\omega -1} + \frac{nd_s(\bv{A})\sqrt{\bar \kappa} + n\sqrt{d_s(\bv{A})k \bar \kappa}}{\epsilon^5p} \right )
\end{align*}
and balancing the first two coefficients on $n$, set $k = n^{\frac{1/p-1/2}{1/p+1/2}}$ which gives $\sqrt{\bar \kappa} = n^{\frac{1/p-1/2}{1/p+1/2}}$ by \eqref{kKappa} and so $nd_s(\bv{A})k = nd_s(\bv{A}) \sqrt{\bar \kappa}$. We then have $n\sqrt{d_s(\bv{A}) k \bar \kappa} = n\sqrt{d_s(\bv{A})} k^{3/2} = n^{\frac{5/p-1/2}{2/p+1}}\sqrt{d_s(\bv{A})} $. Finally, the $nk^{\omega-1}$ is dominated by the $n\sqrt{d_s(\bv{A})} k^{3/2}$ term so we drop it.

Finally, for dense $\bv{A}$ we apply the third runtime which gives
\begin{align*}
\tilde O \left (n^{\omega(\log_n k) } + \frac{n^2 \sqrt{\bar \kappa}}{\epsilon^5p} \right ) &= \tilde O \left (n^{\omega(\log_n k)} + \frac{n^{3/2+1/p}}{k^{1/p-1/2}\cdot \epsilon^{5+1/p}\cdot p} \right ) \\ &= \tilde O \left (n^{\omega(\log_n k)} + \frac{n^{3/2+1/p - (\log_n k) (1/p-1/2)}}{\epsilon^{5+1/p}\cdot p} \right ). 
\end{align*}
We now balance the terms, again not optimizing $\epsilon$ dependence. Writing $\gamma = \log_n k$, $\omega(\gamma) = 2$ for $\gamma < {\alpha}$ where $\alpha > 0.31389$ and $2 + (\omega-2)\frac{\gamma - \alpha}{1-\alpha}$ for $\gamma \ge {\alpha}$ \cite{gall2017improved}.
Assuming $\gamma > \alpha$ we set:
\begin{align*}
2 + (\omega-2) \frac{\gamma-\alpha}{1-\alpha} &= \frac{3}{2} + \frac{1}{p} - \frac{\gamma}{p} + \frac{\gamma}{2}
\end{align*}
which gives $\gamma \approx \frac{1/p -.3294}{1/p +.0435} > \alpha$ for all $p < 2$ (so our assumption that $\gamma \ge \alpha$ was valid.) This yields total runtime $\tilde O \left (n^{\frac{2.3729 - .0994p}{1 + .0435p}}\right )$.
Without using fast matrix multiplication, the first term in the runtime becomes $n^2k$ and so we balance costs by setting:  $n^{2+\gamma} = n^{3/2 + 1/p - \gamma/p + \gamma/2}$ which gives $\gamma = \frac{1/p - 1/2}{1/p+1/2}$ and total runtime $\tilde O \left (n^{\frac{3+p/2}{1+p/2}} \right )$.

\end{proof}

\subsubsection*{Bounded Orlicz Norms}
An Orlicz norm of a vector $\bv{x}\in \mathbb{R}^n$ is given by $G(\bv{x}):=\sum_{i=1}^n g(|\bv{x}_{i}|)$, where $g(\cdot)$ is convex, nonnegative, and has $g(0)=0$. It is easy to see that applied to the vector of singular values, an Orlicz matrix norm is a special case of a spectral sum and can be approximated with Theorem \ref{thm:histogram_algorithm} under sufficient conditions. A simple example shows that Orlicz norms for $g(\cdot)$ bounded by an envelope of $x^{p_1}$ and $x^{p_2}$ can be approximated in a similar time as $\norm{\bv{A}}_{p_2}^{p_2}$ in Corollary \ref{cor:hist_schatten}.

\begin{corollary}[Bounded Orlicz norms via Histogram Approximation]\label{cor:hist_orlicz} For any convex function $g: \mathbb{R}^+ \rightarrow \mathbb{R}^+$, let $p_1$
 be the minimal real for which for all arguments $a, b$ to $g$ with $a \le b$, $\left (\frac{b}{a}\right )^{p_1} \ge \frac{g(b)}{g(a)}$. Let $p_2$ be the maximum positive real for which for all $a\le b$ we have  $\left(\frac{b}{a}\right)^{p_2} \le \frac{g(b)}{g(a)}$.
 
 Then for any $\bv{A} \in \mathbb{R}^{n \times n}$ with uniformly sparse rows (i.e. $d_s(\bv{A}) = O(\nnz(\bv{A})/n)$), given error parameter $\epsilon \in (0,1)$ and $M \in [ \norm{\bv{A}}_2, 2\norm{\bv{A}}_2]$, if we run Algorithm \ref{algo:histogram} on $\frac{1}{M} \bv{A}$ with $\epsilon_1,\epsilon_2 = c\epsilon$, $\alpha = c\epsilon/ \max\{1,p_1\}$ and $\lambda = \frac{1}{M^2} \left (\frac{c\epsilon/2}{n} \norm{\bv{A}}^{p_2}_{p_2} \right )^{2/{p_2}}$ for sufficiently small constant $c$ then with probability $99/100$, letting $a_1, \tilde b_0,...,\tilde b_{T}$ be the outputs of the algorithm we have:
\begin{align*}
(1-\epsilon) \Sum_g(\bv{A}) \le \sum_{t=0}^{T} g \left (M a_1^{1/2} (1-\alpha)^{t/2}\right ) \cdot \tilde b_t \le (1+\epsilon) \Sum_g(\bv{A}).
\end{align*}
Further the algorithm runs in time:
\small{
\begin{align*}
\tilde O\left (\frac{\nnz(\bv{A})(p_1+1)^2}{\epsilon^{5+1/{p_2}}}\right) \text{ for } p_2\ge 2 \text{\hspace{1em} and \hspace{1em}}
\tilde O \left ( \frac{(p_1+1)^2\cdot\nnz(\bv{A}) n^{\frac{1/{p_2}-1/2}{1/{p_2}+1/2}}+n^{\frac{5/{p_2}-1/2}{2/{p_2}+1}}\sqrt{d_s(\bv{A})}}{p_2\cdot \epsilon^{5+1/p_2}} \right )
\text{ for }p_2 \le 2.
\end{align*}}
\normalsize
For dense inputs this can be sped up to $\tilde O \left ((p_1+1)^2 \cdot \frac{n^{\frac{2.3729 - .0994p_2}{1 + .0435p_2}}}{p_2\cdot \epsilon^{5+1/p_2}} \right )$ via fast matrix multiplication.
 \end{corollary}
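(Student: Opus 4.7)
The plan is to invoke Theorem~\ref{thm:histogram_algorithm} with $f = g$, verify its two hypotheses using the $p_1$ and $p_2$ growth bounds, and then read off the runtime after choosing the deflation parameter $k$ optimally, mirroring the structure of Corollary~\ref{cor:hist_schatten} with $p_1$ controlling smoothness and $p_2$ controlling the tail. The upper growth bound $(b/a)^{p_1} \ge g(b)/g(a)$ for $b \ge a$ is equivalent to $g(x)/x^{p_1}$ being nonincreasing, which by differentiation gives $g'(x) \le p_1\, g(x)/x$. Convexity of $g$ together with $g(0)=0$ and $g\ge 0$ forces $g$ to be nondecreasing, so $g'(x)\ge 0$, and thus $|g'(x)|\le p_1\, g(x)/x$; in the language of Definition~\ref{def:smooth}, $g$ is $\max\{1,p_1\}$-multiplicatively smooth. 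This justifies the choice $\alpha = c\epsilon/\max\{1,p_1\}$.

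\textbf{Tail bound.} The lower growth bound $(b/a)^{p_2}\le g(b)/g(a)$ is equivalent to $g(x)/x^{p_2}$ being nondecreasing. Applied pointwise at $x\le \sigma_1(\bv{A})$ this yields $g(x) \le [g(\sigma_1(\bv{A}))/\sigma_1(\bv{A})^{p_2}]\, x^{p_2}$, and summing the same inequality over $i$ produces the companion bound $\Sum_g(\bv{A})\le [g(\sigma_1(\bv{A}))/\sigma_1(\bv{A})^{p_2}]\,\norm{\bv{A}}_{p_2}^{p_2}$. Solving the first for $g(x)$ and using the second to express the common ratio in terms of $\Sum_g(\bv{A})/\norm{\bv{A}}_{p_2}^{p_2}$ identifies a threshold $\lambda_g(\epsilon)$ of order $(\epsilon/n)^{1/p_2}\norm{\bv{A}}_{p_2}$ below which the pointwise small-tail hypothesis of Theorem~\ref{thm:histogram_algorithm} is satisfied; squaring this threshold and rescaling by $1/M^2$ as prescribed by Algorithm~\ref{algo:histogram} reproduces the $\lambda$ announced in the corollary. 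This step is the main obstacle in the proof: the naive single-term bound $\Sum_g(\bv{A}) \ge g(\sigma_1(\bv{A}))$ alone yields only a threshold of the form $(\epsilon/n)^{1/p_2}\sigma_1(\bv{A})$, and one needs the summed Schatten-$p_2$ comparison above to replace $\sigma_1(\bv{A})$ by $\norm{\bv{A}}_{p_2}$.

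\textbf{Runtime.} With the two hypotheses verified, the running time is obtained by substituting into Theorem~\ref{thm:histogram_algorithm}. The average condition number $\bar\kappa$ and its balance with the deflation cost $\nnz(\bv{A})k + nk^{\omega-1}$ are handled exactly as in the Schatten $p_2$ analysis of Corollary~\ref{cor:hist_schatten}: the inequality $\sigma_k(\bv{A})^{p_2}\le \norm{\bv{A}}_{p_2}^{p_2}/k$ yields $\bar\kappa \le \epsilon^{-2/p_2}(n/k)^{2/p_2-1}$ for $p_2<2$ and $\bar\kappa \le \epsilon^{-2/p_2}$ for $p_2\ge 2$. Choosing $k = n^{(1/p_2-1/2)/(1/p_2+1/2)}$ to equate deflation and solver costs, and absorbing an additional $(p_1+1)^2$ factor coming from $\delta_g^2=\max\{1,p_1\}^2$ in Theorem~\ref{thm:histogram_algorithm}, delivers the stated sparse bound; the dense case follows identically by balancing $n^{\omega(\log_n k)}$ against $n^2\sqrt{\bar\kappa}$.
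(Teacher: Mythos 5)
Your overall plan — invoke Theorem \ref{thm:histogram_algorithm} with $f=g$, get smoothness from the $p_1$ bound, get the cutoff from the $p_2$ bound, and reuse the Schatten-$p_2$ balancing of $k$ and $\bar\kappa$ — is the right one, and the smoothness and runtime portions are fine (the paper even avoids your differentiation step by noting that the $p_1$ growth bound directly yields the conclusion of Claim \ref{claim:hypo} with $\delta_f=p_1$, which is the only way smoothness enters Theorem \ref{thm:histogram_algorithm}; your route needs $g$ differentiable, a hypothesis you do not have). The problem is the tail step, which you yourself flag as the crux: the two inequalities you propose to combine point the same way. Writing $R \eqdef g(\sigma_1(\bv{A}))/\sigma_1(\bv{A})^{p_2}$, monotonicity of $g(x)/x^{p_2}$ gives $g(x)\le R\,x^{p_2}$ for $x\le\sigma_1(\bv{A})$ and also $g(\sigma_i(\bv{A}))\le R\,\sigma_i^{p_2}(\bv{A})$ for every $i$, so your ``companion bound'' is $\Sum_g(\bv{A})\le R\norm{\bv{A}}_{p_2}^{p_2}$, i.e. a \emph{lower} bound $R\ge \Sum_g(\bv{A})/\norm{\bv{A}}_{p_2}^{p_2}$. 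Substituting a lower bound on $R$ into the upper bound $g(x)\le R\,x^{p_2}$ proves nothing. To get your claimed threshold you would need $\Sum_g(\bv{A})$ to be at least a constant times $R\norm{\bv{A}}_{p_2}^{p_2}$, which is false in general; the only lower bound available through $\sigma_1$ is $\Sum_g(\bv{A})\ge g(\sigma_1(\bv{A}))=R\,\sigma_1^{p_2}(\bv{A})$, which returns exactly the ``naive'' threshold $(\epsilon/n)^{1/p_2}\sigma_1(\bv{A})$ you were trying to improve.

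The fix, which is what the paper's proof does, is to compare against the cutoff itself rather than against the top singular value. Set $\lambda_g$ with $\lambda_g^{p_2}=\frac{c\epsilon/2}{n}\norm{\bv{A}}_{p_2}^{p_2}$. For every $\sigma_i(\bv{A})\ge\lambda_g$ the lower growth bound gives $g(\sigma_i(\bv{A}))\ge (\sigma_i(\bv{A})/\lambda_g)^{p_2}\, g(\lambda_g)$, and summing only over these indices yields $\Sum_g(\bv{A})\ge \frac{g(\lambda_g)}{\lambda_g^{p_2}}\sum_{\sigma_i(\bv{A})\ge\lambda_g}\sigma_i^{p_2}(\bv{A})\ge \frac{g(\lambda_g)}{\lambda_g^{p_2}}\,(1-c\epsilon/2)\norm{\bv{A}}_{p_2}^{p_2}=\frac{(1-c\epsilon/2)\,n}{c\epsilon/2}\,g(\lambda_g)$, where the middle inequality uses that at most $n\lambda_g^{p_2}=\frac{c\epsilon}{2}\norm{\bv{A}}_{p_2}^{p_2}$ of the Schatten-$p_2$ mass lies below $\lambda_g$. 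Hence $g(\lambda_g)\le \frac{c\epsilon}{n}\Sum_g(\bv{A})$, and since $g$ is nondecreasing the same holds for all $x\le\lambda_g$, which is exactly the small-tail requirement (the paper phrases it as the aggregate bound $\sum_{\sigma_i(\bv{A})<\lambda_g} g(\sigma_i(\bv{A}))\le c\epsilon\,\Sum_g(\bv{A})$, which is what the proof of Theorem \ref{thm:histogram_algorithm} actually uses). With this repaired, your runtime balancing via $\sigma_k^{p_2}(\bv{A})\le\norm{\bv{A}}_{p_2}^{p_2}/k$ and $k=n^{\frac{1/p_2-1/2}{1/p_2+1/2}}$ goes through exactly as in Corollary \ref{cor:hist_schatten}.
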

 Note that $\lambda$ depends on $\norm{\bv{A}}_{p_2}^{p_2}$ which can be estimated for example using Corollary \ref{cor:hist_schatten}, or via binary search as described in the proof of Corollary \ref{cor:hist_schatten}.
 \begin{proof}
 We invoke Theorem \ref{thm:histogram_algorithm} with $f(x) = g(x)$. While we do not show that $g(\cdot)$ exactly satisfies the multiplicative smoothness condition, the upper bound on $g(\cdot)$ directly implies the result of Claim \ref{claim:hypo} with $\delta_f = p_1$. So our setting of $\alpha = c\epsilon/\max \{1,p_1\}$ suffices as the bound in this Claim is the only smoothness condition used to prove Theorem \ref{thm:histogram_algorithm}.
 
Additionally, by the lower bound on $g(\cdot)$, for any $c$, we can set $
\lambda_f(c\epsilon) = \left ( \frac{c \epsilon/2}{n} \norm{\bv{A}}_{p_2}^{p_2} \right )^{1/{p_2}}$. For any $\sigma_i(\bv{A}) \ge \lambda_f(c\epsilon)$:
\begin{align*}
g(\sigma_i(\bv{A})) &\ge \left (\frac{\sigma_i(\bv{A})}{\lambda_f(c\epsilon)}\right)^{p_2} g(\lambda_f(c\epsilon))\\
&\ge \frac{2n}{c\epsilon} \cdot \frac{\sigma^{p_2}_i(\bv{A})}{\norm{\bv{A}}_{p_2}^{p_2}} \cdot g(\lambda_f(c\epsilon)).
\end{align*} which gives:
\begin{align*}
\sum_{\sigma_i(\bv{A}) \ge \lambda_f(c\epsilon)} g(\sigma_i(\bv{A})) &\ge \frac{2n\cdot g(\lambda_f(c\epsilon))}{c\epsilon\cdot \norm{\bv{A}}_{p_2}^{p_2}} \cdot \sum_{\sigma_i(\bv{A}) \ge \lambda_f(c\epsilon)} \sigma_i^{p_2}(\bv{A})\\
&\ge \frac{2n\cdot g(\lambda_f(c\epsilon))}{c\epsilon}\cdot (1-c\epsilon/2)
\end{align*}
where the last bound follows from the fact that by the setting of $\lambda_f(c\epsilon)$, $\frac{\sum_{\sigma_i(\bv{A}) \ge \lambda_f(c\epsilon)} \sigma_i^{p_2}(\bv{A})}{\norm{\bv{A}}_{p_2}^{p_2}} \ge 1-c\epsilon/2.$ 
Finally for any $\sigma_i(\bv{A}) \le \lambda_f(c\epsilon)$, $g(\sigma_i(\bv{A})) \le g(\lambda_f(c\epsilon))$. So overall we have:
\begin{align*}
\sum_{\sigma_i(\bv{A}) < \lambda_f(c\epsilon)} g(\sigma_i(\bv{A})) \le n \cdot g(\lambda_f(c\epsilon)) \le \frac{c\epsilon/2}{1-c\epsilon/2} \cdot \sum_{\sigma_i(\bv{A}) \ge \lambda_f(c\epsilon)} g(\sigma_i(\bv{A})) \le \frac{c\epsilon/2}{1-c\epsilon} S_g(\bv{A}) \le c\epsilon S_g(\bv{A}) .
\end{align*}
Thus our setting of $\lambda$ suffices and the accuracy bound follows from Theorem \ref{thm:histogram_algorithm}.

It remains to discuss runtime. We have $\log(1/\lambda) = \tilde O\left (\max \{1,1/{p_2} \} \right)$ and $\delta_f^2 = \max \{1,p_1^2 \} \le (p_1+1)^2$. The runtimes follow from Theorem \ref{thm:histogram_algorithm} via the same arguments used in Corollary \ref{cor:hist_schatten}.
 \end{proof}

\subsubsection*{Ky Fan Norms}

The Ky Fan $w$-norm of a matrix is the sum of its top $w$ singular values: $\norm{\bv{A}}_{KF(w)} \eqdef \sum_{i=1}^w \sigma_i(\bv{A})$ (note that these norms are typically called the `Ky Fan $k$-norms', however we use $w$ to avoid overloading notation on $k$). Such a norm is not strictly speaking a spectral sum. However it can still be approximated using our histogram method.
We have the following corollary of Theorem \ref{thm:histogram}:

\begin{corollary}[Ky Fan norms via Histogram Approximation]\label{cor:kyfan}
For any $\bv{A} \in \mathbb{R}^{n \times n}$, given rank $w$,  error parameter $\epsilon \in (0,1)$ and $M \in [ \norm{\bv{A}}_2, 2\norm{\bv{A}}_2]$, if we run Algorithm \ref{algo:histogram} on $\frac{1}{M} \bv{A}$ with $\epsilon_1 = c\epsilon$, $\epsilon_2 = \frac{c_2\epsilon^2}{\log(1/\lambda) }$, $\alpha = c\epsilon$ and $\lambda = \left (\frac{c\epsilon}{Mw} \norm{\bv{A}}_{KF(w)} \right )^{2}$ for sufficiently small constants $c,c_2$ then with probability $99/100$, letting $a_1, \tilde b_0,...,\tilde b_{T}$ be the outputs of the algorithm we have:
\begin{align*}
(1-\epsilon)\norm{\bv{A}}_{KF(w)} \le \sum_{i=1}^w M a_1^{1/2}(1-\alpha)^{ \tilde t(i)/2} \le (1+\epsilon)\norm{\bv{A}}_{KF(w)}
\end{align*}
where $\tilde t(i)$ is the smallest integer with $\sum_{t=0}^{\tilde t(i)} (1+2c\epsilon)\tilde b_t  \ge i$ or $\tilde t(i) = \infty$ if no such integer exists (and hence the $i^{th}$ term of the sum is just $0$). Further the algorithm runs in time:
\begin{align*}
\tilde O\left(\frac{\nnz(\bv{A})\sqrt{w} + nw}{\epsilon^7} \right).
\end{align*}
\end{corollary}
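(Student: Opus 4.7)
The plan is to apply Algorithm \ref{algo:histogram} to $\bv{A}/M$ (which has spectral norm at most $1$) with the stated parameters and invoke Theorem \ref{thm:histogram}. With $\epsilon_1 = c\epsilon$ and $\epsilon_2 = c_2\epsilon^2/\log(1/\lambda)$, the additive error per bucket becomes $T\epsilon_2 = O(\epsilon)$, giving $(1-c\epsilon)b_t \le \tilde b_t \le (1+c\epsilon)b_t + O(\epsilon)(b_{t-1}+b_t+b_{t+1})$. Write $\mathrm{val}(t) := M a_1^{1/2}(1-\alpha)^{t/2}$; since $\alpha = c\epsilon$, any true singular value in bucket $t$ lies in $[\mathrm{val}(t),(1+c\epsilon)\mathrm{val}(t)]$, so substituting $\mathrm{val}(t^*(i))$ for the true $\sigma_i$ (where $t^*(i)$ is the bucket containing $\sigma_i$) costs only a $(1\pm c\epsilon)$ factor per index. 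By the choice $\lambda = (c\epsilon\norm{\bv{A}}_{KF(w)}/(Mw))^2$, the cutoff satisfies $\mathrm{val}(T) \le c\epsilon\norm{\bv{A}}_{KF(w)}/w$, so the tail singular values among the top $w$ contribute at most $c\epsilon\norm{\bv{A}}_{KF(w)}$ in total, allowing me to restrict attention to the histogrammed portion of the spectrum.

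The central step is controlling the cumulative counts $N(t) = \sum_{s\le t} b_s$ and $\tilde N(t) = \sum_{s\le t}(1+2c\epsilon)\tilde b_s$. The lower bound on $\tilde b_s$ gives $\tilde N(t) \ge (1+2c\epsilon)(1-c\epsilon) N(t) \ge (1+c\epsilon/2) N(t)$, so whenever $\sigma_i \ge \mathrm{val}(T)$ we have $\tilde N(t^*(i)) \ge (1+c\epsilon/2) i > i$, forcing $\tilde t(i) \le t^*(i)$. Hence $\tilde\sigma_i := \mathrm{val}(\tilde t(i)) \ge \mathrm{val}(t^*(i)) \ge (1-c\epsilon)\sigma_i$, which after summing and subtracting the tail yields $\sum_{i\le w}\tilde\sigma_i \ge (1-O(\epsilon))\norm{\bv{A}}_{KF(w)}$. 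In the other direction, the upper bound on $\tilde b_s$ combined with a telescoping of the additive error gives $\tilde N(t) \le (1+O(\epsilon)) N(t+1)$, so $N(\tilde t(w)+1) \ge w/(1+O(\epsilon))$, meaning $\tilde t(w)$ cannot undershoot $t^*(w)$ by much more than one bucket.

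I expect the upper bound to be the main obstacle. Writing $\sum_{i\le w}\tilde\sigma_i \le \sum_{t=0}^{\tilde t(w)}(1+2c\epsilon)\tilde b_t \mathrm{val}(t)$, substituting the upper bound on $\tilde b_t$, and using $\mathrm{val}(t-1) \le (1+c\epsilon)\mathrm{val}(t)$ to absorb the additive-error term, the task reduces to bounding $\sum_{t\le \tilde t(w)} b_t \mathrm{val}(t)$ by $(1+O(\epsilon))\norm{\bv{A}}_{KF(w)}$. When $N(\tilde t(w)) \le w$ this is immediate. In the boundary case $N(\tilde t(w)) > w$ one has $t^*(w) = \tilde t(w)$, so the indices in $(N(\tilde t(w)-1), w]$ all lie in bucket $\tilde t(w)$ with $\sigma_w \ge (1-c\epsilon)\mathrm{val}(\tilde t(w))$, and the partial contribution $r\cdot\mathrm{val}(\tilde t(w))$ for $r = w - \tilde N(\tilde t(w)-1) \le w - N(\tilde t(w)-1)$ matches the corresponding true contribution up to a $(1+O(\epsilon))$ factor. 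Combining both bounds and rescaling constants gives the $(1\pm\epsilon)$ approximation. For the runtime, I invoke Theorem \ref{thm:window} with deflation parameter $k = \sqrt{w}$: using $\sigma_{k+1}(\bv{A}) \le \norm{\bv{A}}_{KF(w)}/(k+1)$, we get $\hat\kappa = \sigma_{k+1}^2(\bv{A})/(M^2\lambda) = O(w^2/(\epsilon k)^2)$, and with $\gamma = \Theta(\epsilon_2\alpha) = \tilde\Theta(\epsilon^3)$ each of the $T\cdot S = \tilde O(1/\epsilon^3)$ window evaluations costs $\tilde O((\nnz(\bv{A})+nk)w/(k\epsilon^4))$, summing to the claimed $\tilde O((\nnz(\bv{A})\sqrt{w}+nw)/\epsilon^7)$; the $\tilde O(\nnz(\bv{A})\sqrt{w}+nw^{(\omega-1)/2})$ deflation precomputation is dominated.
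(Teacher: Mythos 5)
Your lower-bound direction and the runtime accounting are sound and essentially coincide with the paper's proof (the monotonicity $\tilde N(t)\ge N(t)$ gives $\tilde t(i)\le t(i)$, and $k=\sqrt{w}$ gives $\hat\kappa = O(w/\epsilon^2)$ with $\tilde O(1/\epsilon^3)$ window calls costing $\tilde O((\nnz(\bv{A})\sqrt{w}+nw)/\epsilon^4)$ each). The gap is in the upper bound. Your first step bounds the estimate by the \emph{uncapped} sum $\sum_{t\le\tilde t(w)}(1+2c\epsilon)\tilde b_t\,\mathrm{val}(t)$ and then claims the additive-error terms can be absorbed so that everything reduces to bounding $\sum_{t\le\tilde t(w)} b_t\,\mathrm{val}(t)$. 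That absorption produces a term of order $\epsilon\, b_{\tilde t(w)+1}\,\mathrm{val}(\tilde t(w)+1)$ (entering through the additive error in $\tilde b_{\tilde t(w)}$), and $b_{\tilde t(w)+1}$ is not controlled by $w$: it can be as large as $n$. Concretely, take all $n$ singular values equal to $\sigma$, lying in some bucket $t^*$, with $n\gg w/\epsilon$. Theorem \ref{thm:histogram} permits $\tilde b_{t^*-1}$ to be as large as roughly $2c\epsilon n\ge w$ while $\tilde b_t=0$ for $t\le t^*-2$; then $\tilde t(w)=t^*-1$ and $N(\tilde t(w))=0\le w$, so you are in the ``immediate'' branch, yet your intermediate quantity $\sum_{t\le\tilde t(w)}(1+2c\epsilon)\tilde b_t\,\mathrm{val}(t)\approx 2c\epsilon n\sigma$ already exceeds the target $(1+\epsilon)\norm{\bv{A}}_{KF(w)}=(1+\epsilon)w\sigma$ by a factor of about $\epsilon n/w$. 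The estimate itself is still correct in this scenario (it equals $w\cdot\mathrm{val}(t^*-1)\le(1+c\epsilon)w\sigma$), but your chain of inequalities cannot recover this because it has already overshot; your boundary-case analysis ($N(\tilde t(w))>w$) only repairs the truncation of the last bucket and does not touch this case.

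What is missing is a step that exploits the fact that at most $w$ indices are assigned in total, so each index can be ``promoted'' by at most one bucket (value inflated by at most $1+c\epsilon$) and multiplicities inflated by at most a $1+O(\epsilon)$ factor. The paper does this by viewing the estimate as the top-$w$ norm of a synthetic vector with $(1+O(\epsilon))\tilde b_t$ copies of $\mathrm{val}(t)$, constructing a dominating vector with $(1+O(\epsilon))b_t$ copies of $(1+2c\epsilon)\mathrm{val}(t)$, and using that scaling both values and multiplicities by $1+O(\epsilon)$ scales a top-$w$ sum by only $1+O(\epsilon)$. Alternatively, you could finish with the relation you state but do not use in this direction: $\tilde N(t)\le(1+O(\epsilon))N(t+1)$ implies $\mathrm{val}(\tilde t(i))\le(1+O(\epsilon))\,\sigma_{\lceil i/(1+O(\epsilon))\rceil}(\bv{A})$ for every $i\le w$, and summing over $i$ with the same duplication argument gives $(1+O(\epsilon))\norm{\bv{A}}_{KF(w)}$. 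As written, however, the upper-bound argument has a genuine hole.
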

Note that to compute the top $w$ singular values explicitly would require $\tilde O \left (\nnz(\bv{A})w + nw^{\omega-1} \right )$ time using for example a block Krylov method \cite{musco2015randomized}. 
Also note that $\lambda$ depends on the norm we are attempting to compute. As discussed in Corollary \ref{cor:hist_schatten}, we can approximate $\lambda$ via binary search, successively decreasing it until our call to Algorithm \ref{algo:histogram} exceeds the stated runtime.
\begin{proof}
We can assume without loss of generality that $\norm{\bv{A}}_2 \le 1$ and $M = 1$ since rescaling will not effect our approximation factor.
By Theorem \ref{thm:histogram}, since we set $\epsilon_1 = c\epsilon$, $\epsilon_2 =  \frac{c_2\epsilon^2}{\log(1/\lambda) }$, and have $T = \lceil \log_{1-\alpha} \lambda \rceil = \Theta \left (\frac{\log(1/\lambda)}{\epsilon} \right )$:
\begin{align}\label{kyfanbBound}
(1-c\epsilon) b_t \le \tilde b_t \le b_t + 2c\epsilon (b_{t-1} + b_t + b_{t+1})
\end{align}
if we set $c_2$ small enough compared to $c$. This give $b_t \le \frac{1}{1-c\epsilon} \tilde b_t \le (1+2c\epsilon) \tilde b_t$ if $c$ is set small enough. Thus, since these scaled bucket sizes strictly overestimate the true bucket sizes we have $\tilde t(i) \le t(i)$, where $ t(i)$ is the smallest integer with $\sum_{t=0}^{ t(i)} b_t  \ge i$. This gives, since $\alpha = c\epsilon$ and since by our setting of $\lambda$, at most an $c\epsilon$ fraction of $\norm{\bv{A}}_{KF(w)}$ falls outside of the ranges $R_0,...,R_{T}$:
\begin{align*}
\norm{\bv{A}}_{KF(w)} = \sum_{i=1}^w \sigma_i(\bv{A}) \le (1 + 4c\epsilon) \sum_{i=1}^w M a_1^{1/2}(1-\alpha)^{ t(i)/2} \le (1+4c\epsilon) \sum_{i=1}^w M a_1^{1/2}(1-\alpha)^{ \tilde t(i)/2}.
\end{align*}
On the other side, let $\bv{\tilde v}$ be a vector that, for each $t \in \{0,...,T \}$ has $(1+4c\epsilon) \tilde b_t$ entries each set to $Ma_1^{1/2}(1-\alpha)^{t/2}$. Define the top-$w$ norm of $\bv{\tilde v}$ to be the sum of its largest $w$ entries, denoted by $\norm{\bv{\tilde v}}_{T(w)}$. Our estimate of $\norm{\bv{A}}_{KF(w)}$ is equal to $\norm{\bv{\tilde v}}_{T(w)}$. 
 Note that we can also add arbitrary zeros as padding entries to $\bv{\tilde v}$ and not change this norm.
 
Similarly, let $\bv{v'}$ be a vector with $(1+10c\epsilon) b_t$ values each set to $(1+2c\epsilon) \cdot a_1^{1/2}(1-\alpha)^{t/2}$ for $t \in \{0,...,T \}$ and $2\epsilon c \cdot b_{T+1}$ entries set to $(1+2c\epsilon)  \cdot a_1^{1/2}(1-\alpha)^{T/2} \le \lambda$. 

By \eqref{kyfanbBound}, $(1+2c\epsilon)\tilde b_t \le (1+6c\epsilon) b_t + 2c\epsilon b_{t-1} + 2c\epsilon b_{t+1}$. This fact combined with the fact that the entries in $\bv{v}'$ are scaled up by a $(1+2c\epsilon)$ factor ensure that the entries of $\bv{v}'$ dominate those of $\bv{\tilde v}$ and so $\norm{\bv{v}'}_{T(w)} \ge \norm{\bv{\tilde v}}_{T(w)}$.
Further, $\norm{\bv{v}'}_{T(w)} \le (1+13c\epsilon) \norm{\bv{A}}_{KF(w)}$ since we have scaled up each entry by at most $(1+2c\epsilon)$ factor and scaled up each bucket size $b_t$ by a $(1+10c\epsilon)$ factor. This bound gives our final multiplicative approximation after adjusting the constant $c$ on $\epsilon$.

It remains to discuss runtime. We invoke Theorem \ref{thm:window}, setting $k = w^{1/2}$. This gives:
$$\hat \kappa = \frac{\sigma_{k+1}^2(\bv{A})}{\lambda} = \frac{\sigma_{k+1}^2(\bv{A}) w^2}{\norm{\bv{A}}_{KF(w)}^2(\epsilon c)^2} \le \frac{w}{(c\epsilon)^2}$$
where we us the fact that for $k = w^{1/2}$, $\sigma_k(\bv{A}) \le \frac{1}{w^{1/2}} \norm{\bv{A}}_{KF(w)}$. With our settings of $\lambda$, $\alpha$ and $\epsilon_1,\epsilon_2$, Algorithm \ref{algo:histogram} performs $\tilde \Theta(1/\epsilon_1^2 \cdot 1/\alpha) = \tilde \Theta(1/\epsilon^3) $ calls to $\algW$, each which requires $\tilde \Theta (1/\gamma) = \tilde \Theta(\epsilon_2 \alpha) = \tilde \Theta(1/\epsilon^3)$  regression calls due to our setting of $\alpha$ and $\epsilon_2$. Plugging $\hat \kappa$ into the second runtime of Theorem \ref{thm:window} gives the corollary.
\end{proof}
\section{Lower Bounds}\label{sec:lower}

In this section we give hardness results for high accuracy spectrum approximation. Specifically, we show how to detect if an undirected unweighted graph contains a triangle using accurate approximation algorithms  for various important spectral sums such as the Schatten $p$-norms, log-determinant, the SVD entropy, and the trace inverse. Our spectral sum bounds further imply hardness for the important primitives of computing effective resistances in a graph or leverage scores in a matrix.

In the seminal work of \cite{williams2010subcubic} it was shown that any truly subcubic time algorithm for triangle detection yields a truly subcubic time algorithms for Boolean matrix multiplication (BMM). Consequently, these results show that computing any of these quantities too precisely is in a sense as difficult as BMM. Furthermore, as it is a longstanding open question whether or not there is any subcubic time combinatorial algorithm for BMM, i.e. an algorithm which avoids the powerful algebraic manipulations inherent in the fastest algorithms for BMM, these results can be viewed as showing that we do not expect simple iterative methods to yield precise algorithms for these problems without an immense breakthrough in linear algebra.


In Section \ref{sec:reductions} we give a general result on reducing spectral sums to triangle detection. Then in Section~\ref{sec:sum_hardness} we use this to show hardness for computing various well studied spectral sums. In Section~\ref{sec:effres_hard} we conclude by showing hardness for computing graph and numerical linear algebra primitives, i.e. effective resistances and leverage scores.

\subsection{Reductions From Triangle Detection}
\label{sec:reductions}

Here we provide our main technical tool for reducing spectral sum computation to triangle detection. As discussed in Section~\ref{sec:lower_bound_approach}, our reduction leverages the well known fact that the number of triangles in any unweighted graph $G$ is equal to $\tr(\bv{A}^3) / 6$ where $\bv{A}$ is the adjacency matrix for $G$. Consequently, given any function $f : \R^+ \rightarrow \R^+$ whose power series is reasonably behaved, we can show that for suitably small $\delta$ the quantity $\tr(f(\bv{I} + \delta \bv{A}))$ is dominated by the contribution of $\tr(\bv{A}^k)$ for $k \in (0, 3)$. Therefore computing $\tr(f(\bv{I} + \delta \bv{A}))$ approximately lets us distinguish between whether or not $\tr(\bv{A}^3) = 0$ or $\tr(\bv{A}^3) \geq 6$. 

We formalize this in the following theorem. As it simplifies the result, we focus on the case where $f$ is defined on the interval $(0,2)$, however, this suffices for our purposes and can be generalized via shifting and scaling of $x$.
 
\begin{theorem}[Many Spectral Sums are as Hard as Triangle Detection]\label{triangle_lowerBound}
Let $f: \R^+ \rightarrow \R^+$ be an arbitrary function, such that for $x \in (0,2)$ we can express it as 
\begin{align}\label{seriesExpansion}
f(x) = \sum_{k=0}^\infty c_k (x-1)^k
\text{ where }
\left |\frac{c_k}{c_3} \right |\le h^{k-3}
\text{ for all } k > 3
\end{align}
Then given the adjacency matrix $\bv{A} \in \R^{n \times n}$ of any simple graph $G$ that has no self-loops and spectral sum estimate $X \in (1\pm \epsilon_1 / 9) \sum_{i=1}^n f(\sigma_i(\bv{I} - \delta \bv{A}))$ for scaling $\delta$ and accuracy $\epsilon$ satisfying
\[
\delta = \min \left \{\frac{1}{n} ~ , ~ \frac{1}{10 n^4 h} \right \}
\text{ and }
\epsilon_1 = \min \left \{1 ~ , ~  \left |\frac{c_3\delta^3 }{c_0 n } \right | ~ , ~ \left |\frac{c_3 \delta}{c_2n^2}\right | \right \}
\]
we can detect if $G$ has a triangle in $O(\nnz(\bv{A}))$ time. 

Consequently, given an algorithm which on input $\bv{B} \in \R^{n \times n}$ outputs $Y \in (1\pm \epsilon) \sum_{i=1}^n f(\sigma_i(\bv{B}))$ in $O(n^\gamma \epsilon^{-c})$ time we can produce an $O(n^2 + n^\gamma \epsilon_1^{-c}))$ time triangle detection algorithm.
\end{theorem}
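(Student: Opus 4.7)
My plan is to reduce triangle detection to estimating $\sum_i f(\sigma_i(\bv{I} - \delta \bv{A}))$ by expanding this spectral sum through the given power series and isolating the $k=3$ term, which encodes $\tr(\bv{A}^3)=6T$ (where $T$ is the number of triangles). First I would verify that $\bv{I} - \delta \bv{A}$ is PSD with spectrum strictly inside $(0,2)$: because $\bv{A}$ is the adjacency matrix of a simple graph, $\|\bv{A}\|_2 \le n-1$, so $\delta \le 1/n$ forces the eigenvalues of $\bv{I} - \delta \bv{A}$ into $(0,2)$, the domain of the power series. Consequently the singular values coincide with eigenvalues and, by the spectral theorem,
\[
\sum_{i=1}^n f(\sigma_i(\bv{I}-\delta\bv{A})) = \tr(f(\bv{I}-\delta \bv{A})) = \sum_{k=0}^{\infty} c_k (-\delta)^k \tr(\bv{A}^k).
\]

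Next I would split the series into three pieces. The terms for $k=0,1,2$ are exactly computable in $O(\nnz(\bv{A}))$ time: $\tr(\bv{A}^0)=n$, $\tr(\bv{A})=0$ (no self-loops), and $\tr(\bv{A}^2)=2|E|$ (sum of degrees). The $k=3$ term is $-6 c_3 \delta^3 T$, the quantity we want to pick out. Everything from $k \ge 4$ constitutes a tail that I need to show is negligible compared to a single triangle's contribution $|c_3|\delta^3$. Using the tighter estimate $|\tr(\bv{A}^k)| \le n^2 \|\bv{A}\|_2^{k-2} \le n^k$ for $k \ge 2$ (rather than the naive $n^{k+1}$), combined with $|c_k| \le |c_3| h^{k-3}$, gives
\[
\Big|\sum_{k \ge 4} c_k (-\delta)^k \tr(\bv{A}^k)\Big| \le |c_3| \delta^3 n^3 \sum_{j \ge 1} (h\delta n)^j \le \tfrac{1}{5}|c_3|\delta^3,
\]
where the final inequality uses $h\delta n \le 1/(10n^3)$, which follows from $\delta \le 1/(10n^4 h)$.

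To finish, I would subtract the three computable terms from the estimate $X$ and analyze the residual. The approximation error is $\frac{\epsilon_1}{9}\cdot|\text{spectral sum}|$, and the three conditions on $\epsilon_1$ are exactly tailored to dominate each source of noise: the bound $\epsilon_1 \le |c_3\delta^3/(c_0 n)|$ kills the contribution of the $c_0 n$ term, $\epsilon_1 \le |c_3\delta/(c_2 n^2)|$ handles the $c_2$ term, and $\epsilon_1 \le 1$ handles the $k\ge 3$ contributions (whose magnitude is proportional to the triangle signal itself, so the relative error only shrinks the signal by a constant factor). Putting these together, the residual has magnitude $O(|c_3|\delta^3)$ when $T=0$ but magnitude $\Omega(T|c_3|\delta^3) = \Omega(|c_3|\delta^3)$ when $T\ge 1$, so thresholding distinguishes the two cases. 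All bookkeeping outside the oracle call costs $O(\nnz(\bv{A}))$, yielding the $O(\nnz(\bv{A}))$ reduction. For the second statement, running the assumed approximation algorithm on $\bv{I}-\delta\bv{A}$ (whose construction takes $O(n^2)$ time) with accuracy $\epsilon = \epsilon_1/9$ yields the claimed $O(n^2 + n^\gamma \epsilon_1^{-c})$ runtime.

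The main obstacle is the tail bound: with the naive estimate $|\tr(\bv{A}^k)| \le n^{k+1}$ the tail would be $\Theta(n|c_3|\delta^3)$, which swamps the single-triangle signal. The observation that rewriting $\tr(\bv{A}^k) = \sum_i \lambda_i^k$ and factoring out $\lambda_{\max}^{k-2}$ leaves $\sum_i \lambda_i^2 = \|\bv{A}\|_F^2 \le n^2$ is what shaves a factor of $n$ and makes the whole reduction go through.
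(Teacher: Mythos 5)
Your proposal is correct and follows essentially the same route as the paper's proof: expand $\tr(f(\bv{I}-\delta\bv{A}))$ in the power series about $1$, compute the $k=0,1,2$ terms exactly, and control the $k\ge 4$ tail via $|\tr(\bv{A}^k)| \le \|\bv{A}\|_2^{k-2}\|\bv{A}\|_F^2 \le n^k$ together with $\delta \le 1/(10n^4 h)$, with the three conditions on $\epsilon_1$ absorbing the $c_0$, $c_2$, and relative-error contributions exactly as in the paper. The only differences are cosmetic (you keep the $(-\delta)^k$ sign explicitly and get a tail constant of $1/5$ rather than $1/9$), neither of which affects the argument.
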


\begin{proof}
Let $\bv{A}$, $G$, $\delta$, $\epsilon_1$, and $X$ be as in the theorem statement and let $\bv{B} \eqdef \bv{I} - \delta \bv{A}$. By Gershgorin's circle theorem, $\norm{\bv{A}}_2 \le n - 1$ and  since $\delta \le 1/n$, $\norm{\delta \bv{A}}_2 < 1$. Consequently $\bv{B}$ is symmetric PSD, $\sigma_i(\bv{B}) = \lambda_i(\bv{B}) \in (0, 2)$ for all $i \in [n]$, and therefore applying \eqref{seriesExpansion} yields: 
\begin{align*}
\sum_{i=1}^n f(\sigma_i(\bv{B})) = \sum_{i=1}^n f(1-\delta\lambda_i(\bv{A})) = \sum_{i=1}^n \sum_{k=0}^\infty c_k (\delta \lambda_i(\bv{A}))^k = \sum_{k=0}^\infty c_k \delta^k \tr(\bv{A}^k) \,.
\end{align*}
$\delta \le \frac{1}{10n^4 h}$ is enough to insure that the first three terms of this power series dominate. Specifically:
\begin{align*}
\left |\sum_{k=4}^\infty c_k \delta^k \tr(\bv{A}^k) \right |
= \left | c_3\delta^3 \sum_{k=4}^\infty \frac{c_k}{c_3}\delta^{k-3} \tr(\bv{A}^k) \right |
\le |c_3| \delta^3 \sum_{k=4}^{\infty} \frac{1}{10^{k-3}} \left (\frac{1}{n^4}\right )^{k-3} \tr(\bv{A}^k)
\le  \frac{|c_3| \delta^3}{9} 
\end{align*}
where the last inequality uses the fact that $\tr(A^k) \le \|\bv{A}\|_2^{k - 2} \norm{\bv{A}}_F^2 \le n^k \le n^{4(k-3)}$  
for all $k > 3$. 
 Further, since $\tr(\bv{A}^0) = n$, $\tr(\bv{A}) = 0$, and $\tr(\bv{A}^2) = \norm{\bv{A}}_F^2 \leq n^2$ we have:
\begin{align*}
0 \leq c_0 \tr(\bv{A}^0) + c_1 \delta \tr(\bv{A}) + c_2 \delta^2(\tr(\bv{A}^2)) \le |c_3| \delta^3 \cdot \left ( \left |\frac{c_0n}{c_3\delta^3}\right | + \left | \frac{c_2n^2}{c_3\delta}  \right | \right ) \le \frac{|c_3| \delta^3}{\epsilon_1} .
\end{align*}
Now, clearly in $O(\nnz(\bv{A}))$ time we can compute $\tr(\bv{A}^2) = \|\bv{A}\|_F^2$ in $O(\nnz(\bv{A}))$ as well as:
\begin{align*}
X - c_0 n - c_2\delta^2 \tr(\bv{A}^2) &= 
c_3\delta^3 \tr(\bv{A}^3) \pm \frac{|c_3| \delta^3}{9}
\pm \frac{\epsilon_1}{9} \left(\frac{|c_3|\delta^3}{9} + c_3 \delta^3 \tr(\bv{A}^3) + \frac{|c_3| \delta^3}{\epsilon_1}\right)\\
&=  c_3 \delta^3 \left[
 \tr(\bv{A}^3) \left(1\pm \frac{1}{20}\right)
 \pm \frac{1}{3}
 \right]
\end{align*}
So we can detect if $\tr(\bv{A}^3) = 0$ or if $\tr(\bv{A}^3) \ge 6$ and hence whether or not $G$ has a triangle.
\end{proof}

Note that in this reduction, so long as $\delta$ is small (i.e. $\leq 2n$) then $\bv{B} = \bv{I} - \delta \bv{A}$ is a very well conditioned matrix (its condition number is at most a constant). 
Consequently, our lower bounds apply even when approximately applying for example $\bv{B}^{-1}$ or $\bv{B}^{1/2}$ to high precision is inexpensive. The theorem (and the results in Section~\ref{sec:effres_hard}) suggests that the difficulty in computing spectral sums arises more from the need to measure the contribution from multiple terms precisely, than from the difficulty in manipulating $\bv{B}$ for the purposes of applying it to a single vector. 

Also, note that the matrix $\bv{B}$ in this reduction is symmetrically diagonally dominant (SDD). So, even for these highly structured matrices which admit near linear time application of $\bv{B}^{-1}$ \cite{spielman2004nearly} as well as approximate factorization \cite{kyng2016approximate}, accurate spectral sums are difficult. We leverage this in Section~\ref{sec:effres_hard}. 

\subsection{Hardness for Computing Spectral Sums}
\label{sec:sum_hardness}

Here we use Theorem~\ref{triangle_lowerBound} to show hardness for various spectral sum problems. To simplify our presentation, we focus on the case of dense matrices, showing bounds of the form $\Omega(n^\gamma \epsilon^{-c})$. However, note that Theorem~\ref{triangle_lowerBound} and the approach we use also yields lower bounds on the running time for sparse matrices and can be stated in terms of $\nnz(\bv{A})$. 

\subsubsection*{Schatten $p$-norm for all $p \neq 2$}

For $x \in (0,2)$, using the Taylor Series about $1$ we can write
\begin{align}
x^p = \sum_{k=0}^\infty c_k (1-x)^{k}\text{ where } c_k = \frac{\prod_{i=0}^{k-1} (p-i)}{k!}
\end{align}
This series converges since $|c_k| \le 1$ for $k > p$ and for $x \in (0,2)$, $(1-x) < 1$.
Note that when $p$ is a non-negative integer, only the first $p$ terms of the expansion are nonzero. When $p$ is non-integral, the sum is infinite. 
We will apply Theorem \ref{triangle_lowerBound} slightly differently for different values of $p$. We first give our strongest result:
\begin{corollary}[Schatten $3$-Norm Hardness]
Given an algorithm which on input $\bv{B} \in \R^{n\times n}$ returns $X \in (1\pm \epsilon) \norm{\bv{B}}_3^3$ in $O (n^\gamma \epsilon^{-c} )$ we can produce an algorithm that detects if an $n$-node graph contains a triangle in $O(n^{\gamma + 4c})$ time.
\end{corollary}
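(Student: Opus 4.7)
The plan is to instantiate Theorem~\ref{triangle_lowerBound} with the choice $f(x) = x^3$, since for any PSD matrix $\bv{B}$ we have $\sum_i f(\sigma_i(\bv{B})) = \norm{\bv{B}}_3^3$, and in the reduction $\bv{B} = \bv{I} - \delta \bv{A}$ will be PSD whenever $\delta \le 1/n$ (using $\norm{\bv{A}}_2 \le n-1$ from Gershgorin applied to the adjacency matrix of a simple graph). First I would read off the Taylor coefficients of $x^3$ expanded about $1$: expanding $x^3 = ((x-1) + 1)^3$ gives $c_0 = 1$, $c_1 = 3$, $c_2 = 3$, $c_3 = 1$, and $c_k = 0$ for all $k > 3$. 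Because the tail coefficients vanish, the hypothesis $|c_k/c_3| \le h^{k-3}$ is satisfied vacuously for any $h > 0$; I would pick $h$ small enough (say $h \le 1/(10 n^3)$) so that $1/(10 n^4 h) \ge 1/n$, which forces the minimum in the definition of $\delta$ to be $\delta = 1/n$.

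Next I would evaluate the accuracy parameter $\epsilon_1$ guaranteed by the theorem. With $\delta = 1/n$ and the coefficients above,
\[
\left|\frac{c_3 \delta^3}{c_0 n}\right| = \frac{1}{n^4}, \qquad \left|\frac{c_3 \delta}{c_2 n^2}\right| = \frac{1}{3 n^3},
\]
so $\epsilon_1 = \Theta(1/n^4)$. Thus Theorem~\ref{triangle_lowerBound} tells us that any $X$ within a $(1 \pm \epsilon_1/9)$ multiplicative factor of $\norm{\bv{I} - \delta \bv{A}}_3^3$ suffices to detect a triangle in $G$, with an additional $O(\nnz(\bv{A})) = O(n^2)$ overhead to form $\bv{B}$ and to compute the exact contributions from the $k \in \{0, 1, 2\}$ terms of the power series.

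Finally, I would feed $\bv{B} = \bv{I} - \delta \bv{A}$ into the assumed $(1 \pm \epsilon)$-approximation algorithm for the Schatten $3$-norm with $\epsilon = \Theta(\epsilon_1) = \Theta(1/n^4)$. This call takes $O(n^\gamma \epsilon^{-c}) = O(n^{\gamma + 4c})$ time, and combined with the $O(n^2)$ reduction cost and the fact that $\gamma \ge 2$ for any nontrivial algorithm handling $n \times n$ matrices, we obtain an $O(n^{\gamma + 4c})$ triangle-detection algorithm, as claimed. The conceptual heavy lifting is already in Theorem~\ref{triangle_lowerBound}, so the only real task is verifying the Taylor data for $x^3$ and tracking the resulting value of $\epsilon_1$; there is no substantive obstacle here.
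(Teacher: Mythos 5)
Your proposal is correct and follows essentially the same route as the paper: instantiate Theorem~\ref{triangle_lowerBound} with $f(x)=x^3$, observe that $c_k=0$ for $k>3$ so the tail condition is vacuous (the paper simply takes $h=0$ where you take $h$ tiny but positive, with the same effect $\delta=1/n$), and conclude $\epsilon_1=\Theta(1/n^4)$, giving triangle detection in $O(n^{\gamma+4c})$ time. The extra check of the $|c_3\delta/(c_2 n^2)|$ term and the $O(n^2)$ overhead is fine and matches the theorem's statement.
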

\begin{proof}
For $p = 3$, $c_k = 0$ for $k > 3$. So we apply Theorem \ref{triangle_lowerBound} with $h=0$ and hence $\delta = 1/n$ and $\epsilon_1 = \frac{c_3 \delta^3}{c_0 n} = \frac{1}{n^4}$.
\end{proof}
Note that for $p$ very close to $3$ a similar bound holds as $h \approx 0$. 
If $p = 3$ Theorem \ref{finalthm:dense} gives an algorithm running in $\tilde O(n^2/\epsilon^3)$ time. Significant improvement to the $\epsilon$ dependence in this algorithm therefore either requires loss in the $n$ dependence or would lead to $o(n^{w})$ time triangle detection for the current value of $\omega$.
We next extend to all $p \neq 1,2$.
\begin{corollary}[Schatten $p$-Norm Hardness, $p \neq 1,2$]\label{onehalfthreehalfs}
For any $p > 0$, $p \neq 1,2$, given algorithm $\algA$ which for any $\bv{B} \in \R^{n\times n}$ returns $X \in (1\pm \epsilon) \norm{\bv{B}}_p^p$ in $O(n^\gamma \epsilon^{-c})$ time, we can detect if an $n$-node graph contains a triangle in $O \left(n^{\gamma + 13c} \cdot \frac{p^{3c}}{\left |\min \{p,(p-1),(p-2) \}\right |^c} \right )$ time.
\end{corollary}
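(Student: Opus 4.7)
My plan is to invoke Theorem~\ref{triangle_lowerBound} with $f(x) = x^p$, so the proof amounts to two tasks: (i) expressing $x^p$ as a power series about $1$ of the required form and identifying an admissible value of $h$, and (ii) substituting into the formulas for $\delta$ and $\epsilon_1$ and then massaging the resulting bound into the form stated in the corollary. The natural choice of expansion is the generalized binomial series
\[
x^p = \bigl(1 + (x-1)\bigr)^p = \sum_{k=0}^\infty \binom{p}{k}(x-1)^k,
\]
which converges absolutely on $(0,2)$. This identifies $c_k = \binom{p}{k}$; in particular $c_0 = 1$, $c_2 = p(p-1)/2$, and $c_3 = p(p-1)(p-2)/6$, and the hypothesis $p \neq 1, 2$ (with $p > 0$) ensures $c_3 \neq 0$ so that the reduction is nondegenerate.

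The main ingredient of step (i) is a clean bound on $h$. I will exploit the recursion $c_{k+1}/c_k = (p-k)/(k+1)$: for $k \ge 3$ and $k \ge p$ the ratio is at most $1$, while for $3 \le k < p$ the ratio is at most $p/(k+1) \le p/4$. Telescoping from index $3$ gives $|c_k/c_3| \le h^{k-3}$ with $h = \max\{1, p/4\} = O(p)$. This is the step that most needs care: a crude bound such as $h = e^{|p|}$ would shrink $\delta$ and $\epsilon_1$ by an exponential factor and destroy the claimed $n$-exponent of $13c$, so the telescoping argument via the explicit recursion is what I expect to be the main (mild) obstacle.

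For step (ii), Theorem~\ref{triangle_lowerBound} then prescribes $\delta = \Theta(1/(n^4 p))$ (using that $p \ge 1$ makes the second term of the $\min$ dominate, and for $p < 1$ one may simply take $h = 1$ and obtain the same asymptotic formula up to constants). Substituting,
\[
\left|\frac{c_3 \delta^3}{c_0 n}\right| = \Theta\!\left(\frac{|p(p-1)(p-2)|}{p^3 n^{13}}\right)
\quad \text{and} \quad
\left|\frac{c_3 \delta}{c_2 n^2}\right| = \Theta\!\left(\frac{|p-2|}{p\, n^6}\right),
\]
and the first is the smaller of the two by a factor of roughly $n^{-7}$, so $\epsilon_1 = \Theta(|p(p-1)(p-2)|/(p^3 n^{13}))$. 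To match the form of the corollary I will check, by a short case split near $p = 0, 1, 2$ and for large $p$, that $|p(p-1)(p-2)| = \Omega\bigl(\min\{|p|, |p-1|, |p-2|\}\bigr)$ (e.g., on $p\in(1,2)$ this reduces to checking $p(2-p) \ge 3/4$ on the relevant subinterval, and analogously near $0$ and $2$), which yields $\epsilon_1 = \Omega(\min\{|p|, |p-1|, |p-2|\}/(p^3 n^{13}))$.

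Finally, I will invoke the hypothesized $O(n^\gamma \epsilon^{-c})$ algorithm with accuracy $\epsilon = \epsilon_1/9$ and fold the constants into the $O(\cdot)$; Theorem~\ref{triangle_lowerBound} then produces a triangle detection procedure running in time $O(n^2 + n^\gamma \epsilon_1^{-c})$, and the $n^2$ term is absorbed for $\gamma \ge 2$. Plugging in the lower bound on $\epsilon_1$ gives exactly
\[
O\!\left(n^{\gamma + 13 c} \cdot \frac{p^{3c}}{\min\{|p|,|p-1|,|p-2|\}^c}\right),
\]
as claimed.
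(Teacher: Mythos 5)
Your proposal is correct and follows essentially the same route as the paper: invoke Theorem~\ref{triangle_lowerBound} for $f(x)=x^p$ via the binomial (Taylor) expansion about $1$, bound $h=O(p)$ from the coefficient ratios, take $\delta=\Theta(1/(n^4\max\{1,p\}))$ and $\epsilon_1=\Theta\bigl(|p(p-1)(p-2)|/(p^3 n^{13})\bigr)$, and relate $|p(p-1)(p-2)|$ to $\min\{|p|,|p-1|,|p-2|\}$. If anything, your telescoping bound on $h$ and the explicit $\Omega(\min\{|p|,|p-1|,|p-2|\})$ verification are slightly more careful than the paper's stated constants, which are absorbed by the $O(\cdot)$ anyway.
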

\begin{proof}
We have $\frac{c_k}{c_3} \le p^{k-3}$ for all $k > 3$ as well as
$\left |\frac{c_0}{c_3} \right | = \left |\frac{1}{p(p-1)(p-2)} \right | \le \left |\frac{1}{2\min \{p,(p-1),(p-2) \}} \right |$ and similarly $\left |\frac{c_2}{c_3} \right| \le \left |\frac{1}{2 \min \{p,(p-1) \}}\right |$. 
We apply Theorem \ref{triangle_lowerBound} with $\delta = \Theta \left ( \frac{1}{n^4 p}\right)$ and $\epsilon_1 = \frac{c_3\delta^3}{c_0 n} = \Theta \left (\frac{\left |\min \{p,(p-1),(p-2) \}\right |}{n^13 p^3}\right)$, which gives the result.
\end{proof}
In the typical case when $p << n$, the $p^{3c}$ term above is negligible.  The $\frac{1}{\left |\min \{p,(p-1),(p-2) \}\right |^c}$ term is meaningful however. Our bound becomes weak as $p$ approaches $2$ (and meaningless when $p =2$). This is unsurprising, as for $p$ very close to $2$, $\norm{\bv{B}}_p^p \approx \norm{\bv{B}}_F^2$, which can be computed exactly in $\nnz(\bv{B})$ time. The bound also becomes weak for $p \approx 1$, which is natural as our reduction only uses PSD $B$, for which $\norm{\bv{B}}_1 = \tr(\bv{B})$ which can be computed in $n$ time. However, we can remedy this issue by working with a (non-PSD) square root of $\bv{B}$ which is easy to compute:
\begin{corollary}[Schatten $p$-Norm Hardness, $p \approx 1$]\label{nearone}
For any $p$, given an algorithm which for any $\bv{B} \in \R^{p\times n}$ returns $X \in (1\pm \epsilon) \norm{\bv{B}}_p^p$ in $O \left (f(\nnz(\bv{B}),n) \cdot \frac{1}{\epsilon^c} \right)$ time, we can detect if an $n$-node graph with $m$ edges contains a triangle in $O \left(f(m,n) \cdot n^{13c} \cdot \frac{p^{3c/2}}{\left |\min \{p/2,(p/2-1),(p/2-2) \}\right |^c} + m \right )$ time.
\end{corollary}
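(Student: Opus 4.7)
The plan is to bypass the $p\approx 1$ triviality of Corollary~\ref{onehalfthreehalfs} -- which arose because $\|\bv{B}\|_1=\tr(\bv{B})$ for PSD $\bv{B}$ -- by running the Schatten $p$-norm oracle on a rectangular, non-PSD matrix $\bv{C}$ that nonetheless encodes a well-conditioned PSD matrix through $\bv{C}^T\bv{C}$. Concretely, let $\bv{B}_0\in\R^{m\times n}$ be the signed edge-vertex incidence matrix of $G$, so $\bv{B}_0^T\bv{B}_0=\bv{L}=\bv{D}-\bv{A}$. For a parameter $\delta>0$ to be chosen, set
\[
\bv{C}\;\eqdef\;\begin{pmatrix}\bv{I}_n \\ \sqrt{\delta}\,\bv{B}_0\end{pmatrix}\in\R^{(n+m)\times n}.
\]
Then $\nnz(\bv{C})=O(m)$ and $\bv{C}^T\bv{C}=\bv{I}+\delta\bv{L}$, so $\|\bv{C}\|_p^p=\sum_{i=1}^n(1+\delta\lambda_i(\bv{L}))^{p/2}$. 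The oracle thus computes a quantity in which $p$ only appears as the exponent $p/2$; it is this ``halving'' that shifts the excluded integer points of the forthcoming Taylor expansion from $\{0,1,2\}$ (which contains $p=1$) to $\{0,2,4\}$ (which does not), unblocking the $p\approx 1$ case.

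With $\bv{C}$ in hand, mirror the proof of Theorem~\ref{triangle_lowerBound}. Taking $\delta\le 1/(2n)$ ensures $\delta\lambda_i(\bv{L})<1$, and so
\[
\|\bv{C}\|_p^p \;=\; \sum_{k=0}^\infty \binom{p/2}{k}\,\delta^k\,\tr(\bv{L}^k).
\]
The moments $\tr(\bv{L}^0)=n$, $\tr(\bv{L})=2m$, $\tr(\bv{L}^2)=\sum_i d_i^2+2m$, $\tr(\bv{D}^2)$, and $\tr(\bv{D}^3)$ are all computable in $O(m)$ time from the degree sequence, and a direct expansion gives $\tr(\bv{L}^3)=\tr(\bv{D}^3)+3\tr(\bv{D}^2)-\tr(\bv{A}^3)$, so the $k=3$ term carries the $\tr(\bv{A}^3)=6\cdot(\text{\#triangles})$ signal modulo quantities computable exactly. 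Subtracting those quantities from any $\epsilon_1$-approximation of $\|\bv{C}\|_p^p$ leaves $\binom{p/2}{3}\delta^3\tr(\bv{A}^3)$, up to a tail and a multiplicative error.

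The choice of $\delta$ and $\epsilon_1$ is forced by the same envelope bounds as in Theorem~\ref{triangle_lowerBound}, now with $h$ essentially $p/2$ in the ratio $|c_k/c_3|\le h^{k-3}$. Using $|\tr(\bv{L}^k)|\le n(2n)^k$ and a geometric sum, the tail is at most $|c_3|\delta^3/9$ once $\delta=\Theta(1/(n^4 p))$; the zeroth- and second-order coefficient ratios are bounded by $O(1/|\min\{p/2,p/2-1,p/2-2\}|)$; and $\|\bv{C}\|_p^p=O(n)$ for $p$ of constant order. Setting $\epsilon_1=\Theta(|c_3|\delta^3/n)$ with $|c_3|=|p/2(p/2-1)(p/2-2)|/6$ then forces the total error on $\tr(\bv{A}^3)$ to be strictly less than $3$, enough to distinguish $\tr(\bv{A}^3)=0$ from $\tr(\bv{A}^3)\ge 6$. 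Plugging in these values collapses $1/\epsilon_1^c$ to the $n^{13c}\,p^{3c/2}/|\min\{p/2,p/2-1,p/2-2\}|^c$ factor in the statement, on top of the $O(m)$ cost of constructing $\bv{C}$ and computing the low-order corrections.

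The main obstacle is designing $\bv{C}$ itself: an explicit matrix square root like $(\bv{I}+\delta\bv{L})^{1/2}$ is too expensive to form, while any symmetric PSD choice (e.g.\ $\bv{I}-\delta\bv{A}$ itself) recovers the $p=1$ triviality of Corollary~\ref{onehalfthreehalfs}. The signed incidence construction is the cheap compromise -- sparse, explicit, and genuinely non-PSD -- so that $\|\bv{C}\|_p^p$ depends on $\bv{L}$'s eigenvalues through the exponent $p/2$ rather than $p$. The price is that the reduction degenerates exactly when $\binom{p/2}{3}=0$, i.e.\ at $p\in\{0,2,4\}$, which is why the final bound must carry the $|\min\{p/2,p/2-1,p/2-2\}|^c$ denominator.
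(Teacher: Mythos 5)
Your high-level idea is exactly the right one — feed the oracle a rectangular matrix whose Gram matrix is a well-conditioned PSD matrix, so the oracle effectively sees exponent $p/2$ and the degenerate points move from $\{0,1,2\}$ to $\{0,2,4\}$ — but your specific construction loses powers of $n$ and does not give the stated $n^{13c}$. The problem is expanding in the Laplacian. With $\bv{C}^T\bv{C} = \bv{I}+\delta\bv{L}$ you have $\norm{\bv{L}}_2 = \Theta(n)$ and, e.g.\ for the complete graph, $\tr(\bv{L}^4) = (n-1)n^4 = \Theta(n^5)$, whereas the adjacency-matrix expansion in Theorem \ref{triangle_lowerBound} only has to contend with $\tr(\bv{A}^k) \le n^k$. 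Consequently your claimed tail bound fails at $\delta = \Theta(1/(n^4p))$: the $k=4$ term is $|c_4|\delta^4\tr(\bv{L}^4) = \Theta\bigl(n\bigr)\cdot|c_3|\delta^3$, which swamps the triangle signal $6|c_3|\delta^3$ and cannot be computed exactly in $O(m)$ time (it involves $\tr(\bv{A}^4)$, i.e.\ closed $4$-walks). Repairing this forces $\delta = \Theta(1/(n^5 p))$, and since the total sum $\norm{\bv{C}}_p^p = \Theta(n)$, the required accuracy becomes $\epsilon_1 = \Theta(|c_3|\delta^3/n) = \Theta(1/n^{16})$, giving a reduction with overhead $n^{16c}$ rather than the claimed $n^{13c}$.

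The missing observation is that you do not need a square root of $\bv{I}+\delta\bv{L}$ or of $\bv{I}-\delta\bv{A}$ in any expensive sense: the paper writes $\bv{B} = \bv{I}-\delta\bv{A} = \delta\bv{L} + \widehat{\bv{D}}$ with $\widehat{\bv{D}} \eqdef \bv{I}-\delta\bv{D}$ PSD and \emph{diagonal}, whose square root is free, and then takes the sparse explicit matrix $\bv{N} = [\delta^{1/2}\bv{M}^T,\ \widehat{\bv{D}}^{1/2}]$ (with $\bv{M}$ the edge--vertex incidence matrix) so that $\bv{N}\bv{N}^T = \bv{I}-\delta\bv{A}$ exactly and $\norm{\bv{N}}_p^p = \norm{\bv{B}}_{p/2}^{p/2}$. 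This keeps the power series in $\bv{A}$ (with $\tr(\bv{A})=0$ and $\tr(\bv{A}^k)\le n^k$), so Corollary \ref{onehalfthreehalfs} applies verbatim with exponent $p/2$ and the $n^{13c}$ overhead is preserved; $\nnz(\bv{N}) = O(m+n)$ and it is formed in that time. In short: same strategy, but your choice of Gram matrix ($\bv{I}+\delta\bv{L}$ instead of $\bv{I}-\delta\bv{A}$) is what breaks the quantitative claim, and the fix is to absorb the degree matrix into a diagonal square-root block rather than into the signal-carrying part of the spectrum.
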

Note that for $p \approx 1$, $\frac{p^{3c/2}}{\left |\min \{p/2,(p/2-1),(p/2-2) \}\right |^c}$ is just a constant. Again, the bound is naturally weak when $p \approx 2$ as $(p/2-1)$ goes to $0$
\begin{proof}
For $\bv{B} = I - \delta \bv{A}$ as in Theorem \ref{triangle_lowerBound}. Let $\bv{L} = \bv{D} - \bv{A}$ be the Laplacian of $G$ where $\bv{D}$ is the diagonal degree matrix. We can write $\bv{B} = \delta \bv{L} + \widehat{\bv{D}}$ where $\widehat{\bv{D}} = \bv{I} -\delta \bv{D}$ is PSD since $\delta \le 1/n$. Letting $\bv{M} \in \mathbb{R}^{{n \choose 2} \times n}$ be the vertex edge incidence matrix of $\bv{A}$, and $\bv{N} = [\delta^{1/2}\bv{M}^T, \widehat{\bv{D}}^{1/2}]$, we have $\bv{N} \bv{N}^T = \bv{B}$. Thus, $\norm{\bv{N}}_p = \norm{\bv{B}}_{p/2}^{p/2}$ and so approximating this norm gives triangle detection by Corollary \ref{onehalfthreehalfs}.
Note that $\nnz(\bv{N}) = O(\nnz(\bv{A}))$ and further $\bv{N}$ matrix can be formed in this amount of time, giving our final runtime claim. 
\end{proof}
Note that for $p = 1$, since $\bv{N}$ has maximum row sparsity $2$, we obtain a runtime via Theorem \ref{thm:smallp} of $\tilde O(\epsilon^{-3} \left (m n^{1/3} + n^{3/2}\right )) = o(n^{\omega})$ for the current value of $\omega$, even when $m = n^2$, implying that significantly improving this $\epsilon$ dependence would either improve matrix multiplication or come at a cost in the polynomials of the other parameters.

\paragraph{SVD Entropy:}

\begin{corollary}[SVD Entropy Hardness]
Given algorithm $\mathcal{A}$ which for any $\bv{B} \in \R^{n\times n}$ returns $X \in (1\pm \epsilon) \sum_{i=1}^n f(\sigma_i(\bv{B}))$ for $f(x) = x \log x$ in $O(n^\gamma \epsilon^{-c})$ time, we can detect if an $n$-node graph contains a triangle in $O(n^{\gamma + 6c})$ time.
\end{corollary}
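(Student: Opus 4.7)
The plan is to invoke Theorem~\ref{triangle_lowerBound} with $f(x) = x\log x$, which reduces the task to producing a suitable power series expansion and tracking the resulting parameters. First I would compute the Taylor series of $f$ about $x = 1$: writing $x = 1+y$ and using $\log(1+y) = \sum_{k \ge 1} (-1)^{k+1} y^k/k$, a short rearrangement gives
\[
(1+y)\log(1+y) \;=\; y \;+\; \sum_{k=2}^{\infty} \frac{(-1)^k}{k(k-1)}\, y^k,
\]
valid for $y \in (-1,1)$ and hence for $x \in (0,2)$ as required by Theorem~\ref{triangle_lowerBound}. Reading off coefficients, $c_0 = 0$, $c_1 = 1$, $c_2 = \tfrac{1}{2}$, $c_3 = -\tfrac{1}{6}$, and for $k > 3$ one has $|c_k/c_3| = 6/(k(k-1)) \le 1$, so hypothesis~\eqref{seriesExpansion} holds with $h = 1$.

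Second, I would instantiate the parameter choices of Theorem~\ref{triangle_lowerBound}. With $h = 1$ one obtains $\delta = \Theta(1/n^4)$. The formula for $\epsilon_1$ is a minimum of three quantities; here the middle one, $|c_3\delta^3/(c_0 n)|$, is vacuous because $c_0 = 0$. Inspecting the proof of Theorem~\ref{triangle_lowerBound}, that quantity arose only from bounding the $c_0\tr(\bv{A}^0) = c_0 n$ constant-term contribution, which is identically zero in our setting, so it imposes no constraint. The binding value is therefore $\epsilon_1 = \Theta(|c_3\delta/(c_2 n^2)|) = \Theta(1/n^6)$.

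Finally, the running-time conclusion of Theorem~\ref{triangle_lowerBound} converts an $O(n^\gamma \epsilon^{-c})$-time $(1\pm\epsilon)$-approximation algorithm for $\sum_i f(\sigma_i(\cdot))$ into a triangle detection algorithm running in $O(n^2 + n^\gamma \epsilon_1^{-c}) = O(n^{\gamma + 6c})$ time, yielding the corollary (the $n^2$ term is harmless since $\gamma \ge 2$). The main obstacle is the $c_0 = 0$ edge case of Theorem~\ref{triangle_lowerBound}, which the theorem statement nominally treats as a division by zero but whose underlying proof handles cleanly. A minor secondary point is that $f(x) = x\log x$ is not nonnegative on all of $\R^+$; however, the proof of Theorem~\ref{triangle_lowerBound} does not use nonnegativity, and for $\bv{B} = \bv{I} - \delta\bv{A}$ the sum $\sum_i f(\sigma_i(\bv{B}))$ is in fact positive, being dominated by the $c_2\delta^2\tr(\bv{A}^2)$ term.
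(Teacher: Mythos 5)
Your proposal is correct and follows essentially the same route as the paper: Taylor-expand $x\log x$ about $1$ to get $c_0=0$, $c_1=1$, $|c_k|=1/(k(k-1))$ for $k\ge 2$ (so $h=1$ works), then apply Theorem~\ref{triangle_lowerBound} with $\delta=\Theta(1/n^4)$ and $\epsilon_1=|c_3\delta/(c_2 n^2)|=\delta/(3n^2)=\Theta(1/n^6)$, yielding $O(n^{\gamma+6c})$ triangle detection. Your extra remarks on the vacuous $c_0$-term and on the sign of $f$ are careful but do not change the argument.
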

\begin{proof}
For $x \in (0,2)$, using the Taylor Series about $1$ we can write $x \log x = \sum_{k=0}^\infty c_k (x-1)^k$
where $c_0 = 1 \log(1) = 0$, $c_1 = \log(1) + 1 = 1$, and $|c_k| = \frac{(k-2)!}{k!} \le 1$ for $k \ge 2$. 
$c_k < c_3$ for all $k > 3$ and $\frac{c_0}{c_3} = 0$ while $\frac{c_2}{c_3} = \frac{1}{3}$. Applying Theorem \ref{triangle_lowerBound} with $\delta = \frac{1}{10n^4}$ and $\epsilon_1 = \frac{\delta}{3n^2} = \frac{1}{30n^6}$ gives the result.
\end{proof}

\paragraph{Log Determinant:}
\begin{corollary}[Log Determinant Hardness]
Given algorithm $\mathcal{A}$ which for any $\bv{B} \in \R^{n\times n}$ returns $X \in (1\pm\epsilon) \log(\det(\bv{B}))$ in $O (n^\gamma \epsilon^{-c})$ time, we can detect if an $n$-node graph contains a triangle in $O(n^{\gamma + 6c})$ time.
\end{corollary}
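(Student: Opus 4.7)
The plan is to instantiate Theorem \ref{triangle_lowerBound} with $f(x) = \log x$. First, since $\bv{B} = \bv{I} - \delta \bv{A}$ constructed in that theorem is symmetric with eigenvalues in $(0,2)$, its singular values coincide with its eigenvalues and are strictly positive. Therefore
\[
\sum_{i=1}^n f(\sigma_i(\bv{B})) \;=\; \sum_{i=1}^n \log \sigma_i(\bv{B}) \;=\; \log \det(\bv{B}),
\]
so a $(1\pm\epsilon)$-approximation algorithm for $\log \det$ is exactly a $(1\pm\epsilon)$-approximation algorithm for the spectral sum $\Sum_f(\bv{B})$ required by Theorem \ref{triangle_lowerBound}.

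Next, I would expand $\log x$ about $x=1$ via the standard Taylor series
\[
\log x \;=\; \sum_{k=1}^\infty \frac{(-1)^{k+1}}{k}(x-1)^k,
\]
which converges for $x\in(0,2)$ and hence for every eigenvalue of $\bv{B}$. Reading off coefficients, $c_0 = 0$, $c_1 = 1$, $c_2 = -1/2$, $c_3 = 1/3$, and $|c_k| = 1/k$. For $k > 3$ this gives $|c_k/c_3| = 3/k \le 3/4$, so the hypothesis $|c_k/c_3| \le h^{k-3}$ of Theorem \ref{triangle_lowerBound} holds with $h = 1$.

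With $h=1$ the theorem's scaling is $\delta = \min\{1/n,\,1/(10n^4)\} = 1/(10n^4)$. For the accuracy, because $c_0 = 0$ the term $|c_3\delta^3/(c_0 n)|$ is vacuous (it is $+\infty$ and drops out of the $\min$), leaving
\[
\epsilon_1 \;=\; \min\!\Bigl\{1,\; \bigl|c_3 \delta/(c_2 n^2)\bigr|\Bigr\} \;=\; \min\!\Bigl\{1,\; \tfrac{2\delta}{3n^2}\Bigr\} \;=\; \Theta(1/n^6).
\]
Plugging an $O(n^\gamma \epsilon^{-c})$-time log-determinant approximation algorithm into Theorem \ref{triangle_lowerBound} with this $\epsilon_1$ yields triangle detection in $O(n^2 + n^\gamma \epsilon_1^{-c}) = O(n^{\gamma + 6c})$ time, as claimed.

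I do not expect a serious obstacle: the Taylor expansion is classical and its coefficients decay as $1/k$, so the geometric bound needed by Theorem \ref{triangle_lowerBound} is immediate. The only subtlety worth flagging is that the theorem's statement formally requires $f:\R^+\to\R^+$, whereas $\log$ takes negative values on $(0,1)$; however, the proof of Theorem \ref{triangle_lowerBound} uses only the power series expansion and the arithmetic of the expansion's terms, never positivity, so the reduction goes through unchanged for our $\bv{B}$ whose eigenvalues lie in $(0,2)$.
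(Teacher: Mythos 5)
Your proposal is correct and follows essentially the same route as the paper: instantiate Theorem \ref{triangle_lowerBound} with $f(x)=\log x$, use the Taylor expansion about $1$ with $c_0=0$ and $|c_k|=1/k$ (so $h=1$ works), take $\delta=\Theta(1/n^4)$ and $\epsilon_1=\Theta(\delta/n^2)=\Theta(1/n^6)$, yielding $O(n^{\gamma+6c})$ triangle detection. Your extra remarks (identifying $\log\det(\bv{B})$ with the spectral sum since $\bv{B}$ is PSD, and noting that positivity of $f$ is never used in the reduction) are sound and only make explicit what the paper leaves implicit; the tiny constant discrepancy ($2\delta/(3n^2)$ vs.\ the paper's $\delta/(2n^2)$) is immaterial.
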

\begin{proof}
Using the Taylor Series about $1$ we can write
$\log x = \sum_{k=0}^\infty c_k (x-1)^k$
where $c_0 = 0$, $|c_i| = 1/i$ for $i \ge 1$. Therefore $c_k < c_3$ for all $k > 3$, $\frac{c_0}{c_3} = 0$, and $\frac{c_2}{c_3} = \frac{3}{2}$. Applying Theorem \ref{triangle_lowerBound} with $\delta = \frac{1}{10n^4}$ and $\epsilon_1 = \frac{\delta}{2n^2} = \frac{1}{20n^6}$ gives the result.
\end{proof}

In Appendix \ref{sec:hardnessAppendix}, Lemma \ref{detHardness} we show that a similar result holds for computing $\det(\bv{B}) = \prod_{i=1}^n \lambda_i(\bv{B})$.
In \cite{baur1983complexity} it is shown that, given an arithmetic circuit for computing $\det(\bv{B})$, one can generate a circuit of the same size (up to a constant) that computes $\bv{B}^{-1}$. This also yields a circuit for matrix multiplication by a classic reduction.\footnote{Matrix multiplication reduces to inversion by the fact that
\begin{tiny}
$\begin{bmatrix}
    \bv{I} & \bv{A} & \bv{0}\\
    \bv{0} & \bv{I} & \bv{B} \\
   \bv{0} & \bv{0} & \bv{I}
\end{bmatrix}^{-1} = \begin{bmatrix}
    \bv{I} & -\bv{A} & \bv{A}\bv{B} \\
    \bv{0} & \bv{I} & -\bv{B} \\
   \bv{0} & \bv{0} & \bv{I}
\end{bmatrix}$.
\end{tiny}See \cite{invertToMult}. }
Our results, combined with the reduction of \cite{williams2010subcubic} of Boolean matrix multiplication to triangle detection, show that a sub-cubic time algorithm for the approximating $\log(\det(\bv{B}))$ or $\det(\bv{B})$ up to sufficient accuracy, yields a sub-cubic time matrix multiplication algorithm, providing a reduction based connection between determinant and matrix multiplication analogous to the circuit based result of \cite{baur1983complexity}.

\paragraph{Trace of Exponential:}
\begin{corollary}[Trace of Exponential Hardness]
Given algorithm $\mathcal{A}$ which for any $\bv{B} \in \R^{n\times n}$ returns $X \in (1\pm\epsilon) \tr(\exp(\bv{B}))$ in $O (n^\gamma 
\epsilon^{-c})$ time, we can detect if an $n$-node graph contains a triangle in $O(n^{\gamma + 13c})$ time.
\end{corollary}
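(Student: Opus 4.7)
The plan is to apply Theorem~\ref{triangle_lowerBound} with $f(x) = \exp(x)$, following the same template used in the preceding corollaries (Schatten $p$-norms with $p \ne 1, 2$, SVD entropy, log determinant). The only real work is expanding $\exp(x)$ as a power series around $1$, reading off the coefficients, and bounding the three quantities $h$, $|c_0/c_3|$, $|c_2/c_3|$ needed to invoke the theorem.

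First I would expand $\exp(x) = e \cdot \exp(x-1) = \sum_{k=0}^\infty c_k (x-1)^k$ with $c_k = e/k!$. This gives $c_0 = e$, $c_2 = e/2$, $c_3 = e/6$, and for all $k > 3$,
\[
\left|\frac{c_k}{c_3}\right| = \frac{6}{k!} \le 1,
\]
so the hypothesis \eqref{seriesExpansion} of Theorem~\ref{triangle_lowerBound} is satisfied with $h = 1$. Hence the theorem prescribes $\delta = \min\{1/n, 1/(10 n^4)\} = 1/(10 n^4)$.

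Next I would compute the accuracy parameter $\epsilon_1$. From the coefficients above, $|c_3/c_0| = 1/6$ and $|c_3/c_2| = 1/3$, so
\[
\left|\frac{c_3 \delta^3}{c_0 n}\right| = \frac{\delta^3}{6 n} = \Theta(n^{-13})
\quad\text{and}\quad
\left|\frac{c_3 \delta}{c_2 n^2}\right| = \frac{\delta}{3 n^2} = \Theta(n^{-6}).
\]
The minimum is $\epsilon_1 = \Theta(n^{-13})$. Plugging this into the second conclusion of Theorem~\ref{triangle_lowerBound}, a $(1\pm \epsilon)$ approximation algorithm for $\tr(\exp(\bv{B}))$ running in $O(n^\gamma \epsilon^{-c})$ time yields a triangle detection algorithm running in $O(n^2 + n^\gamma \epsilon_1^{-c}) = O(n^{\gamma + 13c})$ time, as claimed.

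There is essentially no obstacle here: the reduction is a direct invocation of Theorem~\ref{triangle_lowerBound} once one has the Taylor expansion of $\exp$ at $1$. The only thing to double-check is that $\tr(\exp(\bv{I} - \delta \bv{A}))$ is what the theorem bounds (it is, by the definition of a matrix function applied to the symmetric PSD matrix $\bv{B} = \bv{I} - \delta \bv{A}$), and that the series converges on $(0,2)$ where $\bv{B}$'s eigenvalues lie, which is immediate since $\exp$ is entire.
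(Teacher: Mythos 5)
Your proposal is correct and follows essentially the same route as the paper: expand $e^x$ in its Taylor series about $1$ (so $c_k = e/k!$, giving $c_0/c_3 = 6$, $c_2/c_3 = 3$, and $|c_k/c_3| \le 1$ for $k>3$, i.e.\ $h=1$), then invoke Theorem~\ref{triangle_lowerBound} with $\delta = \frac{1}{10n^4}$ and $\epsilon_1 = \frac{c_3\delta^3}{c_0 n} = \Theta(n^{-13})$ to get the $O(n^{\gamma+13c})$ triangle detection time. The paper's proof is this same computation, so there is nothing to add.
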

\begin{proof}
Using the Taylor Series about $1$ we can write $e^x = \sum_{k=0}^\infty \frac{e(x-1)^k}{k!}$. We have $\frac{c_0}{c_3} = 6$, $\frac{c_2}{c_3} = 3$, and for all $k \ge 3$, $c_k < c_3$. Applying Theorem \ref{triangle_lowerBound} with $\delta = \frac{1}{10n^4}$ and $\epsilon_1 = \frac{c_3 \delta^3}{c_0 n} = \frac{1}{6000 n^{13}}$ gives the result.
\end{proof}

\subsection{Leverage Score and Effective Resistance Hardness}
\label{sec:effres_hard}

Here we show hardness for precisely computing all effective resistances and leverage scores of a graph. Our main tool is the following result (which in turn is an easy corollary of Theorem~\ref{triangle_lowerBound}) for an algorithm that precisely computes the trace inverse of a strictly symmetric diagonally dominant (SDD) $\bv{B}$, i.e. $\bv{B} = \bv{B}^\top$ and $\bv{B}_{ii} > \sum_{j \neq i} \bv{B}_{ij}$. 

\begin{corollary}[Trace of Inverse Hardness]
\label{cor:trace_inv_hard}
Given an algorithm which for any strictly SDD  $B \in \R^{n\times n}$ with non-positive off-diagonal entries returns $X \in (1\pm\epsilon) \tr(B^{-1})$ in $O(n^\gamma \epsilon^{-c})$ time for $\gamma \geq 2$, we can produce an algorithm which detects if an $n$-node graph contains a triangle in $O(n^{\gamma + 13c})$ time.
\end{corollary}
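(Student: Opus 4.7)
The plan is to invoke Theorem~\ref{triangle_lowerBound} directly with $f(x) = 1/x$. Since $\tr(B^{-1}) = \sum_{i=1}^n 1/\sigma_i(B) = \sum_{i=1}^n f(\sigma_i(B))$ whenever $B$ is symmetric positive definite, approximating $\tr(B^{-1})$ is exactly approximating the spectral sum required by the theorem, provided we can ensure the instance $B = \bv{I} - \delta \bv{A}$ produced by the reduction is symmetric positive definite, strictly SDD with non-positive off-diagonals.

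First I would verify the power-series hypothesis \eqref{seriesExpansion}. On the interval $(0,2)$ we have the geometric expansion
\[
\frac{1}{x} = \frac{1}{1 + (x-1)} = \sum_{k=0}^{\infty} (-1)^k (x-1)^k,
\]
so $c_k = (-1)^k$. In particular $c_0 = 1$, $c_2 = 1$, $c_3 = -1$, and $|c_k/c_3| = 1 = 1^{k-3}$ for every $k > 3$. Hence the hypothesis of Theorem~\ref{triangle_lowerBound} holds with $h = 1$.

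Next I would check that the matrix $\bv{B} \eqdef \bv{I} - \delta \bv{A}$ arising in Theorem~\ref{triangle_lowerBound} is of the form required by the corollary. Since $\bv{A}$ is a symmetric $\{0,1\}$-matrix with zero diagonal, the off-diagonal entries of $\bv{B}$ are $-\delta \bv{A}_{ij} \le 0$. For $\delta \le 1/n$ we have $\bv{B}_{ii} = 1 > \delta(n-1) \ge \delta \deg(i) = \sum_{j \neq i}|\bv{B}_{ij}|$, so $\bv{B}$ is strictly SDD with non-positive off-diagonals, hence positive definite, and $\sigma_i(\bv{B}) = \lambda_i(\bv{B}) \in (0,2)$ as needed.

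Finally I would plug the constants into Theorem~\ref{triangle_lowerBound}. With $h = 1$ the theorem selects
\[
\delta = \min\!\left\{\tfrac{1}{n}, \tfrac{1}{10 n^4}\right\} = \tfrac{1}{10 n^4}
\quad\text{and}\quad
\epsilon_1 = \min\!\left\{1,\; \left|\tfrac{c_3 \delta^3}{c_0 n}\right|,\; \left|\tfrac{c_3 \delta}{c_2 n^2}\right|\right\}
= \min\!\left\{1,\; \tfrac{1}{1000 n^{13}},\; \tfrac{1}{10 n^{6}}\right\} = \Theta(n^{-13}).
\]
Therefore a $(1\pm\epsilon)$-approximation algorithm for $\tr(\bv{B}^{-1})$ running in $O(n^\gamma \epsilon^{-c})$ time yields triangle detection in time $O(n^2 + n^\gamma \epsilon_1^{-c}) = O(n^{\gamma + 13 c})$, using $\gamma \ge 2$ to absorb the additive $n^2$ term. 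There is no real obstacle here beyond bookkeeping: the main technical content is already encapsulated in Theorem~\ref{triangle_lowerBound}, and the only things to verify are the easy power-series bound and the structural fact that $\bv{I} - \delta \bv{A}$ is strictly SDD with non-positive off-diagonal entries.
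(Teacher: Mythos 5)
Your proposal is correct and matches the paper's proof essentially step for step: both invoke Theorem~\ref{triangle_lowerBound} with the geometric expansion $\frac{1}{x} = \sum_{k\ge 0}(1-x)^k$ (so $h=1$), take $\delta = \frac{1}{10n^4}$ and $\epsilon_1 = \Theta(n^{-13})$, and check that $\bv{I} - \delta\bv{A}$ is strictly SDD with non-positive off-diagonals. Your version just spells out the SDD check and the $O(n^2 + n^\gamma\epsilon_1^{-c})$ bookkeeping slightly more explicitly than the paper does.
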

\begin{proof}
For $x \in (0,2)$ we can write $\frac{1}{x} = \sum_{k=0}^\infty (1-x)^k$, and then apply Theorem~\ref{triangle_lowerBound} with $\delta = \frac{1}{10n^4}$ and $\epsilon_1 = \frac{\delta^3}{n} = \frac{1}{1000n^{13}}$. Checking that the $\bv{B}$ in Theorem~\ref{triangle_lowerBound} is strictly SDD with non-positive off-diagonal entries yields the result.
\end{proof}

Using this we prove hardness for precisely computing effective resistances in a graph. Recall that for a weighted undirected graph $G = (V, E, w)$ its Laplacian, $\bv{L} \in \R^{V \times V}$ is given by $\bv{L}_{ij} = -w_{ij}$ if there is an edge between $i$ and $j$ and $0$ otherwise and $\bv{L}_{ii} = - \sum_{i \neq j} \bv{L}_{ij}$ or equivalently $\bv{L} = \bv{D} - \bv{A}$ where $\bv{D}$ is the diagonal degree matrix and $\bv{A}$ is the weighted adjacency matrix associated with $G$. Note that this describes a clear bijection between a Laplacian and its associated graph and we therefore use them fairly interchangeably in the remainder of this section.

The effective resistance between vertices $i$ and $j$ is given by $(\indicVec_i -\indicVec_j)^\top \bv{L}^\dagger (\indicVec_i - \indicVec_j)$ where $\dagger$ denotes the Moore-Penrose pseudoinverse. In the following lemma we prove that compute all the effective resistances between a vertex and its neighbors in the graph can be used to compute the trace of the inverse of any strictly SDD matrix with non-positive off-diagonals and therefore doing this precisely is as hard as triangle detection via Corollary~\ref{cor:trace_inv_hard}. Our proof is based off a fairly standard reduction between solving strictly SDD matrices with negative off-diagonals and solving Laplacian systems.

\begin{lemma}[Effective Resistance Yields Trace]
\label{lem:effres_to_trace_inv}
Suppose we have an algorithm which given Laplacian $\bv{L} \in \R^{n \times n}$ with $m$-non-zero entries, entry $i \in [n]$, and error $\epsilon \in (0, 1)$ computes a $1 \pm \epsilon$ approximation to the total effective resistance between $i$ and the neighbors of $i$ in the graph associated with $\bv{L}$, that is
\[
X \in (1 \pm \epsilon) \sum_{j \in [n] : L_{ij} \neq 0} 
(\indicVec_i - \indicVec_j)^\top 
\bv{L}^\dagger (\indicVec_i - \indicVec_j)
\]
in time $O(m^\gamma \epsilon^{-c})$. Then there is an algorithm that computes the trace of the inverse of $n \times n$ strict SDD matrix with $m$ non-zero entries and non-positive off-diagonals in $O(m^\gamma \epsilon^{-c})$ time.
\end{lemma}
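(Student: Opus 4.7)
The plan is to use the standard reduction that embeds a strictly SDD matrix with non-positive off-diagonals into a Laplacian of one larger dimension by attaching a ``ground'' vertex. Given $\bv{B} \in \R^{n \times n}$ satisfying the hypotheses, define $d_i \eqdef \bv{B}_{ii} + \sum_{j \neq i} \bv{B}_{ij}$; strict diagonal dominance combined with $\bv{B}_{ij} \le 0$ for $i \neq j$ implies $d_i > 0$ for every $i$. I would then construct a Laplacian $\bv{L} \in \R^{(n+1) \times (n+1)}$ on vertex set $\{1,\ldots,n,n+1\}$ whose weighted edges are (a) an edge between each $i \neq j$ in $[n]$ of weight $-\bv{B}_{ij}$ (nonzero only when $\bv{B}_{ij} \neq 0$) and (b) an edge between every $i \in [n]$ and the ground vertex $n+1$ of weight $d_i$. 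A direct computation then shows that the leading $n \times n$ block of $\bv{L}$ equals $\bv{B}$, and $\bv{L}$ has $m + n = O(m)$ nonzero entries (since strict SDD forces $\bv{B}_{ii} > 0$ and hence $m \geq n$).

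The crux is the classical identity that, for a connected graph Laplacian $\bv{L}$, deleting the row and column of a single vertex $v$ yields an invertible principal submatrix $\bv{L}_v$, and for all $i \neq v$
\[
(\indicVec_i - \indicVec_v)^\top \bv{L}^\dagger (\indicVec_i - \indicVec_v) = [\bv{L}_v^{-1}]_{ii}.
\]
Applying this with $v = n+1$ gives $(\indicVec_i - \indicVec_{n+1})^\top \bv{L}^\dagger (\indicVec_i - \indicVec_{n+1}) = [\bv{B}^{-1}]_{ii}$ for each $i \in [n]$. I would verify this identity by using $\bv{L}\bv{L}^\dagger = \bv{I} - \frac{1}{n+1}\onesVec\onesVec^\top$ together with the observation that $(\indicVec_i - \indicVec_{n+1})$ is orthogonal to $\onesVec$, so a short block-matrix argument relates $\bv{L}^\dagger$ on this subspace to $\bv{B}^{-1}$.

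Since by construction vertex $n+1$ is adjacent in $\bv{L}$ to every vertex $i \in [n]$ (as $d_i > 0$), the total effective resistance between $n+1$ and its neighbors is
\[
\sum_{j \in [n]} (\indicVec_{n+1} - \indicVec_j)^\top \bv{L}^\dagger (\indicVec_{n+1} - \indicVec_j) = \sum_{j=1}^n [\bv{B}^{-1}]_{jj} = \tr(\bv{B}^{-1}).
\]
Therefore I would invoke the assumed algorithm on input $(\bv{L}, n+1, \epsilon)$ and return its output $X$ directly; by hypothesis $X \in (1 \pm \epsilon) \tr(\bv{B}^{-1})$ in $O((m+n)^\gamma \epsilon^{-c}) = O(m^\gamma \epsilon^{-c})$ time, noting that the construction of $\bv{L}$ from $\bv{B}$ takes only $O(m)$ time which is dominated since $\gamma \ge 2$.

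The only nontrivial step is the effective-resistance-to-submatrix-inverse identity; everything else is mechanical bookkeeping. This identity is standard but I would still include a brief proof for self-containedness, since it is precisely the structural fact that makes the ``ground vertex'' reduction transfer estimates of effective resistances to estimates of $\tr(\bv{B}^{-1})$.
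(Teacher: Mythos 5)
Your proposal is correct and is essentially the paper's own reduction: you build the same one-vertex-larger Laplacian (your ground vertex with edge weights $d_i = (\bv{B}\onesVec)_i$ is exactly the paper's block construction with $\bv{v} = \bv{M}\onesVec$ and corner $\alpha = \onesVec^\top\bv{M}\onesVec$), and you query the assumed algorithm at that vertex, whose neighbors are all of $[n]$. The only cosmetic difference is that you justify the key step via the standard grounded-Laplacian identity $(\indicVec_i-\indicVec_{n+1})^\top\bv{L}^\dagger(\indicVec_i-\indicVec_{n+1}) = [\bv{B}^{-1}]_{ii}$, while the paper derives the same fact by explicitly solving $\bv{L}(\bv{x};y) = (\indicVec_i;-1)$; your nonzero count is off by a trivial additive $n+1$ but still $O(m)$.
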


\begin{proof}
Let $\bv{M} \in \R^{n \times n}$ be an arbitrary strictly SDD matrix with  non-positive off-diagonals, i.e. $\bv{M} = \bv{M}^\top$, $\bv{M}_{ii} > \sum_{j \neq i} |\bv{M}_{ij}|$, and $\bv{M}_{ij} \leq 0$ for all $i \neq j$. Let $\bv{v}  \eqdef \bv{M} 1$, 
$\alpha \eqdef \onesVec^\top \bv{M} \onesVec$, and 
\[
\bv{L} \eqdef \left(\begin{array}{cc}
\bv{M} & -v\\
-v^{\top} & \alpha
\end{array}\right)\,.
\]
Now, clearly by our assumptions on $\bv{M}$ we have that $v > \zeroVec$ entrywise and therefore $\alpha > 0$.  Therefore, the
off-diagonal entries of $\bv{L}$ are non-positive and by construction $\bv{L} \onesVec = \zeroVec$. Consequently, $\bv{L}$ is a $(n + 1) \times (n + 1)$ symmetric Laplacian matrix with $\nnz(\bv{M}) + 2n + 1$ non-zero entries.

Now, consider any $x \in \R^{n}$ and $y\in\R$ that satisfy the following
for some $i\in[n]$
\[
\left(\begin{array}{cc}
\bv{M} & -\bv{v} \\
-\bv{v}^{\top} & \alpha
\end{array}\right)\left(\begin{array}{c}
\bv{x}\\
y
\end{array}\right)=\left(\begin{array}{c}
\indicVec_{i}\\
-1
\end{array}\right) ~.
\]
Since $\bv{L}$ is a symmetric Laplacian and the associated graph is connected by construction we know that $\ker(\bv{L}) = \mathrm{span}(\{\onesVec\})$  and there there must exist such $x$ and $y$. Furthermore, since $\bv{M}$ is strictly SDD it is invertible and since
$\bv{M} \onesVec = v$ we have that 
\[
\bv{x} = \bv{M}^{-1}\left(y \cdot \bv{v} + \indicVec_i \right)
=
y \cdot \onesVec + \bv{M}^{-1} \indicVec_i 
\]
and consequently
\[
(\indicVec_{i} - \indicVec_{n+1})^{\top} \bv{L}^{\dagger} (\indicVec_{i} - \indicVec_{n+1})
=
\indicVec_{i}^{\top} \bv{x} - y 
= \indicVec_i^{\top} \bv{M}^{-1} \indicVec_{i}\,.
\] 
Consequently, if we used the algorithm to get a multiplicative approximation $X$ as stated then $X \in (1 \pm \epsilon) \tr(\bv{M}^{-1})$ and the result follows.
\end{proof}

Using this, we also show that computing leverage scores of matrix, a powerful and prevalent notion in convex analysis and numerical linear algebra, is also difficult to compute. This follows from the well known fact that effective resistances in graphs and leverage scores of matrices are the same up to scaling by known quantities.

\begin{corollary}[Leverage Score Hardness]
Suppose we have an algorithm which given $\bv{A} \in \R^{n \times d}$ can compute $\widetilde{\sigma}$ that is a $1 \pm \epsilon$ multiplicative approximation to the leverage scores of $\bv{A}$, i.e. 
\[
\widetilde{\sigma}_i \in (1 \pm \epsilon) \indicVec_i^\top \bv{A} (\bv{A}^\top \bv{A})^\dagger \bv{A}^\top \indicVec_i
\text{ for all } 
i \in [n] 
\]
in time $O(\nnz(\bv{A})^\gamma \epsilon^{-c})$. Then there is a $O(n^{2\gamma + 13c})$ time algorithm for detecting if an $n$-node graph contains a triangle.
\end{corollary}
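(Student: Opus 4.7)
The plan is to compose three reductions: leverage score approximation yields effective resistance approximation, which by Lemma~\ref{lem:effres_to_trace_inv} yields trace of inverse approximation for strictly SDD matrices with non-positive off-diagonals, which by Corollary~\ref{cor:trace_inv_hard} yields triangle detection. The only new ingredient I need to produce is the first link in the chain, i.e., the standard identification between graph effective resistances and leverage scores of an associated weighted edge-vertex incidence matrix.

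For the main construction, given a Laplacian $\bv{L} \in \R^{N \times N}$ with $m_{\bv{L}}$ non-zero off-diagonal entries and non-negative edge weights $w_{ij} = -\bv{L}_{ij}$, I will build the signed edge-vertex incidence matrix $\bv{B} \in \R^{E \times N}$ (with $|E| = m_{\bv{L}} / 2$) and set $\bv{A} \eqdef \bv{W}^{1/2} \bv{B}$ where $\bv{W}$ is the diagonal matrix of edge weights. A standard calculation gives $\bv{A}^\top \bv{A} = \bv{L}$, and for each edge $e = (i,j)$ the leverage score satisfies
\[
\indicVec_e^\top \bv{A} (\bv{A}^\top \bv{A})^\dagger \bv{A}^\top \indicVec_e
= w_{ij} \cdot (\indicVec_i - \indicVec_j)^\top \bv{L}^\dagger (\indicVec_i - \indicVec_j).
\]
Consequently a $(1 \pm \epsilon)$ multiplicative approximation to each leverage score yields a $(1 \pm \epsilon)$ multiplicative approximation to each effective resistance after dividing by the known weight, and summing non-negative approximations preserves the multiplicative guarantee. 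This lets me simulate the hypothesis of Lemma~\ref{lem:effres_to_trace_inv}, taking one call to the leverage score oracle on $\bv{A}$ and then $O(\nnz(\bv{L}))$ additional arithmetic per query vertex.

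For the runtime bookkeeping, start from a graph $G$ on $n$ vertices and apply Corollary~\ref{cor:trace_inv_hard}: triangle detection reduces to $(1\pm \epsilon_1)$-computing $\tr(\bv{M}^{-1})$ for an $n \times n$ strictly SDD matrix $\bv{M}$ with $\epsilon_1 = \Theta(1/n^{13})$, where $\bv{M}$ may be dense so $\nnz(\bv{M}) = O(n^2)$. Passing through Lemma~\ref{lem:effres_to_trace_inv}, this reduces to computing sums of effective resistances on an $(n+1) \times (n+1)$ Laplacian $\bv{L}$ with $m_{\bv{L}} = O(n^2)$ non-zeros. The associated incidence matrix $\bv{A}$ then has $\nnz(\bv{A}) = O(n^2)$, so the leverage score oracle runs in $O((n^2)^\gamma \epsilon_1^{-c}) = O(n^{2\gamma} \cdot n^{13c})$. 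Adding the $O(n^2)$ overhead from the outer reductions (which is dominated since $\gamma \geq 2$) yields the claimed $O(n^{2\gamma + 13c})$ bound.

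The only subtlety I anticipate, which is minor, is ensuring the weights in the Laplacian from Lemma~\ref{lem:effres_to_trace_inv} are indeed non-negative so that $\bv{W}^{1/2}$ is real; this is immediate because $\bv{M}$ has non-positive off-diagonals by hypothesis, so $w_{ij} = -\bv{M}_{ij} \geq 0$, and the added row/column in the construction of $\bv{L}$ also contributes non-negative weights $v_i = (\bv{M}\onesVec)_i > 0$ by strict diagonal dominance. Nothing else in the argument goes beyond routine composition.
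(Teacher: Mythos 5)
Your proposal is correct and follows essentially the same route as the paper: the same chain of reductions through Lemma~\ref{lem:effres_to_trace_inv} and Corollary~\ref{cor:trace_inv_hard}, with the identical weighted incidence-matrix construction (the paper writes the entries $\pm\sqrt{-\bv{L}_{ij}}$ directly rather than as $\bv{W}^{1/2}\bv{B}$) and the same identity relating edge leverage scores to $w_{ij}$ times effective resistances, followed by dividing out the known weights and summing. The runtime bookkeeping also matches the paper's $O(n^{2\gamma}\epsilon_1^{-c}) + O(n^2)$ accounting.
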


\begin{proof}
Let $\bv{L} \in \R^{n \times n}$ be a symmetric Laplacian. Let $E = \{\{i, j\} \subseteq [n] \times [n] : \bv{L}_{ij} \neq 0\}$, i.e. the set of edges in the graph associated with $L$. Let $m = |E|$ and $\bv{B} \in \R^{m \times n}$ be the incidence matrix associated with $\bv{L}$, i.e. for all $e = \{i, j\} \in E$ we have $\bv{B}_{e,i} = \sqrt{-\bv{L}_{ij}}$ and $\bv{B}_{e,j} = - \sqrt{-\bv{L}_{ij}}$ for some canonical choice of ordering of $i$ and $j$ and let all other entries of $\bv{B} = 0$. Clearly $\nnz(\bv{B}) = \nnz(\bv{L})$ and we can form $\bv{B}$ in $O(\nnz(\bv{L}))$ time. 

It is well known and easy to check that $\bv{L} = \bv{B}^\top \bv{b}$. Consequently, for all $e = \{i,j\} \in E$ we have
\[
\indicVec_e^\top \bv{B} (\bv{B}^\top \bv{B})^\dagger \bv{B}^\top \indicVec_e = (- \bv{L}_{ij}) \cdot (\indicVec_{i} - \indicVec_{j})^{\top} \bv{L}^{\dagger} (\indicVec_{i} - \indicVec_{j})
\]
Now if we compute $\widetilde{\sigma}$ using the assumed algorithm in $O(\nnz(\bv{L})^\gamma \epsilon^{-c}) = O(n^{2\gamma} \epsilon^{-c})$ time, then since $- \bv{L}_{ij}$ is non-negative in an additional $O(\nnz(\bv{L}) = O(n^2)$ time this yields a $1 \pm \epsilon$ multiplicative approximation to the total effective resistance between any $i$ and all its neighbors in the graph associated with $\bv{L}$. Consequently, by Lemma~\ref{lem:effres_to_trace_inv} and Corollary~\ref{cor:trace_inv_hard} the result follows.
\end{proof}

\section{Improved Error Dependence via Polynomial Approximation}\label{sec:poly}

For constant $\epsilon$, Theorem \ref{thm:histogram_algorithm} and the resulting Corollary \ref{cor:hist_schatten} matches our fastest runtimes for the Schatten $p$-norms. However, it is possible to significantly improve the $\epsilon$ dependence by generalizing our approach. Instead of splitting our spectrum into many small spectral windows, we split into windows of constant multiplicative width and approximate $x^p$ via a low degree polynomial over each window. The degree of this polynomial only varies with $\log(1/\epsilon)$.

In Theorem \ref{thm:histogram_algorithm}, fixing $\delta_f$ to be constant for illustration, each window has width $\alpha = \Theta(\epsilon)$ and so there are $T = \lceil \log_{1-\alpha} \lambda \rceil = \Theta(\log \lambda /\epsilon)$ windows. Additionally, we must set the steepness parameter to be $\gamma = \Theta(\epsilon \alpha) = \Theta(\epsilon^2)$. This loses us a total $1/\epsilon^2$ factor in our runtime as compared to our improved algorithm which will set $\alpha = \Theta(1)$ and so $\gamma = \Theta(\epsilon)$.

%

We begin by showing that if we well approximate $f$ on each window, then we well approximate $\Sum_f(\bv{A})$ overall. In the following theorems for simplicity we work with PSD matrices, as we will eventually apply these lemmas to $\bv{A}^T\bv{A}$.

\begin{lemma}\label{lem:infinite_window_approx}
	Consider $f: \R \rightarrow \R^+$, parameters $\alpha,\epsilon \in (0,1)$, and gap parameter $\gamma \in (0, \alpha)$. Set $a_0 = 1$, pick $a_1$ uniformly at random from $[1-\alpha,1]$, and set $a_t = a_1(1-\alpha)^{t-1}$ for $t\ge2$. Let $R_t = [a_{t+1},a_t]$ for all $t \geq 0$. Let $h^{\gamma}_{R_t}$ be a $\gamma$-soft window for $R_t$ (Definition \ref{def:window}) and let $\bar R_t =  [(1-\gamma)a_{t+1},(1+ \gamma)a_t]$ be the interval on which $h^\gamma_{R_t}$ is nonzero. Furthermore, for $t \geq 0$, let $q_t$ be a `well-behaved' approximation to $f$ on $\bar R_t$ in the following sense:
	\begin{itemize}
		\item Multiplicative Approximation: $\abs{q_t(x)-f(x)} \leq \epsilon f(x)$ for $x \in \bar R_t$.
		\item Approximate Monotonicity: $q_t(y) \le c_1q_t(x)$ for all $y \leq x$ for some $c_1 \ge 1$.
		\item Range Preserving: $q_t(0) = 0$.
	\end{itemize}
	Then, for any PSD $\bv{A} \in \R^{d \times d}$ with $\norm{\bv{A}}_2 \le 1$, with probability $9/10$:
	\begin{align*}
	\left (1- \epsilon - \frac{40c_1\gamma}{\alpha}\right) \Sum_f(\bv{A}) \le \sum_{t=0}^\infty \Sum_{q_t}(h^{\gamma}_{R_t}(\bv{A})\bv{A}) \le \left (1+ \epsilon + \frac{40c_1\gamma}{\alpha}\right) \Sum_f(\bv{A}).
	\end{align*}
\end{lemma}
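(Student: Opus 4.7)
The plan is to reduce $\sum_t \Sum_{q_t}(h^\gamma_{R_t}(\bv{A})\bv{A})$ to an eigenvalue-by-eigenvalue sum and control the error via the per-window multiplicative approximation together with a random-shift argument at the boundaries. Since $\bv{A}$ is PSD and $h^\gamma_{R_t}$ is nonnegative on $[0,1]$, the matrices $h^\gamma_{R_t}(\bv{A})$ and $\bv{A}$ share eigenvectors, so $h^\gamma_{R_t}(\bv{A})\bv{A}$ is PSD with eigenvalues $h^\gamma_{R_t}(\sigma_i)\sigma_i$ and thus $\Sum_{q_t}(h^\gamma_{R_t}(\bv{A})\bv{A}) = \sum_i q_t(h^\gamma_{R_t}(\sigma_i)\sigma_i)$. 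Setting $y=0$ in the approximate monotonicity condition together with $q_t(0)=0$ also yields $q_t \ge 0$ on $[0,\infty)$, which will let me discard boundary terms for the lower bound.

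Next I would do the per-eigenvalue analysis. Every $\sigma_i$ lies in a unique $R_{t^*(i)}$, and since $\gamma < \alpha$ it lies in $\bar R_t \setminus R_t$ for at most the two neighboring windows $t = t^*(i) \pm 1$; for every other $t$ we have $h^\gamma_{R_t}(\sigma_i)=0$ and the range-preserving property kills the contribution. For $t = t^*(i)$, $h^\gamma_{R_t}(\sigma_i)=1$ and the multiplicative approximation gives $q_{t^*}(\sigma_i) \in (1\pm\epsilon) f(\sigma_i)$. For a boundary window, I would apply approximate monotonicity at $y = h^\gamma_{R_t}(\sigma_i)\sigma_i \le \sigma_i = x$ together with the approximation at $\sigma_i \in \bar R_t$ to conclude $0 \le q_t(y) \le c_1 q_t(\sigma_i) \le c_1(1+\epsilon) f(\sigma_i)$. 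Because all contributions are nonnegative, dropping the boundary terms already yields the lower bound $\sum_t \Sum_{q_t}(h^\gamma_{R_t}(\bv{A})\bv{A}) \ge \sum_i q_{t^*(i)}(\sigma_i) \ge (1-\epsilon)\Sum_f(\bv{A})$, which is stronger than what is claimed.

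For the upper bound, the only remaining task is to bound the total $f$-mass of singular values hitting a boundary. For a fixed $\sigma_i$, being in some $\bar R_t \setminus R_t$ requires one of the thresholds $a_t = a_1(1-\alpha)^{t-1}$ to fall inside an interval of multiplicative width $O(\gamma)$ around $\sigma_i$; since $a_1$ is uniform on $[1-\alpha,1]$, exactly one index $t$ can achieve this, and a direct computation gives $\Pr[\sigma_i \in \text{some boundary}] \le O(\gamma/\alpha)$. By linearity of expectation $\E[\sum_{i:\sigma_i\in\text{boundary}} f(\sigma_i)] = O(\gamma/\alpha)\Sum_f(\bv{A})$, so Markov's inequality yields with probability $\ge 9/10$ a bound of the form $(40\gamma/\alpha)\Sum_f(\bv{A})$ after choosing constants. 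Combining with the per-eigenvalue bound gives $\sum_t \Sum_{q_t}(h^\gamma_{R_t}(\bv{A})\bv{A}) \le (1+\epsilon)\Sum_f(\bv{A}) + c_1(1+\epsilon)(40\gamma/\alpha)\Sum_f(\bv{A}) \le (1+\epsilon+40c_1\gamma/\alpha)\Sum_f(\bv{A})$ after absorbing the cross term into the constant. The main obstacle is pinning down the constant in the boundary-probability computation via the precise distribution of $a_1$ and checking that the two possible boundary windows around $R_{t^*(i)}$ occupy disjoint slices of $R_{t^*(i)}$ when $\gamma < \alpha$; once that is done the rest is straightforward bookkeeping.
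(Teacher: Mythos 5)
Your proposal is correct and follows essentially the same route as the paper's proof: reduce to a per-eigenvalue sum, use the multiplicative approximation on the unique window containing $\sigma_i$, bound the at-most-one neighboring boundary contribution by $c_1(1+\epsilon)f(\sigma_i)$ via approximate monotonicity and nonnegativity, and control the total $f$-mass of boundary eigenvalues by the random shift of $a_1$ plus Markov (the paper likewise gets the stronger $(1-\epsilon)\Sum_f(\bv{A})$ lower bound since boundary terms are nonnegative). The only bookkeeping caveat is the constant: apply Markov at $20\gamma/\alpha$ (ten times the expected boundary mass $2\gamma/\alpha$) so that the prefactor $c_1(1+\epsilon)\le 2c_1$ is absorbed into $40c_1\gamma/\alpha$, rather than taking the Markov threshold at $40\gamma/\alpha$ and then trying to absorb the extra factor afterwards.
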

\begin{proof}
	Note that the restriction that $q_t(0) = 0$ along with the approximate monotonicity property implies that $q(x)$ is nonnegative, so $\Sum_{q_t}$ is a valid spectral sum and $\Sum_{q_t}(h^{\gamma}_{R_t}(\bv{A})\bv{A}) = \tr(q_t(h^{\gamma}_{R_t}(\bv{A})\bv{A}))$, so we will be able to estimate this sum via stochastic trace estimation. 

Let $t_i$ denote the unique index such that $\sigma_i(\bv{A}) \in R_{t_i}$. Since $\gamma < \alpha$, each $\sigma_i(\bv{A})$ lies in the support of at most two overlapping soft windows (in at most two ranges $\bar R_{t_i}$ and $\bar R_{t_i\pm1}$). Let $T$ be the set of indices whose singular values fall in the support of exactly one soft window and $\bar T$ be its complement. We first bound the error introduced on singular values with indices in $T$.
	\begin{align}
	\sum_{t=0}^\infty  \Sum_{q_t}(h^{\gamma}_{R_t}(\bv{A})\bv{A}) &= \sum_{i=1}^d \sum_{t=0}^\infty q_t\left ( \sigma_i (\bv{A}) \cdot h^{\gamma}_{R_t}(\sigma_i(\bv{A})) \right) \nonumber\\
	&= \sum_{i \in T} q_{t_i}(\sigma_i(\bv{A})) + \sum_{i \in \bar T} \sum_{t=0}^\infty q_t \left ( \sigma_i (\bv{A}) \cdot h^{\gamma}_{R_t}(\sigma_i(\bv{A})) \right)\nonumber\\
	&\in (1\pm\epsilon) \sum_{i \in T} f(\sigma_i(\bv{A}))  + \sum_{i \in \bar T} \sum_{t=0}^\infty q_t \left ( \sigma_i (\bv{A}) \cdot h^{\gamma}_{R_t}(\sigma_i(\bv{A})) \right).\label{eqn:qt-main}
	\end{align}
	The last inequality follows from the multiplicative approximation assumption on $q_t$ that $\abs{q_t(x)-f(x)} \leq \epsilon f(x)$ for $x \in \bar R_t$. Let us now consider a particular $i \in \bar T$ and bound $\sum_{t=0}^\infty q_t \left ( \sigma_i (\bv{A}) \cdot h^{\gamma}_{R_t}(\sigma_i(\bv{A})) \right)$. Note that there are precisely two non-zero terms in this summation -- one corresponding to $t_i$ and the other, to $t_i \pm 1$, which we denote by $t_{\bar i}$. Using the above hypothesis on $q_t$ again, we see that
	\begin{align}
		q_{t_i} \left ( \sigma_i (\bv{A}) \cdot h^{\gamma}_{R_{t_i}}(\sigma_i(\bv{A})) \right) = q_{t_i} \left ( \sigma_i (\bv{A}) \right) \in (1\pm \epsilon) f(\sigma_i (\bv{A})).\label{eqn:qt-1}
	\end{align}
	For the term involving $t_{\bar i}$, we have $\sigma_i (\bv{A}) \in \bar R_{\bar t_i}$ and by the approximate monotonicity requirement that $y \le x \Rightarrow q_t(y) \le c_1q_t(x)$ have:
	\begin{align}
	q_{t_{\bar i}} \left ( \sigma_i (\bv{A}) \cdot h^{\gamma}_{R_{t_{\bar i}}}(\sigma_i(\bv{A}))\right )  \le c_1 q_{t_{\bar i}} ( \sigma_i (\bv{A})) \in c_1 (1\pm \epsilon) f(\sigma_i(\bv{A})).\label{eqn:qt-2}
	%
	\end{align}
	
	Due to the random positioning of the windows, in expectation, the total contribution of the singular values lying at the intersection of two windows is small. Specifically, 
	\begin{align}\label{expectedError}
	\E_{\alpha_1} \sum_{i \in \bar T} f(\sigma_i(\bv{A})) = \sum_{i=1}^d \Pr[i\in \bar T] \cdot  f(\sigma_i(\bv{A})) \le \frac{2\gamma}{\alpha} \sum_{i=1}^d f(\sigma_i(\bv{A})) = \frac{2\gamma}{\alpha} \Sum_f(\bv{A})
	\end{align}
	where the bound on $ \Pr[i\in \bar T] $ follows from the fact that $i \in \bar T$ only if 
	$a_1(1-\alpha)^t \in (1\pm \gamma) \sigma_i(\bv{A})$ for some $t$. This holds only if 
	$a_1 \in  (1\pm \gamma ) \left (\frac{\sigma_i(\bv{A})}{(1-\alpha)^{\lceil \log_{\sigma_i(\bv{A})} 1-\alpha \rceil}} \right )$, which occurs with probability $\frac{2\gamma}{\alpha}$ since $a_1$ is chosen uniformly in the range $[1-\alpha,1]$. 
	By a Markov bound applied to \eqref{expectedError}, with probability $9/10$ we have $\sum_{i \in \bar T} f(\sigma_i(\bv{A})) \le \frac{20\gamma}{\alpha} \Sum_f(\bv{A})$. Combining with \eqref{eqn:qt-main}, \eqref{eqn:qt-1}, and\eqref{eqn:qt-2}, we obtain, with probability 9/10:
	\begin{align*}
			\left (1- \epsilon - \frac{40c_1\gamma}{\alpha}  \right ) \Sum_f(\bv{A}) \le \sum_{t=0}^\infty  \Sum_{q_t}(h^{\gamma}_{R_t}(\bv{A})\bv{A}) \le \left (1+ \epsilon + \frac{40c_1\gamma}{\alpha}  \right ) \Sum_f(\bv{A}).
	\end{align*}
\end{proof}

We now show that, as long as the contribution of the smaller singular values of $\bv{A}$ to $\Sum_f(\bv{A})$ is not too large, we can truncate this sum and still obtain an accurate approximation. Specifically: 

\begin{corollary}\label{lem:truncated_sum_approx} For any PSD $\bv{A} \in \R^{d\times d}$ with $\norm{\bv{A}}_2 \le 1$, let $f: \R \rightarrow \R^+$ be a function such that, for any $\epsilon > 0$ there exists $\lambda_f(\epsilon)$ such that for $x \in [0,\lambda_f(\epsilon)]$, $f(x) \le \frac{\epsilon}{n}\Sum_f(\bv{A})$.
	Given parameters $\alpha,\epsilon \in (0,1)$, and gap parameter $\gamma < \alpha$, 
	for $t \ge 0$ define $R_t$, $h^{\gamma}_{R_t}$ and, $\bar R_t$ as in Lemma \ref{lem:infinite_window_approx}. Let $q_t$ be a well-behaved approximation to $f$ on $\bar R_t$ as in Lemma~\ref{lem:infinite_window_approx}. With probability $9/10$:
	\begin{align}\label{truncated_window_sum_approx}
	\left(1 - \frac{40c_1\gamma}{\alpha} - 5c_1\epsilon \right) \Sum_f(\bv{A}) \le \sum_{t=0}^{\lceil \log_{1-\alpha} \lambda_f(\epsilon) \rceil} \Sum_{q_t}(h^{\gamma}_{R_t}(\bv{A})\bv{A}) \le \left(1 + \frac{40c_1\gamma}{\alpha} + 5c_1\epsilon \right) \Sum_f(\bv{A}). 
	\end{align}
\end{corollary}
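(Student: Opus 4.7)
My plan is to deduce Corollary~\ref{lem:truncated_sum_approx} directly from Lemma~\ref{lem:infinite_window_approx} by controlling the error introduced by truncating the infinite window sum at $T \eqdef \lceil \log_{1-\alpha} \lambda_f(\epsilon) \rceil$. The intuition is that any window $\bar R_t$ with $t > T$ lies (essentially) in the range $[0,\lambda_f(\epsilon)]$ where $f$ is negligible by the small-tail hypothesis, so the dropped mass can be absorbed into the $5 c_1 \epsilon$ slack in the statement.

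For the upper bound in \eqref{truncated_window_sum_approx}, I would first observe that each $q_t$ is non-negative on $[0,1]$, because $q_t(0) = 0$ and $q_t(y) \le c_1 q_t(x)$ for $y \le x$ forces $q_t \ge 0$. Consequently every term $\Sum_{q_t}(h^\gamma_{R_t}(\bv{A}) \bv{A})$ is non-negative, the truncated sum is dominated by the infinite one, and the right inequality of \eqref{truncated_window_sum_approx} is then immediate from Lemma~\ref{lem:infinite_window_approx} (using $c_1 \ge 1$ so that $\epsilon \le 5 c_1 \epsilon$).

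For the lower bound I would subtract a tail estimate from Lemma~\ref{lem:infinite_window_approx}. The central step is to bound $\sum_{t > T} \Sum_{q_t}(h^\gamma_{R_t}(\bv{A})\bv{A})$. First, $a_{T+1} = a_1 (1-\alpha)^T \le (1-\alpha)^T \le \lambda_f(\epsilon)$, and the definition of $\bar R_t$ then forces any $\sigma_i(\bv{A})$ contributing to a tail window to satisfy $\sigma_i(\bv{A}) \le (1+\gamma)\lambda_f(\epsilon)$; since $\gamma < \alpha$, each such singular value lies in at most two tail windows. Combining the multiplicative approximation $q_t \in (1\pm\epsilon) f$ on $\bar R_t$ with the approximate monotonicity $q_t(\sigma_i(\bv{A}) \cdot h^\gamma_{R_t}(\sigma_i(\bv{A}))) \le c_1 q_t(\sigma_i(\bv{A}))$ yields
\[
\sum_{t=T+1}^\infty \Sum_{q_t}(h^\gamma_{R_t}(\bv{A})\bv{A}) \;\le\; 2 c_1 (1+\epsilon) \sum_{i : \sigma_i(\bv{A}) \le (1+\gamma)\lambda_f(\epsilon)} f(\sigma_i(\bv{A})) \;\le\; 3 c_1 \epsilon \cdot \Sum_f(\bv{A}),
\]
where the final step applies the small-tail hypothesis (with a constant adjustment of $\epsilon$; see below). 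Subtracting this from the lower bound in Lemma~\ref{lem:infinite_window_approx} and using $c_1 \ge 1$ then gives the left inequality of \eqref{truncated_window_sum_approx}. The probability $9/10$ is inherited directly from the random shift $a_1$ used in Lemma~\ref{lem:infinite_window_approx}; no new randomness is introduced.

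The main subtlety I expect to have to handle carefully is the boundary region $(\lambda_f(\epsilon), (1+\gamma)\lambda_f(\epsilon)]$: the small-tail hypothesis is stated only for $x \le \lambda_f(\epsilon)$, not for $x$ slightly above. I would resolve this by invoking the hypothesis at $\epsilon' = \epsilon/(1+\gamma)$ (equivalently, redefining the threshold as $\lambda_f(\epsilon)/(1+\gamma)$) so that the extended window $(1+\gamma) \cdot \lambda_f(\epsilon')$ is still $\le \lambda_f(\epsilon)$; because the hypothesis only asserts the \emph{existence} of some such $\lambda_f$, and all downstream applications (e.g.\ $f(x)=x^p$) supply a $\lambda_f(\epsilon)$ that depends continuously on $\epsilon$, this constant rescaling is harmless and is exactly the kind of slack the factor $5$ in front of $c_1 \epsilon$ in the statement is designed to absorb. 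Modulo this boundary bookkeeping, the rest of the argument is routine addition and subtraction of the two bounds above.
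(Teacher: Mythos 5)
Your argument is essentially the paper's own proof: the upper bound follows because each $\Sum_{q_t}(h^\gamma_{R_t}(\bv{A})\bv{A})$ is nonnegative (indeed $q_t(0)=0$ and approximate monotonicity force $q_t \ge 0$), so truncation only helps, and the lower bound follows by bounding the dropped tail $\sum_{t > T}\Sum_{q_t}(h^\gamma_{R_t}(\bv{A})\bv{A})$ using the per-window bound $q_t(\sigma_i(\bv{A})h^\gamma_{R_t}(\sigma_i(\bv{A}))) \le c_1(1+\epsilon)f(\sigma_i(\bv{A}))$, the fact that each small singular value meets at most two tail windows, and the small-tail hypothesis — exactly the one-line computation the paper gives via \eqref{eqn:qt-2}.

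The one place you depart from the paper is the boundary sliver $(\lambda_f(\epsilon), (1+\gamma)a_{T+1}]$, and there your proposed repair does not work as stated. First, $\lambda_f(\epsilon/(1+\gamma))$ is not the same thing as $\lambda_f(\epsilon)/(1+\gamma)$; the hypothesis only asserts existence of some threshold per $\epsilon$ and gives no relation between the two. More importantly, the truncation index $T = \lceil \log_{1-\alpha}\lambda_f(\epsilon)\rceil$ is fixed by the corollary statement, so the reach $(1+\gamma)a_{T+1} \le (1+\gamma)\lambda_f(\epsilon)$ of the first dropped window is fixed as well; invoking the hypothesis at a smaller error parameter (hence a smaller threshold) only shrinks the region on which you control $f$ and does nothing about singular values just above $\lambda_f(\epsilon)$ that leak into $\bar R_{T+1}$. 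For what it is worth, the paper's own displayed tail bound silently restricts to $\{i : \sigma_i(\bv{A}) \le \lambda_f(\epsilon)\}$ and so elides the same sliver; a correct treatment notes that any such leaking $\sigma_i(\bv{A})$ lies in the supports of both $h^\gamma_{R_T}$ and $h^\gamma_{R_{T+1}}$, i.e.\ in the two-window overlap set whose total $f$-mass is already controlled (with probability $9/10$) by the random-shift/Markov argument inside the proof of Lemma~\ref{lem:infinite_window_approx}, at the cost of a slightly larger constant multiplying $\gamma/\alpha$ — harmless downstream, where $\gamma = \Theta(\epsilon\alpha)$ or $\Theta(\epsilon)$, but not obtainable by rescaling the threshold as you suggest.
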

\begin{proof}
	This follows from Lemma \ref{lem:infinite_window_approx} along with the small tail assumption. Specifically, by \eqref{eqn:qt-2}:
	\begin{align*}
	\sum_{\lceil \log_{1-\alpha} \lambda_f(\epsilon) \rceil+1}^\infty \Sum_{q_t}(h^{\gamma}_{R_t}(\bv{A})\bv{A}) &\le \sum_{\{i | \sigma_i(\bv{A}) \le \lambda_f(\epsilon)\}} 4c_1f(\sigma_i(\bv{A})) \le  4c_1n  \cdot \frac{\epsilon}{n} \Sum_f(\bv{A}) = 4c_1\epsilon \Sum_f(\bv{A}).
	\end{align*}
\end{proof}

We now show that well behaved polynomial approximations exist for the function $x^p$ for general real $p$, whose spectral sums give the Schatten $p$-norms.
\begin{lemma}[Polynomial Approximation of Power Function]\label{lem:poly}
For all $p \in [-1,0)$ and $k \geq 0$ let 
\[
a_k \eqdef \prod_{j = 1}^{k} \left(1 - \frac{p + 1}{j}\right)
\enspace \text{and} \enspace
q_k(x) = \sum_{j = 0}^{k} a_j (1 - x)^j ~.
\]
Then for all $x \in (0, 1]$ and $k \geq 0$ we have $0 \leq a_k \leq 1$ and 
\[
0 \leq x^p - q_{k}(x) \leq \frac{\exp(-kx)}{x} ~.
\]
\end{lemma}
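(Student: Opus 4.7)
The plan is to recognize $q_k(x)$ as the degree-$k$ partial sum of the binomial (Newton) series expansion of $x^p = (1-(1-x))^p$ about $x=1$. Indeed,
\[
(-1)^j \binom{p}{j} \;=\; \prod_{i=1}^j \frac{i-1-p}{i} \;=\; \prod_{i=1}^j \left(1 - \frac{p+1}{i}\right) \;=\; a_j,
\]
so $q_k$ is the $k$-th Taylor polynomial of $x \mapsto x^p$ at $x=1$. Since the binomial series has radius of convergence $1$ and $x \mapsto x^p$ is real-analytic on $(0,\infty)$, for every $x \in (0,1]$ (so $|1-x| < 1$, with equality only trivially at $x=1$) we have
\[
x^p \;=\; \sum_{j=0}^\infty a_j (1-x)^j, \qquad \text{hence} \qquad x^p - q_k(x) \;=\; \sum_{j=k+1}^\infty a_j (1-x)^j.
\]

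I would next establish $a_k \in [0,1]$ together with monotonicity. The key observation is $a_k/a_{k-1} = 1 - (p+1)/k$. For $p \in [-1,0)$ we have $p+1 \in [0,1)$, so $(p+1)/k \in [0,1/k] \subseteq [0,1]$, which forces each ratio into $[0,1]$. With base case $a_0 = 1$, induction yields $a_k \in [0,1]$ for all $k \geq 0$, and as a byproduct the sequence $\{a_k\}_{k\ge 0}$ is monotonically non-increasing.

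Given these facts, both halves of the tail estimate follow immediately. Non-negativity of $x^p - q_k(x)$ is term-by-term, since $a_j \geq 0$ and $(1-x)^j \geq 0$ on $x \in (0,1]$. For the upper bound, use the monotonicity bound $a_j \leq a_k \leq 1$ for $j \geq k$, sum the resulting geometric series, and apply $1-x \leq e^{-x}$:
\[
x^p - q_k(x) \;\leq\; \sum_{j=k+1}^\infty (1-x)^j \;=\; \frac{(1-x)^{k+1}}{x} \;\leq\; \frac{e^{-(k+1)x}}{x} \;\leq\; \frac{e^{-kx}}{x}.
\]

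The only subtlety worth guarding against is the behavior as $x \to 0^+$, where both $x^p$ and $e^{-kx}/x$ diverge and the bound is vacuous, and where the binomial series in fact diverges (consistent with $a_j \sim C j^{-(p+1)}$ being only summable when $p > 0$). For any fixed $x \in (0,1]$ this issue is absent: $|1-x| < 1$ makes pointwise convergence automatic from the bound $a_j \leq 1$, so no delicate analysis is required. Everything else in the argument is routine algebra.
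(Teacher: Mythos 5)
Your proof is correct and follows essentially the same route as the paper: both identify $q_k$ as the degree-$k$ Taylor (binomial) partial sum of $x^p$ about $1$, establish $0 \leq a_j \leq 1$ from the ratio $1 - (p+1)/j \in [0,1]$, and bound the tail $\sum_{j > k} a_j (1-x)^j$ by a geometric series combined with $1 - x \leq e^{-x}$. The only difference is cosmetic: the paper justifies convergence of the series to $x^p$ via the integral form of the Taylor remainder, whereas you cite the classical binomial series theorem; both are valid.
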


\begin{proof}
Induction shows that the $k$-th derivative of $f(x)= x^p$ at $x \in \R$, is given by
\[
f^{(k)} (x) = \left(\prod_{j = 1}^{k}  (p + 1 - j)\right)  x^{p - k} ~.
\]
Furthermore, direct calculation reveals that for all $x, t \in \R$ and $k \geq 1$
\begin{equation}
\label{eq:why_ai}
\frac{f^{(k)}(t)}{k!} (x - t)^k
= \left(\prod_{j = 1}^{k} \frac{p + 1 - j}{j}\right) \cdot t^{p - k}
\cdot (t - x)^k \cdot (-1)^k
= a_k t^p \left(1 - \frac{x}{t}\right)^k ~.
\end{equation}
Consequently, $q_k(x)$ is the degree $k$ Taylor expansion of $x^p$ about $1$ evaluated at $x$. Furthermore, since $f^{(k + 1)}(t) = \frac{p - k}{t} \cdot f^{(k)}(t)$, the integral form of the remainder of a Taylor series expansion shows
\begin{align*}
|x^p -  q_k(x)|
&= 
\left|
\int_1^{x}
\frac{f^{(k + 1)}(t)}{k!} (x - t)^k dt
\right|
= 
|a_k| \cdot |p - k| \cdot 
\left|
\int_1^{x}
t^{p - 1} \left(1 - \frac{x}{t}\right)^k dt
\right|
\leq
\frac{k - p}{x} \exp(- k x)
\,.
\end{align*}
In the last step we took the worst case of $t = 1$ in the integral and used that since $p \in [-1,0)$ it is the case that $1 - (p + 1)/j \in (0, 1]$ and therefore $0 \leq a_k \leq 1$. Consequently, $\lim_{i \rightarrow \infty} q_k(x) = x^p$ and
\[
0 \leq x^p - q_k(x) = \sum_{j = k + 1}^\infty a_j (1 - x)^j
\leq (1 - x)^{k + 1} \sum_{j = 0}^{\infty} (1 - x)^j
\leq \frac{\exp(- kx)}{x} \,. 
\]
\end{proof}

\begin{corollary}\label{cor:poly}
For any $p,\epsilon > 0$, $a,b \in (0,1]$ with $a < b$, there is a polynomial $q$ of degree $O \left (\log \left (\frac{b}{a\epsilon}\right) \cdot \frac{b}{a} + p\right)$ such that:
\begin{itemize}
\item Multiplicative Approximation: $|q(x) - x^p| \le \epsilon x^p$ for $x \in [a,b]$.
\item Monotonicity: $q(y) < q(x)$ for all $y < x$.
\item Range Preserving: $q(0) = 0$. 
\end{itemize}
\end{corollary}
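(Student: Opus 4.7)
The plan is to reduce to Lemma~\ref{lem:poly} via two manipulations: peeling off a non-negative integer power of $x$ to bring the residual exponent into the range $(-1,0)$ where the Lemma applies, and rescaling to map $[a,b]$ into $(0,1]$.

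Writing $p = m + r$ with $m = \lfloor p \rfloor$ and $r = p - m \in [0,1)$, the case $r = 0$ is trivial since $x^p$ is itself a polynomial of degree $p$. For $r \in (0,1)$ I factor $x^p = x^{m+1} \cdot x^{r-1}$ and apply Lemma~\ref{lem:poly} with exponent $r - 1 \in (-1,0)$ to obtain $q_k(u) = \sum_{j=0}^{k} a_j(1-u)^j$, where $a_0 = 1$, $a_j \in (0,1]$ satisfies $a_{j+1} = a_j(1 - r/(j+1))$, and $0 \le u^{r-1} - q_k(u) \le \exp(-ku)/u$ on $(0,1]$. I then define
\[
q(x) \eqdef x^{m+1} \cdot b^{r-1} \cdot q_k(x/b),
\]
a polynomial of degree $m + 1 + k$.

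The multiplicative error calculation is direct: $|q(x) - x^p| \le x^{m+1}\, b^r\, x^{-1}\, \exp(-kx/b)$, and dividing by $x^p = x^{m+r}$ reduces the requirement $\le \epsilon$ to $\exp(-kx/b) \le \epsilon (x/b)^r$. Since $x \ge a$ and $r \le 1$, this holds whenever $k \ge (b/a)[\log(1/\epsilon) + r\log(b/a)] = O\bigl((b/a)\log(b/(a\epsilon))\bigr)$, giving the claimed total degree $m + 1 + k = O(p + (b/a)\log(b/(a\epsilon)))$. The range-preserving condition $q(0)=0$ is immediate from the $x^{m+1}$ factor with $m + 1 \ge 1$.

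The main technical content is the monotonicity property, for which it suffices to show $q$ is non-decreasing on $[0,b]$ (the relevant evaluation range, since every argument of $q_t$ in the proof of Lemma~\ref{lem:infinite_window_approx} lies in $[0,b]$). Substituting $u = x/b$, this reduces to showing $u^{m+1} q_k(u)$ is non-decreasing on $[0,1]$. Differentiating,
\[
\frac{d}{du}\!\left[u^{m+1} q_k(u)\right] = u^{m}\,\bigl[(m+1)\,q_k(u) + u\,q_k'(u)\bigr],
\]
and the key identity, which I expect to be the most delicate step, is
\[
(m+1)\,q_k(u) + u\,q_k'(u) \;=\; p\, q_{k-1}(u) + (m+1+k)\,a_k(1-u)^k.
\]
To derive this I would rewrite $u = 1 - (1-u)$ inside the sum defining $u q_k'(u)$, reindex, and use the recursion $(j+1)a_{j+1} = (j+1-r)a_j$ to see that consecutive coefficients of $(1-u)^j$ collapse to $(m+1+j) a_j - (j+1-r) a_j = (m+r)\,a_j = p\,a_j$, with boundary terms giving the stated endpoints. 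Both summands on the right are non-negative for $u \in [0,1]$ (with the bracket strictly positive on $(0,1]$ since $a_j > 0$), so $q$ is strictly increasing on $(0,b]$, completing the proof.
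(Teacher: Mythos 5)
Your construction is exactly the paper's: the paper sets $p' = p-\lceil p\rceil$ (your $r-1$) and takes $q(x)=b^p(x/b)^{\lceil p\rceil}q_i(x/b)$, which is literally your $x^{m+1}b^{r-1}q_k(x/b)$, and your error and degree analysis matches the paper's line for line (you additionally dispose of the integer case explicitly, which the paper glosses over). Where you genuinely diverge is the monotonicity step. The paper argues via the ratio bound $\frac{q(y)}{q(x)}\le\frac{y}{x}\cdot\frac{q_i(y/b)}{q_i(x/b)}\le\frac{y}{x}\cdot\frac{(y/b)^{p'}}{(x/b)^{p'}}\le 1$, justifying the middle inequality only by the remark that the tail $x^{p'}-q_i(x)$ is decreasing; as stated that remark does not imply the inequality (what is really needed is that $u^{-p'}q_i(u)$ is non-decreasing). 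Your derivative computation supplies precisely this missing fact: the identity
\[
(m+1)\,q_k(u)+u\,q_k'(u)=p\,q_{k-1}(u)+(m+1+k)\,a_k(1-u)^k
\]
is correct — it follows exactly as you say from $(j+1)a_{j+1}=(j+1-r)a_j$, since $(m+1+j)-(j+1-r)=p$ — and both terms are nonnegative for $u\in[0,1]$ with $q_{k-1}(u)\ge a_0=1$, so $u^{m+1}q_k(u)$ is strictly increasing there. So your route is, if anything, more rigorous than the paper's on the one delicate point. Two small remarks: in Lemma~\ref{lem:infinite_window_approx} the larger argument in the monotonicity comparison is $\sigma_i(\bv{A})\in\bar R_t$, so it can be as large as $(1+\gamma)b$ rather than $b$; this is harmless for you, since your identity is a polynomial identity and the bracket stays positive for $u$ slightly above $1$ (the correction terms are $O(\gamma)$), but you should state the range as $[0,(1+\gamma)b]$ rather than $[0,b]$. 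Restricting the range at all is in fact the right call: for $x\gg b$ the constructed polynomial need not be monotone (for odd degree the $(1-u)^k$ term eventually dominates with negative sign), so the corollary's unqualified ``for all $y<x$'' is only meaningful on the bounded range where it is used — a point on which your write-up is more careful than the paper.
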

\begin{proof}
Set $p' = p-\lceil p \rceil$ and $i = c_1 \log \left (\frac{b}{a\epsilon}\right) \cdot  \frac{b}{a}$ for large enough constant $c_1$.
Instantiate Lemma \ref{lem:poly} with $p'$ and $i$ to obtain $q_i(x)$.
Set $q(x) = b^p \cdot (x/b)^{\lceil p \rceil} \cdot q_{ i}(x/b)$. It is clear that $q(0) = 0$. Further, 
\begin{align*}
|x^p - q(x)| &= b^p (x/b)^{\lceil p \rceil} \cdot | (x/b)^{p'} - q_{i}(x/b)|\\
&\le x^{p} (b/x)^{1+p'} \cdot \exp \left (-c_1 \log \left (\frac{b}{a\epsilon}\right) \frac{b}{a} \cdot \frac{x}{b} \right )
\end{align*}
For $x \in [a,b]$, $x/b \ge a/b$. Further, $(b/x)^{1+p'} \le b/a$ and so if we choose $c_1$ large enough we have $|x^p - q(x)| < \epsilon x^p$. We finally show monotonicity. We have $x^{p'} - q_i(x) = \sum_{j=i+1}^\infty a_j (1-x)^j$. All $a_j$ are positive, so this difference is monotonically decreasing in $x$. We thus have, for $y < x$:
\begin{align*}
\frac{q(y)}{q(x)} = \frac{(y/b)^{\lceil p \rceil} q_i(y/b)}{(x/b)^{\lceil p \rceil} q_i(x/b)} \le \frac{y}{x} \cdot \frac{q_i(y/b)}{q_i(x/b)} \le \frac{y}{x} \cdot \frac{(y/b)^{p'}}{(x/b)^{p'}} \le 1
\end{align*}
since $p' \in [0,1]$. This gives us monotonicity.
\end{proof}
We now combine the approximations of Corollary \ref{cor:poly} with Corollary \ref{lem:truncated_sum_approx} to give our improved Schatten norm estimation algorithm.

\begin{lemma}[Schatten Norms via Polynomial Approximation]\label{lem:polyP} For any $\bv{A} \in \R^{n \times d}$ with $\norm{\bv{A}}_2 \le 1$, $p > 0$ and error parameter $\epsilon \in (0,1)$, for sufficiently small $c_1,c_2$, let $\alpha = 1/2$, $\lambda = \left (\frac{c_1^{2/p}\epsilon^{2/p}}{d^{2/p}} \right ) \norm{\bv{A}}_p^2$, and $\gamma = c_2 \epsilon$.
For $t \ge 0$ define $R_t$, $h^{\gamma}_{R_t}$, and $\bar R_t$ as in Lemma \ref{lem:infinite_window_approx}. Let $q_t(x)$ be as defined in Corollary \ref{cor:poly} with $p' = p/2$, $\epsilon' = c_3 \epsilon$ for sufficiently small $c_3$ and $[a,b] = R_t$. With probability $9/10$:
\begin{align}\label{truncated_window_sum_poly}
\sum_{t=0}^{\lceil \log 1/\lambda \rceil} \Sum_{q_t}(h^{\gamma}_{R_t}(\bv{A}^T\bv{A}) \bv{A}^T\bv{A}) \in (1\pm \epsilon) \norm{\bv{A}}_p^p. 
\end{align}
\begin{proof}
We apply Corollary \ref{lem:truncated_sum_approx} to $\bv{A}^T\bv{A}$, with $f(x) = x^{p/2}$, $\epsilon = c_1 \epsilon$, $\gamma = c_2 \epsilon$ and $\alpha = 1/2$.
$q_t$ satisfies the necessary conditions by Corollary \ref{cor:poly}. Further, we have $\lambda_f(c_1\epsilon) =  \left (\frac{c_1^{2/p}\epsilon^{2/p}}{d^{2/p}} \right ) \norm{\bv{A}}_p^2 = \lambda$.
Plugging in $\alpha = 1/2$, $\gamma = c_2\epsilon$ into \eqref{truncated_window_sum_approx} we have:
\begin{align*}
\sum_{t=0}^{\lceil \log 1/\lambda \rceil} \Sum_{q_t}(h^{\gamma}_{R_t}(\bv{A}^T\bv{A})\bv{A}^T\bv{A}) \in \left(1\pm c_3\epsilon \pm \frac{40c_2\epsilon}{1/2} \pm c_1\epsilon \right) \Sum_f(\bv{A}\bv{A}^T) = \norm{\bv{A}}_p^p
\end{align*}
which gives $(1 \pm \epsilon)$ approximation if we set $c_1,c_2,c_3$ small enough.
\end{proof}
\end{lemma}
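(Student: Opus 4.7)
The plan is to reduce the claim directly to Corollary~\ref{lem:truncated_sum_approx} applied to the PSD matrix $\bv{B} \eqdef \bv{A}^T\bv{A}$ with the target function $f(x) = x^{p/2}$. The key observation is that $\sigma_i(\bv{B}) = \sigma_i(\bv{A})^2$, so $\Sum_f(\bv{B}) = \sum_i \sigma_i(\bv{A})^p = \norm{\bv{A}}_p^p$. Since $\norm{\bv{A}}_2 \le 1$, the matrix $\bv{B}$ is PSD with $\norm{\bv{B}}_2 \le 1$, so Corollary~\ref{lem:truncated_sum_approx} applies.

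First, I would verify that the polynomials $q_t$ obtained from Corollary~\ref{cor:poly} (with $p' = p/2$, $\epsilon' = c_3\epsilon$, and $[a,b] = R_t$) satisfy the three hypotheses of Corollary~\ref{lem:truncated_sum_approx}: multiplicative approximation on $\bar R_t$ with error $O(\epsilon)$, approximate monotonicity with constant $c_1 = 1$ (since Corollary~\ref{cor:poly} gives strict monotonicity), and $q_t(0) = 0$. The multiplicative bound on $R_t$ is immediate; to extend it to $\bar R_t = [(1-\gamma)a_{t+1},(1+\gamma)a_t]$ with $\gamma = c_2\epsilon$ and $\alpha = 1/2$ I would re-instantiate Corollary~\ref{cor:poly} on a slightly enlarged interval of the same constant aspect ratio, which only changes the polynomial degree and the hidden $O(\epsilon)$ approximation constant by constant factors.

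Next, I would identify $\lambda_f(c_1 \epsilon)$, the threshold below which $f$ contributes only an $\epsilon$-fraction of the total. For $f(x) = x^{p/2}$ we need $x^{p/2} \le (c_1\epsilon/d)\norm{\bv{A}}_p^p$, i.e.\ $x \le (c_1\epsilon/d)^{2/p} \norm{\bv{A}}_p^2$, which matches exactly the chosen $\lambda = (c_1^{2/p}\epsilon^{2/p}/d^{2/p}) \norm{\bv{A}}_p^2$. Hence truncating the sum at $t = \lceil \log_{1-\alpha} \lambda \rceil = \lceil \log 1/\lambda \rceil$ (for $\alpha = 1/2$) matches the truncation in Corollary~\ref{lem:truncated_sum_approx}.

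Finally, I would plug the parameters into the error bound of Corollary~\ref{lem:truncated_sum_approx}: the total relative error is $O(c_1\epsilon) + O(\gamma/\alpha) + O(\epsilon') = O(c_1\epsilon) + O(c_2\epsilon) + O(c_3\epsilon)$, which is at most $\epsilon$ by choosing $c_1,c_2,c_3$ sufficiently small. The probability $9/10$ is inherited directly from Corollary~\ref{lem:truncated_sum_approx}. The only genuine obstacle is bookkeeping: ensuring the polynomial approximation on $\bar R_t$ (rather than $R_t$) still has the claimed degree and $\epsilon'$-error, since the intervals differ only by the constant-factor stretch $(1\pm\gamma)$ with $\gamma < \alpha = 1/2$ this is routine. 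Everything else is direct substitution.
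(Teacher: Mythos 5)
Your proposal is correct and follows essentially the same route as the paper's proof: apply Corollary~\ref{lem:truncated_sum_approx} to $\bv{A}^T\bv{A}$ with $f(x)=x^{p/2}$, check the hypotheses on $q_t$ via Corollary~\ref{cor:poly}, identify $\lambda_f(c_1\epsilon)$ with the stated $\lambda$, and plug the parameters into the error bound. Your extra remark about needing the multiplicative guarantee on the slightly enlarged interval $\bar R_t$ rather than $R_t$ (handled by a constant-factor stretch of $[a,b]$) is a point the paper glosses over, and your fix is the right one.
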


\begin{theorem}[Schatten Norm Polynomial Approximation Algorithm]\label{thm:polyAlgo} For any $\bv{A} \in \R^{n \times d}$, $p > 0$ and $k \in [d]$ there is an algorithm returning $X \in (1 \pm \epsilon) \norm{\bv{A}}_p^p$ that runs in time:
\begin{align*}
\tilde O \left (\nnz(\bv{A})k + dk^{\omega-1} + \frac{\max \{p, 1/p^3 \}}{\epsilon^{\max \{3,1+1/p\}}} \cdot \left [\nnz(\bv{A}) + \sqrt{\nnz(\bv{A}) [d\cdot d_s(\bv{A}) + dk ]}\cdot (d/k)^{\max\{0,1/p-1/2\}}\right ] \right )
\\\text{ or } \tilde O \left (\nnz(\bv{A})k + dk^{\omega-1}  + \frac{\max\{p, 1/p^3 \}}{\epsilon^{\max \{3,1+1/p\}}} \cdot \left [(\nnz(\bv{A})+dk) (d/k)^{1/p} \right ] \right )
\end{align*} time for sparse $\bv{A}$ or $$ \tilde O \left (nd^{\omega(\log_d k)-1} + \frac{\max\{p, 1/p^3 \}}{\epsilon^{\max \{3,1+1/p\}}} \cdot \left [ nd + \frac{n^{1/2}d^{1 + 1/p}}{k^{\max\{0,1/p-1/2\}}}  \right ] \right )$$ for dense $\bv{A}$.
\end{theorem}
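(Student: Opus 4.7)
The plan is to estimate the quantity $\sum_{t=0}^T \Sum_{q_t}(h^\gamma_{R_t}(\bv{A}^T\bv{A})\bv{A}^T\bv{A})$ from Lemma~\ref{lem:polyP}, which is itself a $(1 \pm \epsilon)$ approximation to $\norm{\bv{A}}_p^p$. First I would compute a $2$-approximation $M$ to $\norm{\bv{A}}_2$ via the Lanczos method in $\tilde O(\nnz(\bv{A}))$ time and replace $\bv{A}$ with $\bv{A}/M$, so the hypothesis $\norm{\bv{A}}_2 \le 1$ of Lemma~\ref{lem:polyP} is met. I would set $\alpha = 1/2$, $\gamma = \Theta(\epsilon)$, and $\lambda = \Theta((\epsilon/d)^{2/p}\norm{\bv{A}}_p^2)$; since $\norm{\bv{A}}_p$ is unknown, I would use $k\sigma_k^p(\bv{A})$ as a rough lower bound, computed as a byproduct of the rank-$k$ deflation preconditioner of Theorem~\ref{accPreCondSvrgMainBody}, together with binary search over $\lambda$ exactly as in the remark after Corollary~\ref{cor:hist_schatten}. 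This fixes $T = O((1/p)\log(d/\epsilon))$ windows and polynomial degree $q = O(\log(1/\epsilon) + p)$ per window, the latter via Corollary~\ref{cor:poly}.

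For each window $t$, I would estimate $\tr(q_t(M_t))$, where $M_t \eqdef h^\gamma_{R_t}(\bv{A}^T\bv{A}) \bv{A}^T\bv{A}$, via $S = \tilde O(\epsilon^{-2})$ Hutchinson trials $\bv{y}_i^\top q_t(M_t) \bv{y}_i$ with random sign vectors; each is evaluated via Horner's method on $q_t$, requiring $q$ applications of $M_t$. Each application of $M_t$ consists of one invocation of $\algW$ from Theorem~\ref{thm:window} with regularization $a_{t+1}$ plus one multiplication by $\bv{A}^T\bv{A}$. The $k$-deflation preconditioner used inside $\algW$ is precomputed once at the stated $\tilde O(\nnz(\bv{A})k + dk^{\omega-1})$ cost (respectively $\tilde O(nd^{\omega(\log_d k)-1})$ in the dense case), and reused across all windows and samples. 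The critical observation in the runtime analysis is that summing Theorem~\ref{thm:window}'s regression cost $\sqrt{\nnz(\bv{A})(d\cdot d_s(\bv{A}) + dk)\bar\kappa(a_{t+1})}$ over $t = 0,\ldots,T$ collapses to $O(\sqrt{\nnz(\bv{A})(d\cdot d_s(\bv{A}) + dk)\bar\kappa(\lambda)})$ times an $O(1)$ geometric-series factor, because $\bar\kappa(a_{t+1})$ grows geometrically in $t$ with ratio $(1-\alpha)^{-1} = 2$.

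To bound $\bar\kappa(\lambda)$ I would use $\sigma_k^p(\bv{A}) \le \norm{\bv{A}}_p^p/k$ together with, for $p \le 2$, the H\"older-style inequality $\sum_{i>k}\sigma_i^2 \le \sigma_{k+1}^{2-p}\sum_i \sigma_i^p \le \sigma_{k+1}^{2-p}\norm{\bv{A}}_p^p$; plugging these into the definition of $\bar\kappa$ yields $\sqrt{\bar\kappa(\lambda)} = O((d/k)^{\max\{0,1/p-1/2\}}/\epsilon^{1/p})$. The same algorithmic template applied with Theorem~\ref{preCondIterMainBody} in place of Theorem~\ref{accPreCondSvrgMainBody} gives $\sqrt{\hat\kappa(\lambda)} = O((d/k)^{1/p}/\epsilon^{1/p})$ and the second runtime, and using the dense preconditioner yields the third. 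The main obstacle is cleanly consolidating the multiplicative factors $T$, $S$, $q$, $1/\gamma$, and the accumulated $\sqrt{\bar\kappa(\lambda)}$ into the combined coefficient $\max\{p, 1/p^3\}/\epsilon^{\max\{3, 1+1/p\}}$: one uses that $Tq = \tilde O(\max\{p,1\})$ for $p \ge 1$ (producing the $p$ factor) and $Tq = \tilde O(1/p)$ for $p \le 1$ (with the remaining $1/p^2$ absorbing logarithmic factors and trace-estimation variance), while the $\epsilon^{-1/p}$ from $\sqrt{\bar\kappa(\lambda)}$ is dominated by $S/\gamma = \tilde O(\epsilon^{-3})$ exactly when $p \ge 1/2$, matching the piecewise exponent. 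Lemma~\ref{lem:polyP}'s $9/10$ success probability and Hutchinson concentration can be boosted to high probability via $O(\log n)$ independent repetitions and the median trick without affecting the asymptotic runtime.
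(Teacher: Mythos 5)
Your overall skeleton (rescaling by the spectral norm, $\alpha = 1/2$, degree-$\tilde O(1)+p$ polynomials from Corollary~\ref{cor:poly}, reusing one rank-$k$ deflation preconditioner across all windows, lower-bounding $\norm{\bv{A}}_p^p$ by $k\sigma_k^p(\bv{A})$ with binary search over $\lambda$, and the $\bar\kappa$, $\hat\kappa$ bounds via $\sigma_k^p(\bv{A}) \le \norm{\bv{A}}_p^p/k$ and the H\"older-type estimate) matches the paper. But the runtime accounting has a genuine gap: with a \emph{uniform} $S = \tilde O(\epsilon^{-2})$ Hutchinson samples per window at uniform accuracy, the per-application cost of the deepest window carries the factor $\sqrt{\bar\kappa(\lambda)} = \Theta(\epsilon^{-1/p}(d/k)^{\max\{0,1/p-1/2\}})$, and this \emph{multiplies} the $S/\gamma = \tilde O(\epsilon^{-3})$ factor rather than being "dominated" by it. Your route therefore yields an $\epsilon^{-(3+1/p)}$ dependence on the $\sqrt{\nnz(\bv{A})[d\cdot d_s(\bv{A})+dk]}$ term (e.g.\ $\epsilon^{-4}$ at $p=1$), which falls short of the claimed $\epsilon^{-\max\{3,\,1+1/p\}}$; the geometric collapse of $\sum_t \sqrt{\bar\kappa(a_t)}$ that you invoke is correct but cannot repair this, since it already concentrates the cost at the deepest window.

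The missing idea is the paper's window-dependent accuracy relaxation. Writing $a_t = \delta_t^{2/p} d^{-2/p}\norm{\bv{A}}_p^2$, the entire bucket at level $a_t$ can contribute at most roughly a $\min\{1,\delta_t\}$ fraction of $\norm{\bv{A}}_p^p$ (even if all $d$ singular values sat there), so its trace need only be estimated to relative error $\tilde O\bigl(\epsilon/(\min\{1,\delta_t\}\log(1/\lambda))\bigr)$, i.e.\ with $\sim \min\{1,\delta_t\}^2/\epsilon^2$ samples instead of $1/\epsilon^2$. The factor $\delta_t^2$ in the sample count cancels against the $\delta_t^{-1/p}$ in $\sqrt{\bar\kappa_t}$, and the resulting sum $\sum_t \delta_t^{2-1/p}$ is a geometric series dominated by $\delta_t \approx 1$ when $p > 1/2$ (giving exponent $3$) and by $\delta_t \approx \epsilon$ when $p < 1/2$ (giving exponent $1+1/p$); this is precisely where both $\epsilon^{-\max\{3,1+1/p\}}$ and the $\max\{p, 1/p^3\}$ coefficient (as $\max\{p,1/p^2\}\cdot\log(1/\lambda)$) come from. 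Your treatment of the $p$-dependence ("the remaining $1/p^2$ absorbing logarithmic factors") is likewise not a proof; once you adopt the $\delta_t$-scaled sampling, that bookkeeping falls out of the same geometric sum.
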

\begin{proof}
We apply Lemma \ref{lem:polyP}, first scaling $\bv{A}$ so that $\norm{\bv{A}}_2 \le 1$. We apply a trace estimator to approximate $\Sum_{q_t}(h^{\gamma}_{R_t}(\bv{A}^T\bv{A})\bv{A}^T\bv{A}) = \tr(q_t(h^{\gamma}_{R_t}(\bv{A}^T\bv{A})\bv{A}^T\bv{A}))$ up to $(1 \pm \epsilon)$ multiplicative error plus very small additive error for each of our $O(\log 1/\lambda) = \tilde O(\max\{1,1/p\})$ windows. The trace estimation requires $\tilde O \left ( \frac{\lceil p \rceil }{\epsilon^2} \right)$ applications of $h^{\gamma}_{R_t}(\bv{A}^T\bv{A})\bv{A}^T\bv{A}$, since the degree of $q_t$ as given in Corollary \ref{cor:poly} is $\tilde O(1) + p$.

The cost to apply $h^{\gamma}_{R_t}(\bv{A}^T\bv{A})\bv{A}^T\bv{A}$ to a vector is given by Theorem \ref{thm:window}. 
Following the argument in Corollary \ref{cor:hist_schatten}, if we write $a_t = \frac{\delta_t^{2/p}}{d^{2/p}} \cdot\norm{\bv{A}}_p^2$ for some $\delta_t \ge \epsilon$,
\begin{align*}
\bar \kappa_t = \frac{k\sigma^2_k(\bv{A}) + \sum_{k=1}^d \sigma_i^2(\bv{A})}{d a_t} \le \frac{1}{\delta_t^{2/p}} \left (\frac{d}{k}\right)^{2/p-1}.
\end{align*}
Similarly, $\hat \kappa_t = \frac{\sigma_{k+1}^2(\bv{A})}{a_t} \le \frac{1}{\delta_t^{2/p}} \left (\frac{d}{k}\right )^{2/p}$.
We can further optimize our $\epsilon$ dependence by noting that it suffices to approximate $\tr(q_t(h^{\gamma}_{R_t}(\bv{A}^T\bv{A})\bv{A}^T\bv{A}))$ up to multiplicative error $1 \pm \tilde O \left (\frac{\epsilon}{\min \{1,\delta_t\}\log(1/\lambda)}  \right )$ since even if there are $d$ singular values below $a_{t-1}$, they will contribute at most a $\delta_t$ fraction of $\norm{\bv{A}}_p^p$. So our total cost of approximating the trace for each of our windows using the first runtime of Theorem \ref{thm:window} is:
\begin{align*}
\tilde O \left (\sum_{t=0}^{\lceil \log 1/\lambda \rceil} \frac{\min \{1,\delta_t \}^2 \cdot \lceil p \rceil (\log 1/\lambda)^2}{\epsilon^2} \cdot \frac{\nnz(\bv{A}) + \sqrt{\nnz(\bv{A}) [d\cdot d_s(\bv{A}) + dk ]}\cdot (d/k)^{1/p-1/2}}{\epsilon \cdot \delta_t^{1/p}} \right )
\end{align*}
Factoring out the $\nnz(\bv{A}) + \sqrt{\nnz(\bv{A}) [d\cdot d_s(\bv{A}) + dk ]}\cdot (d/k)^{1/p-1/2}$ term, we have:
\begin{align*}
\tilde O &\left (\sum_{t=0}^{\lceil \log 1/\lambda \rceil} \frac{\min \{1,\delta_t \}^2 \cdot \max \{p,1/p^2 \} }{\epsilon^3 \delta_t^{1/p}} \right ) = \tilde O \left (\frac{\max \{p,1/p^2 \}}{\epsilon^3} \cdot \left [\sum_{\{t: \delta_t < 1\}} \frac{1}{\delta_t^{1/p-2}} + \sum_{\{t: \delta_t \ge 1\}} \frac{1}{\delta_t^{1/p}} \right ]  \right )\\
&=\tilde O \left (\frac{\max \{p,1/p^2 \}}{\epsilon^3} \cdot \left [ \frac{1}{\epsilon^{1/p-2}} \left (1 + \frac{1}{2^{1/p-2}} + \frac{1}{4^{1/p-2}} + ... + \epsilon^{1/p-2}\right)  + \log(1/\lambda) \right ]  \right ).
\end{align*}
Note that if $p > 2$, $1/p-2 < 0$ the terms in the geometric sum are increasing so the largest term is $\epsilon^{1/p-2}$ and so the whole thing is dominated by $\tilde O(\frac{ p \log(1/\lambda)}{\epsilon^3}) = \tilde O(\frac{p}{\epsilon^3})$. If $p < 2$, then the largest term in the geometric sum is $1$ as so similarly the whole thing is $\tilde O(\frac{1/p^2 \log(1/\lambda)}{\epsilon^{\max \{3,1+1/p\}}}) = \tilde O(\frac{ 1/p^3}{\epsilon^{\max \{3,1+1/p\}}})$.


A similar argument gives the runtimes for standard iterative and dense methods.
Adding in the cost for deflation-based preconditioning gives the final runtimes. Note that while $\lambda$ depends on $\norm{\bv{A}}_p^p$ which is unknown, it again suffices to lower bound the truncation point using $\sigma_k^p(\bv{A}) \le \frac{1}{k} \norm{\bv{A}}_p^p$. This lower bound gives only a better approximation and our bounds on $\bar \kappa_t$ and $\hat \kappa_t$ still hold.
\end{proof}

\section{Optimized Runtime Results}\label{sec:upper}

The runtimes given in Theorem \ref{thm:polyAlgo} are quite complex, with many tradeoffs depending on the properties of $\bv{A}$. In this section, we instantiate the theorem showing how to optimize $k$ and achieve our final runtime bounds. For simplicity, we consider the case when $n = d$. 

\subsection{Schatten $p$-Norms}
\label{sec:schatten-p-norms}
We first tackle the relatively simply case when $\bv{A}$ is dense.
\begin{theorem}[Schatten $p$-Norms for Dense Matrices]\label{finalthm:dense}
For any $p \ge 0$ and $\bv{A} \in \mathbb{R}^{n \times n}$ there is an algorithm returning $X \in (1\pm\epsilon)\norm{\bv{A}}_p^p$ which runs in $\tilde O(p \cdot n^2/\epsilon^3)$ time for $p \ge 2$ and 
\begin{align*}
\tilde O \left (\frac{1}{p^3 \cdot\epsilon^{\max \{3,1+1/p\}}} \cdot  n^{\frac{2.3729 - .0994p}{1 + .0435p}} \right )
\end{align*}
time for $p < 2$. If we do not use fast matrix multiplication the runtime is $\tilde O \left (\frac{1}{p^3 \cdot\epsilon^{\max \{3,1+1/p\}}} \cdot  n^{\frac{3 + p/2}{1 + p/2}} \right )$.
\end{theorem}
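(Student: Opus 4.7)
The plan is to instantiate Theorem~\ref{thm:polyAlgo} in the dense case, specializing to $n = d$, so that the runtime becomes
\[
\tilde O \left (n^{\omega(\log_n k)} + \frac{\max\{p, 1/p^3 \}}{\epsilon^{\max \{3,1+1/p\}}} \cdot \left [ n^2 + \frac{n^{3/2 + 1/p}}{k^{\max\{0,1/p-1/2\}}}  \right ] \right ),
\]
and then to choose the deflation parameter $k$ to balance the precomputation cost $n^{\omega(\log_n k)}$ against the iterative cost. The exponent $\max\{0, 1/p - 1/2\}$ splits the analysis cleanly into two regimes.

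In the easy regime $p \ge 2$, $\max\{0, 1/p - 1/2\} = 0$, so deflation offers no help on the iterative cost, and $n^{3/2 + 1/p} \le n^2$ absorbs the second term in the brackets. Taking $k = O(1)$ makes the precomputation cost $\tilde O(n^{\omega(0)}) = \tilde O(n^2)$, and since $\max\{p, 1/p^3\} = p$ here, the whole runtime collapses to $\tilde O(p n^2 / \epsilon^3)$, giving the claim.

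For $p < 2$, set $\gamma \eqdef \log_n k$, so that (ignoring the subdominant $n^2$ term and the uniform prefactor $\max\{p,1/p^3\}/\epsilon^{\max\{3,1+1/p\}} = 1/(p^3 \epsilon^{1+1/p})$) we need to balance $n^{\omega(\gamma)}$ against $n^{3/2 + 1/p - \gamma(1/p - 1/2)}$. Without fast matrix multiplication $\omega(\gamma) = 2 + \gamma$, and setting $2 + \gamma = 3/2 + 1/p - \gamma(1/p - 1/2)$ yields $\gamma^{\ast} = (1/p - 1/2)/(1/p + 1/2)$ and final exponent $2 + \gamma^{\ast} = (3 + p/2)/(1 + p/2)$ after clearing the compound fraction by $p$. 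With fast matrix multiplication I would use the bound of \cite{le2012faster,gall2017improved}: for $\gamma \ge \alpha$ with $\alpha > 0.31389$, $\omega(\gamma) = 2 + (\omega - 2)\tfrac{\gamma - \alpha}{1 - \alpha}$. Plugging in $\omega \approx 2.3729$ gives $\omega(\gamma) \approx 1.8294 + 0.5435\, \gamma$, and balancing against the iterative exponent yields $\gamma^{\ast} = (1/p - 0.3293)/(1/p + 0.0435)$. Substituting $\gamma^{\ast}$ back and clearing fractions produces the claimed exponent $(2.3729 - 0.0994\, p)/(1 + 0.0435\, p)$.

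The main obstacle is verifying that the FMM-case optimum $\gamma^{\ast}$ actually lies in the linear-interpolation regime, i.e.\ $\gamma^{\ast} \ge \alpha$ for every $p < 2$, so that the formula $\omega(\gamma) = 2 + (\omega-2)(\gamma - \alpha)/(1-\alpha)$ is in force. A direct check suffices: $\gamma^{\ast}$ is monotonically decreasing in $p$, and at the boundary $p = 2$ one computes $\gamma^{\ast} \approx 0.3141$, which exceeds $\alpha > 0.31389$ by a hair and only grows as $p$ shrinks, confirming the regime. A secondary bookkeeping task is to track the $\max\{p, 1/p^3\}$ prefactor and the $\epsilon^{-(1+1/p)}$ dependence cleanly through the instantiation; these are inherited verbatim from Theorem~\ref{thm:polyAlgo} and require no further optimization, and at $p = 2$ the resulting runtime agrees with the $p \ge 2$ bound (exponent $\to 2$), providing a useful consistency check.
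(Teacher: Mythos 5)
Your proposal is correct and follows essentially the same route as the paper: instantiate Theorem~\ref{thm:polyAlgo} with $n=d$, take $k=0$ (equivalently $k=O(1)$) for $p\ge 2$, and for $p<2$ balance $n^{\omega(\gamma)}$ against $n^{3/2+1/p-\gamma(1/p-1/2)}$ with $\gamma=\log_n k$, using $\omega(\gamma)=2+(\omega-2)\frac{\gamma-\alpha}{1-\alpha}$ (the paper carries out this identical balancing in Corollary~\ref{cor:hist_schatten} and cites it). Your explicit verification that $\gamma^{\ast}\ge\alpha$ for all $p<2$ is the same regime check the paper asserts, so nothing further is needed.
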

Note that for the important case of $p = 1$ our runtime is $\tilde O(n^{2.18}/\epsilon^3)$ or $\tilde O(n^{2.33}/\epsilon^3)$ if we do not use fast matrix multiplication. As $p$ approaches $0$, our runtime approaches $\tilde O(n^\omega)$ which is natural, as $p=0$ gives the matrix rank, which seems difficult to determine. As $p$ approaches $2$ the runtime smoothly approaches $\tilde O(n^2/\epsilon^3)$, which is then required for all $p \ge 2$. 
\begin{proof}
The bound for $p \ge 2$ follows immediately from Theorem \ref{thm:polyAlgo} with $k$ set to $0$. For $p < 2$ we have runtime:
\begin{align*}
\tilde O \left (n^{\omega(\log_n k)} + \frac{1}{p^3 \cdot\epsilon^{\max \{3,1+1/p\}}} \cdot \left [ \frac{n^{3/2+1/p}}{k^{1/p-1/2}} \right ] \right )
\end{align*}
We can partially optimize this term, ignoring $p$ and $\epsilon$ factors for simplicity and setting $k$ to equalize the coefficients on $n$. Our optimization is identical to the argument in Corollary \ref{cor:hist_schatten} 
(set $k = n^{\frac{1/p-1/2}{1/p+1/2}}$ when not using fast matrix multiplication), giving the stated runtimes.
\end{proof}

We now tackle the more complex case when $\bv{A}$ is sparse. We first consider $p \ge 2$.

\begin{theorem}[Schatten $p$-norms, $p \ge 2$, for Sparse Matrices]\label{thm:largep}
For any $p \ge 2$, and $\bv{A} \in \R^{n \times n}$ there is an algorithm returning $X \in (1\pm \epsilon) \norm{\bv{A}}_p^p$ with high probability in time: 
\begin{align*}
\tilde O \left (\frac{p}{\epsilon^3} \cdot \sqrt{\nnz(\bv{A}) n d_s(\bv{A})} \right ) \text{ or } \tilde O \left (\frac{p}{\epsilon^3} \left [\nnz(\bv{A}) n^{\frac{1}{1+p}} + n^{1+\frac{2}{1+p}} \right ] \right)
\end{align*}
\end{theorem}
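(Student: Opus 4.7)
The plan is to directly instantiate Theorem~\ref{thm:polyAlgo} with $p \ge 2$ and optimize the parameter $k$. For $p \ge 2$ we have $\max\{p, 1/p^3\} = p$, $\max\{3, 1 + 1/p\} = 3$, and $\max\{0, 1/p - 1/2\} = 0$, so the two sparse runtimes in Theorem~\ref{thm:polyAlgo} simplify to
\[
\tilde O\!\left(\nnz(\bv{A}) k + n k^{\omega-1} + \tfrac{p}{\epsilon^3}\!\left[\nnz(\bv{A}) + \sqrt{\nnz(\bv{A})\,[n \cdot d_s(\bv{A}) + n k]}\,\right]\right)
\]
and
\[
\tilde O\!\left(\nnz(\bv{A}) k + n k^{\omega-1} + \tfrac{p}{\epsilon^3}\,(\nnz(\bv{A}) + n k)\,(n/k)^{1/p}\right).
\]

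For the first bound in the theorem, I would set $k = 0$ in the SVRG-based runtime, which eliminates the deflation precomputation and the $n k$ term inside the square root, leaving $\tilde O\!\left(\tfrac{p}{\epsilon^3}(\nnz(\bv{A}) + \sqrt{\nnz(\bv{A})\, n\, d_s(\bv{A})}\,)\right)$. Since $\nnz(\bv{A}) \le n \cdot d_s(\bv{A})$, we have $\nnz(\bv{A}) \le \sqrt{\nnz(\bv{A})\, n\, d_s(\bv{A})}$, so the $\nnz(\bv{A})$ term is absorbed and the first bound follows immediately.

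For the second bound, I would use the iterative-solver runtime and choose $k$ to balance the deflation cost $\nnz(\bv{A}) k$ against the regression cost $\nnz(\bv{A}) (n/k)^{1/p}$. Setting $k^{1 + 1/p} = n^{1/p}$, i.e.\ $k = n^{1/(1+p)}$, gives $\nnz(\bv{A}) k = \nnz(\bv{A}) n^{1/(1+p)}$. The remaining regression cost becomes $n k \cdot (n/k)^{1/p} = n^{1+1/p} k^{1 - 1/p}$, and using the exponent identity
\[
1 + \tfrac{1}{p} + \tfrac{(1 - 1/p)}{1+p} \;=\; 1 + \tfrac{(p+1) + (p-1)}{p(1+p)} \;=\; 1 + \tfrac{2}{1+p},
\]
this term is $n^{1 + 2/(1+p)}$, matching the claim. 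Finally, the precomputation term $n k^{\omega-1} = n^{1 + (\omega-1)/(1+p)}$ is dominated by $n^{1 + 2/(1+p)}$ whenever $\omega - 1 \le 2$, which holds since $\omega < 3$.

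The calculation is essentially bookkeeping; the only place where a little care is needed is verifying the exponent identity above and confirming that $k = n^{1/(1+p)}$ is indeed the balance-point minimizer rather than a regime in which a different term dominates. This can be checked by observing that $\nnz(\bv{A}) k$ is increasing in $k$ while $\nnz(\bv{A}) (n/k)^{1/p}$ is decreasing, so their balance point is the global optimum for the sum of these two dominant contributions, and the third term $n k^{\omega-1}$ is then dominated as noted.
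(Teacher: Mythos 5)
Your proposal is correct and follows essentially the same route as the paper: the first bound comes from setting $k=0$ in the first (SVRG-based) runtime of Theorem~\ref{thm:polyAlgo} and absorbing $\nnz(\bv{A})$ into $\sqrt{\nnz(\bv{A})\,n\,d_s(\bv{A})}$, and the second comes from balancing $\nnz(\bv{A})k$ against $\nnz(\bv{A})(n/k)^{1/p}$ in the second runtime via $k=n^{1/(1+p)}$, with the $nk^{\omega-1}$ term dominated exactly as you argue. Your exponent bookkeeping checks out, so there is nothing to add.
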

Note that if our matrix is uniformly sparse, $d_s(\bv{A}) = O(\nnz(\bv{A})/n)$ and so the first runtime becomes $\tilde O(p \cdot \nnz(\bv{A})/\epsilon^{3})$. The second runtime can be better when $\bv{A}$ is not uniformly sparse.
\begin{proof}
The first runtime follows by setting $k = 0$ in Theorem \ref{thm:polyAlgo}. Note that $\nnz(\bv{A}) \le \sqrt{\nnz(\bv{A}) n d_s(\bv{A})}$. For the second, we use the second runtime of Theorem \ref{thm:polyAlgo} which gives
\begin{align*}
\tilde O \left (\nnz(\bv{A})k + nk^{\omega-1}  + \frac{p}{\epsilon^3} \cdot \left [(\nnz(\bv{A})+nk) (n/k)^{1/p} \right ] \right )
\end{align*}
Setting $k = \left (\frac{n}{k}\right )^{1/p}$ to balance the coefficients on the $\nnz(\bv{A})$ terms gives $k = n^{\frac{1}{1+p}}$ which yields the result. Note that the $nk^{\omega-1}$ term is dominated by the cost of the regressions, even if we do not use fast matrix multiplication this term will be $nk^2$.
\end{proof}

Finally, we consider the most complex cost, $p \le 2$ for sparse matrices. We have:
\begin{theorem}[Schatten $p$-norms, $p \le 2$, for Sparse Matrices]\label{thm:smallp}
For any $p \in (0,2]$, and $\bv{A} \in \R^{n \times n}$ there is an algorithm returning $X \in (1\pm \epsilon) \norm{\bv{A}}_p^p$ with high probability in time: 
\begin{align*}
\tilde O \left (\frac{1}{p^3 \cdot \epsilon^{\max \{3,1/p\}}} \cdot \left [\nnz(\bv{A}) n^{\frac{1/p-1/2}{1/p+1/2}} \sqrt{\gamma_s} + \sqrt{\nnz(\bv{A})} \cdot n^{\frac{4/p-1}{2/p+1}} \right ]\right )\\ \text{ or } \tilde O \left (\frac{1}{p^3 \cdot \epsilon^{\max \{3,1/p\}}} \left [\nnz(\bv{A}) n^{\frac{1}{1+p}} + n^{1+\frac{2}{1+p}} \right ] \right)
\end{align*}
where $\gamma_s = \frac{d_s(\bv{A})n}{\nnz(\bv{A})} \ge 1$.
\end{theorem}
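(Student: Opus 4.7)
The plan is to invoke Theorem~\ref{thm:polyAlgo} with the deflation parameter $k$ optimized for the sparse, $p \leq 2$ regime, following the balancing strategy already used in Theorem~\ref{finalthm:dense} and Corollary~\ref{cor:hist_schatten}. The two runtimes in the claim correspond to the two ridge-regression subroutines inside Theorem~\ref{thm:polyAlgo}: the first comes from accelerated preconditioned SVRG, the second from the traditional preconditioned iterative method.

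For the first runtime I would start from the SVRG variant of Theorem~\ref{thm:polyAlgo}, apply $\sqrt{x+y}\leq \sqrt{x}+\sqrt{y}$ to split $\sqrt{\nnz(\bv{A})\,n[d_s(\bv{A})+k]}$ into $\sqrt{\nnz(\bv{A})\,n\,d_s(\bv{A})}+\sqrt{\nnz(\bv{A})\,n\,k}$, and set $k = n^{(1/p-1/2)/(1/p+1/2)}$. A short exponent computation shows $(n/k)^{1/p-1/2}=k$ under this choice, so $\nnz(\bv{A})\,k$ and $\sqrt{\nnz(\bv{A})\,n\,d_s(\bv{A})}\cdot(n/k)^{1/p-1/2}$ both reduce to $\nnz(\bv{A})\,n^{(1/p-1/2)/(1/p+1/2)}\sqrt{\gamma_s}$ (using $\sqrt{n d_s(\bv{A})/\nnz(\bv{A})}=\sqrt{\gamma_s}$), matching the first summand. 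Substituting $k$ into the remaining $\sqrt{\nnz(\bv{A})\,n\,k}\cdot(n/k)^{1/p-1/2}$ yields exponent $1/2 + (3/2)(1/p-1/2)/(1/p+1/2) = (4/p-1)/(2/p+1)$ on $n$, producing the second summand.

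For the second runtime I would use the traditional iterative variant of Theorem~\ref{thm:polyAlgo} and balance $\nnz(\bv{A})\,k$ against $\nnz(\bv{A})(n/k)^{1/p}$ by setting $k = n^{1/(p+1)}$. Both terms collapse to $\nnz(\bv{A})\,n^{1/(p+1)}$, while $nk(n/k)^{1/p}$ simplifies to $n^{1+2/(p+1)}$, matching the claim. The $\epsilon, p$ prefactors in both bounds are inherited directly from Theorem~\ref{thm:polyAlgo}, using $\max\{p,1/p^3\}\leq 1/p^3$ for $p\leq 1$ and treating $\max\{p,1/p^3\}$ as $O(1)$ for $p\in[1,2]$.

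The main nonroutine check will be that the $nk^{\omega-1}$ precomputation cost from the deflation preconditioner is absorbed under both choices of $k$. For the second choice this is immediate: $nk^{\omega-1} = n^{1+(\omega-1)/(p+1)} < n^{1+2/(p+1)}$ because $\omega < 3$. For the first choice, using $\sqrt{\nnz(\bv{A})}\geq \sqrt{n}$, it suffices to check $(\omega-1)(1/p-1/2)/(1/p+1/2) \leq (4/p-1)/(2/p+1) - 1/2$, which after clearing denominators reduces to $(\omega-1)(2/p-1)\leq (3/2)(2/p-1)$, i.e.\ $\omega \leq 5/2$. Since $\omega < 2.5$, this holds throughout $p \in (0,2]$, so $nk^{\omega-1}$ is always absorbed into the second summand and the claimed runtimes follow.
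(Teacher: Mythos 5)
Your proposal is correct and follows essentially the same route as the paper: both bounds come from instantiating Theorem \ref{thm:polyAlgo} with $k = n^{\frac{1/p-1/2}{1/p+1/2}}$ (first runtime) and $k = n^{\frac{1}{1+p}}$ (second runtime), with the same exponent arithmetic, the identity $\sqrt{\nnz(\bv{A})\, n\, d_s(\bv{A})} = \nnz(\bv{A})\sqrt{\gamma_s}$, and a check that $nk^{\omega-1}$ is absorbed (the paper absorbs it by noting $k^{\omega-1}\le k^{3/2}$, which is equivalent to your $\omega\le 5/2$ check).
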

Note that in the special case of $p=1$ the first runtime gives $\tilde O \left (\frac{1}{\epsilon^3} \left [\nnz(\bv{A})n^{1/3} \sqrt{\gamma_s} + \sqrt{\nnz(\bv{A})} n\right ] \right )$. The second term here is at worst $n^2$, and could be much smaller for sparse $\bv{A}$.
\begin{proof}
For the second runtime, we use the second runtime of Theorem \ref{thm:polyAlgo}, balancing costs exactly as in Theorem \ref{thm:largep} (setting  $k = n^{\frac{1}{1+p}}$). For the first runtime,
 we consider the first runtime of Theorem~\ref{thm:polyAlgo} which gives:
\begin{align*}
\tilde O \left (\nnz(\bv{A})k + nk^{\omega-1} + \frac{1}{p^3 \cdot \epsilon^{\max \{3,1/p\}}}\cdot \left [\nnz(\bv{A}) + \sqrt{\nnz(\bv{A}) [n\cdot d_s(\bv{A}) + nk ]}\cdot (n/k)^{1/p-1/2}\right ] \right )\\
\tilde O \left (\frac{n d_s(\bv{A})k}{\gamma_s} + nk^{\omega-1} + \frac{1}{p^3 \cdot \epsilon^{\max \{3,1/p\}}}\cdot \left [\frac{nd_s(\bv{A}) + n\sqrt{d_s(\bv{A})}\sqrt{k}}{\sqrt{\gamma_s}} \right ]\cdot \left (\frac{n}{k} \right )^{1/p-1/2}\right )
\end{align*}
Ignoring $\epsilon$, $p$, and $\gamma_s$ dependence, one way we can balance the costs is by setting:
\begin{align*}
nd_s(\bv{A}) k = nd_s(\bv{A}) \left (\frac{n}{k} \right )^{1/p-1/2}
\end{align*}
which gives $k = n^{\frac{1/p-1/2}{1/p+1/2}}$ and final runtime:
\begin{align*}
\tilde O \left (\frac{1}{p^3 \cdot \epsilon^{\max \{3,1/p\}}} \cdot \left [\nnz(\bv{A}) n^{\frac{1/p-1/2}{1/p+1/2}} \sqrt{\gamma_s} + n\sqrt{d_s(\bv{A})/\gamma_s} \cdot k^{3/2} \right ]\right ) = \\
\tilde O \left (\frac{1}{p^3 \cdot \epsilon^{\max \{3,1/p\}}}\cdot \left [\nnz(\bv{A}) n^{\frac{1/p-1/2}{1/p+1/2}} \sqrt{\gamma_s} + n\sqrt{d_s(\bv{A})/\gamma_s} \cdot n^{\frac{3/p-3/2}{2/p+1}} \right ]\right ).
\end{align*}
Note that we drop the $nk^{\omega-1}$ term as it is dominated by the last term with $k^{3/2}$ in it. This gives our final result by noting that $d_s(\bv{A})/\gamma_s = \nnz(\bv{A})/n$, the average row sparsity.
\end{proof} 

\subsection{Constant Factor Approximation without Row-Sparsity Dependence}
\begin{theorem}[Removing the Uniform Sparsity Constraint]\label{uniformSparsity}
  Let $\gamma \in (0,1)$, $p \geq 1$, and $\bv{A} \in \mathbb{R}^{n \times n}$. There is an algorithm returning
  $X$ with $\|\bv{A}\|_p \leq X = O(1/\gamma) \|\bv{A}\|_p$ with high probability in time
    $\tilde{O} (p\nnz(\bv{A}) n^{\gamma})$ for $p \geq 2$, and in time 
    $\tilde O \left (\frac{1}{p^3} \left [\nnz(\bv{A}) n^{\frac{1/p-1/2}{1/p+1/2} + \gamma/2} + \sqrt{\nnz(\bv{A})} \cdot n^{\frac{4/p-1}{2/p+1}}\right ] \right)$ for $p < 2$. 
  \end{theorem}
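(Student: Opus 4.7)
The target bound differs from the bounds of Theorems~\ref{thm:largep} and~\ref{thm:smallp} only in replacing the ``worst-case'' row-sparsity factor ($\sqrt{n d_s(\bv{A})}$ for $p \ge 2$; $\sqrt{\gamma_s}$ for $p < 2$) by an $n^\gamma$-type factor, while weakening the approximation from $(1 \pm \epsilon)$ to $O(1/\gamma)$. My plan is to combine two ideas: (i) choose the histogram cutoff $\lambda$ in Theorem~\ref{thm:polyAlgo} much more aggressively than in Corollary~\ref{cor:hist_schatten}, so that the effective average condition number $\bar\kappa$ of the ridge-regression step becomes $O(1)$, and (ii) peel off the handful of heavy rows of $\bv{A}$, treating them as a low-rank correction so that the main ridge-regression work is performed on a residual matrix with uniformly bounded row sparsity.

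For $p \ge 2$, I would set $\lambda = (\gamma / n)^{2/p}\|\bv{A}\|_p^2$, ensuring that singular values $\sigma \le \sqrt\lambda$ contribute at most a $\gamma$ fraction of $\|\bv{A}\|_p^p$. Rescaling the captured histogram sum by $1/\gamma$ then outputs $X$ with $\|\bv{A}\|_p \le X = O(1/\gamma) \|\bv{A}\|_p$. Under this choice of $\lambda$, the bound $\bar\kappa \le O(1/\gamma^{2/p})$ derived in Corollary~\ref{cor:hist_schatten} becomes a constant (for constant $\gamma$), so the ridge-regression contribution from Theorem~\ref{thm:polyAlgo} with $k = 0$ collapses to $\tilde O\bigl(p \sqrt{\nnz(\bv{A}) \cdot n \cdot d_s(\bv{A})}\bigr)$. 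To eliminate the surviving $d_s(\bv{A})$ factor I would then split $\bv{A}$ as $\bv{A} = \bv{A}_H + \bv{A}_{\bar H}$, where $H = \{i : s_i > n^{2\gamma - 1} \nnz(\bv{A})\}$ is the set of heavy rows, so $|H| \le n^{1-2\gamma}$ and $d_s(\bv{A}_{\bar H}) \le n^{2\gamma - 1}\nnz(\bv{A})$. The operator $\bv{A}^\top \bv{A} = \bv{A}_{\bar H}^\top \bv{A}_{\bar H} + \bv{A}_H^\top \bv{A}_H$ is a rank-$|H|$ update to $\bv{A}_{\bar H}^\top \bv{A}_{\bar H}$, so by the Woodbury identity every call to $\algW(\bv{A}^\top \bv{A}, \cdot)$ inside Theorem~\ref{thm:polyAlgo} can be implemented via a ridge solve on $\bv{A}_{\bar H}^\top \bv{A}_{\bar H} + \lambda \bv{I}$ plus operations on a $|H| \times |H|$ matrix. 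Substituting $d_s(\bv{A}_{\bar H})$ for $d_s(\bv{A})$ in the ridge cost yields $\tilde O\bigl(p\sqrt{\nnz(\bv{A}) \cdot n \cdot n^{2\gamma-1}\nnz(\bv{A})}\bigr) = \tilde O(p \nnz(\bv{A}) n^\gamma)$, and the low-rank correction contributes a lower-order term.

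For $p < 2$ the same peeling, combined with the analysis of Theorem~\ref{thm:smallp} using the modified sparsity measure $\gamma_s(\bv{A}_{\bar H}) \le n^{2\gamma}$, replaces the $\sqrt{\gamma_s}$ factor in the first bound of Theorem~\ref{thm:smallp} by $n^{\gamma/2}$; the deflation parameter $k = n^{\frac{1/p-1/2}{1/p+1/2}}$ is inherited from the proof of Theorem~\ref{thm:smallp}, and the remaining term $\sqrt{\nnz(\bv{A})}\,n^{\frac{4/p-1}{2/p+1}}$ is unchanged because it does not depend on $d_s$.

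\textbf{Main obstacle.} The key technical point is implementing the Woodbury correction without inflating the per-window ridge-regression cost. Each application of $(\bv{A}^\top \bv{A} + \lambda \bv{I})^{-1}$ reduces to a solve against $\bv{A}_{\bar H}^\top \bv{A}_{\bar H} + \lambda \bv{I}$ plus the manipulation of an $|H| \times |H|$ matrix ($|H| \le n^{1-2\gamma}$), which must be shown to fit inside the target budget of $\tilde O(p \nnz(\bv{A}) n^\gamma)$. A second subtlety is that Lemma~\ref{lem:polyP} was tuned for $(1 \pm \epsilon)$ multiplicative error; I would need to revisit Lemma~\ref{lem:infinite_window_approx} to confirm that the $\gamma$-soft window slack, polynomial approximation error, and stochastic trace estimation error can all be absorbed into an overall $O(1/\gamma)$ multiplicative slack rather than being driven to $o(1)$. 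This amounts to replacing $\epsilon$ with a constant in the estimators while tracking how the $\gamma$-dependent choice of $\lambda$ enters the polynomial degree and window gap.
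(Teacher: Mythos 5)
Your plan has a genuine gap, and it is exactly at the point you flag as the ``main obstacle.'' To exploit the smaller row sparsity of $\bv{A}_{\bar H}$ you must realize every application of $(\bv{A}^\top\bv{A}+\lambda\bv{I})^{-1}$ through the Woodbury identity, which requires (once, or repeatedly) the block $(\bv{A}_{\bar H}^\top\bv{A}_{\bar H}+\lambda\bv{I})^{-1}\bv{A}_H^\top$ and the inverse of the $|H|\times|H|$ capacitance matrix. With your threshold $|H| \le n^{1-2\gamma}$, this means $n^{1-2\gamma}$ inner ridge solves, each of which costs at least $\tilde\Omega(\nnz(\bv{A}))$ (by your own accounting $\tilde\Theta(\nnz(\bv{A})n^{\gamma})$), for a total of $\tilde\Omega(\nnz(\bv{A})n^{1-\gamma})$ --- far above the budget $\tilde O(p\,\nnz(\bv{A})n^{\gamma})$ whenever $\gamma<1/2$. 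Shrinking $|H|$ to make the correction affordable forces the threshold up and hence $d_s(\bv{A}_{\bar H})$ up, and the two costs cannot be balanced at $n^{\gamma}$: requiring $|H|\lesssim n^{\gamma}$ gives $d_s(\bv{A}_{\bar H})\lesssim \nnz(\bv{A})/n^{\gamma}$ and a per-solve cost of $\tilde\Theta(\nnz(\bv{A})n^{(1-\gamma)/2})$, again exceeding the target for small $\gamma$. There are further unresolved issues: the inner solves are only approximate, and error propagation through the capacitance inverse is not controlled; and for $p<2$ your parameters give $\sqrt{\gamma_s(\bv{A}_{\bar H})}\le n^{\gamma}$, not the $n^{\gamma/2}$ appearing in the claimed bound. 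Finally, the source of the $O(1/\gamma)$ loss in your sketch (truncating a $\gamma$-fraction of the spectrum and ``rescaling by $1/\gamma$'') is muddled --- truncating a $\gamma$-fraction tail already yields a constant-factor approximation --- and in any case is not where the difficulty of the theorem lies.

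The paper's proof avoids the solver-level surgery entirely. It partitions the rows of $\bv{A}$ into $\lceil 1/\gamma\rceil+1$ classes $\bv{A^i}$ by row sparsity (class $i$ holding rows with between $n^{\gamma i}\nnz(\bv{A})/n$ and $n^{\gamma(i+1)}\nnz(\bv{A})/n$ nonzeros), and uses the triangle inequality together with the fact that zeroing rows only decreases singular values to get $\max_i\|\bv{A^i}\|_p \le \|\bv{A}\|_p \le (\lceil 1/\gamma\rceil+1)\max_i\|\bv{A^i}\|_p$; this is where the $O(1/\gamma)$ loss comes from. Each class has at most $n^{1-\gamma i}$ nonzero rows, so after deleting zero rows and applying an OSNAP sketch $\bv{T^i}$ on the right (a subspace embedding, so Schatten norms of the row space are preserved up to a constant), $\bv{C^i}=\bv{B^i}\bv{T^i}$ is a $t_i\times t_i$ matrix with $t_i = O(n^{1-\gamma i}\log n)$, $\nnz(\bv{C^i})=\tilde O(\nnz(\bv{A}))$, and crucially $t_i\, d_s^i=\tilde O(\nnz(\bv{A})n^{\gamma})$, i.e.\ each sketched class is effectively uniformly sparse. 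Running Theorem~\ref{thm:largep} (resp.\ Theorem~\ref{thm:smallp}) with constant accuracy on each $\bv{C^i}$ and taking the maximum then gives the stated runtimes, with $\gamma_s=O(n^{\gamma}\log n)$ supplying the $n^{\gamma/2}$ factor for $p<2$. If you want to salvage your write-up, replacing the heavy-row/Woodbury step with this bucket-and-sketch reduction is the missing idea.
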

\begin{proof}
For $i \in \{1, 2, \ldots, \lceil \frac{1}{\gamma} \rceil \}$, 
let $\bv{A^i} \in \mathbb{R}^{n \times n}$ be the matrix whose  rows agree with that of $\bv{A}$
provided the number of non-zero entries in these rows 
are in the interval 
$[n^{\gamma i} \cdot \frac{\nnz(\bv{A})}{n}, n^{\gamma (i+1)} \cdot \frac{\nnz(\bv{A})}{n})$, while
  the remaining rows of $\bv{A^i}$ are set to $0^n$. 
  Let $\bv{A^0} = \bv{A} - \sum_{i=1}^{\lceil \frac{1}{\gamma} \rceil} \bv{A^i}$. 
  Since $p \geq 1$ we can apply the triangle inequality,  
  \begin{eqnarray}\label{eqn:upper}
    \|\bv{A}\|_p \leq \sum_{i=0}^{\lceil \frac{1}{\gamma} \rceil} \|\bv{A^i}\|_p \leq (\lceil \frac{1}{\gamma} \rceil + 1 ) \max_i \|\bv{A^i}\|_p.
    \end{eqnarray}
  For any matrix $\bv{B}$ obtained from a matrix $\bv{A}$ by replacing some rows of $\bv{A}$ with $0^n$ and preserving the remaining rows, 
  $\|\bv{B}\|_p \leq \|\bv{A}\|_p$. This follows from the fact that
  $\|\bv{B}x\|_2 \leq \|\bv{A}x\|_2$ for all vectors $x$, together with the min-max
  theorem for singular values. Hence,
  \begin{eqnarray}\label{eqn:lower}
\|\bv{A}\|_p \geq \max_i \|\bv{A^i}\|_p.
    \end{eqnarray}
  Combining (\ref{eqn:upper}) and (\ref{eqn:lower}), an $O(1)$-approximation to $\max_i \|\bv{A^i}\|_p$ is an
  $O(\frac{1}{\gamma})$-approximation to $\|\bv{A}\|_p$.

  We remove the zero rows from
  the $\bv{A^i}$, obtaining corresponding matrices $\bv{B^i}$. By definition of $\bv{A^i}$, the number of rows in
  $\bv{B^i}$ is at most $n^{1-\gamma i}$.
  For each $\bv{B^i}$, which is an $s_i \times n$ matrix for $s_i \leq n^{1-\gamma i}$, we can right-multiply it by an $n \times t_i$
  OSNAP matrix $\bv{T^i}$, with $t_i = O(n^{1-\gamma i} \log n)$ columns and $O(\log n)$ non-zero entries per column, so that
  $\|\bv{B^i} \bv{T^i}\|_p = (1 \pm 1/2) \|\bv{B^i}\|_p$ for all $i$ with probability $1-1/\poly(n)$, see \cite{c16} (the fact
  that $\|\bv{B^i} \bv{T^i}\|_p = (1 \pm 1/2) \|\bv{B^i}\|_p$ follows from the fact that $\bv{T^i}$ is a subspace embedding
  of the row space of $\bv{B^i}$ together with Lemma C.2 of \cite{lnw14}). 
  The time to compute $\bv{C^i} = \bv{B^i} \bv{T^i}$ is $\tilde{O}(\nnz(\bv{A}))$. Since $\bv{T^i}$ has $O(\log n)$ non-zero entries
  per row, each row of $\bv{C^i}$ has a number of non-zero entries within an $O(\log n)$ factor of the corresponding
  row of $\bv{B^i}$. 
  
  We compute $O(1)$-approximations to the $\|\bv{C^i}\|_p$ for each $i$, and then take the maximum over $i$. To do so,  
  for $p \geq 2$ we run the algorithm of Theorem \ref{thm:largep} on each $\bv{C^i}$, while for $p < 2$ we
  run the algorithm of Theorem \ref{thm:smallp}.

  Let $d_s^i$ denote the parameter $d_s$ when run on $\bv{C^i}$.
  Then $d_s^i = O(n^{\gamma i + \gamma} \frac{\nnz(\bv{A})}{n} \log n)$. We can assume $\bv{C^i}$ is a square $t_i \times t_i$ matrix by padding with zeros,
  where $t_i = O(n^{1-\gamma i} \log n)$. Hence, $t_i d_s^i = O(\nnz(\bv{A}) n^{\gamma} \log n)$. Also,
  $\nnz(\bv{C^i}) = O(\nnz(\bv{B^i}) \log n) = O(\nnz(\bv{A})\log n)$.

  The total time to apply Theorem \ref{thm:largep} for $p \geq 2$ 
  across all $\bv{C^i}$ is $\tilde{O} (p \nnz(\bv{A}) n^{\gamma})$.
  The total time to apply Theorem \ref{thm:smallp} for $p < 2$ across all $\bv{C^i}$ is
  $\tilde O \left (\frac{1}{p^3} \left [\nnz(\bv{A}) n^{\frac{1/p-1/2}{1/p+1/2} + \gamma/2} + \sqrt{\nnz(\bv{A})} \cdot n^{\frac{4/p-1}{2/p+1}}\right ] \right ) $, 
  using that the $\gamma_s$ of that theorem is $O(n^{\gamma} \log n)$. 
\end{proof}

\subsection{SVD Entropy}

In this section, we will show how to approximately estimate the SVD entropy of a matrix $\bv{A}$ 
assuming its condition number $K \eqdef \sigma_1(\bv{A})/\sigma_n(\bv{A})$ to be bounded by $n^{c_1}$.
Recall that the SVD entropy of a matrix $\bv{A}$ is given by 
$H(\bv{B}) \eqdef - \sum_i \sigma_i(\bv{B}) \log \sigma_i(\bv{B})$, where 
$\bv{B} \eqdef \frac{\bv{A}}{\norm{\bv{A}}_1}$ is the matrix $\bv{A}$ scaled inversely
by its Schatten $1$-norm. Using the results in Section~\ref{sec:schatten-p-norms},
we obtain a $(1 + \epsilon)$ approximation $Z$ to $\norm{\bv{A}}_1$ i.e., $\norm{\bv{A}}_1 \leq Z \leq (1+\epsilon)\norm{\bv{A}}_1$. Define $\widetilde{\bv{B}}\eqdef \frac{\bv{A}}{Z}$. This means that
\begin{align*}
\abs{H(\widetilde{\bv{B}}) - H(\bv{B})} &= \abs{\sum_i \left(\frac{\sigma_i(\bv{A})}{\norm{\bv{A}}_1} \log \frac{\sigma_i(\bv{A})}{\norm{\bv{A}}_1} - \frac{\sigma_i(\bv{A})}{Z} \log \frac{\sigma_i(\bv{A})}{Z}\right)} \\
&\leq \abs{\sum_i \sigma_i(\bv{A}) \left(\frac{1}{\norm{\bv{A}}_1}-\frac{1}{Z}\right) \log \frac{\sigma_i(\bv{A})}{\norm{\bv{A}}_1}} + \abs{\sum_i \frac{\sigma_i(\bv{A})}{Z} \log \frac{\norm{\bv{A}}_1}{Z}} \\
&\leq \frac{\epsilon}{1+\epsilon} \abs{\sum_i \frac{\sigma_i(\bv{A})}{\norm{\bv{A}}_1} \log \frac{\sigma_i(\bv{A})}{\norm{\bv{A}}_1}} + \epsilon
\leq 2 \epsilon \log n.
\end{align*}

The following thoerem gives our main result for SVD entropy.
\begin{theorem}\label{theo:SVDentropy}
 Given any PSD $\bv{A} \in \R^{n \times n}$ and error parameter $\epsilon \in (0,1)$.  
Then, there exists an algorithm that computes the approximate SVD entropy of $\bv{A}$  and outputs 
$\hat S$, such that, with probability $9/10$
\begin{align}\label{approxsymnorm}
\hat{S} \in (1\pm O(\epsilon))H(\bv{B}),
\end{align}
where $\bv{B}= \frac{\bv{A}}{\norm{\bv{A}}_1}$.
The runtime of the algorithm will be
\begin{align*}
 \tilde O \left (\frac{\nnz(\bv{A}) n^{\frac{1}{3}}+n^{\frac{3}{2}}\sqrt{d_s}}{\tilde{\epsilon}^{6}}\log (1/\epsilon)  \right ),
\end{align*}
where $d_s(\bv{A}) = O(\nnz(\bv{A})/n)$,	and 
$\tilde{\epsilon}=O(\frac{\epsilon}{\log n})$.
\end{theorem}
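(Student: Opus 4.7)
The plan is to proceed in two stages, following the preamble. Stage one obtains $Z$ with $\norm{\bv A}_1 \leq Z \leq (1 + \tilde\epsilon)\norm{\bv A}_1$ using the nuclear-norm algorithm of Theorem~\ref{thm:smallp} at precision $\tilde\epsilon = \Theta(\epsilon/\log n)$; this runs in $\tilde O((\nnz(\bv A) n^{1/3} + n^{3/2}\sqrt{d_s})/\tilde\epsilon^3)$ time, within the claimed budget. The preamble calculation gives $|H(\widetilde{\bv B}) - H(\bv B)| \leq 2\tilde\epsilon \log n = O(\epsilon)$, so it suffices to estimate $H(\widetilde{\bv B})$ additively to $O(\epsilon)$. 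The condition-number hypothesis $K \leq n^{c_1}$, together with the normalization $\sum_i \sigma_i(\widetilde{\bv B}) \approx 1$, will be used to argue $H(\bv B) = \Omega(1)$, upgrading the additive $O(\epsilon)$ estimate to the claimed multiplicative $(1 \pm O(\epsilon))$-approximation.

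Stage two computes $H(\widetilde{\bv B})$ via Theorem~\ref{thm:histogram_algorithm} with $f(x) = -x\log x$. The key obstacle is that $f$ is not multiplicatively smooth as $x \uparrow 1$: the ratio $|f'(x)|/(f(x)/x) = |1 + 1/\log x|$ diverges there, so Theorem~\ref{thm:histogram_algorithm} cannot be invoked directly on $\widetilde{\bv B}$. To sidestep this, the plan is to apply the histogram algorithm instead to the further-scaled matrix $\bv B' \eqdef \bv A/(2Z)$, whose singular values all lie in $[0, 1/2]$, and to reconstruct $H(\widetilde{\bv B})$ from the exact identity
\begin{align*}
H(\widetilde{\bv B}) \;=\; 2\, H(\bv B') \;-\; (\log 2) \cdot \frac{\norm{\bv A}_1}{Z},
\end{align*}
whose second term is computable in closed form once $Z$ is known. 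On $(0, 1/2]$ one has $\delta_f \leq 1 + 1/\log 2 = O(1)$, so the multiplicative-smoothness hypothesis of Theorem~\ref{thm:histogram_algorithm} is met.

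It remains to verify the small-tail hypothesis and bound the resulting condition number. Using $f(x) \leq x\log(1/x)$, the choice $\lambda_f(\tilde\epsilon) = \Theta(\tilde\epsilon \cdot H(\bv B')/(n\log n))$ gives $f(x) \leq (\tilde\epsilon/n)\, H(\bv B')$ for all $x \leq \lambda_f(\tilde\epsilon)$; combined with $H(\bv B') = \Omega(1)$ and $\norm{\bv B'}_2 \leq 1/2$, the resulting truncation point $\lambda$ satisfies $\log(1/\lambda) = O(\log(n/\tilde\epsilon))$. Setting the deflation parameter to $k = n^{1/3}$---the choice optimal for the nuclear norm in Corollary~\ref{cor:hist_schatten}---and repeating that corollary's bookkeeping yields stage-two runtime $\tilde O((\nnz(\bv A) n^{1/3} + n^{3/2}\sqrt{d_s})/\tilde\epsilon^{6} \cdot \log(1/\epsilon))$, which dominates stage one. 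The trickiest step in fleshing out this plan will be the lower bound $H(\bv B) = \Omega(1)$ from the condition-number assumption, which underpins the additive-to-multiplicative conversion, together with the careful tracking of the $\log n$ factors introduced by the rescaling $\tilde\epsilon = \Theta(\epsilon/\log n)$ through the $\bar\kappa$ bound in Theorem~\ref{thm:histogram_algorithm}.
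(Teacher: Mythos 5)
Your Stage-two construction (rescaling to $\bv{B}' = \bv{A}/(2Z)$ so that $f(x) = -x\log x$ becomes multiplicatively smooth on the relevant range, then recovering $H(\widetilde{\bv{B}}) = 2H(\bv{B}') - \log 2 \cdot \norm{\bv{A}}_1/Z$) is a legitimate way to feed the entropy into Theorem \ref{thm:histogram_algorithm}, but the argument has a genuine gap exactly where you flag it: the additive-to-multiplicative conversion. The lower bound $H(\bv{B}) = \Omega(1)$ does \emph{not} follow from the condition-number assumption $K \le n^{c_1}$ for a general constant $c_1$. Take $\bv{A}$ PSD with $\sigma_1 = 1$ and $\sigma_i = n^{-2}$ for $i \ge 2$: the condition number is $n^2$ (so $c_1 = 2$ is fine), yet the normalized top value is $\hat{\sigma}_1 \approx 1 - 1/n$ and $H(\bv{B}) = \Theta(\log n / n)$. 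In this regime every error source in your pipeline is additive of order $\epsilon$ and therefore swamps the target: (i) the preamble bound $|H(\widetilde{\bv{B}}) - H(\bv{B})| \le 2\tilde{\epsilon}\log n$ is additive; (ii) $\norm{\bv{A}}_1/Z$ is not ``computable in closed form'' — it is only known to lie in $[1/(1+\tilde\epsilon),1]$, another additive $O(\tilde\epsilon)$; and (iii) since $H(\bv{B}') \approx \tfrac{1}{2}(H(\widetilde{\bv{B}}) + \log 2)$, a relative $(1\pm\epsilon)$ estimate of $H(\bv{B}')$ turns, after the cancellation in $2H(\bv{B}') - \log 2\cdot\norm{\bv{A}}_1/Z$, into an additive $\Theta(\epsilon)$ error on $H(\widetilde{\bv{B}})$. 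So your scheme proves an additive $O(\epsilon)$ guarantee, not the multiplicative $(1\pm O(\epsilon))H(\bv{B})$ claimed by the theorem, and no choice of constants repairs this when $H(\bv{B}) = o(1)$.

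The paper's proof is structured precisely to avoid this failure mode, and takes a different route altogether: it does not evaluate the spectral sum of $-x\log x$ directly, but runs the multi-point interpolation method of \cite{harvey2008sketching}, estimating the SVD Tsallis entropies $T^{SVD}_{1+\alpha_i}(\widetilde{\bv{B}}) = (1 - \norm{\widetilde{\bv{B}}}_{1+\alpha_i}^{1+\alpha_i})/\alpha_i$ at $k_1 = O(\log(1/\epsilon))$ points $\alpha_i$ of magnitude $\Theta(1/(k_1\log n))$, each via the Schatten-norm algorithms of Section \ref{sec:schatten-p-norms} at accuracy $\widetilde{\epsilon} = \Theta(\epsilon/(k_1^3\log n))$ (whence the $\tilde{\epsilon}^{-6}$, since $p \approx 1$), and then interpolating to $T(0) = H$. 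The small-entropy (``heavy element'') case is handled by Lemma 5.5 of \cite{harvey2008sketching}: one instead approximates $1 - \hat{\sigma}_1 = \norm{\bv{A}_{-1}}_1/\norm{\bv{A}}_1$ and $\sum_{j>1}\hat{\sigma}_j^{1+\alpha_i}$ as ratios of Schatten norms of the matrix with the top singular vector deflated off (Krylov/power-method deflation preserves invariant norms of the tail, Appendix \ref{sec:Krylovpreserve}), so each quantity is obtained to \emph{relative} error and no exact knowledge of $\norm{\bv{A}}_1$ is needed. To rescue your approach you would need to add a comparable mechanism for the low-entropy regime (or restrict the theorem to instances where $\hat{\sigma}_1$ is bounded away from $1$, e.g.\ $c_1 \le 1$); as written, the key conversion step fails.
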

The algorithm we consider to estimate the SVD entropy of $\widetilde{\bv{B}}$
follows the techniques of \cite{harvey2008sketching}. 
The algorithm depends on the ${\beta}^{th}$ SVD Tsallis entropy 
$T^{SVD}_\beta(\widetilde{\bv{B}})$ of matrix $\widetilde{\bv{B}}$, which 
is given by  
$$T^{SVD}_\beta(\widetilde{\bv{B}})= \frac{1-\norm{\widetilde{\bv{B}}}_\beta^\beta}{\beta-1}.$$
We can describe the \emph{multi-point interpolation} method for SVD entropy estimation with additive and multiplicative
error approximations using~\cite[Algorithm 1]{harvey2008sketching}, see Algorithm~\ref{algo:SVDentropy}.

Here, given an error parameter $\epsilon$, we compute an approximate SVD Tsallis entropy $\tilde{T}^{SVD}_{1+\alpha_i}$ at $k_1$ different points
$\{1+\alpha_0,\ldots,1+\alpha_{k_1}\}$, where $k_1 \eqdef \log \frac{1}{\epsilon} + \log c_1 + \log \log n$. $\alpha_1, \cdots, \alpha_{k_1}$ are defined as follows. Let $\ell \eqdef 1/(2c_1(k_1+1)\log n)$ and let $g(\cdot)$ be defined as:
\begin{align}\label{eq:points}
	g(y) = \frac{k_1^2 \ell y - \ell (k_1^2 + 1)}{2k_1^2 + 1}, \mbox{ then, } \alpha_i \eqdef g(\cos(i \pi / k_1)).
\end{align}
Now define an error parameter $\widetilde{\epsilon} \eqdef \epsilon / (12c_1(k_1+1)^3 \log n)$.

\begin{algorithm}[tb!]
\caption{SVD entropy estimation via multi-point interpolation.\label{algo:SVDentropy}}
\begin{algorithmic}
  \State {\bf Input:}    $\bv{A}\in\R^{n\times n}$, $\epsilon\in(0,1)$. 
  \State {\bf Output:} Approximate SVD entropy $\hat S$.
  \State Compute $\widetilde{\bv{B}}= \frac{\bv{A}}{Z}$, where $Z$ is  
  a $(1 + \epsilon)$ approximation to $\norm{\bv{A}}_1$.
  \State Choose $k_1$ points $\alpha_1, \cdots, \alpha_{k_1}$ as in \eqref{eq:points}, 
  and set $\widetilde{\epsilon} = \epsilon / (12c_1(k_1+1)^3 \log n)$.
    \For {$i=1:k_1$}
        \State Compute $Z_{1+\alpha_i}$, a $(1+\widetilde{\epsilon})$ -approximation for Schatten norm
        $\norm{\widetilde{\bv{B}}}_{1+\alpha_{i}}$.
        \State Compute $\tilde{T}^{SVD}_{1+\alpha_i}(\widetilde{\bv{B}})=(1-Z_{1+\alpha_i}/Z^{1+\alpha_i})/\alpha_i$.
      \EndFor
    \State Return $\hat S$ an estimate of $T(0)$ by interpolation using the points $\tilde{T}^{SVD}_{1+\alpha_i}(\widetilde{\bv{B}})$.
\end{algorithmic}
\end{algorithm}

Algorithm~\ref{algo:SVDentropy} gives the stepwise algorithm. The runtime of the algorithm is dominated
by the cost of computing  $Z_{1+\alpha_i}$, a $(1+\widetilde{\epsilon})$ -approximation for Schatten norm
        $\norm{\widetilde{\bv{B}}}_{1+\alpha_{i}}$, particularly the smallest power Schatten norm.
        We get the smallest power when 
        $\alpha_i=\frac{-1}{2c_1(k_1+1)\log n}$.  We have $k_1=\log(1/\epsilon)$ such Schatten norms to be
        estimated. Thus, we obtain the runtime in Theorem~\ref{theo:SVDentropy} by using 
        $p=\alpha =O\left(1-\frac{1}{\log (n/{\epsilon})}\right)$
        in the runtime for Schatten norm estimation given in Corollary~\ref{cor:hist_schatten},
        and observing that $n^{1/\log(n/\epsilon)}$ is between 1 and 2 (we set $p\approx 1$).

\paragraph{Additive approximation}: Section 3.3.2 of~\cite{harvey2008sketching}
tells us that if in Algorithm~\ref{algo:SVDentropy},
the approximate $\tilde{T}^{SVD}_{1+\alpha_i}(\widetilde{\bv{B}})$ is
an additive $\widetilde{\epsilon}$ approximation to
${T}^{SVD}_{1+\alpha_i}(\widetilde{\bv{B}})$ for every $i \in [k_1]$,
then we can use these to compute an additive-$\epsilon$ approximation to $H(\widetilde{\bv{B}})$. 
Since $\frac{-1}{2c_1 k_1 \log n} < \alpha_i < \frac{-1}{16c_1 k_1^3 \log n}$, and
since $\norm{\widetilde{\bv{B}}}_1=1$, obtaining $\frac{1}{16c_1 k_1^3 \log n}$ 
multiplicative approximation to $\norm{\widetilde{\bv{B}}}_{1+\alpha_{k_1}}$ suffices.
This approximation can be obtained using the results of Section~\ref{sec:schatten-p-norms}.

\paragraph{Multiplicative approximation}: 
Article~\cite{harvey2008sketching} further extends the multi-point interpolation method
to achieve multiplicative approximation for Shannon entropy 
(equivalently for SVD entropy) using the following modifications.
We set the number of interpolation points $k_1=\max\{5,\log(1/\epsilon)\}$. Then,
section 3.4 in~\cite{harvey2008sketching} shows that if  
$\tilde{T}^{SVD}_{1+\alpha_i}(\widetilde{\bv{B}})$ (in Algorithm~\ref{algo:SVDentropy}) for every $i \in [k_1]$ computed are to be
$(1+\widetilde{\epsilon})$-multiplicative approximation to ${T}^{SVD}_{1+\alpha_i}({\bv{B}})$
(where $\widetilde{\epsilon}$ is as defined above),
 then we can achieve multiplicative approximation for $H({\bv{B}})$ using 
 the multi-point interpolation method with these $\tilde{T}^{SVD}_{1+\alpha_i}(\widetilde{\bv{B}})$.
 So, we need to obtain $(1+\widetilde{\epsilon})$-relative error approximations 
 to the  ${T}^{SVD}_{1+\alpha_i}({\bv{B}})$ at each $\alpha_i$.
 
 If the matrix has reasonable (large) SVD entropy (meaning the singular values are uniform 
 and there is no one singular value that is very large), then we can obtain 
 $(1\pm\widetilde{\epsilon})$ approximation
 to the  ${T}^{SVD}_{1+\alpha_i}({\bv{B}})$ by simply computing $(1\pm\widetilde{\epsilon})$
 approximation to $\norm{\widetilde{\bv{B}}}_{1+\alpha_{i}}$ using results 
 from section~\ref{sec:schatten-p-norms}.
 However, if the matrix has very small entropy, i.e., we have one singular value with very large magnitude
 and remaining singular values are very small, then achieving a multiplicative
 approximation will be difficult. 
 This is because, we are approximating  ${T}^{SVD}_{1+\alpha_i}$ of $\widetilde{\bv{B}}$, which
 is matrix $\bv{A}$ scaled by a $(1\pm \epsilon)$ approximation of its Schatten $1$-norm. 
 The $(1+\widetilde{\epsilon})$ approximation obtained for  ${T}^{SVD}_{1+\alpha_i}(\widetilde{\bv{B}})$
 might not be close to ${T}^{SVD}_{1+\alpha_i}({\bv{B}})$ in this case.
 
 This issue is equivalent to the `heavy element' issue (one of the entries in the vector is very large)
 discussed in~\cite{harvey2008sketching}.
 Hence, to overcome the above issue, we can follow the workaround proposed in~\cite{harvey2008sketching}. 
 Specifically, Lemma 5.5 in~\cite{harvey2008sketching} shows that a 
 $(1+\widetilde{\epsilon})$-approximation
 to $1-\hat{\sigma}_1$ together with a $(1+\widetilde{\epsilon})$-approximation
 to $\sum_{j > 1} \hat{\sigma}_j^{1+ \alpha_i}$,
 where $\hat{\sigma}_1 \ge\hat{\sigma}_2 \ge\ldots\ge \hat{\sigma}_n$ 
 are the singular values of $\bv{B}$ with $\sum_i \hat{\sigma}_i = 1$, 
 suffices to get a $(1+\widetilde{\epsilon})$-approximation 
 to  ${T}^{SVD}_{1+\alpha_i}({\bv{B}})$ at $\alpha_i$. 
 It follows that we just need to obtain $(1+\widetilde{\epsilon})$-approximations for 
 these latter two quantities for each $\alpha_i$.
 
 It can be shown that when the Krylov method~\cite{musco2015randomized} (or power method) is 
 used to deflate the top singular vector, we have that any unitarily invariant norm 
 of the tail (remaining part of the spectrum) is preserved, see Appendix~\ref{sec:Krylovpreserve}
 for the proof. 
This means we can get $(1+\widetilde{\epsilon})$-approximations 
 to both $\|\bv{A}_{-1}\|_1$ and $(\|\bv{A}_{-1}\|_{1+\alpha_i})^{1+\alpha_i}$, where $\bv{A}_{-1}$
 is matrix $\bv{A}$ with the top singular vector exactly deflated off,
 by running the algorithm presented in this paper. That is, get the Schatten $1$ and Schatten $(1+\alpha_i)$
 norms for the matrix $\bv{A}$ with the top singular vector deflated off.
 We can also approximate  $\|\bv{A}\|_1$ up to  $(1+\widetilde{\epsilon})$
 relative error using our  algorithm. 
 Then, the two quantities above ($1-\hat{\sigma}_1$ and $\sum_{j > 1} \hat{\sigma}_j^{1+ \alpha_i}$)
 are exactly $\|\bv{A}_{-1}\|_1/\|\bv{A}\|_1$ and $(\|\bv{A}_{-1}\|_{1+\alpha_i})^{1+\alpha_i} /
 \|\bv{A}\|_1$, respectively.
 Since we have relative $(1+\widetilde{\epsilon})$-approximations 
 to the numerators and denominators of both these quantities, we obtain then up to
 $(1+\widetilde{\epsilon})$-relative error  the quantities $1-\hat{\sigma}_1$ and 
 $\sum_{j > 1} \hat{\sigma}_j^{1+ \alpha_i}$,
 as needed to achieve a multiplicative approximation to the SVD entropy.
 Note that we do not need to compute $\|\bv{A}\|_1$ exactly here even when the matrix has very 
 low SVD entropy.
A similar argument can be seen in section 6 of~\cite{harvey2008sketching}.
\section*{Acknowledgements} We thank Vladimir Braverman for pointing out an error in our original proof of Theorem \ref{thm:histogram}, which has been corrected.

\bibliographystyle{alpha}
\bibliography{normEstimation}
\clearpage

\appendix
\section{Linear System Solvers}\label{sec:solverAppendix}
In this section we give runtimes for solving ridge regression using both traditional iterative methods and stochastic gradient descent equipped with deflation-based preconditioning.
We start with a few standard notions from convex optimization, which are necessary for our stochastic method bounds.
\begin{definition}[Strong convexity]\label{strongConvexity} A function $f: \mathbb{R}^d \rightarrow \mathbb{R}$ is $\mu$-strongly convex if, for all $\bv{x},\bv{y} \in \mathbb{R}^d$,
\begin{align*}
f(\bv{x})-f(\bv{y}) \ge \grad f(\bv{y})^T (\bv{x}-\bv{y}) + \frac{\mu}{2} \norm{\bv{x}-\bv{y}}_2^2.
\end{align*}
Equivalently, if $\grad^2 f \succeq \mu \bv{I}$.
\end{definition}

\begin{definition}[Smoothness]\label{smoothness} A function $f: \mathbb{R}^d \rightarrow \mathbb{R}$ is $\beta$-smooth if for all $\bv{x},\bv{y} \in \mathbb{R}^d$,
\begin{align*}
f(\bv{x})-f(\bv{y}) \le \grad f(\bv{y})^T (\bv{x}-\bv{y}) + \frac{\beta}{2} \norm{\bv{x}-\bv{y}}_2^2.
\end{align*}
Equivalently, if $\grad^2 f \preceq \beta \bv{I}$.
\end{definition}

The rate of convergence that iterative methods achieve on $f$ typically depends on the ratio $\beta/\mu$, which corresponds to the condition number of a linear system. We next show how ridge regression can be recast as minimizing a convex function $f$, and show that our error from the true ridge solution is proportional to error in minimizing $f$.

\begin{fact}[Ridge Regression as Convex Optimization]\label{ridgeConvex}
For any $\bv A \in \mathbb{R}^{n \times d}$, $\bv{b} \in \mathbb{R}^d$, and $\lambda > 0$, let $\bv{M}_\lambda \eqdef \bv{A}^T\bv{A} + \lambda \bv{I}$ and $\bv{x}^* \eqdef \bv{M}_\lambda^{-1} \bv{b}$. $\bv{x}^*$ is the minimizer of the convex function:
\begin{align}\label{systemFunction}
f(\bv{x}) = \frac{1}{2} \bv{x}^T(\bv{A}^T\bv{A} + \lambda \bv{I})\bv{x} - \bv{b}^T \bv{x},
\end{align}
which has gradient $\grad f(\bv{x}) = (\bv{A}^T\bv{A} + \lambda \bv{I})\bv{x} - \bv{b}$. $f$ is $\lambda$-strongly convex and $(\sigma_1(\bv{A})^2 + \lambda)$-smooth.
\end{fact}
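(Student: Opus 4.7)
The plan is to verify each of the three assertions in the Fact (the formula for $\nabla f$, the identification of $\bv{x}^*$ as the unique minimizer, and the strong convexity/smoothness constants) by direct computation, using only elementary multivariable calculus together with the eigenvalue bounds for $\bv{A}^T\bv{A}$.

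First, I would differentiate $f(\bv{x}) = \tfrac{1}{2} \bv{x}^T \bv{M}_\lambda \bv{x} - \bv{b}^T \bv{x}$. Since $\bv{M}_\lambda = \bv{A}^T\bv{A} + \lambda \bv{I}$ is symmetric, the standard rule for quadratic forms gives $\nabla f(\bv{x}) = \bv{M}_\lambda \bv{x} - \bv{b}$, which matches the claimed expression. The Hessian is then $\nabla^2 f(\bv{x}) = \bv{M}_\lambda$, independent of $\bv{x}$.

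Next, I would observe that $\bv{A}^T\bv{A}$ is PSD with eigenvalues $\sigma_i(\bv{A})^2 \in [0, \sigma_1(\bv{A})^2]$, so the eigenvalues of $\bv{M}_\lambda$ lie in $[\lambda, \sigma_1(\bv{A})^2 + \lambda]$ for $\lambda > 0$. This gives the operator inequality
\[
\lambda \bv{I} \;\preceq\; \nabla^2 f(\bv{x}) \;\preceq\; (\sigma_1(\bv{A})^2 + \lambda)\, \bv{I},
\]
which by Definitions \ref{strongConvexity} and \ref{smoothness} is precisely $\lambda$-strong convexity and $(\sigma_1(\bv{A})^2 + \lambda)$-smoothness (the equivalence between the Hessian and second-order characterizations follows from a standard Taylor expansion). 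In particular, $f$ is strictly convex, so its minimizer is unique and characterized by $\nabla f(\bv{x}^*) = 0$. Since $\bv{M}_\lambda$ is positive definite and hence invertible, solving $\bv{M}_\lambda \bv{x}^* = \bv{b}$ yields $\bv{x}^* = \bv{M}_\lambda^{-1} \bv{b}$, as claimed.

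There is no real obstacle to this argument: the statement is essentially a textbook characterization of the ridge regression objective, and each step reduces to a one-line calculation. The only point worth double-checking is the equivalence between the Hessian condition and the inequality forms of strong convexity and smoothness in Definitions \ref{strongConvexity}--\ref{smoothness}, which follows from integrating the Hessian along the segment between $\bv{x}$ and $\bv{y}$ (or equivalently from second-order Taylor's theorem with integral remainder).
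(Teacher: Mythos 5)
Your proposal is correct and follows essentially the same route as the paper: compute the gradient, verify that it vanishes at $\bv{x}^* = \bv{M}_\lambda^{-1}\bv{b}$, and read off the strong convexity and smoothness constants from the Hessian $\nabla^2 f = \bv{M}_\lambda$ via Definitions \ref{strongConvexity} and \ref{smoothness}. Your explicit eigenvalue bound $\lambda \bv{I} \preceq \bv{M}_\lambda \preceq (\sigma_1(\bv{A})^2 + \lambda)\bv{I}$ just spells out a step the paper leaves implicit.
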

\begin{proof}
To check that $\bv{x}^*$ is the minimizer, notice that:
\begin{align*}
\grad f(\bv{x}^*) = (\bv{A}^T\bv{A} +\lambda \bv{I})(\bv{A}^T\bv{A} +\lambda \bv{I})^{-1} \bv{b} - \bv{b} = \bv{0}.
\end{align*}
Since the function is quadratic, $\bv{x}^*$ is the unique minimizer. $\grad^2 f = \bv{M}_\lambda$, so by Definitions \ref{strongConvexity} and \ref{smoothness}, $f$ is $\lambda$-strongly convex and $(\sigma_1(\bv{A})^2 + \lambda)$-smooth.
\end{proof}

\begin{fact}[Function Error Equals Norm Error]\label{norm2FunctionFact}
For any $\bv{x} \in \mathbb{R}^d$, letting $\bv{M}_\lambda$, $\bv{x}^*$ and $f$ be defined as in Fact \ref{ridgeConvex}, 
\begin{align}\label{norm2Function}
\norm{\bv{x} - \bv{x}^*}_\bv{M_\lambda}^2 = 2[f(\bv{x})-f(\bv{x}^*)].
\end{align}
\end{fact}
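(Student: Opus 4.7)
The plan is to prove this identity by direct algebraic expansion, exploiting the fact that $f$ is a pure quadratic in $\bv{x}$ with Hessian $\bv{M}_\lambda$ and that $\bv{x}^*$ is characterized by the stationarity condition $\bv{M}_\lambda \bv{x}^* = \bv{b}$. There is no real obstacle here; the only thing to be careful about is the substitution that eliminates the linear term $\bv{b}^T \bv{x}$ in favor of a quadratic in $\bv{M}_\lambda$.

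First I would write out $f(\bv{x}) - f(\bv{x}^*)$ using the definition from \eqref{systemFunction}, giving
\[
f(\bv{x}) - f(\bv{x}^*) = \tfrac{1}{2} \bv{x}^T \bv{M}_\lambda \bv{x} - \tfrac{1}{2} (\bv{x}^*)^T \bv{M}_\lambda \bv{x}^* - \bv{b}^T (\bv{x} - \bv{x}^*).
\]
Next, since $\bv{x}^* = \bv{M}_\lambda^{-1} \bv{b}$ and $\bv{M}_\lambda$ is symmetric, I would substitute $\bv{b} = \bv{M}_\lambda \bv{x}^*$ into the linear term, obtaining
\[
\bv{b}^T (\bv{x} - \bv{x}^*) = (\bv{x}^*)^T \bv{M}_\lambda \bv{x} - (\bv{x}^*)^T \bv{M}_\lambda \bv{x}^*.
\]

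Plugging this back in and collecting terms, the expression becomes
\[
f(\bv{x}) - f(\bv{x}^*) = \tfrac{1}{2} \bv{x}^T \bv{M}_\lambda \bv{x} - (\bv{x}^*)^T \bv{M}_\lambda \bv{x} + \tfrac{1}{2} (\bv{x}^*)^T \bv{M}_\lambda \bv{x}^* = \tfrac{1}{2} (\bv{x} - \bv{x}^*)^T \bv{M}_\lambda (\bv{x} - \bv{x}^*),
\]
which is exactly $\tfrac{1}{2} \norm{\bv{x} - \bv{x}^*}_{\bv{M}_\lambda}^2$ by the definition of the $\bv{M}_\lambda$-norm given in the preliminaries. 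Multiplying through by $2$ yields \eqref{norm2Function}.
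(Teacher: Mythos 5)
Your proposal is correct and uses essentially the same argument as the paper: a direct algebraic expansion exploiting $\bv{x}^* = \bv{M}_\lambda^{-1}\bv{b}$ and the symmetry of $\bv{M}_\lambda$, with the paper merely running the computation in the other direction (expanding $\norm{\bv{x}-\bv{x}^*}_{\bv{M}_\lambda}^2$ and substituting $\bv{M}_\lambda\bv{x}^* = \bv{b}$ rather than expanding $f(\bv{x})-f(\bv{x}^*)$). No issues.
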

\begin{proof}
Since $\bv{x}^* = \bv{M}_\lambda^{-1}\bv{b}$: 
\begin{align*}
\norm{\bv{x}-\bv{x}^*}_{\bv{M}_\lambda}^2 &\eqdef (\bv{x}-\bv{x}^*)^T \bv{M}_\lambda (\bv{x}-\bv{x}^*)\nonumber\\ 
&= \bv{x}^T\bv{M}_\lambda \bv{x} - 2\bv{x}^T \bv{M}_\lambda \bv{x}^* + \bv{x}^{*T} \bv{M}_\lambda \bv{x}^*\nonumber\\
&= \bv{x}^T\bv{M}_\lambda\bv{x} - 2\bv{x}^T \bv{b} + \bv{b}^{T} \bv{M}_\lambda^{-1} \bv{b}\nonumber\\
&= 2 \left [ f(\bv{x}) - f(\bv{x}^*) \right ].
\end{align*}
\end{proof}
We will focus on making multiplicative progress in $\left [ f(\bv{x}) - f(\bv{x}^*) \right ]$ which will lead to multiplicative progress in $\norm{\bv{x}-\bv{x}^*}_\bv{M}^2$ and a close approximation in $\log(1/\epsilon)$ iterations.

\subsection{Unaccelerated SVRG}
We first prove an unaccelerated and unpreconditioned runtime for the Stochastic Variance Reduced Gradient (SVRG) algorithm, introduced in \cite{johnson2013accelerating}. 
\begin{theorem}[Standard SVRG Runtime]\label{svrgThm} For any $\bv A \in \mathbb{R}^{n \times d}$, $\bv{b} \in \mathbb{R}^d$, and $\lambda > 0$, let $\bv{M}_\lambda \eqdef \bv{A}^T\bv{A} + \lambda \bv{I}$ and $\bv{x}^* \eqdef \bv{M}_\lambda^{-1} \bv{y}$.
There is an algorithm that returns $\bv{x}$ with: 
$\E \norm{\bv{x}-\bv{x}^*}_{\bv{M}_\lambda} \le \epsilon \norm{\bv{x}^*}_{\bv{M}_\lambda}$
in $$O\left (\left [\nnz(\bv{A})+\frac{d_s(\bv{A})\norm{\bv{A}}_F^2}{\lambda} \right ]\cdot \log(1/\epsilon) \right )$$ time.
\end{theorem}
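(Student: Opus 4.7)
My approach is to apply the SVRG algorithm of Johnson--Zhang \cite{johnson2013accelerating} to the convex reformulation of ridge regression provided by Fact \ref{ridgeConvex}, namely minimizing $f(\bv{x}) = \tfrac{1}{2}\bv{x}^T \bv{M}_\lambda \bv{x} - \bv{b}^T \bv{x}$. I will decompose $f$ as a finite sum of components, one per row of $\bv{A}$, and use importance sampling so that the variance of the stochastic gradient is controlled by $\norm{\bv{A}}_F^2$ rather than $n \max_i \norm{\bv{a}_i}_2^2$. Concretely, writing $f(\bv{x}) = \sum_{i=1}^n f_i(\bv{x})$ with $f_i(\bv{x}) = \tfrac{1}{2}(\bv{a}_i^T \bv{x})^2 + \tfrac{\lambda}{2n}\norm{\bv{x}}_2^2 - \tfrac{1}{n}\bv{b}^T \bv{x}$, I will sample row $i$ with probability $p_i = \norm{\bv{a}_i}_2^2 / \norm{\bv{A}}_F^2$ and use $(1/p_i)\grad f_i(\bv{x})$ as the stochastic estimator. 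This estimator is unbiased for $\grad f(\bv{x})$, and the standard importance-sampling variance computation yields an effective smoothness constant $L = O(\norm{\bv{A}}_F^2 + \lambda)$; combined with the $\mu = \lambda$ strong convexity from Fact \ref{ridgeConvex}, the effective condition number is $\kappa = O(\norm{\bv{A}}_F^2/\lambda + 1)$.

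I will then invoke the standard SVRG convergence analysis: with epoch length $m = \Theta(\kappa)$, the expected function error $\E[f(\bv{x}) - f(\bv{x}^*)]$ contracts by a constant factor per epoch, so $O(\log(1/\epsilon))$ epochs drive it below $\epsilon^2 \cdot (f(\bv{x}_0) - f(\bv{x}^*))$. Initializing at $\bv{x}_0 = \zeroVec$ gives $f(\bv{x}_0) - f(\bv{x}^*) = \tfrac{1}{2}\norm{\bv{x}^*}_{\bv{M}_\lambda}^2$ via Fact \ref{norm2FunctionFact}, and a second application of Fact \ref{norm2FunctionFact} together with Jensen's inequality converts the function-value guarantee into the desired $\E\norm{\bv{x}-\bv{x}^*}_{\bv{M}_\lambda} \le \epsilon \norm{\bv{x}^*}_{\bv{M}_\lambda}$.

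For the runtime accounting: each epoch performs one exact gradient computation $\grad f(\bv{\tilde x}) = \bv{M}_\lambda \bv{\tilde x} - \bv{b}$ at cost $O(\nnz(\bv{A}))$, followed by $m = O(\norm{\bv{A}}_F^2/\lambda)$ inner SVRG updates. Each inner update touches a single row $\bv{a}_i$, so the rank-one contribution $(\bv{a}_i^T \bv{x})\bv{a}_i$ costs $O(\nnz(\bv{a}_i)) = O(d_s(\bv{A}))$. Summing, each epoch costs $O(\nnz(\bv{A}) + d_s(\bv{A})\cdot \norm{\bv{A}}_F^2/\lambda)$, and multiplying by the $O(\log(1/\epsilon))$ outer loops yields the claimed bound.

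The main subtlety, and what I expect to be the key technical obstacle, is that the ridge term $\lambda \bv{x}$ in $\grad f_i$ and the precomputed full gradient $\grad f(\bv{\tilde x})$ are dense $d$-vectors: a naive implementation would incur $O(d)$ per inner iteration, which is worse than the desired $O(d_s(\bv{A}))$ when $d_s(\bv{A}) \ll d$. I will handle this with the standard lazy-updates trick, maintaining the iterate in the form $\bv{x}_t = \alpha_t \bv{u}_t + \beta_t \bv{v}$ where $\bv{v}$ absorbs the contribution of $\grad f(\bv{\tilde x})$ and $\alpha_t, \beta_t$ are scalars tracking the cumulative effect of the $(1-\eta\lambda)$ shrinkages; only the coordinates in the support of $\bv{a}_i$ need to be materialized per step, keeping the per-iteration cost at $O(d_s(\bv{A}))$. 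Verifying that this bookkeeping does not change the analysis (and produces the actual $\bv{x}$ only at the end of each epoch, in $O(d)$ time amortized into $\nnz(\bv{A})$) is the most delicate step.
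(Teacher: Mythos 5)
Your overall route is the same as the paper's: recast ridge regression as minimizing the quadratic $f(\bv{x}) = \tfrac12 \bv{x}^T\bv{M}_\lambda\bv{x} - \bv{b}^T\bv{x}$, run SVRG with rows sampled proportionally to $\norm{\bv{a}_i}_2^2/\norm{\bv{A}}_F^2$, get epochs of length $O(\norm{\bv{A}}_F^2/\lambda)$ with constant expected progress, string together $O(\log(1/\epsilon))$ epochs, convert function error to $\norm{\cdot}_{\bv{M}_\lambda}$ error via Fact \ref{norm2FunctionFact}, and keep each inner step at $O(d_s(\bv{A}))$ by tracking the dense directions ($\bv{b}$ and the anchored full gradient) implicitly through scalar coefficients. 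All of that matches the paper's proof in Appendix \ref{sec:solverAppendix}.

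However, there is one concrete gap: your splitting of the regularizer uniformly, $f_i(\bv{x}) = \tfrac12(\bv{a}_i^T\bv{x})^2 + \tfrac{\lambda}{2n}\norm{\bv{x}}_2^2 - \tfrac1n\bv{b}^T\bv{x}$, is incompatible with sampling $p_i = \norm{\bv{a}_i}_2^2/\norm{\bv{A}}_F^2$ in the worst case. The variance condition needed for the SVRG epoch bound (as in Lemma \ref{lem:epoch}) involves $\sum_i \tfrac{1}{p_i}\norm{\grad f_i(\bv{x})-\grad f_i(\bv{x}^*)}_2^2$, and with your split the regularizer contributes a term of the form $\tfrac{\lambda^2}{n^2}\norm{\bv{x}-\bv{x}^*}_2^2 \sum_i \norm{\bv{A}}_F^2/\norm{\bv{a}_i}_2^2$, which blows up when some row has very small norm and is not $O(\norm{\bv{A}}_F^2+\lambda)$ in general; in the extreme of an all-zero row ($p_i = 0$) the importance-weighted estimator of the gradient difference is not even unbiased, since your component's gradient difference $\tfrac{\lambda}{n}(\bv{x}-\bv{x}^*)$ is nonzero yet never sampled. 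So the asserted ``effective smoothness $L = O(\norm{\bv{A}}_F^2+\lambda)$'' does not follow for your decomposition. The paper sidesteps this by apportioning the ridge term proportionally to $\norm{\bv{a}_i}_2^2/\norm{\bv{A}}_F^2$ in each component (the $\psi_i$ of Lemma \ref{varianceBound}), so that after the $1/p_i$ weighting the regularizer terms sum to $O(\lambda)\cdot\norm{\bv{x}-\bv{x}^*}_{\bv{M}_\lambda}^2$ irrespective of how row norms are distributed; equivalently, you could sample proportionally to the component smoothness $\norm{\bv{a}_i}_2^2 + \lambda/n$. With either fix, $\bar S = O(\norm{\bv{A}}_F^2 + \lambda)$ and the rest of your argument (epoch length, initialization at $\zeroVec$, Jensen, lazy updates) goes through exactly as in the paper.
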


SVRG proceeds in epochs. In each epoch we will make one full gradient computation -- $\grad f(\bv{x}_0) = \bv{M}_\lambda \bv{x}_0 - \bv{b}$ where $\bv{x}_0$ is the value of our iterate at the beginning of the epoch. We will then make a number of stochastic gradient updates each `variance reduced' using $\grad f(\bv{x}_0)$ and show that we make constant factor progress on our function value in expectation. Stringing together $\log(1/\epsilon)$ epochs yields Theorem \ref{svrgThm}.
We write our function $f(\bv{x})$ given by \eqref{systemFunction} as a sum:
\begin{align}
f(\bv{x}) &= \frac{1}{2} \bv{x}^T(\bv{A}^T\bv{A} + \lambda \bv{I})\bv{x} - \bv{b}^T \bv{x}\nonumber\\
 &= \sum_{i=1}^n \frac{1}{2} \bv{x}^T \left (\bv{a}_i \bv{a}_i^T + \frac{\lambda \norm{\bv{a}_i}_2^2}{\norm{\bv{A}}_F^2} \bv{I} \right )\bv{x} - \frac{1}{n}\bv{b}^T \bv{x}.\label{systemFunctionSum}
\end{align}
where $\bv{a}_i \in \mathbb{R}^d$ is the $i^{th}$ row of $\bv{A}$ and $\bv{a}_i \bv{a}_i^T \in \mathbb{R}^{d \times d}$ is the rank-$1$ matrix which is its contribution to $\bv{A}^T\bv{A}$. We start with a well known lemma showing that it is possible to make constant progress on the value of $f(\bv{x})$ in each epoch of SVRG:

\begin{lemma}[SVRG Epoch]\label{lem:epoch}
Consider a set of convex functions $\{\psi_1,\psi_2,...,\psi_n\}$ mapping $\mathbb{R}^d \rightarrow \mathbb{R}$. Let $f(\bv{x}) = \sum_{i=1}^n \psi_i(\bv{x})$ and $\bv{x}^*= \argmin_{\bv{x}\in\mathbb{R}^d} f(\bv{x})$. Suppose we have a probability distribution $p$ on $[1,2,...,n]$ and that starting from some initial point $\bv{x}_0 \in \mathbb{R}^d$ in each iteration we select $i \in [1,...,n]$ with probability $p_i$ and set:
\begin{align}\label{sgdStep}
\bv{x}_{k+1} := \bv{x}_k - \frac{\eta}{p_i} \left (\grad \psi_i(\bv{x}_k) - \grad \psi_i(\bv{x}_0)\right ) + \eta \grad f(\bv{x}_0)
\end{align}
for some step size $\eta$. If $f$ is $\mu$-strongly convex and if for all $\bv{x}\in \mathbb{R}^d$ we have 
\begin{align}\label{smoothnessCondition}
\sum_{i=1}^n \frac{1}{p_i} \norm{\grad \psi_i(\bv{x})-\grad \psi_i(\bv{x}^*)}_2^2 \le 2\bar S \left [f(\bv{x})-f(\bv{x}^*) \right]
\end{align}
for some variance parameter $\bar S$ then for all $m \ge 1$ we have:
\begin{align*}
\E \left [\frac{1}{m} \sum_{k=1}^m f(\bv{x}_k)-f(\bv{x}^*) \right ] \le \frac{1}{1-2\eta \bar S} \left [\frac{1}{\mu \eta m} + 2 \eta \bar S \right ] \cdot \left [f(\bv{x}_0) - f(\bv{x}^*) \right ]
\end{align*}
Consequently, if we pick $\eta$ to be a small multiple of $1/\bar S$, then for $m = O(\bar S/\mu)$ we will decrease the error by a constant multiplicative factor in expectation.
\end{lemma}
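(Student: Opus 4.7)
The plan is to carry out the now-standard SVRG analysis \cite{johnson2013accelerating}, adapted to allow the non-uniform sampling distribution $p$. The key observation is that the variance-reduced estimator $v_k \eqdef \tfrac{1}{p_i}\bigl(\grad \psi_i(\bv{x}_k) - \grad \psi_i(\bv{x}_0)\bigr) + \grad f(\bv{x}_0)$ is unbiased, i.e. $\E_i[v_k \mid \bv{x}_k] = \grad f(\bv{x}_k)$, since $\E_i\bigl[\tfrac{1}{p_i}\grad \psi_i(\bv{x})\bigr] = \grad f(\bv{x})$ for any fixed $\bv{x}$. So the update \eqref{sgdStep} is a stochastic gradient step with step size $\eta$, and the usual one-step identity gives
\[
\E\bigl[\|\bv{x}_{k+1}-\bv{x}^*\|_2^2 \,\big|\, \bv{x}_k\bigr] = \|\bv{x}_k-\bv{x}^*\|_2^2 - 2\eta\,\langle \grad f(\bv{x}_k), \bv{x}_k-\bv{x}^*\rangle + \eta^2\,\E\bigl[\|v_k\|_2^2 \,\big|\, \bv{x}_k\bigr].
\]
Convexity of $f$ lets me lower bound the cross term by $f(\bv{x}_k)-f(\bv{x}^*)$.

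The heart of the argument is to control $\E\|v_k\|_2^2$ using the variance hypothesis \eqref{smoothnessCondition}. I will write $v_k = A_i - \bigl(B_i - \E B_i\bigr)$ where $A_i \eqdef \tfrac{1}{p_i}(\grad \psi_i(\bv{x}_k) - \grad \psi_i(\bv{x}^*))$ and $B_i \eqdef \tfrac{1}{p_i}(\grad \psi_i(\bv{x}_0) - \grad \psi_i(\bv{x}^*))$, using the fact that $\grad f(\bv{x}^*) = 0$ so $\grad f(\bv{x}_0) = \E[B_i]$. Applying $\|a+b\|^2 \le 2\|a\|^2 + 2\|b\|^2$ together with $\E\|B_i - \E B_i\|^2 \le \E\|B_i\|^2$, and then \eqref{smoothnessCondition} applied at both $\bv{x}_k$ and $\bv{x}_0$, yields
\[
\E\bigl[\|v_k\|_2^2 \,\big|\, \bv{x}_k\bigr] \le 4\bar S\bigl[f(\bv{x}_k)-f(\bv{x}^*)\bigr] + 4\bar S\bigl[f(\bv{x}_0)-f(\bv{x}^*)\bigr].
\]
Plugging this into the one-step identity gives the recursion
\[
\E\bigl[\|\bv{x}_{k+1}-\bv{x}^*\|_2^2 \,\big|\, \bv{x}_k\bigr] \le \|\bv{x}_k-\bv{x}^*\|_2^2 - 2\eta(1-2\eta\bar S)\bigl[f(\bv{x}_k)-f(\bv{x}^*)\bigr] + 4\eta^2\bar S\bigl[f(\bv{x}_0)-f(\bv{x}^*)\bigr].
\]

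To finish, I will sum this inequality from $k=0$ to $m-1$, take full expectation, drop the nonnegative $\E\|\bv{x}_m-\bv{x}^*\|_2^2$ term, and then use $\mu$-strong convexity to bound $\|\bv{x}_0-\bv{x}^*\|_2^2 \le \tfrac{2}{\mu}[f(\bv{x}_0)-f(\bv{x}^*)]$. Rearranging and dividing by $2\eta(1-2\eta\bar S)m$ gives exactly the claimed bound on the averaged suboptimality. The only subtlety is ensuring $\eta < 1/(2\bar S)$ so that the leading coefficient is positive; the corollary that choosing $\eta = \Theta(1/\bar S)$ and $m = \Theta(\bar S/\mu)$ suffices for constant-factor progress is then immediate from balancing the two terms $\tfrac{1}{m\eta\mu}$ and $2\eta\bar S$. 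There is no substantive obstacle: the whole argument is a careful bookkeeping exercise once the variance bound is set up, and the non-uniform sampling only shows up through the weighting $1/p_i$ in the variance hypothesis, which is already in the statement.
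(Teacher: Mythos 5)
Your proposal is correct, and it is essentially the canonical argument: the paper does not prove this lemma in-line at all, but simply cites Theorem 9 of \cite{newVersion}, and that cited proof proceeds exactly along the lines you lay out (unbiasedness of the variance-reduced estimator, the decomposition into $\tfrac{1}{p_i}(\grad\psi_i(\bv{x}_k)-\grad\psi_i(\bv{x}^*))$ and $\tfrac{1}{p_i}(\grad\psi_i(\bv{x}_0)-\grad\psi_i(\bv{x}^*))$ with the variance hypothesis \eqref{smoothnessCondition} applied at both points, the one-step recursion, telescoping, and strong convexity to convert $\norm{\bv{x}_0-\bv{x}^*}_2^2$ into function error). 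So you are supplying the proof the paper delegates to a reference, not taking a different route. Two bookkeeping remarks, neither a gap in your argument: (i) the update \eqref{sgdStep} as printed has the full-gradient term with a $+\eta\grad f(\bv{x}_0)$ sign, which taken literally makes the estimator biased; you silently use the standard (clearly intended) estimator $v_k=\tfrac{1}{p_i}(\grad\psi_i(\bv{x}_k)-\grad\psi_i(\bv{x}_0))+\grad f(\bv{x}_0)$ with step $\bv{x}_{k+1}=\bv{x}_k-\eta v_k$, which is the right reading; (ii) summing the one-step inequality over $k=0,\dots,m-1$ bounds $\tfrac{1}{m}\sum_{k=0}^{m-1}\E[f(\bv{x}_k)-f(\bv{x}^*)]$ rather than the stated $\tfrac{1}{m}\sum_{k=1}^{m}$; this is a trivial re-indexing (e.g.\ include one more one-step inequality, costing at most a factor $(m+1)/m$ in the $2\eta\bar S$ term) and is immaterial to the corollary about constant-factor progress per epoch with $\eta=\Theta(1/\bar S)$ and $m=\Theta(\bar S/\mu)$.
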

\begin{proof}
See for example Theorem 9 of \cite{newVersion}.
\end{proof}

To apply the above Lemma we need the variance bound:
\begin{lemma}[SVRG Variance Bound]\label{varianceBound}
If $\psi_i(\bv{x}) = \frac{1}{2}\bv{x}^T\left (\bv{a}_i \bv{a}_i^T + \frac{\lambda\norm{\bv{a}_i}_2^2}{\norm{\bv{A}}_F^2} \bv{I} \right )\bv{x} - \frac{1}{n}\bv{b}^T \bv{x}$ and $p_i = \frac{\norm{\bv{a}_i}_2^2}{\norm{\bv{A}}_F^2}$ then letting $\bar S = \norm{\bv{A}}_F^2 + 2 \lambda$ we have:
\begin{align*}
\sum_{i=1}^n \frac{1}{p_i} \norm{\grad \psi_i(\bv{x})-\grad \psi_i(\bv{x}^*)}_2^2 \le 2\bar S \left [f(\bv{x})-f(\bv{x}^*) \right]
\end{align*}
\end{lemma}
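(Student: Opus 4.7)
The plan is direct algebraic manipulation based on the observation that each $\psi_i$ is a simple quadratic, so $\grad \psi_i(\bv{x}) - \grad \psi_i(\bv{x}^*)$ has a clean closed form.

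First, compute the gradient: writing $\mu_i \eqdef \lambda \norm{\bv{a}_i}_2^2 / \norm{\bv{A}}_F^2$, we have $\grad \psi_i(\bv{x}) = (\bv{a}_i\bv{a}_i^T + \mu_i \bv{I}) \bv{x} - \tfrac{1}{n} \bv{b}$, so with $\bv{z} \eqdef \bv{x} - \bv{x}^*$,
\[
\grad \psi_i(\bv{x}) - \grad \psi_i(\bv{x}^*) = (\bv{a}_i\bv{a}_i^T + \mu_i \bv{I}) \bv{z} .
\]
Second, I would expand the square using $(\bv{a}_i\bv{a}_i^T + \mu_i \bv{I})^2 = \norm{\bv{a}_i}_2^2 \bv{a}_i\bv{a}_i^T + 2 \mu_i \bv{a}_i \bv{a}_i^T + \mu_i^2 \bv{I}$ to get
\[
\norm{\grad \psi_i(\bv{x}) - \grad \psi_i(\bv{x}^*)}_2^2 = \norm{\bv{a}_i}_2^2 (\bv{a}_i^T \bv{z})^2 + 2 \mu_i (\bv{a}_i^T \bv{z})^2 + \mu_i^2 \norm{\bv{z}}_2^2 .
\]
Third, divide by $p_i = \norm{\bv{a}_i}_2^2 / \norm{\bv{A}}_F^2$ and sum over $i$. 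The three terms simplify nicely: $\norm{\bv{a}_i}_2^2 / p_i = \norm{\bv{A}}_F^2$, $\mu_i / p_i = \lambda$, and $\mu_i^2 / p_i = \lambda^2 \norm{\bv{a}_i}_2^2 / \norm{\bv{A}}_F^2$. Using $\sum_i (\bv{a}_i^T \bv{z})^2 = \bv{z}^T \bv{A}^T \bv{A} \bv{z}$ and $\sum_i \norm{\bv{a}_i}_2^2 = \norm{\bv{A}}_F^2$, this gives
\[
\sum_{i=1}^n \frac{1}{p_i} \norm{\grad \psi_i(\bv{x}) - \grad \psi_i(\bv{x}^*)}_2^2 = (\norm{\bv{A}}_F^2 + 2\lambda) \, \bv{z}^T \bv{A}^T \bv{A} \bv{z} + \lambda^2 \norm{\bv{z}}_2^2 .
\]
Finally, compare with $2\bar S [f(\bv{x}) - f(\bv{x}^*)]$. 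By Fact~\ref{norm2FunctionFact}, $2[f(\bv{x}) - f(\bv{x}^*)] = \norm{\bv{z}}_{\bv{M}_\lambda}^2 = \bv{z}^T \bv{A}^T \bv{A} \bv{z} + \lambda \norm{\bv{z}}_2^2$, so with $\bar S = \norm{\bv{A}}_F^2 + 2\lambda$,
\[
2\bar S [f(\bv{x}) - f(\bv{x}^*)] = (\norm{\bv{A}}_F^2 + 2\lambda) \, \bv{z}^T \bv{A}^T \bv{A} \bv{z} + \lambda (\norm{\bv{A}}_F^2 + 2\lambda) \norm{\bv{z}}_2^2 .
\]
The first term matches exactly, while the remaining $\norm{\bv{z}}_2^2$ coefficient on the right, $\lambda \norm{\bv{A}}_F^2 + 2\lambda^2$, dominates the $\lambda^2$ from the left. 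This closes the inequality.

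There is no real obstacle here: once the gradient difference is recognized as $(\bv{a}_i\bv{a}_i^T + \mu_i \bv{I}) \bv{z}$, the key design choice is the $\mu_i$ split in the definition of $\psi_i$, which is calibrated precisely so that after dividing by $p_i$ the $\lambda$-regularization contribution distributes uniformly (giving $2\lambda$ in the cross term and a tail $\lambda^2 \norm{\bv{z}}_2^2$ that is absorbed by the strong convexity slack). The only mild care needed is to verify that each $\psi_i$ is convex (which follows since $\bv{a}_i\bv{a}_i^T + \mu_i \bv{I} \succeq 0$) so that the Lemma~\ref{lem:epoch} hypotheses are met.
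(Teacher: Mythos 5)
Your proposal is correct and follows essentially the same route as the paper: expand the gradient difference $(\bv{a}_i\bv{a}_i^T + \mu_i\bv{I})\bv{z}$, use $(\bv{a}_i\bv{a}_i^T)^2 = \norm{\bv{a}_i}_2^2\,\bv{a}_i\bv{a}_i^T$, sum against $1/p_i$ to get $(\norm{\bv{A}}_F^2+2\lambda)\,\bv{z}^T\bv{A}^T\bv{A}\bv{z} + \lambda^2\norm{\bv{z}}_2^2$, and absorb the leftover $\lambda^2\norm{\bv{z}}_2^2$ term into $\bar S\,\norm{\bv{z}}_{\bv{M}_\lambda}^2$ via Fact~\ref{norm2FunctionFact}. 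The only difference is cosmetic bookkeeping in the final comparison, which the paper does by bounding each summand by $\norm{\bv{z}}_{\bv{M}_\lambda}^2$ rather than matching coefficients.
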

\begin{proof}
We have
$\grad \psi_i(\bv{x}) = \left (\bv{a}_i \bv{a}_i^T + \frac{\lambda\norm{\bv{a}_i}_2^2}{\norm{\bv{A}}_F^2}\bv{I} \right )\bv{x} - \frac{1}{n}\bv{b}.$
Write $\bv{x}-\bv{x}^* = \bv{y}$ for simplicity of notation.
\begin{align*}
\sum_{i=1}^n& \frac{1}{p_i} \norm{\grad \psi_i(\bv{x})-\grad \psi_i(\bv{x}^*)}_2^2\\
 &= \sum_{i=1}^n \frac{\norm{\bv{A}}_F^2}{\norm{\bv{a}_i}_2^2} \cdot \norm{\left(\bv{a}_i \bv{a}_i^T + \frac{\lambda\norm{\bv{a}_i}_2^2}{\norm{\bv{A}}_F^2}\bv{I} \right )\bv{y}}_2^2\\
&= \norm{\bv{A}}_F^2 \sum_{i=1}^n \left ( \frac{1}{\norm{\bv{a}_i}_2^2} \cdot \bv{y}^T\left((\bv{a}_i \bv{a}_i^T)^2 + 2\frac{\lambda\norm{\bv{a}_i}_2^2}{\norm{\bv{A}}_F^2}\bv{a}_i\bv{a}^T + \left (\frac{\lambda\norm{\bv{a}_i}_2^2}{\norm{\bv{A}}_F^2}\right )^2\bv{I} \right )\bv{y} \right )\\
&=\norm{\bv{A}}_F^2 \cdot \left [ \sum_{i=1}^n \left (\frac{\bv{y}^T(\bv{a}_i\bv{a}_i^T)^2\bv{y}}{\norm{\bv{a}_i}_2^2}  \right ) + 2\sum_{i=1}^n \frac{\lambda \bv{y}^T\bv{a}_i\bv{a}_i^T\bv{y}}{\norm{\bv{A}}_F^2} +\sum_{i=1}^n \left (\frac{\norm{\bv{y}}_2^2 \lambda^2 \norm{\bv{a}_i}_2^2}{\norm{\bv{A}}_F^4} \right )\right ]\\
&=\norm{\bv{A}}_F^2 \cdot \left [ \bv{y}^T\bv{A}^T\bv{A}\bv{y} + \frac{2\lambda \bv{y}^T\bv{A}\bv{A}^T\bv{y}}{\norm{\bv{A}}_F^2} + \frac{\norm{\bv{y}}_2^2 \lambda^2}{\norm{\bv{A}}_F^2} \right ]\\
&\le \norm{\bv{A}}_F^2 \left [\norm{\bv{y}}_{\bv{M}_\lambda}^2 + \frac{2\lambda\norm{\bv{y}}_{\bv{M}_\lambda}^2}{\norm{\bv{A}}_F^2} \right ]\\
&\le 2(\norm{\bv{A}}_F^2 + 2\lambda) [f(\bv{x})-f(\bv{x}^*)]
\end{align*}
where the last step uses Fact \ref{norm2FunctionFact}.
\end{proof}

We can now plug this variance bound into Lemma \ref{lem:epoch} to prove Theorem \ref{svrgThm}
\begin{proof}[Proof of Theorem \ref{svrgThm}]
Using Lemma \ref{varianceBound} we can instantiate Lemma \ref{lem:epoch} with $\bar S = \norm{\bv{A}}_F^2 + 2\lambda $. If we set 
\begin{align*}
m = O\left (\frac{\bar S}{\mu}\right) =  O\left (\frac{\norm{\bv{A}}_F^2 + \lambda}{\lambda}\right) = O\left (\frac{\norm{\bv{A}}_F^2}{\lambda}\right)
\end{align*}
and $\eta = O\left (\frac{1}{\norm{\bv{A}}_F^2 + \lambda}\right)$ (which we can compute explicitly)
then after an $m$ step SVRG epoch:
\begin{align*}
\E \left [\frac{1}{m} \sum_{k=1}^m f(\bv{x}_k)-f(\bv{x}^*) \right ] \le \frac{1}{2} \left [f(\bv{x}_0) - f(\bv{x}^*) \right ].
\end{align*}
If we choose $k$ uniformly from $[1,...,m]$ this gives us $\E\left [ f(\bv{x}_k) - f(\bv{x}^*)\right] \le \frac{1}{2}\left [f(\bv{x}_0) - f(\bv{x}^*) \right ]$.
So stringing together $\log(1/\epsilon)$ of these epochs and letting $\bv{x}_0 = \bv{0}$ in the first epoch gives the theorem. Each epoch requires $m$ stochastic gradient steps given by \eqref{sgdStep}, taking $O(md_s) = O\left (\frac{\norm{d_s\bv{A}}_F^2 }{\lambda}\right )$ time plus $\nnz(\bv{A})$ time to compute $\grad f(\bv{x}_0)$, giving us the stated runtime.

Note that, naively, to produce $\bv{x}_{k+1}$ we need $O(d)$ time, not $O(d_s)$ time since our gradient step includes adding multiples of $\bv{b}$ and $\eta \grad f(\bv{x}_0)$ both of which might be dense vectors in $\R^d$. However, in each epoch, we will just keep track of the coefficients of these two components in our iterate, allowing us to still compute \eqref{sgdStep} in $O(d_s)$ time.
\end{proof}

\subsection{Unaccelerated SVRG with Deflation-Based Preconditioning}

If we assume that $d_s = O\left (\frac{\nnz(\bv{A})}{n} \right)$ (i.e. our rows are uniformly sparse) we see that the
runtime of Theorem \ref{svrgThm} is dominated by $\frac{d_s(\bv{A}) \norm{\bv{A}}_F^2}{\lambda} = \nnz(\bv{A}) \cdot \frac{\sum_{i=1}^d \sigma_i^2(\bv{A})}{n \lambda}$. $\bar \kappa \eqdef \frac{\sum_{i=1}^d \sigma_i^2(\bv{A})}{n \lambda}$ can be seen as an \emph{average condition number}, which is always smaller than the condition number $\kappa \eqdef \frac{\sigma_1^2}{\lambda}$. This average condition number dependence means that SVRG can significantly outperform traditional iterative solvers that require a number of iterations depending on $\kappa$.  However, this advantage is limited if $\norm{\bv{A}}_F^2$ is very concentrated in a few large singular values, and hence $\bar \kappa \approx \kappa$. We can perform better in such situations by deflating off these large singular values and preconditioning with our deflated matrix, significantly `flattening' the spectrum of $\bv{A}$. This method was used in \cite{gonen2016solving}, and we follow their approach closely, giving our own proof for completeness and so that we can express runtimes in terms of the necessary parameters for our downstream results. In particular, we show that the preconditioning methods of \cite{gonen2016solving} can be implemented efficiently for sparse systems. 

\begin{theorem}[Preconditioned SVRG Runtime]\label{preCondsvrgThm} For any $\bv A \in \mathbb{R}^{n \times d}$, $\bv{b} \in \mathbb{R}^d$, and $\lambda > 0$, let $\bv{M}_\lambda \eqdef \bv{A}^T\bv{A} + \lambda \bv{I}$. Let $ \bar \kappa \eqdef \frac{k \sigma_k^2(\bv{A}) + \sum_{i=k+1}^d \sigma_i^2(\bv{A})}{d\lambda} $ where $k \in [0,...,d]$ is an input parameter.
There is an algorithm that uses $O( \nnz(\bv{A}) k\log d + dk^{\omega-1})$ precomputation time for sparse matrices or $O (nd^{\omega(1,1,\log_d k) - 1} \log d)$  for dense matrices after which, given any input $\bv{y} \in \R^d$, letting $\bv{x}^* \eqdef \bv{M}_\lambda^{-1} \bv{y}$ the algorithm 
returns $\bv{x}$ with
$\E \norm{\bv{x}-\bv{x}^*}_{\bv{M}_\lambda} \le \epsilon \norm{\bv{x}^*}_{\bv{M}_\lambda}$
in $$O \left ( \nnz(\bv{A}) \log (1/\epsilon) +  \log(1/\epsilon) \left [d\cdot d_s(\bv{A}) + d k \right ]\bar \kappa \right )$$ time for sparse $\bv{A}$ or 
$$O \left (\log(1/\epsilon) (nd + d^2 \lceil \bar \kappa \rceil )\right )$$ time for dense $\bv{A}$.
\end{theorem}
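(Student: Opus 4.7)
The plan is to follow the deflation-based preconditioning strategy of \cite{gonen2016solving}, adapted so that all per-iteration costs remain compatible with the sparsity structure of $\bv{A}$. First I will compute an approximate top-$k$ right singular subspace $\bv{U}_k \in \R^{d\times k}$ of $\bv{A}$, together with the corresponding matrix $\bv{A}\bv{U}_k \in \R^{n\times k}$ and singular value estimates $\tilde{\sigma}_1\ge\dots\ge\tilde{\sigma}_k$, to high (but only logarithmic in $1/\epsilon$) accuracy. This can be done by a block Krylov method such as the one in \cite{musco2015randomized}, which gives the stated precomputation times: $\tilde O(\nnz(\bv{A})k + dk^{\omega-1})$ for sparse $\bv{A}$ (where the second term is the cost of orthogonalizing the Krylov basis and diagonalizing the $k\times k$ Rayleigh quotient), and $\tilde O(n d^{\omega(1,1,\log_d k)-1})$ for dense $\bv{A}$ using fast rectangular matrix multiplication to build the Krylov blocks.

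Next I will form the preconditioner $\bv{Q} \eqdef \bv{I} - \bv{U}_k \bv{U}_k^\top + \bv{U}_k \bv{D} \bv{U}_k^\top$, where $\bv{D}$ is the $k\times k$ diagonal with $\bv{D}_{ii} = \sqrt{(\tilde{\sigma}_k^2 + \lambda)/(\tilde{\sigma}_i^2+\lambda)}$. The point is that $\bv{Q}$ is a symmetric invertible map that leaves the orthogonal complement of $\colspan(\bv{U}_k)$ alone and rescales the top directions so that, in the preconditioned system $\widetilde{\bv{M}}_\lambda \eqdef \bv{Q}\bv{M}_\lambda \bv{Q}$, every one of the top $k$ eigenvalues is pulled down to $\tilde{\sigma}_k^2 + \lambda$, while the remaining eigenvalues are unchanged. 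Consequently $\tr(\widetilde{\bv{M}}_\lambda) \le k(\tilde{\sigma}_k^2+\lambda) + \sum_{i=k+1}^{d}(\sigma_i^2(\bv{A})+\lambda)$, which (since $\tilde{\sigma}_k \approx \sigma_k(\bv{A})$ to logarithmic precision) gives a preconditioned average condition number $\tilde O(\bar\kappa + 1)$. I then solve the original system by solving $\widetilde{\bv{M}}_\lambda \bv{z} = \bv{Q}\bv{y}$ and outputting $\bv{x} = \bv{Q}\bv{z}$; a standard calculation shows that multiplicative error in the preconditioned $\bv{M}_\lambda$-norm of $\bv{z}$ translates into the claimed multiplicative error in the $\bv{M}_\lambda$-norm of $\bv{x}$ (up to condition-number-independent constants, absorbed into the $\log(1/\epsilon)$ factor).

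To apply Theorem~\ref{svrgThm} to the preconditioned problem I write $\widetilde{\bv{M}}_\lambda = \bv{B}^\top\bv{B} + \lambda \bv{Q}^2$, where $\bv{B} \eqdef \bv{A}\bv{Q}$. The rows of $\bv{B}$ are $\bv{a}_i^\top \bv{Q} = \bv{a}_i^\top + \bv{a}_i^\top \bv{U}_k(\bv{D}-\bv{I})\bv{U}_k^\top$; since I have precomputed $\bv{A}\bv{U}_k$, the $i$-th row of $\bv{B}$ admits an implicit representation as a sum of an $\nnz(\bv{a}_i)$-sparse vector and a rank-$k$ correction that can be multiplied against any vector in $O(d_s(\bv{A})+k)$ time, and can be inner-producted against any vector in the same time. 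This lets me implement one SVRG stochastic gradient step in $O(d_s(\bv{A})+k)$ time (rather than naively $O(d)$); the full-gradient step at the start of each epoch costs $O(\nnz(\bv{A})+dk)$ using the same decomposition. The variance bound of Lemma~\ref{varianceBound} needs to be redone for the functional form $\tfrac{1}{2}\bv{x}^\top(\bv{B}^\top\bv{B} + \lambda \bv{Q}^2)\bv{x} - \bv{y}^\top \bv{Q}\bv{x}$ with sampling probabilities proportional to $\|\bv{b}_i\|_2^2$, but the algebra is the same as before and yields $\bar S = O(\|\bv{B}\|_F^2 + \lambda\tr(\bv{Q}^2)) = O(d\lambda \bar\kappa)$. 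Combining with Lemma~\ref{lem:epoch} gives $m = O(\bar S/\lambda) = O(d \bar\kappa)$ inner iterations per epoch and $O(\log(1/\epsilon))$ epochs, producing the total runtime $O(\log(1/\epsilon)[\nnz(\bv{A}) + (d\cdot d_s(\bv{A}) + dk)\bar\kappa])$ for sparse $\bv{A}$ and $O(\log(1/\epsilon)(nd + d^2 \lceil \bar\kappa\rceil))$ for dense $\bv{A}$.

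The main obstacle is the sparsity issue in the third step: multiplying $\bv{A}$ by $\bv{Q}$ destroys sparsity, so a black-box invocation of Theorem~\ref{svrgThm} on $\bv{B}$ would cost $nd$ per full gradient and $d$ per stochastic step. The crux is to explicitly maintain the low-rank-plus-sparse decomposition of each row of $\bv{B}$ and implement all gradient operations against this decomposition, which requires care when bookkeeping the iterate as a sum of a sparse-updated vector and $\bv{U}_k$-coefficients across the epoch (analogous to the trick noted at the end of the proof of Theorem~\ref{svrgThm} for handling the dense shift by $\bv{b}$ and $\grad f(\bv{x}_0)$). A secondary technical point is that because $\bv{U}_k$ is only approximate, the inequality $\tilde{\sigma}_i^2 \approx \sigma_i^2(\bv{A})$ used to bound $\tr(\widetilde{\bv{M}}_\lambda)$ holds only up to a $(1\pm 1/\poly(d))$ factor, but this can be arranged at $\log$ cost via the Krylov accuracy and does not affect the statement.
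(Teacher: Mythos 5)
Your proposal is correct and follows essentially the same route as the paper: an approximate top-$k$ block-Krylov deflation preconditioner in the style of \cite{gonen2016solving}, SVRG with row-norm sampling on the preconditioned system whose variance parameter is the preconditioned trace (giving $m = O(d\bar\kappa + d)$ inner steps per epoch), and an implicit sparse-plus-rank-$k$ representation of the iterate so each stochastic step costs $O(d_s(\bv{A})+k)$; the only cosmetic differences are your rescaled preconditioner $\bv{Q}$ and keeping the regularizer as a distributed $\lambda\bv{Q}^2$ term rather than folding it into $d$ extra rank-one rows $\sqrt{\lambda}\bv{P}^{-1/2}$ as the paper does, a choice that makes the paper's variance bound and per-step lazy bookkeeping slightly cleaner. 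One small correction: you neither need nor can cheaply obtain (gap-independently) $(1\pm 1/\poly(d))$-accurate singular values from the Krylov step---the constant-factor guarantees of the block Krylov method already suffice for the trace and smallest-eigenvalue bounds of the preconditioned matrix via Theorem 5 of \cite{gonen2016solving}, which is exactly the paper's Lemma~\ref{traceBound}.
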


Our ideal algorithm would compute the top $k$ singular vectors of $\bv{A}$ and deflate these off our matrix to flatten the spectrum. However, for efficiency we will instead compute approximate singular vectors, using an iterative method, like simultaneous iteration or a block Krylov iteration. These algorithms give the following guarantee \cite{musco2015randomized}:
\begin{lemma}\label{blockLanczos}
There is an algorithm that, with high probability in $O(\nnz(\bv{A})k \log d + dk^{\omega-1})$ time returns $\bv{ Z} \in \R^{d \times k}$ such that $\norm{\bv{A}-\bv{ZZ}^T \bv{A}}_2^2 \le 2\sigma_{k+1}(\bv{A})^2$ and, for all $i \le k$, letting $\tilde \sigma_i^2 \eqdef \bv{z}_i^T \bv{A}^T\bv{A}\bv{z}_i$,  $\left |\tilde \sigma_i - \sigma_i^2(\bv{A}) \right | \le 2 \sigma_{k+1}^2(\bv{A})$. For dense inputs the runtime can be sped up to $O(n d^{\omega(1,1,\log_d k)-1} \log d)$ by applying fast matrix multiplication at each iteration.\footnote{Each iteration requires $n d^{\omega(1,1,\log_d k)-1} \ge d^{\omega(1,1,\log_d k)}$ time since we assume $n \ge d$, which dominates the $O(dk^{\omega-1})$ time required to orthogonalize the approximate singular directions.}
\end{lemma}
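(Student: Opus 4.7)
The plan is to invoke the block Krylov / simultaneous iteration analysis of Musco and Musco applied to the PSD matrix $\bv{A}^T\bv{A}$. First I would draw a random Gaussian matrix $\bv{G} \in \R^{d \times k}$ and build the block Krylov subspace $\bv{K} = [\bv{G}, (\bv{A}^T\bv{A})\bv{G}, \ldots, (\bv{A}^T\bv{A})^{q-1}\bv{G}]$ for $q = \Theta(\log d)$. After orthonormalizing $\bv{K}$ into $\bv{Q} \in \R^{d \times qk}$, I would form the small matrix $\bv{M} = \bv{Q}^T \bv{A}^T\bv{A}\bv{Q} \in \R^{qk \times qk}$, compute its top $k$ eigenvectors $\bv{U}$, and output $\bv{Z} = \bv{Q}\bv{U}$.

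For correctness, I would appeal to the gap-free guarantees of \cite{musco2015randomized}: after $q = \Theta(\log d/\sqrt{\epsilon})$ iterations, with high probability one has simultaneously the projection bound $\|\bv{A} - \bv{Z}\bv{Z}^T\bv{A}\|_2 \leq (1+\epsilon)\sigma_{k+1}(\bv{A})$ and the per-vector Rayleigh quotient bound $|\bv{z}_i^T\bv{A}^T\bv{A}\bv{z}_i - \sigma_i^2(\bv{A})| \leq \epsilon \sigma_{k+1}^2(\bv{A})$ for all $i \le k$. Choosing $\epsilon$ to be a sufficiently small constant (e.g., $\epsilon \le \min\{\sqrt{2}-1, 2\}$) and squaring the first bound yields both desired inequalities while keeping $q = O(\log d)$.

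For the runtime in the sparse case, each of the $q$ iterations costs one multiplication of $\bv{A}^T\bv{A}$ against a $d \times k$ block, which is $O(\nnz(\bv{A}) k)$ by applying $\bv{A}$ then $\bv{A}^T$; summed over $q = O(\log d)$ iterations this yields $O(\nnz(\bv{A}) k \log d)$. Orthonormalizing $\bv{K} \in \R^{d \times qk}$ and computing the small eigendecomposition of $\bv{M}$ (and subsequent projection back to get $\bv{Z}$) fits in $O(d(qk)^{\omega-1}) = \tilde O(dk^{\omega-1})$ via standard dense linear algebra. For dense $\bv{A}$, each iteration amounts to an $n \times d$ by $d \times k$ rectangular matrix multiplication, which costs $O(nd^{\omega(1,1,\log_d k)-1})$ using fast rectangular matrix multiplication, for a total of $O(nd^{\omega(1,1,\log_d k)-1}\log d)$; the $O(dk^{\omega-1}\log d)$ orthogonalization overhead is absorbed since $n \ge d$.

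The main obstacle is nothing fundamental: the argument is a careful invocation of the Musco--Musco analysis with explicit constant-factor choices of $\epsilon$ and a runtime accounting that tracks the block width $k$ through each iteration. The only mildly delicate point is confirming that in the dense regime the per-iteration cost is genuinely rectangular, so that $\omega(1,1,\log_d k)$ rather than the square exponent $\omega$ governs the overall bound, and that the orthogonalization of the width-$qk$ Krylov basis does not dominate because $qk = O(k\log d)$ and hence $d(qk)^{\omega-1} = \tilde O(dk^{\omega-1})$.
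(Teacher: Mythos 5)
Your proposal is correct and takes essentially the same route as the paper, which proves this lemma simply by invoking the gap-free spectral-norm and per-vector guarantees of the block Krylov/simultaneous iteration methods of \cite{musco2015randomized} with a constant error parameter (so $q = \Theta(\log d)$ iterations) and the rectangular fast-matrix-multiplication accounting given in the footnote for dense inputs. The only minor slack is that orthogonalizing your width-$qk = \Theta(k \log d)$ Krylov basis costs $O\bigl(d(k\log d)^{\omega-1}\bigr)$ rather than $O(dk^{\omega-1})$; this is hidden by the $\tilde O$ in all downstream uses, and can be avoided entirely by running simultaneous iteration with block width $k$ and a single final Rayleigh--Ritz step, which matches the stated bound exactly.
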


\cite{gonen2016solving} shows that we can build a good preconditioner from such a $\bv{Z}$. Specifically:
\begin{lemma}[Theorem 5 of \cite{gonen2016solving}]\label{traceBound}
For any $\bv A \in \mathbb{R}^{n \times d}$, $\bv{b} \in \mathbb{R}^d$, and $\lambda > 0$,
given $\bv{Z} \in \R^{d\times k}$ satisfying the guarantees of Lemma \ref{blockLanczos}, let 
$$\bv{P}^{-1/2} = \bv{Z} \bv{\tilde \Sigma}^{-1/2} \bv{Z}^T + \frac{(\bv{I}-\bv{ZZ}^T)}{\sqrt{\tilde \sigma_k^2 + \lambda}}\text{ where }\bv{\tilde \Sigma}^{-1/2}_{i,i} = \frac{1}{\sqrt{\tilde \sigma_i^2 + \lambda}}.$$
Then we have:
$$\frac{\tr(\bv{P}^{-1/2}(\bv{A}^T \bv{A} + \lambda \bv{I}) \bv{P}^{-1/2})}{\lambda_d(\bv{P}^{-1/2} (\bv{A}^T\bv{A} + \lambda \bv{I}) \bv{P}^{-1/2})} = O \left ( \frac{k \sigma_k^2(\bv{A}) + \sum_{i=k+1}^d \sigma_i^2(\bv{A})}{\lambda} + d \right )
.$$
\end{lemma}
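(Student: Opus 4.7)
My plan is to split the analysis into three parts: verifying the algebraic structure of $\bv{P}$, bounding the trace numerator, and bounding the denominator $\lambda_d$. First, rotating orthogonally within the range of $\bv{Z}$ without loss of generality, I may assume $\bv{Z}^\top \bv{A}^\top \bv{A} \bv{Z}$ is diagonal, so $\bv{Z}^\top \bv{M}_\lambda \bv{Z} = \bv{\tilde \Sigma}$. Directly squaring the stated expression then yields $\bv{P}^{-1} = \bv{Z} \bv{\tilde \Sigma}^{-1} \bv{Z}^\top + (\tilde \sigma_k^2 + \lambda)^{-1}(\bv{I} - \bv{Z}\bv{Z}^\top)$ and $\bv{P} = \bv{Z} \bv{\tilde \Sigma} \bv{Z}^\top + (\tilde \sigma_k^2 + \lambda)(\bv{I} - \bv{Z}\bv{Z}^\top)$. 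The key structural observation is that $\bv{P}$ agrees with $\bv{M}_\lambda$ exactly on the range of $\bv{Z}$ and replaces the tail eigenvalues by the cutoff $\tilde \sigma_k^2 + \lambda$, flattening the spectrum after conjugating by $\bv{P}^{-1/2}$.

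For the numerator I would apply cyclic invariance: $\tr(\bv{P}^{-1/2} \bv{M}_\lambda \bv{P}^{-1/2}) = \tr(\bv{P}^{-1} \bv{M}_\lambda)$, which splits as $\tr(\bv{\tilde \Sigma}^{-1} \bv{Z}^\top \bv{M}_\lambda \bv{Z}) = k$ plus $(\tilde \sigma_k^2 + \lambda)^{-1} \tr((\bv{I} - \bv{Z}\bv{Z}^\top) \bv{M}_\lambda) = (\sum_i \sigma_i^2(\bv{A}) - \sum_{i \leq k} \tilde \sigma_i^2 + (d - k) \lambda) / (\tilde \sigma_k^2 + \lambda)$. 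Invoking the per-vector Krylov guarantee $\tilde \sigma_i^2 \geq \sigma_i^2(\bv{A}) - 2 \sigma_{k+1}^2(\bv{A})$ from Lemma \ref{blockLanczos} together with $\tilde \sigma_k^2 + \lambda \geq \lambda$, this is at most $k + (\sum_{i > k} \sigma_i^2(\bv{A}) + 2 k \sigma_k^2(\bv{A}) + d \lambda)/\lambda = O(k \sigma_k^2(\bv{A})/\lambda + \sum_{i > k} \sigma_i^2(\bv{A})/\lambda + d)$, matching the target.

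For the denominator the goal is $\lambda_d(\bv{P}^{-1/2} \bv{M}_\lambda \bv{P}^{-1/2}) = \Omega(\lambda/(\sigma_k^2(\bv{A}) + \lambda))$, equivalent to $\bv{M}_\lambda \succeq \Omega(\lambda/(\sigma_k^2(\bv{A}) + \lambda)) \bv{P}$. Letting $\bv{W} \in \R^{d \times (d-k)}$ be an orthonormal basis for the complement of the range of $\bv{Z}$, the conjugated matrix in the block basis $(\bv{Z}, \bv{W})$ takes the form $\bigl(\begin{smallmatrix} \bv{I}_k & \bv{B} \\ \bv{B}^\top & \bv{D} \end{smallmatrix}\bigr)$, where $\bv{D} = (\bv{W}^\top \bv{A}^\top \bv{A} \bv{W} + \lambda \bv{I})/(\tilde \sigma_k^2 + \lambda)$ satisfies $\lambda_{\min}(\bv{D}) \geq \lambda/(\tilde \sigma_k^2 + \lambda)$ and $\bv{B} = \bv{\tilde \Sigma}^{-1/2} \bv{Z}^\top \bv{A}^\top \bv{A} \bv{W} / \sqrt{\tilde \sigma_k^2 + \lambda}$ captures the off-diagonal mixing. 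The Schur-type inequality $\bigl(\begin{smallmatrix} \bv{I} & \bv{B} \\ \bv{B}^\top & \bv{D} \end{smallmatrix}\bigr) \succeq \bigl(\begin{smallmatrix} (1-t) \bv{I} & 0 \\ 0 & \bv{D} - \bv{B}^\top \bv{B}/t \end{smallmatrix}\bigr)$ for $t \in (0,1)$ reduces the task to controlling $\|\bv{B}\|_2$ against $\lambda_{\min}(\bv{D})$.

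The main obstacle is exactly this last step. The crude product bound $\|\bv{Z}^\top \bv{A}^\top \bv{A} \bv{W}\|_2 \leq \|\bv{A} \bv{Z}\|_2 \|\bv{A} \bv{W}\|_2 = O(\sigma_1(\bv{A}) \sigma_{k+1}(\bv{A}))$, even after invoking the Krylov residual bound $\|\bv{A} \bv{W}\|_2 \leq \sqrt{2}\,\sigma_{k+1}(\bv{A})$, can be too weak when $\sigma_1(\bv{A}) \gg \sigma_k(\bv{A})$, in which case $\|\bv{B}\|_2^2$ may swamp $\lambda_{\min}(\bv{D})$ in the Schur complement. I expect to need the sharper analysis of \cite{gonen2016solving}, which exploits the fact that $\bv{Z}$ spans an approximately invariant subspace of $\bv{A}^\top \bv{A}$—not merely that it produces a small residual—to complete the cross-term bound. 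Combining the resulting $\lambda_d = \Omega(\lambda/(\sigma_k^2(\bv{A}) + \lambda))$ estimate with the trace bound above then yields the claimed ratio.
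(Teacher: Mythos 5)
The paper does not actually prove this lemma---it imports it wholesale as Theorem 5 of \cite{gonen2016solving}---so the question is whether your argument is self-contained. Your numerator half is: the identity $\bv{P}^{-1} = \bv{Z}\bv{\tilde\Sigma}^{-1}\bv{Z}^T + (\tilde\sigma_k^2+\lambda)^{-1}(\bv{I}-\bv{ZZ}^T)$, the split of $\tr(\bv{P}^{-1}\bv{M}_\lambda)$, and the use of $|\tilde\sigma_i^2 - \sigma_i^2(\bv{A})| \le 2\sigma_{k+1}^2(\bv{A})$ are all correct (and the ``WLOG rotate so $\bv{Z}^T\bv{A}^T\bv{A}\bv{Z}$ is diagonal'' step is both unnecessary and slightly illegitimate---rotating $\bv{Z}$ changes the $\tilde\sigma_i$ and hence $\bv{P}$---but it is not needed: since $\bv{\tilde\Sigma}^{-1}$ is diagonal, $\tr(\bv{\tilde\Sigma}^{-1}\bv{Z}^T\bv{M}_\lambda\bv{Z})$ only sees the diagonal entries $\tilde\sigma_i^2+\lambda$, so it equals $k$ regardless). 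The genuine gap is the denominator: you only sketch the Schur-complement route, observe correctly that the crude bound $\norm{\bv{B}}_2 \lesssim \sigma_1(\bv{A})\sigma_{k+1}(\bv{A})$ is too weak, and then defer the missing cross-term estimate to \cite{gonen2016solving}---i.e., to precisely the theorem being proved. As written, the proposal therefore does not constitute a proof of the lower bound on $\lambda_d$. A self-contained route does exist along lines close to yours: for any unit $\bv{x}$ write $\bv{u} = \bv{ZZ}^T\bv{x}$, $\bv{w} = \bv{x}-\bv{u}$, use $\norm{\bv{A}(\bv{I}-\bv{ZZ}^T)}_2^2 \le 2\sigma_{k+1}^2(\bv{A})$ together with the fact that $\bv{Z}^T\bv{A}^T\bv{A}\bv{Z}$ is diagonal for the Rayleigh--Ritz output of Lemma \ref{blockLanczos} (so $\norm{\bv{AZ}\bv{v}}_2^2 = \sum_i \tilde\sigma_i^2 v_i^2$), and case on whether $\norm{\bv{Au}}_2 \ge 2\norm{\bv{Aw}}_2$; this yields $\bv{x}^T\bv{P}\bv{x} = O\big(\tfrac{\tilde\sigma_k^2+\lambda}{\lambda}\big)\,\bv{x}^T\bv{M}_\lambda\bv{x}$ without any approximate-invariance machinery.

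Separately, even granting $\lambda_d = \Omega\big(\lambda/(\sigma_k^2(\bv{A})+\lambda)\big)$, your final combination is quantitatively wrong: having weakened the trace bound by replacing $\tilde\sigma_k^2+\lambda$ with $\lambda$, multiplying by $(\sigma_k^2(\bv{A})+\lambda)/\lambda$ produces terms such as $d\,\sigma_k^2(\bv{A})/\lambda$ and $k\,\sigma_k^4(\bv{A})/\lambda^2$, which exceed the claimed bound whenever $\sigma_k^2(\bv{A}) \gg \lambda$. The fix is to keep your own intermediate expression, $\tr(\bv{P}^{-1}\bv{M}_\lambda) \le k + \big(2k\sigma_k^2(\bv{A}) + \sum_{i>k}\sigma_i^2(\bv{A}) + (d-k)\lambda\big)/(\tilde\sigma_k^2+\lambda)$, and pair it with $\lambda_d \ge c\,\lambda/(\tilde\sigma_k^2+\lambda)$ so that the $(\tilde\sigma_k^2+\lambda)$ factors cancel; then $\tilde\sigma_k^2 \le \sigma_k^2(\bv{A}) + 2\sigma_{k+1}^2(\bv{A}) \le 3\sigma_k^2(\bv{A})$ gives exactly the stated $O\big(\frac{k\sigma_k^2(\bv{A}) + \sum_{i>k}\sigma_i^2(\bv{A})}{\lambda} + d\big)$.
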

Intuitively, after applying the preconditioner, all top singular values are capped at $\sigma_k^2(\bv{A})$, giving a much flatter spectrum and better performance when optimizing with SVRG.
To make use of the above bound, we first define a preconditioned ridge regression problem.
\begin{definition}[Preconditioned Ridge Regression]\label{def:precondFunction} For any $\bv A \in \mathbb{R}^{n \times d}$, $\bv{b} \in \mathbb{R}^d$, and $\lambda > 0$, let $\bv{M}_\lambda \eqdef \bv{A}^T\bv{A} + \lambda \bv{I}$ and $\bv{x}^* \eqdef \bv{M}_\lambda^{-1} \bv{b}$. Letting $\bv{P}^{-1/2}$ be as described in Lemma \ref{traceBound}, let $\bv{\hat M}_\lambda = \bv{P}^{-1/2}\bv{M}_\lambda \bv{P}^{-1/2}$ and
define the preconditioned regression objective function by:
\begin{align*}
\hat f(\bv{x}) = \frac{1}{2} \bv{x}^T \bv{\hat M}_\lambda \bv{x} - \bv{b}^T \bv{P}^{-1/2} \bv{x}.
\end{align*}
$\hat f$ is minimized at $\bv{\hat x}^* =\bv{\hat M}_\lambda^{-1} \bv{P}^{-1/2} \bv{b} = \bv{P}^{1/2} (\bv{A}^T \bv{A} + \lambda \bv{I})^{-1} \bv{b}$.
\end{definition}

\begin{fact}[Preconditioned Solution]\label{preconditionedSolution}
For any $\bv{x} \in \R^d$, if $\norm{\bv{x} - \bv{\hat x}^*}_{\bv{\hat M}_\lambda} \le \epsilon \norm{\bv{\hat x}^*}_{\bv{\hat M}_\lambda}$ then $\norm{\bv{P}^{-1/2}\bv{x} - \bv{x}^*}_{\bv{M}_\lambda} \le \epsilon \norm{\bv{ x}^*}_{\bv{ M}_\lambda}$ where $\bv{x}^*$ is the minimizer of $f(\bv{x})$.
\end{fact}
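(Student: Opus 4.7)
The plan is to prove Fact~\ref{preconditionedSolution} via a direct change-of-variables calculation, exploiting the fact that $\bv{\hat{M}}_\lambda$ is just a conjugation of $\bv{M}_\lambda$ by $\bv{P}^{-1/2}$. No inequalities or approximations are needed; the statement should collapse to an equality in each of the two norms after substitution.

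First, I would record the key identity $\bv{\hat{x}}^{*} = \bv{P}^{1/2} \bv{x}^*$, which follows from the definition
\[
\bv{\hat{x}}^* = \bv{\hat{M}}_\lambda^{-1} \bv{P}^{-1/2} \bv{b}
= \bv{P}^{1/2} \bv{M}_\lambda^{-1} \bv{P}^{1/2} \cdot \bv{P}^{-1/2} \bv{b}
= \bv{P}^{1/2} \bv{M}_\lambda^{-1} \bv{b}
= \bv{P}^{1/2} \bv{x}^* .
\]
Here I am using that $\bv{P}^{1/2}$ is symmetric and invertible (since the construction of $\bv{P}^{-1/2}$ in Lemma~\ref{traceBound} is a symmetric PSD matrix built from $\bv{Z}$, an orthonormal complement, and strictly positive weights $(\tilde\sigma_i^2+\lambda)^{-1/2}$), so $(\bv{P}^{-1/2} \bv{M}_\lambda \bv{P}^{-1/2})^{-1} = \bv{P}^{1/2} \bv{M}_\lambda^{-1} \bv{P}^{1/2}$.

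Next, I would substitute this identity into the preconditioned norm. For any $\bv{x}$,
\[
\norm{\bv{x} - \bv{\hat{x}}^*}_{\bv{\hat{M}}_\lambda}^2
= (\bv{x} - \bv{P}^{1/2} \bv{x}^*)^T \bv{P}^{-1/2} \bv{M}_\lambda \bv{P}^{-1/2} (\bv{x} - \bv{P}^{1/2} \bv{x}^*)
= (\bv{P}^{-1/2}\bv{x} - \bv{x}^*)^T \bv{M}_\lambda (\bv{P}^{-1/2}\bv{x} - \bv{x}^*)
= \norm{\bv{P}^{-1/2} \bv{x} - \bv{x}^*}_{\bv{M}_\lambda}^2 ,
\]
where the middle equality pulls the outer $\bv{P}^{-1/2}$ factors across the subtraction, cancelling them against $\bv{P}^{1/2}$. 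The same computation applied to $\bv{\hat{x}}^*$ itself gives
\[
\norm{\bv{\hat{x}}^*}_{\bv{\hat{M}}_\lambda}^2
= (\bv{P}^{1/2} \bv{x}^*)^T \bv{P}^{-1/2} \bv{M}_\lambda \bv{P}^{-1/2} (\bv{P}^{1/2} \bv{x}^*)
= \norm{\bv{x}^*}_{\bv{M}_\lambda}^2 .
\]

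Combining the two identities, the hypothesis $\norm{\bv{x} - \bv{\hat{x}}^*}_{\bv{\hat{M}}_\lambda} \le \epsilon \norm{\bv{\hat{x}}^*}_{\bv{\hat{M}}_\lambda}$ translates directly to $\norm{\bv{P}^{-1/2}\bv{x} - \bv{x}^*}_{\bv{M}_\lambda} \le \epsilon \norm{\bv{x}^*}_{\bv{M}_\lambda}$, which is exactly the conclusion. There is no real obstacle in this fact — it is a bookkeeping lemma that justifies why solving the preconditioned system with multiplicative accuracy $\epsilon$ yields the unpreconditioned solution with the same accuracy after pulling back through $\bv{P}^{-1/2}$, enabling us to apply SVRG bounds to $\hat{f}$ and transfer them to $f$.
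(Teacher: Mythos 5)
Your proposal is correct and follows essentially the same route as the paper: both rest on the identity $\bv{\hat x}^* = \bv{P}^{1/2}\bv{x}^*$ and the observation that conjugating $\bv{M}_\lambda$ by the symmetric invertible $\bv{P}^{-1/2}$ turns the $\bv{\hat M}_\lambda$-norm of $\bv{x}-\bv{P}^{1/2}\bv{x}^*$ into the $\bv{M}_\lambda$-norm of $\bv{P}^{-1/2}\bv{x}-\bv{x}^*$ (and likewise for the right-hand side). You simply write out the quadratic-form cancellation that the paper leaves implicit.
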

\begin{proof}
We have $\bv{\hat x}^* = \bv{P}^{1/2} \bv{x}^*$ and so using the fact that $\norm{\bv{x} - \bv{\hat x}^*}_{\bv{\hat M}_\lambda} \le \epsilon \norm{\bv{\hat x}^*}_{\bv{\hat M}_\lambda}$ can write:
\begin{align*}
\norm{\bv{x} - \bv{P}^{1/2}\bv{x}^*}_{\bv{\hat M}_\lambda} &\le \epsilon \norm{\bv{P}^{1/2}\bv{ x}^*}_{\bv{\hat M}_\lambda}\\
\norm{\bv{P}^{-1/2}\bv{x} - \bv{x}^*}_{\bv{ M}_\lambda} &\le \epsilon \norm{\bv{ x}^*}_{\bv{ M}_\lambda}.
\end{align*}
\end{proof}

By Fact \ref{preconditionedSolution}, we can find a near optimal solution to our preconditioned system and by multiplying by $\bv{P}^{-1/2}$ obtain a near optimal solution to the original ridge regression problem. We now show that our preconditioned problem can be solved quickly. We first give a variance bound as we did in Lemma \ref{varianceBound} in the non-preconditioned case.

\begin{lemma}[Preconditioned Variance Bound]\label{precondvarianceBound}
Let $\bv{B}^T = [\bv{P}^{-1/2}\bv{A}^T, \sqrt{\lambda} \bv{P}^{-1/2}]$ so that $\bv{B}^T\bv{B} = \bv{P}^{-1/2} (\bv{A}^T\bv{A} + \lambda \bv{I}) \bv{P}^{-1/2} = \bv{\hat M}_\lambda$.
For $i \in 1,...,(n+d)$, let 
$$\hat \psi_i(\bv{x}) = \frac{1}{2}\bv{x}^T \left (\bv{b}_i \bv{b}_i^T\right )\bv{x} - \frac{1}{n+d}\bv{b}^T \bv{P}^{-1/2}\bv{x}$$
so that $\sum_{i=1}^{n+d} \hat \psi_i(\bv{x}) = \hat f(\bv{x})$. Set $p_i = \frac{\norm{\bv{b}_i}_2^2}{\norm{\bv{B}}_F^2}$ then letting $\bar S= \tr(\bv{P}^{-1/2} (\bv{A}^T\bv{A} + \lambda \bv{I} )\bv{P}^{-1/2}$:
\begin{align*}
\sum_{i=1}^n \frac{1}{p_i} \norm{\grad \hat \psi_i(\bv{x})-\grad \hat \psi_i(\bv{x}^*)}_2^2 \le 2\bar S \left [f(\bv{x})-f(\bv{x}^*) \right].
\end{align*}
\end{lemma}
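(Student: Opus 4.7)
The plan is to mirror the unpreconditioned variance bound (Lemma~\ref{varianceBound}), but exploit the fact that the decomposition of $\hat f$ is cleaner: each component $\hat\psi_i$ has a pure rank-one quadratic $\frac12 \bv{x}^T(\bv{b}_i\bv{b}_i^T)\bv{x}$ with no additional regularization smeared into it, so the calculation will collapse to a single quadratic form in $\bv{y} = \bv{x}-\bv{\hat x}^*$.

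First I would compute the per-sample gradient difference. Since the linear term of $\hat\psi_i$ does not depend on $\bv{x}$ through a quadratic, we have
\[
\grad \hat\psi_i(\bv{x}) - \grad\hat\psi_i(\bv{\hat x}^*) = (\bv{b}_i\bv{b}_i^T)(\bv{x}-\bv{\hat x}^*) = (\bv{b}_i^T\bv{y})\,\bv{b}_i,
\]
so $\|\grad\hat\psi_i(\bv{x}) - \grad\hat\psi_i(\bv{\hat x}^*)\|_2^2 = (\bv{b}_i^T\bv{y})^2 \|\bv{b}_i\|_2^2$. Plugging in the sampling probabilities $p_i = \|\bv{b}_i\|_2^2/\|\bv{B}\|_F^2$ (note the sum now runs over $i=1,\ldots,n+d$), the $\|\bv{b}_i\|_2^2$ factor cancels:
\[
\sum_{i=1}^{n+d} \frac{1}{p_i}\|\grad\hat\psi_i(\bv{x}) - \grad\hat\psi_i(\bv{\hat x}^*)\|_2^2
= \|\bv{B}\|_F^2 \sum_{i=1}^{n+d} (\bv{b}_i^T\bv{y})^2
= \|\bv{B}\|_F^2 \cdot \bv{y}^T \bv{B}^T\bv{B} \bv{y}.
\]

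Next I would identify both factors. By construction $\bv{B}^T\bv{B} = \bv{P}^{-1/2}\bv{A}^T\bv{A}\bv{P}^{-1/2} + \lambda\bv{P}^{-1} = \bv{\hat M}_\lambda$, so $\|\bv{B}\|_F^2 = \tr(\bv{\hat M}_\lambda) = \tr(\bv{P}^{-1/2}(\bv{A}^T\bv{A}+\lambda\bv{I})\bv{P}^{-1/2}) = \bar S$, and $\bv{y}^T\bv{B}^T\bv{B}\bv{y} = \|\bv{y}\|_{\bv{\hat M}_\lambda}^2$. Applying Fact~\ref{norm2FunctionFact} to the preconditioned objective $\hat f$ (which is itself a ridge-regression objective with Hessian $\bv{\hat M}_\lambda$ and minimizer $\bv{\hat x}^*$) gives $\|\bv{y}\|_{\bv{\hat M}_\lambda}^2 = 2[\hat f(\bv{x}) - \hat f(\bv{\hat x}^*)]$, so combining yields
\[
\sum_{i=1}^{n+d} \frac{1}{p_i}\|\grad\hat\psi_i(\bv{x}) - \grad\hat\psi_i(\bv{\hat x}^*)\|_2^2 \le 2\bar S\bigl[\hat f(\bv{x}) - \hat f(\bv{\hat x}^*)\bigr],
\]
as required (with equality, in fact).

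There is no real obstacle here; the lemma is essentially bookkeeping. The only subtle point is making sure the stacking $\bv{B}^T=[\bv{P}^{-1/2}\bv{A}^T,\sqrt{\lambda}\bv{P}^{-1/2}]$ reproduces exactly $\bv{\hat M}_\lambda$ as $\bv{B}^T\bv{B}$, and that the rank-one pieces $\bv{b}_i\bv{b}_i^T$ summed match the quadratic part of $\hat f$; both are immediate from the definition. The payoff is that the bound $\bar S = \tr(\bv{\hat M}_\lambda)$ is precisely the trace quantity controlled by Lemma~\ref{traceBound}, which will then, together with Lemma~\ref{lem:epoch} and the $\lambda_d(\bv{\hat M}_\lambda)$-strong convexity of $\hat f$, deliver the $\bar\kappa$-type iteration complexity needed for Theorem~\ref{preCondsvrgThm}.
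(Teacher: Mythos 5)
Your proposal is correct and follows essentially the same argument as the paper: compute the rank-one gradient differences, cancel the $\norm{\bv{b}_i}_2^2$ factor against the importance sampling probabilities to get $\norm{\bv{B}}_F^2 \cdot \bv{y}^T\bv{B}^T\bv{B}\bv{y}$, identify $\norm{\bv{B}}_F^2 = \tr(\bv{\hat M}_\lambda) = \bar S$, and apply Fact~\ref{norm2FunctionFact} to the preconditioned objective $\hat f$ (whose minimizer is $\bv{\hat x}^*$), exactly as in the paper's proof. Your observation that the bound actually holds with equality is also consistent with the paper's derivation.
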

\begin{proof}
Write $\bv{x}-\bv{x}^* = \bv{y}$ for simplicity of notation. Then we have:
\begin{align*}
\sum_{i=1}^{n+d} \frac{1}{p_i} \norm{\grad \hat \psi_i(\bv{x})-\grad \hat \psi_i(\bv{x}^*)}_2^2
= \norm{\bv{B}}_F^2 \sum_{i=1}^{n+d} \frac{\norm{\bv{b}_i\bv{b}_i^T \bv{y}}_2^2}{\norm{\bv{b}_i}_2^2} &= \norm{\bv{B}}_F^2\sum_{i=1}^{n+d} \frac{\bv{y}^T \bv{b}_i(\bv{b}_i^T \bv{b}_i)\bv{b}_i^T \bv{y}}{\norm{\bv{b}_i}_2^2}\\
&= \norm{\bv{B}}_F^2 \cdot \bv{y}^T \bv{B}^T\bv{B} \bv{y}\\
&= 2\norm{\bv{B}}_F^2 \cdot  [\hat f(\bv{x}) - \hat f(\bv{\hat x}^*)]
\end{align*}
where the last step uses Fact \ref{norm2FunctionFact}. Finally we write:
\begin{align*}
\bar S = \norm{\bv{B}}_F^2 = \tr(\bv{B}^T\bv{B}) = \tr(\bv{P}^{-1/2} (\bv{A}^T\bv{A} + \lambda \bv{I} )\bv{P}^{-1/2}).
\end{align*}
\end{proof}
We can now combine this variance bound with Lemmas \ref{lem:epoch} and \ref{traceBound} to prove our preconditioned SVRG runtime.

\begin{proof}[Proof of Theorem \ref{preCondsvrgThm}]
Using Lemma \ref{precondvarianceBound} we can instantiate Lemma \ref{lem:epoch} with $\bar S = \norm{\bv{B}}_F^2 = \tr(\bv{P}^{-1/2} (\bv{A}^T\bv{A} + \lambda \bv{I} )\bv{P}^{-1/2})$. Letting $\mu = \lambda_d(\bv{P}^{-1/2} (\bv{A}^T\bv{A} + \lambda \bv{I}) \bv{P}^{-1/2})$ be a lower bound on the strong convexity of $\hat f$, by Lemma \ref{traceBound} we have:
\begin{align*}
m = O \left (\frac{\bar S}{\mu} \right ) &= O \left (\frac{\tr(\bv{P}^{-1/2} (\bv{A}^T\bv{A} + \lambda \bv{I} )\bv{P}^{-1/2})}{\lambda_d(\bv{P}^{-1/2} (\bv{A}^T\bv{A} + \lambda \bv{I}) \bv{P}^{-1/2})} \right )\\
&= O \left ( \frac{k \sigma_k^2(\bv{A}) + \sum_{i=k+1}^d \sigma_i^2(\bv{A})}{\lambda} + d \right ).
\end{align*}
If we set the step size $\eta = O \left (\frac{1}{\bar S} \right ) = O \left (\frac{1}{\norm{\bv{B}}_F^2} \right )$ which we will compute explicitly, then after an $m$ step SVRG epoch running on the functions $\hat \psi_i(\bv{x}) = \frac{1}{2}\bv{x}^T \left (\bv{b}_i \bv{b}_i^T\right )\bv{x} - \frac{1}{n+d}\bv{b}^T \bv{P}^{-1/2}\bv{x}$ each selected with probability $p_i = \frac{\norm{\bv{b}_i}_2^2}{\norm{\bv{B}}_F^2}$, we make constant progress in expectation on $\hat f(\bv{x})$. After $\log (1/\epsilon)$ iterations we have $\bv{x}$ with $\E \norm{\bv{x}-\bv{\hat x}^*}_{\bv{\hat M}_\lambda} \le \epsilon \norm{\bv{\hat x}^*}_{\bv{\hat M}_\lambda}$ and hence, $\E \norm{\bv{P}^{-1/2} \bv{x}-\bv{x}^*}_{\bv{M}_\lambda} \le \epsilon \norm{\bv{x}^*}_{\bv{M}_\lambda}$ by Fact \ref{preconditionedSolution}. It remains to bound the cost of each SVRG epoch applied to $\hat f(\bv{x})$.

For dense inputs, the argument is relatively simple. We can compute the approximate top singular vector space $\bv{Z} \in \R^{d \times k}$ in $O(nd^{\omega(1,1,\log_d k) -1} \log d)$ time by Lemma \ref{blockLanczos}. Using the factored structure of $\bv{P}^{-1/2}$, we can explicitly form $\bv{A}\bv{P}^{-1/2}$ and $\bv{P}^{-1/2}$ also in $(O(nd^{\omega(1,1,\log_d k) -1})$ time. This allows us to compute a full gradient in $O(nd)$ time (which we do once per epoch) and perform each stochastic gradient step in $O(d)$ time (which we do $m$ times per epoch). This gives final runtime 
\begin{align*}
O \left ( (nd + md) \log(1/\epsilon) \right ) = O \left (\log(1/\epsilon) (nd + d^2 \lceil \bar \kappa \rceil) \right )
\end{align*}
where $\bar \kappa \eqdef \frac{k \sigma_k^2(\bv{A}) + \sum_{i=k+1}^d \sigma_i^2(\bv{A})}{d\lambda}$.

For sparse $\bv{A}$ we have to be much more careful to fully exploit sparsity when preconditioning.
Computing $\bv{Z}$ takes time $O(\nnz(\bv{A}) k \log d + dk^{\omega-1})$ by Lemma \ref{blockLanczos}. We will not explicitly form $\bv{P}^{-1/2}$ but will show how to apply it when needed.  Recall that $\bv{P}^{-1/2} = \bv{Z} \bv{\tilde \Sigma}^{-1/2} \bv{Z}^T + \delta (\bv{I}-\bv{ZZ}^T)$ where we denote $\delta \eqdef 1/(\tilde \sigma_i^2 + \lambda)$.
First, in order to determine our step size $\eta$ we must compute $\norm{\bv{B}}_F^2$.
\begin{align*}
\norm{\bv{B}}_F^2 &= \norm{\bv{A}\bv{P}^{-1/2}}_F^2 + \lambda \norm{\bv{P}^{-1/2}}_F^2\\
&= \norm{\bv{AZ}\bv{\tilde \Sigma}^{-1/2} \bv{Z}^T}_F^2 + \delta \norm{\bv{A}(\bv{I}-\bv{ZZ}^T)}_F^2 + \lambda \norm{\bv{Z}\bv{\tilde \Sigma}^{-1/2} \bv{Z}^T}_F^2 + \lambda \delta \norm{\bv{I}-\bv{ZZ}^T}_F^2.
\end{align*}
Applying Pythagorean theorem and submultiplicativity we have:
\begin{align*}
\norm{\bv{B}}_F^2 &=  \norm{\bv{AZ}\bv{\tilde \Sigma}^{-1/2}}_F^2 + \delta \norm{\bv{A}}_F^2 - \delta \norm{\bv{A}\bv{Z}}_F^2 +  \lambda\norm{\bv{Z}\bv{\tilde \Sigma}^{-1/2}}_F^2 + \lambda\delta(d-k)
\end{align*}
which can be computed in $O(\nnz(\bv{A}) k)$ time with just a single multiplication of $\bv{A}$ by $\bv{Z}$. 

We can similarly compute our sampling probabilities $p_i = \frac{\norm{\bv{b}_i}_2^2}{\bv{B}_F^2}$ quickly by noting that for $i = 1,...,n$, $\norm{\bv{b}_i}_2^2 = \norm{\bv{a}_i^T\bv{Z}\bv{\tilde \Sigma}^{-1/2}}_2^2 + \delta \norm{\bv{a}_i}_2^2 - \delta \norm{\bv{a}_i^T \bv{Z}}_2^2$ and for $i = n+1,...,n+d$, $\norm{\bv{b}_i}_2^2 = \lambda \left (\norm{\bv{z}_i^T \bv{\tilde \Sigma}^{-1/2}}_2^2 + \delta (1-\norm{\bv{z}_i}_2^2) \right )$. This again requires just $O(\nnz(\bv{A})k)$ time.

 In each SVRG epoch we must compute one full gradient of the form:
\begin{align*}
\grad \hat f(\bv{x}_0) = \bv{B}^T\bv{B} \bv{x}_0 - \bv{P}^{-1/2}\bv{b} = \bv{P}^{-1/2}  \bv{A}^T\bv{A} \bv{P}^{-1/2}  \bv{x}_0 + \lambda \bv{P}^{-1} \bv{x}_0 - \bv{P}^{-1/2}\bv{b}
\end{align*}
which takes $O(\nnz(\bv{A}) + dk)$ time as $\bv{P}^{-1/2}$ can be applied to a vector in $O(dk)$ time. Naively, we then must make $m$ stochastic gradient steps, each requiring $O(dk)$ time. However we can do better. Instead of storing our iterate $\bv{x}_k$ explicitly, we will store it as the sum:
\begin{align*}
\bv{x}_k = \left (\bv{x}_k^{(0)} - \bv{Z}\bv{x}_k^{(1)}\right ) +  \bv{Z}\bv{x}_k^{(2)} +x_k^{(3)} \cdot \grad \hat f(\bv{x}_0).
\end{align*}
Note that $\bv{x}_k^{(0)} \in \R^{d}$, $\bv{x}_k^{(1)},\bv{x}_k^{(2)} \in \R^{k}$ and $x_k^{(3)}$ is a scalar. At the beginning of each epoch we will set $\bv{x}_0^{(0)} = \bv{x}_0$, $\bv{x}_0^{(1)} =  \bv{Z}^T \bv{x}_0$, $\bv{x}_0^{(2)}  = \bv{Z}^T \bv{x}_0$ and $x_0^{(3)} = 0$. We can compute $\bv{Z}^T \bv{x}_0$ in $O(dk)$ time. We will maintain the invariant that $\left (\bv{x}_k^{(0)} - \bv{Z}\bv{x}_k^{(1)}\right)$ is perpendicular to the span of $\bv{Z}$ while $\bv{x}_k^{(2)}$ will give the component of $\bv{x}_k$ within this span.

For ease of notation let $\bv{a}_{n+i}$ denote $\sqrt{\lambda}\bv{e}_i$ where $\bv{e}_i \in \R^{d}$ the $i^{th}$ standard  basis vector. In this way we have $\bv{b}_i = \bv{P}^{-1/2} \bv{a}_i$ for all $i \in 1,...,n+d$.
Each stochastic gradient step is of the form:
\begin{align*}
\bv{x}_{k+1} &= \bv{x}_k - \frac{\eta}{p_i} \left (\grad \hat \psi_i(\bv{x}_k) - \grad \hat \psi_i(\bv{x}_0)\right ) + \eta \grad \hat f(\bv{x}_0)\\
&= \bv{x}_k + \frac{\eta}{p_i} \bv{P}^{-1/2}(\bv{a}_i\bv{a}_i^T) \bv{P}^{-1/2} (\bv{x}_{k} - \bv{x}_0) + \eta \grad \hat f(\bv{x}_0)\\
&= \bv{x}_k + \frac{\eta}{p_i} (\bv{P}^{-1/2}\bv{a}_i) \left ((\bv{a}_i^T \bv{P}^{-1/2})\bv{x}_{k} - \bv{a}_i^T (\bv{P}^{-1/2}\bv{x}_0) \right) + \eta \grad \hat f(\bv{x}_0).
\end{align*}

We can precompute $\bv{P}^{-1/2}\bv{x}_0$ in $O(dk)$ time per epoch and then can compute the dot product $\bv{a}_i^T (\bv{P}^{-1/2}\bv{x}_0)$ in $\nnz(\bv{a}_i) = O(d_s(\bv{A}))$ time.  Using the fact that $\left ( \bv{x}_k^{(0)} - \bv{Z}\bv{x}_k^{(1)} \right )$ is always perpendicular to the span of $\bv{Z}$, we can write the dot product $(\bv{a}_i^T \bv{P}^{-1/2})\bv{x}_{k}$ as:
\begin{align*}
\bv{a}_i^T &\left (\bv{Z} \bv{\tilde \Sigma}^{-1/2} \bv{Z}^T + \delta (\bv{I}-\bv{ZZ}^T) \right ) \left (\left ( \bv{x}_k^{(0)} - \bv{Z}\bv{x}_k^{(1)} \right ) + \bv{Z}\bv{x}_k^{(2)} + x_k^{(3)} \grad \hat f(\bv{x}_0)\right)\\
&= \delta \bv{a}_i^T \left ( \bv{x}_k^{(0)} - \bv{Z}\bv{x}_k^{(1)} \right ) + \bv{a}_i^T \bv{Z} \bv{\tilde \Sigma}^{-1/2} \bv{x}_k^{(2)} + x_k^{(3)} \bv{a}_i^T \bv{Z} \bv{\tilde \Sigma}^{-1/2} \bv{Z}^T \grad \hat f(\bv{x}_0) + \delta x_k^{(3)} \bv{a}_i^T  (\bv{I}-\bv{ZZ}^T)\grad \hat f(\bv{x}_0).
\end{align*}
We can precompute $\bv{A} \bv{Z}$ in $O(\nnz(\bv{A}) k)$ time. We can also precompute $\bv{Z} \bv{\tilde \Sigma}^{-1/2} \bv{Z}^T \grad \hat f(\bv{x}_0) $ and $(\bv{I}-\bv{ZZ}^T)\grad \hat f(\bv{x}_0)$ in $O(dk)$ time. With these values in hand, computing the above dot product takes just $O(\nnz(\bv{a}_i) + k) = O(d_s(\bv{A}) + k)$ time.

Now, we can write $\bv{P}^{-1/2} \bv{a}_i = \bv{Z} (\bv{\tilde \Sigma} \bv{Z}^T \bv{a}_i) -\delta(\bv{I}-\bv{ZZ}^T) \bv{a}_i$. Since we have precomputed $\bv{A}\bv{Z}$, $\bv{\tilde \Sigma} \bv{Z}^T \bv{a}_i$ can be computed and added to $\bv{x}_k^{(2)}$ with the appropriate weight in $O(k)$ time. We can then compute for the appropriate weight $w$:
\begin{align*}
\left ( \bv{x}_{k+1}^{(0)} - \bv{Z}\bv{x}_{k+1}^{(1)} \right ) &= \left ( \bv{x}_{k}^{(0)} - \bv{Z}\bv{x}_{k}^{(1)} \right ) + w \cdot (\bv{I}-\bv{ZZ}^T) \bv{a}_i\\
&= \left (\bv{x}_{k}^{(0)} + w\bv{a}_i \right ) - \bv{Z} \left (\bv{x}_{k}^{(1)} + w\cdot \bv{Z}^T\bv{a}_i \right )
\end{align*}
which takes time $O(d_s(\bv{A}) + k)$. Finally, we set $x_{k+1}^{(3)} = x_{k}^{(3)} + \eta$.
Overall, our runtime per epoch is $O(\nnz(\bv{A}) + dk + m(d_s(\bv{A}) + k))$ and so our total runtime is:
\begin{align*}
O \left ( \nnz(\bv{A}) \log (1/\epsilon) +  \log(1/\epsilon) \left [d\cdot d_s(\bv{A}) + d k \right ]\bar \kappa \right )
\end{align*}
where $\bar \kappa \eqdef \frac{k \sigma_k^2(\bv{A}) + \sum_{i=k+1}^d \sigma_i^2(\bv{A})}{d\lambda}.$
\end{proof}

\subsection{Accelerated and Preconditioned SVRG}

We now combine the deflation-based preconditioning described above with accelerated gradient methods to give our strongest runtime bound for ridge regression using via stochastic solvers.

\begin{theorem}[Accelerated Preconditioned SVRG Runtime]\label{accPreCondSvrg} For any $\bv A \in \mathbb{R}^{n \times d}$, $\bv{b} \in \mathbb{R}^d$, and $\lambda > 0$, let $\bv{M}_\lambda \eqdef \bv{A}^T\bv{A} + \lambda \bv{I}$ and $ \bar \kappa \eqdef \frac{k \sigma_k^2(\bv{A}) + \sum_{i=k+1}^d \sigma_i^2(\bv{A})}{d\lambda}$ where $k \in [0,...,d]$ is an input parameter.
There is an algorithm that uses $O( \nnz(\bv{A}) k\log d + dk^{\omega-1})$ precomputation time for sparse matrices or $O (nd^{\omega(1,1,\log_d k) - 1} \log d)$  for dense matrices after which, given any input $\bv{y} \in \R^d$, letting $\bv{x}^* \eqdef \bv{M}_\lambda^{-1} \bv{y}$ the algorithm 
returns $\bv{x}$ with
$\E \norm{\bv{x}-\bv{x}^*}_{\bv{M}_\lambda} \le \epsilon \norm{\bv{x}^*}_{\bv{M}_\lambda}$
in $$O \left (\nnz(\bv{A}) \log(1/\epsilon) + \log(1/\epsilon) \log(d\bar \kappa) \cdot \sqrt{\nnz(\bv{A}) [d \cdot d_s(\bv{A}) + dk ] \bar \kappa} \right )$$ time for sparse $\bv{A}$ or $$O \left (\log(1/\epsilon) (nd + n^{1/2}d^{3/2} \log(\bar \kappa) \sqrt{\bar \kappa} ) \right )$$ time for dense $\bv{A}$.
\end{theorem}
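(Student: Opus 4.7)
The plan is to combine the deflation-based preconditioning of Theorem~\ref{preCondsvrgThm} with an outer acceleration scheme, so that the $\bar\kappa$ dependence in the stochastic work is replaced by $\sqrt{\bar\kappa}$ (amplified by the number of samples), giving the claimed geometric-mean runtime. Concretely, I would reuse verbatim the preconditioner $\bv{P}^{-1/2} = \bv{Z}\bv{\tilde\Sigma}^{-1/2}\bv{Z}^T + \delta(\bv{I} - \bv{ZZ}^T)$ constructed from an approximate top-$k$ singular space (Lemma~\ref{blockLanczos}), paying the stated precomputation cost. As in Definition~\ref{def:precondFunction}, the problem reduces to minimizing the quadratic $\hat f(\bv{x}) = \tfrac12 \bv{x}^T \bv{\hat M}_\lambda \bv{x} - \bv{b}^T\bv{P}^{-1/2}\bv{x}$ written as a finite sum $\sum_{i=1}^{n+d}\hat\psi_i$ over the rows of $\bv{B}^T = [\bv{P}^{-1/2}\bv{A}^T,\sqrt{\lambda}\bv{P}^{-1/2}]$, with the same importance sampling distribution $p_i = \|\bv{b}_i\|_2^2/\|\bv{B}\|_F^2$ and the same variance bound $\bar S = \tr(\bv{\hat M}_\lambda)$ from Lemma~\ref{precondvarianceBound}.

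For the outer loop I would apply an accelerated stochastic gradient scheme; the cleanest route is the Catalyst / accelerated proximal point framework, in which each outer iterate is obtained by approximately minimizing a proximal subproblem
\[
\hat f_\tau^{(t)}(\bv{x}) \eqdef \hat f(\bv{x}) + \tfrac{\tau}{2}\|\bv{x} - \bv{v}_t\|_2^2
\]
using the non-accelerated preconditioned SVRG of Theorem~\ref{preCondsvrgThm} as the inner solver, and combining the subproblem solutions via Nesterov momentum in the outer loop. Choosing $\tau$ to balance inner and outer costs (standard Catalyst tuning) gives roughly $\tilde O(\sqrt{1+\tau/\mu})$ outer iterations, each of inner cost $\tilde O(\nnz(\bv{A}) + [d\cdot d_s(\bv{A}) + dk](\bar\kappa_\tau))$ where the inner condition number improves to $\bar\kappa_\tau = \bar\kappa\cdot \mu/(\mu+\tau)$. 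Setting $\tau$ so that the per-outer-iteration stochastic work equals the one-shot $\nnz(\bv{A})$ cost produces the desired product form $\sqrt{\nnz(\bv{A})\cdot[d\cdot d_s(\bv{A})+dk]\cdot\bar\kappa}$, with the extra $\log(d\bar\kappa)$ factor absorbing the Catalyst warm-start/log overhead; the $\log(1/\epsilon)$ factor comes from the outer linear convergence. The error is expressed in $\|\cdot\|_{\bv{\hat M}_\lambda}$ and is converted back to $\|\cdot\|_{\bv{M}_\lambda}$ error on $\bv{P}^{-1/2}\bv{x}$ via Fact~\ref{preconditionedSolution}, exactly as in the unaccelerated case.

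The dense runtime follows from the same template: the precomputation is dominated by the block-Krylov approximation of $\bv{Z}$ at $\tilde O(nd^{\omega(\log_d k)-1})$, a full gradient of $\hat f$ costs $O(nd)$, each accelerated stochastic step costs $O(d)$, and the accelerated iteration count becomes $\sqrt{nd\cdot\bar\kappa}$, giving $\tilde O(nd + n^{1/2}d^{3/2}\sqrt{\bar\kappa})$ per outer loop, multiplied by $\log(1/\epsilon)$.

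The main obstacle, as in Theorem~\ref{preCondsvrgThm}, is \emph{implementing the stochastic updates in $O(d_s(\bv{A})+k)$ time} rather than $O(d)$, even in the accelerated regime where iterates are maintained as momentum-weighted combinations of two (or three) auxiliary vectors. I would carry over the factored representation $\bv{x}_k = (\bv{x}_k^{(0)} - \bv{Z}\bv{x}_k^{(1)}) + \bv{Z}\bv{x}_k^{(2)} + x_k^{(3)}\grad \hat f(\bv{x}_0)$ used in the proof of Theorem~\ref{preCondsvrgThm} and extend it with additional scalar/low-dimensional coordinates to track the auxiliary momentum sequence(s) of the accelerated scheme; the same precomputations ($\bv{AZ}$, $\bv{\tilde\Sigma}^{-1/2}$, $\bv{P}^{-1/2}\bv{x}_0$, and projections of $\grad \hat f(\bv{x}_0)$ on $\bv{Z}$ and its complement) then let each accelerated stochastic step reduce to a constant number of sparse dot products of size $d_s(\bv{A})$ and dense dot products of size $k$. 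This bookkeeping is the crux; once it is established the runtime accounting is the routine Katyusha/Catalyst calculation together with the trace bound of Lemma~\ref{traceBound}.
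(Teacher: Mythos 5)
Your proposal matches the paper's proof essentially step for step: the paper also reuses the deflation preconditioner and the variance bound of Lemma~\ref{precondvarianceBound}, then accelerates the unaccelerated preconditioned SVRG by applying it to the proximally regularized objective $\hat f_{\gamma,\bv{x}_0}$ inside the approximate proximal-point acceleration framework (Lemma~\ref{acceleration_primitive}, i.e.\ Theorem~1.1 of \cite{frostig2015regularizing}, which is exactly your Catalyst route), choosing $\gamma/\mu$ to balance the $\nnz(\bv{A})$ term against the stochastic work and converting back to $\norm{\cdot}_{\bv{M}_\lambda}$ error via Fact~\ref{preconditionedSolution}. Since the acceleration lives entirely in the outer loop, the sparse per-step bookkeeping you flag as the crux is essentially the one already established in Theorem~\ref{preCondsvrgThm} (the proximal term only adds a scalar multiple of $\bv{x}-\bv{x}_0$ to each component gradient), so no genuinely new implementation idea is needed.
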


The above runtime will follow from applying a blackbox technique for accelerating the runtime of convex optimization methods. While there are a number of improvements over this method \cite{allen2016katyusha}, we at most lose logarithmic factors and gain significant simplicity in our proofs.

\begin{lemma}[Theorem 1.1 of \cite{frostig2015regularizing}]\label{acceleration_primitive}
Let $f(\bv{x})$ be $\mu$-strongly convex and let $\bv{x}^* \eqdef \argmin_{\bv{x}\in\mathbb{R}^d} f(\bv{x})$. For any $\gamma > 0$ and any $\bv{x}_0 \in \mathbb{R}^d$, let $f_{\gamma,\bv{x}_0}(\bv{x}) \eqdef f(\bv{x}) + \frac{\gamma}{2} \norm{\bv{x}-\bv{x}_0}_2^2$. Let $\bv{x}^*_{\gamma,\bv{x}_0}  \eqdef \argmin_{\bv{x}\in\mathbb{R}^d} f_{\gamma,\bv{x}_0} (\bv{x})$.
Suppose that, for all $\bv{x}_0 \in \mathbb{R}^d$, $c > 0$, $\gamma > 2\mu$, we can compute $\bv{x}_c$ such that
\begin{align*}
\E \left [f_{\gamma,\bv{x}_0}(\bv{x}_c) - f_{\gamma,\bv{x}_0}(\bv{x}^*_{\gamma,\bv{x}_0} )\right ] \le \frac{1}{c} \left [f_{\gamma,\bv{x}_0} - f_{\gamma,\bv{x}_0}(\bv{x}^*_{\gamma,\bv{x}_0} ) \right ]
\end{align*}
in time $\mathcal{T}_c$. Then we can compute $x_1$ such that $\E \left [f(\bv{x}_1) - f(\bv{x}^*) \right ] \le \frac{1}{c} \left [f(\bv{x}_0) - f(\bv{x}^*) \right ]$
in time $$O\left(\mathcal{T}_{4\left (\frac{2\gamma + \mu}{\mu} \right )^{3/2}} \sqrt{ \gamma/\mu} \log c \right ).$$
\end{lemma}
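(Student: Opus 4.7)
The plan is to prove this via the standard Catalyst/proximal-point acceleration framework: accelerated gradient descent applied to the Moreau envelope of $f$, where each iteration is implemented by a call to the given subproblem solver. First I would define the Moreau envelope $F(\bv{y}) \eqdef \min_{\bv{x}} f_{\gamma,\bv{y}}(\bv{x}) = f_{\gamma,\bv{y}}(\bv{x}^*_{\gamma,\bv{y}})$ and record the standard Moreau-envelope facts: $F$ is convex, $\gamma$-smooth, $\mu_F$-strongly convex with $\mu_F = \mu\gamma/(\mu+\gamma)$, $\nabla F(\bv{y}) = \gamma(\bv{y} - \bv{x}^*_{\gamma,\bv{y}})$, and $\argmin F = \argmin f$ with $F(\bv{x}^*) = f(\bv{x}^*)$. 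In particular $F$ has condition number $\kappa_F = (\mu+\gamma)/\mu$, so an exact accelerated gradient descent (AGD) on $F$ would attain $1/c$ accuracy in $O(\sqrt{\kappa_F}\log c)$ outer iterations.

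Next I would relate the given oracle's guarantee to an inexact gradient oracle for $F$. Since $f_{\gamma,\bv{y}}$ is $(\mu+\gamma)$-strongly convex, the hypothesis $\E[f_{\gamma,\bv{y}}(\bv{x}_c)-f_{\gamma,\bv{y}}(\bv{x}^*_{\gamma,\bv{y}})] \le \tfrac{1}{c}[f_{\gamma,\bv{y}}(\bv{y})-f_{\gamma,\bv{y}}(\bv{x}^*_{\gamma,\bv{y}})]$ converts via Fact~\ref{norm2FunctionFact}-style arithmetic into a bound of the form $\E\|\bv{x}_c - \bv{x}^*_{\gamma,\bv{y}}\|_2^2 \le \tfrac{1}{c}\cdot \tfrac{\gamma}{\mu+\gamma}\|\bv{y} - \bv{x}^*_{\gamma,\bv{y}}\|_2^2$. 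Multiplying by $\gamma^2$ yields an approximate gradient $\tilde{\nabla} F(\bv{y}) = \gamma(\bv{y}-\bv{x}_c)$ with squared error controlled by $\tfrac{1}{c}\|\nabla F(\bv{y})\|_2^2$ (up to factors in $\gamma/\mu$).

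The heart of the proof is then an inexact AGD guarantee: running $N = O(\sqrt{\kappa_F}) = O(\sqrt{(2\gamma+\mu)/\mu})$ Nesterov iterations on $F$, each supplied with an approximate gradient of the quality above, yields a constant-factor reduction in the suboptimality gap $F(\bv{y})-F(\bv{x}^*)$, provided the per-call accuracy parameter is taken to be $c' = \Theta(\kappa_F^{3/2})$; the exponent $3/2$ comes from paying one $\sqrt{\kappa_F}$ for the accumulation of errors over $N$ steps and another $\kappa_F$ for converting between distance and function-value error. Chaining $\log c$ such constant-factor stages (with each stage warm-started from the previous output) gives total accuracy $1/c$ in $O(\sqrt{\kappa_F}\log c)$ calls to the subproblem solver at fixed parameter $c' = 4((2\gamma+\mu)/\mu)^{3/2}$, matching the lemma. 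Finally I would convert function-value suboptimality back to the stated form using $\mu$-strong convexity of $f$.

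The main obstacle is the inexact AGD step: the vanilla Nesterov potential-function analysis assumes exact gradients, and naively plugging in errors breaks acceleration because errors can compound multiplicatively across the momentum steps. I would handle this by using the estimate-sequence / perturbed-potential analysis (as in Schmidt–Le Roux–Bach or d'Aspremont), which shows that an accelerated method tolerates per-step gradient error of order $\epsilon/\sqrt{\kappa_F}$ over $\sqrt{\kappa_F}$ steps while still halving the suboptimality gap; verifying that our choice $c' = \Theta(\kappa_F^{3/2})$ meets this tolerance—and doing so in expectation, since the oracle guarantee is stochastic—is the delicate calculation that drives the specific constants in the lemma statement.
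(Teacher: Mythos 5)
First, note that the paper does not prove this lemma at all: it is imported verbatim as Theorem 1.1 of \cite{frostig2015regularizing}, so there is no internal proof to match. Your architecture — accelerated proximal point / inexact AGD on the Moreau envelope $F(\bv{y}) = \min_{\bv{x}} f_{\gamma,\bv{y}}(\bv{x})$, with each gradient step implemented by the relative-accuracy subproblem oracle and the inexactness absorbed by an error-tolerant estimate-sequence analysis — is indeed the same strategy used in the cited source (and in the Catalyst framework), so conceptually you are on the right track.

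However, there is a genuine gap in the step that converts the oracle guarantee into a relative-error gradient oracle for $F$. You claim $\E\norm{\bv{x}_c - \bv{x}^*_{\gamma,\bv{y}}}_2^2 \le \tfrac{1}{c}\cdot\tfrac{\gamma}{\mu+\gamma}\norm{\bv{y}-\bv{x}^*_{\gamma,\bv{y}}}_2^2$, which implicitly upper-bounds the initial subproblem gap $f_{\gamma,\bv{y}}(\bv{y}) - f_{\gamma,\bv{y}}(\bv{x}^*_{\gamma,\bv{y}})$ by (a constant times) $\tfrac{\gamma}{2}\norm{\bv{y}-\bv{x}^*_{\gamma,\bv{y}}}_2^2$. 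The inequality actually goes the other way: since $\gamma(\bv{y}-\bv{x}^*_{\gamma,\bv{y}})$ is a subgradient of $f$ at $\bv{x}^*_{\gamma,\bv{y}}$, convexity gives $f_{\gamma,\bv{y}}(\bv{y}) - f_{\gamma,\bv{y}}(\bv{x}^*_{\gamma,\bv{y}}) \ge \tfrac{\gamma}{2}\norm{\bv{y}-\bv{x}^*_{\gamma,\bv{y}}}_2^2$, and without an additional smoothness assumption on $f$ (which the lemma does not make) the gap can exceed this quantity by an arbitrarily large factor. Even assuming $f$ is $L$-smooth (as in the paper's application to ridge regression), the correct upper bound is $\tfrac{L+\gamma}{2}\norm{\bv{y}-\bv{x}^*_{\gamma,\bv{y}}}_2^2$, so your conversion forces the per-call accuracy parameter to grow with $L/\gamma$, and you cannot recover the stated $L$-free constant $4\bigl(\tfrac{2\gamma+\mu}{\mu}\bigr)^{3/2}$ by this route. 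The proofs in \cite{frostig2015regularizing} (and in Catalyst) avoid exactly this issue: they do not pass through a relative gradient-error oracle for $F$, but instead warm-start each subproblem at its own prox center, so that the quantity the oracle contracts is precisely $f_{\gamma,\bv{y}_k}(\bv{y}_k) - \min f_{\gamma,\bv{y}_k}$, and they propagate this relative function-value error directly through the accelerated potential/estimate-sequence argument. To repair your sketch you would need to either carry function-value errors through the momentum analysis in this way, or add a smoothness hypothesis and accept an $L$-dependent subproblem accuracy (which, incidentally, would still suffice for the paper's application since $\mathcal{T}_c$ depends only logarithmically on $c$, but it would not prove the lemma as stated).
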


\begin{proof}[Proof of Theorem \ref{accPreCondSvrg}]
We focus again on optimizing the preconditioned function $\hat f$ in Definition \ref{def:precondFunction} as by Fact \ref{preconditionedSolution} a near optimal minimizer for this function yields a near optimal solution for our original ridge regression problem. 
We split $\hat f_{\gamma,\bv{x}_0} = \frac{1}{2} \bv{x}^T \bv{\hat M}_\lambda \bv{x} - \bv{b}^T \bv{P}^{-1/2} \bv{x} + \frac{\gamma}{2} \norm{\bv{x}-\bv{x}_0}_2^2$ as $\hat f_{\gamma,\bv{x}_0} = \sum_{i=1}^ {n+d} \hat \psi^i_{\gamma,\bv{x}_0}(\bv{x})$ where 
\begin{align*}
\hat \psi^i_{\gamma,\bv{x}_0}(\bv{x}) = \frac{1}{2} \bv{x}^T (\bv{b}_i \bv{b}_i^T) \bv{x} - \frac{1}{n+d} \bv{b}^T \bv{P}^{-1/2} \bv{x} + \frac{\gamma \norm{\bv{b}_i}_2^2}{2\norm{\bv{B}_F^2}} \norm{\bv{x}-\bv{x}_0}_2^2.
\end{align*}
$\bv{B}$ is as defined in Lemma \ref{precondvarianceBound} with $\bv{B}^T = [\bv{P}^{-1/2}\bv{A}^T, \sqrt{\lambda} \bv{P}^{-1/2}]$.

We have $\grad \hat \psi^i_{\gamma,\bv{x}_0}(\bv{x}) = (\bv{b}_i \bv{b}_i^T)\bv{x} - \frac{1}{n+d}\bv{b}\bv{P}^{-1/2}  + \frac{\gamma \norm{\bv{b}_i}_2^2}{\norm{\bv{B}}_F^2} (\bv{x}-\bv{x}_0)$. Letting $\bv{y} = \bv{x}-\bv{x}_{\gamma,\bv{x}_0}^*$,
we can follow a similar calculation to Lemma \ref{varianceBound} to show:
\begin{align*}
\sum_{i=1}^n& \frac{1}{p_i} \norm{\grad \hat \psi^i_{\gamma,\bv{x}_0}(\bv{x})-\grad \hat \psi^i_{\gamma,\bv{x}_0}(\bv{x}_{\gamma,\bv{x}_0}^*)}_2^2\\
&= \norm{\bv{B}}_F^2 \sum_{i=1}^{n+d} \frac{\norm{(\bv{b}_i \bv{b}_i^T + \frac{\gamma \norm{\bv{b}_i}_2^2}{\norm{\bv{B}}_F^2} \bv{I}) \bv{y}}_2^2}{\norm{\bv{b}_i}_2^2}\\
&\le 2(\norm{\bv{B}}_F^2 + 2\gamma) [f(\bv{x})-f(\bv{x}_{\gamma,\bv{x}_0}^*)].
\end{align*}

Let $\bar S = \norm{\bv{B}}_F^2$ and $\mu = \lambda_d(\bv{P}^{-1/2} (\bv{A}^T\bv{A} + \lambda \bv{I}) \bv{P}^{-1/2})$ be an lower bound on the strong convexity of $\hat f$. 
Denote $\frac{\gamma}{\mu} = r$. The strong convexity of $\hat f_{\gamma,\bv{x}_0}$ is lower bounded by
 $\mu_\gamma = \lambda_d(\bv{P}^{-1/2} (\bv{A}^T\bv{A} + \lambda \bv{I}) \bv{P}^{-1/2}) + \gamma = \Theta(r \cdot \lambda_d(\bv{P}^{-1/2} (\bv{A}^T\bv{A} + \lambda \bv{I}) \bv{P}^{-1/2}))$ which gives:
\begin{align*}
\frac{\bar S}{\mu_\gamma} = \frac{\tr(\bv{P}^{-1/2}(\bv{A}^T \bv{A} + \lambda \bv{I}) \bv{P}^{-1/2}) + 2\gamma}{\lambda_d(\bv{P}^{-1/2} (\bv{A}^T\bv{A} + \lambda \bv{I}) \bv{P}^{-1/2}) + \gamma} = O \left ( \frac{k \sigma_k^2(\bv{A}) + \sum_{i=k+1}^d \sigma_i^2(\bv{A})}{r\lambda} + \frac{d}{r} \right )
\end{align*}
So, by Theorem \ref{preCondsvrgThm}, for dense inputs, ignoring the $O(nd^{\omega(1,1,\log_d k)-1} \log d )$ precomputation cost to compute $\bv{Z}$, $\bv{P}^{-1/2}$ and $\bv{A}\bv{P}^{-1/2}$, which we only pay once, letting $\hat \kappa \eqdef \frac{k \sigma_k^2(\bv{A}) + \sum_{i=k+1}^d \sigma_i^2(\bv{A})}{d\lambda}$, we have 
$\mathcal{T}_c = O \left (\log c \cdot \left ( nd + \frac{d^2\hat \kappa}{r}\right) \right )$. If $nd \ge d^2 \hat \kappa$ then we already have runtime $O(nd\log(1/\epsilon))$ by Theorem \ref{preCondsvrgThm}. Otherwise, setting $\frac{\gamma}{\mu} \eqdef r = \frac{d \hat \kappa}{n}$ we can solve $\hat f$ up to $\epsilon$ accuracy in time:
\begin{align*}
O\left(\mathcal{T}_{4\left (\frac{2\gamma + \mu}{\mu} \right )^{3/2}} \sqrt{ \gamma/\mu} \log (1/\epsilon) \right ) = O \left (\log(1/\epsilon) n^{1/2}d^{3/2} \log(\bar \kappa) \sqrt{\bar \kappa} \right).
\end{align*}
For sparse inputs, we again ignore the one time precomputation cost of $O \left ( \nnz(\bv{A})k \log d + dk^{\omega-1} \right )$ to compute $\bv{Z}$ and $\bv{A}\bv{Z}$.
We have:
\begin{align*} 
\mathcal{T}_c = O \left ( \log c \cdot \left ( \nnz(\bv{A}) + \left [d \cdot d_s(\bv{A}) + d k \right ]\frac{\bar \kappa}{r} \right ) \right ).
\end{align*}
If $\nnz(\bv{A}) \ge \left [d\cdot d_s(\bv{A}) + d k \right ]\bar \kappa)$ then we already have runtime $O(\nnz(\bv{A})\log(1/\epsilon))$ by Theorem \ref{preCondsvrgThm}. Otherwise, 
$\nnz(\bv{A}) \le \left [d\cdot d_s(\bv{A}) + d k \right ]\bar \kappa$. So setting $r = \frac{\bar \kappa \left [d\cdot d_s(\bv{A}) + d k \right ]}{\sqrt{\nnz(\bv{A})}}$ we have:
\begin{align*}
O\left(\mathcal{T}_{4\left (\frac{2\gamma + \mu}{\mu} \right )^{3/2}} \sqrt{ \gamma/\mu} \log (1/\epsilon) \right ) = O \left (\log(1/\epsilon) \log(d\bar \kappa) \cdot \sqrt{\nnz(\bv{A}) [d\cdot d_s(\bv{A}) + dk ] \bar \kappa} \right ).
\end{align*}
\end{proof}

\subsection{Preconditioned Iterative Methods}
Finally, we describe how to combine deflation-based preconditioning with standard iterative methods, which can give runtime advantages over Theorem \ref{accPreCondSvrg} in some parameter regimes.
\begin{theorem}[Preconditioned Iterative Method]\label{preCondIte} For any $\bv A \in \mathbb{R}^{n \times d}$, $\bv{b} \in \mathbb{R}^d$, and $\lambda > 0$, let $\bv{M}_\lambda \eqdef \bv{A}^T\bv{A} + \lambda \bv{I}$ and $ \hat \kappa =\frac{\sigma_{k+1}^2(\bv{A})}{\lambda}$ where $k \in [0,...,d]$ is an input parameter.
There is an algorithm that uses $O( \nnz(\bv{A}) k\log d + dk^{\omega-1})$ precomputation time after which, given any input $\bv{y} \in \R^d$, letting $\bv{x}^* \eqdef \bv{M}_\lambda^{-1} \bv{y}$ the algorithm returns $\bv{x}$ such that with high probability
$\norm{\bv{x}-\bv{x}^*}_{\bv{M}_\lambda} \le \epsilon \norm{\bv{x}^*}_{\bv{M}_\lambda}$
in time $$O \left (\log(1/\epsilon) (\nnz(\bv{A}) + dk) \sqrt{\hat \kappa}\right ).$$
\end{theorem}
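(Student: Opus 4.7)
The plan is to adapt the deflation-based preconditioning used in Theorems~\ref{preCondsvrgThm} and~\ref{accPreCondSvrg}, but replace the inner SVRG solver with a standard accelerated iterative method (preconditioned Chebyshev iteration or conjugate gradient). This yields a dependence on the worst-case deflated condition number $\hat\kappa = \sigma_{k+1}^2(\bv{A})/\lambda$ rather than the average $\bar\kappa$, but avoids any reliance on row sparsity and removes the $d\cdot d_s(\bv{A})$ term from the inner cost.

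First, in precomputation, I would compute $\bv{Z}\in\R^{d\times k}$ satisfying the guarantees of Lemma~\ref{blockLanczos} in $O(\nnz(\bv{A})k\log d + dk^{\omega-1})$ time, along with the scalars $\tilde\sigma_i^2 = \bv{z}_i^T\bv{A}^T\bv{A}\bv{z}_i$, and set up the implicit representation $\bv{P}^{-1/2} = \bv{Z}\bv{\tilde\Sigma}^{-1/2}\bv{Z}^T + \delta(\bv{I}-\bv{ZZ}^T)$ from Lemma~\ref{traceBound} with $\delta = 1/\sqrt{\tilde\sigma_k^2+\lambda}$. Given a query $\bv{y}$, I would instead solve the preconditioned system $\bv{\hat M}_\lambda \bv{\hat x} = \bv{P}^{-1/2}\bv{y}$ with $\bv{\hat M}_\lambda = \bv{P}^{-1/2}\bv{M}_\lambda \bv{P}^{-1/2}$, and then return $\bv{x} = \bv{P}^{-1/2}\bv{\hat x}$; by Fact~\ref{preconditionedSolution} this yields the claimed error in $\norm{\cdot}_{\bv{M}_\lambda}$ whenever $\bv{\hat x}$ is $\epsilon$-accurate in $\norm{\cdot}_{\bv{\hat M}_\lambda}$.

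The central algebraic claim is that $\kappa(\bv{\hat M}_\lambda) = O(\hat\kappa + 1)$. To see this, decompose any vector as $\bv{v} = \bv{Z}\bv{u} + \bv{w}$ with $\bv{w}\perp\mathrm{col}(\bv{Z})$. Applying $\bv{P}^{-1/2}$ sends $\bv{Z}\bv{u}$ to $\bv{Z}\bv{\tilde\Sigma}^{-1/2}\bv{u}$ and $\bv{w}$ to $\delta\bv{w}$. Combined with Lemma~\ref{blockLanczos}, which guarantees $\norm{\bv{A}(\bv{I}-\bv{ZZ}^T)}_2^2\le 2\sigma_{k+1}^2(\bv{A})$ and $|\tilde\sigma_i^2-\sigma_i^2(\bv{A})|\le 2\sigma_{k+1}^2(\bv{A})$, a direct spectral calculation shows that on $\mathrm{col}(\bv{Z})$ the preconditioned quadratic form has eigenvalues in $[\Omega(1), O(1)]$, while on the orthogonal complement it lies in $[\delta^2\lambda, \delta^2(2\sigma_{k+1}^2(\bv{A})+\lambda)] = [\Theta(\lambda/(\sigma_{k+1}^2(\bv{A})+\lambda)), \Theta(1)]$. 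Hence the full condition number is $O((\sigma_{k+1}^2(\bv{A})+\lambda)/\lambda) = O(\hat\kappa + 1)$. Applying preconditioned Chebyshev iteration or CG to $\bv{\hat M}_\lambda$ then converges in $O(\sqrt{\hat\kappa+1}\log(1/\epsilon)) = O(\lceil\sqrt{\hat\kappa}\rceil \log(1/\epsilon))$ iterations, each costing one multiplication by $\bv{\hat M}_\lambda$ at $O(\nnz(\bv{A}) + dk)$ time (since $\bv{P}^{-1/2}$ can be applied via its factored form in $O(dk)$ time), giving the stated total runtime.

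The main obstacle is establishing the condition number bound cleanly in the presence of approximation error in $\bv{Z}$ and $\tilde\sigma_i^2$. In particular, because $\bv{Z}$ is only an approximate top-$k$ subspace, the cross terms that arise when applying $\bv{P}^{-1/2}\bv{M}_\lambda\bv{P}^{-1/2}$ to a vector with components both in and out of $\mathrm{col}(\bv{Z})$ must be controlled; this will rely crucially on the fact that $\bv{A}(\bv{I}-\bv{ZZ}^T)$ has operator norm $O(\sigma_{k+1}(\bv{A}))$, so such cross terms can be absorbed into the $\sigma_{k+1}^2(\bv{A})$ scale already present in the bound. The remainder of the argument — expressing the error guarantee in the $\norm{\cdot}_{\bv{M}_\lambda}$ norm via Fact~\ref{preconditionedSolution} and counting the $O(dk)$ cost per preconditioner application from the $\bv{Z}$, $(\bv{I}-\bv{ZZ}^T)$, and diagonal pieces — is routine.
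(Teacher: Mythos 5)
Your proposal matches the paper's proof essentially step for step: the same deflation-based preconditioner $\bv{P}^{-1/2}$ built in precomputation via Lemmas \ref{blockLanczos} and \ref{traceBound}, the same use of an accelerated solver (Chebyshev/CG) on $\bv{\hat M}_\lambda = \bv{P}^{-1/2}\bv{M}_\lambda\bv{P}^{-1/2}$ with $O(\sqrt{\hat\kappa}\log(1/\epsilon))$ iterations of cost $O(\nnz(\bv{A})+dk)$ each, and the same transfer of the error guarantee to $\norm{\cdot}_{\bv{M}_\lambda}$. The only divergence is that the paper obtains the key bound $\kappa(\bv{P}^{-1/2}\bv{M}_\lambda\bv{P}^{-1/2}) = O\left(\sigma_{k+1}^2(\bv{A})/\lambda\right)$ by citing Lemmas 2 and 4 of \cite{gonen2016solving}, whereas you sketch a direct spectral argument; the cross-term/approximate-subspace control you flag as the main obstacle is precisely the content of those cited lemmas, so you should either complete that calculation or simply invoke them as the paper does.
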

\begin{proof}
If we form the preconditioner $\bv{P}^{-1/2}$ as in Lemma \ref{traceBound}, by Lemmas 2 and 4 of \cite{gonen2016solving} we have the preconditioned condition number bound:
\begin{align*}
\hat \kappa \eqdef \frac{\lambda_1(\bv{P}^{-1/2}  \bv{M}_\lambda \bv{P}^{-1/2})}{\lambda_d(\bv{P}^{-1/2}  \bv{M}_\lambda \bv{P}^{-1/2})} = O \left ( \frac{\sigma_{k+1}^2(\bv{A})}{\lambda} \right )
\end{align*}
Note that the bound in \cite{gonen2016solving} is actually in terms of $\sigma_{k}^2(\bv{A})$, however we write $\sigma_{k+1}^2(\bv{A})$ so that our theorem holds in the case that $k = 0$. This can be achieved with no effect on the asymptotic runtime by setting $k$ to $k+1$.

We can now apply any accelerated linear system solver, such as Chebyshev iteration, Conjugate Gradient, or Accelerated Gradient Descent \cite{saad2003iterative,nesterov2013introductory} to obtain $\bv{x}$ with $\norm{\bv{x}-\bv{x}^*}_{\bv{M}_\lambda} \le \epsilon \norm{\bv{x}^*}_{\bv{M}_\lambda}$ in $O \left (\log(1/\epsilon) \sqrt{\hat \kappa} \cdot \text{matvec}(\bv{P}^{-1/2}(\bv{A}^T\bv{A} + \lambda \bv{I})\bv{P}^{-1/2} \right )$ where $\text{matvec}(\bv{P}^{-1/2}(\bv{A}^T\bv{A} + \lambda \bv{I})\bv{P}^{-1/2}$ is the time required to multiply a single vector by this matrix. $\text{matvec}(\bv{P}^{-1/2}(\bv{A}^T\bv{A} + \lambda \bv{I})\bv{P}^{-1/2}) = O(\nnz(\bv{A}) + dk)$ since $\bv{P}^{-1/2} = \bv{Z}\bv{\tilde \Sigma}^{-1/2} \bv{Z}^T + \delta (\bv{I}-\bv{ZZ}^T)$ can be applied in $O(dk)$ time. This gives the result combined with the precomputation time for $\bv{P}^{-1/2}$ from Lemma \ref{blockLanczos}.
\end{proof}

\section{Additional Proofs: Lower Bounds}\label{sec:hardnessAppendix}

\begin{lemma}[Determinant Hardness]\label{detHardness}
Given algorithm $\mathcal{A}$ which given $\bv{B} \in \R^{n\times n}$ returns $X \in (1\pm \epsilon) \det(\bv{B})$ in $O(n^\gamma \epsilon^{-c})$ time, we can detect if an $n$-node graph contains a triangle in $O(n^{\gamma + 12c})$ time.
\end{lemma}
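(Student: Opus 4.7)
The plan is to reduce to the proof strategy of Theorem~\ref{triangle_lowerBound} by observing that its core requirement is really an \emph{additive} estimate of $\Sum_f(\bv{B})$, and that $\log Y$ for a multiplicative approximation $Y$ of $\det(\bv{B})$ supplies exactly such an estimate. Concretely, I would set $f(x) = \log x$ and $\bv{B} \eqdef \bv{I} - \delta \bv{A}$ with $\delta = 1/(10n^4)$, matching the log-determinant reduction already used in the excerpt. Since $\delta \norm{\bv{A}}_2 \leq 1/(10n^3)$, every eigenvalue of $\bv{B}$ lies in $(1 - 1/(10n^3),\, 1 + 1/(10n^3))$, so $\bv{B}$ is positive definite and $\det(\bv{B}) \in (1/2, 2)$ is bounded away from zero.

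Next, I would invoke the assumed determinant oracle on $\bv{B}$ with $\epsilon = \Theta(1/n^{12})$, obtaining $Y \in (1 \pm \epsilon)\det(\bv{B})$ in time $O(n^\gamma \epsilon^{-c}) = O(n^{\gamma + 12c})$. Setting $X \eqdef \log Y$ and using $|\log(1 \pm \epsilon)| \leq 2\epsilon$ for small $\epsilon$ (which is valid since $\det(\bv{B})$ is bounded away from zero, so $Y/\det(\bv{B}) \in (1/2, 2)$), this yields the additive bound $|X - \log\det(\bv{B})| \leq 2\epsilon \leq |c_3|\delta^3/9$, where $|c_3| = 1/3$ is the magnitude of the third Taylor coefficient of $\log$ at $1$ and $\log\det(\bv{B}) = \sum_i \log \sigma_i(\bv{B}) = \Sum_f(\bv{B})$.

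Finally, I would reuse the endgame of the proof of Theorem~\ref{triangle_lowerBound}: inspecting that proof shows the multiplicative hypothesis on the spectral sum estimate is used only to derive precisely the additive bound $|X - \Sum_f(\bv{B})| \leq |c_3|\delta^3/9$ that we have already established. Since $c_0 = \log 1 = 0$, $\tr(\bv{A}) = 0$, and $\tr(\bv{A}^2) = \norm{\bv{A}}_F^2$ is computable exactly in $O(\nnz(\bv{A})) = O(n^2)$ time, subtracting $c_2 \delta^2 \tr(\bv{A}^2)$ from $X$ and combining with the Taylor tail bound from that theorem gives $|X - c_2\delta^2\tr(\bv{A}^2) - c_3 \delta^3 \tr(\bv{A}^3)| \leq |c_3| \delta^3/3$, which suffices to distinguish $\tr(\bv{A}^3) = 0$ from $\tr(\bv{A}^3) \geq 6$ and hence detect triangles. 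The one subtlety is formalizing that the multiplicative-to-additive conversion is enough, but this is immediate from re-tracing the last two displayed equations of the proof of Theorem~\ref{triangle_lowerBound}. The increase from the $6c$ exponent for log-determinant to $12c$ here reflects exactly one thing: the $\log$-transform converts $(1 \pm \epsilon)$-multiplicative accuracy on $\det(\bv{B})$ into additive accuracy $O(\epsilon)$ on $\log\det(\bv{B})$ without a scale factor, whereas the original log-determinant reduction was able to exploit $|\Sum_f(\bv{B})| = O(1/n^6)$ to trade off multiplicative precision against absolute budget.
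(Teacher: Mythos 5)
Your proof is correct, but it takes a genuinely different route from the paper's. The paper proves Lemma \ref{detHardness} directly: it expands $\det(\bv{I}+\delta\bv{A}) = \prod_i(1+\delta\lambda_i)$ into elementary symmetric polynomials of the eigenvalues, computes the order-$0$ through order-$3$ terms exactly (the order-$3$ term being $\frac{\delta^3}{3}\tr(\bv{A}^3)$), and then bounds the order-$4$, order-$5$, and order-$\geq 6$ terms by hand (the tedious part) to show a $(1\pm c_1/n^{12})$-approximation of $\det(\bv{B})$, minus the exactly computable $1-\frac{\delta^2\norm{\bv{A}}_F^2}{2}$, reveals whether $\tr(\bv{A}^3)>0$ — no logarithm appears anywhere. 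You instead take $X=\log Y$, turning multiplicative accuracy on $\det(\bv{B})$ into additive accuracy $O(\epsilon)$ on $\log\det(\bv{B})=\Sum_{\log}(\bv{B})$, and then recycle the Taylor/trace machinery of Theorem \ref{triangle_lowerBound} with $f=\log$. This cannot be done by citing the theorem as a black box, since its hypothesis is multiplicative in $\Sum_f(\bv{B})$, so your re-inspection step is genuinely needed; note also that the additive bound the theorem's proof actually extracts from its multiplicative hypothesis is not exactly $|X-\Sum_f(\bv{B})|\leq |c_3|\delta^3/9$ but that quantity plus an $\frac{\epsilon_1}{9}|c_3|\delta^3\tr(\bv{A}^3)$ term — your pure additive bound is stronger, so the endgame goes through a fortiori, and your own final inequality $|X - c_2\delta^2\tr(\bv{A}^2) - c_3\delta^3\tr(\bv{A}^3)|\leq |c_3|\delta^3/3$ is what closes it. What the two approaches buy: yours avoids the paper's tedious symmetric-polynomial tail bounds (the $k=4,5$ estimates) and makes transparent why the exponent degrades from $6c$ for $\log\det$ to $12c$ for $\det$ (the target is $\Theta(1)$ rather than $O(\delta^2 n^2)=O(n^{-6})$, so the multiplicative budget must shrink to $\Theta(\delta^3)=\Theta(n^{-12})$); the paper's proof is self-contained within the lemma and does not depend on reopening another proof. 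The sign difference ($\bv{I}-\delta\bv{A}$ versus the paper's $\bv{I}+\delta\bv{A}$) is immaterial.
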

\begin{proof}
Let $\bv{A} \in \R^{n \times n}$ be the adjacency matrix of an $n$-node graph $G$. Let $\lambda_1,...,\lambda_n$ denote its eigenvalues. Let $\bv{B} = \bv{I} + \delta \bv{A}$ for some $\delta$ which we will set later. We can write:
\begin{align}\label{detsum}
\det(\bv{B}) = \prod_{i=1}^n \lambda_i(\bv{B}) = \prod_{i=1}^n (1 + \delta \lambda_i) = \sum_{k=0}^n \left ( \delta^k \cdot \sum_{i_1 < i_2 < ... < i_k} \lambda_{i_1} \lambda_{i_2}...\lambda_{i_k}\right ).
\end{align}
The $k=0$ term in \eqref{detsum} is $1$, and the next two are easy to compute. $\delta \sum_{i=1}^n \lambda_i = \delta \tr(\bv{A}) = 0$, and $\delta^2 \sum_{i < j} \lambda_i\lambda_j = \frac{\delta^2}{2} \left (\sum_{i,j}  \lambda_i\lambda_j - \sum_i \lambda_i^2 \right ) = \frac{\delta^2}{2} \sum_i \lambda_i \tr(\bv{A}) - \frac{\delta^2}{2} \norm{\bv{A}}_F^2 = -\delta^2 \norm{\bv{A}}_F^2/2.$
For $k=3$ we have:
\begin{align*}
\delta^3 \sum_{i< j < k} \lambda_i \lambda_j \lambda_k &= \frac{\delta^3}{3} \left (\sum_{i< j} \lambda_i \lambda_j \tr(\bv{A}) - \sum_{i \neq j} \lambda_i^2 \lambda_j\right)\\
&= 0 - \frac{\delta^3}{3}\norm{\bv{A}}_F^2 \cdot \tr(\bv{A}) + \frac{\delta^3}{3}\tr(\bv{A}^3)\\
&= \frac{\delta^3}{3}\tr(\bv{A}^3).
\end{align*}

We will bound the $k > 3$ terms by: $\left | \delta^k \cdot  \sum_{i_1 < i_2 < ... < i_k} \lambda_{i_1} \lambda_{i_2}...\lambda_{i_k}\right | \le {n \choose k} \delta^k \lambda_1^k \le (n\delta \lambda_1)^k \le (n^2 \delta)^k$ since $\lambda_1 \le n$. However, in order to obtain a tighter result, we will use stronger bounds for $k=4,5$. These bounds are very tedious but straightforward. Specifically:
\begin{align*}
\left |\delta^4 \sum_{i < j < k < l} \lambda_i \lambda_j \lambda_k \lambda_l \right |&= \frac{\delta^4}{4} \left |\tr(\bv{A}) \sum_{i < j < k} \lambda_i \lambda_j \lambda_k - \frac{1}{2}\sum_{i \neq j \neq k} \lambda_i^2 \lambda_j \lambda_k \right |\\
&= \frac{\delta^4}{8} \left | \tr(\bv{A}) \sum_{i \neq j} \lambda_i^2 \lambda_j - \sum_{i \neq j} \lambda_i^2 \lambda_j^2 - \sum_{i\neq j} \lambda_i^3 \lambda_j \right |\\
&= \frac{\delta^4}{8} \left |  \sum_{i \neq j} \lambda_i^2 \lambda_j^2 + \tr(\bv{A}) \sum_{i\neq j} \lambda_i^3 - \sum_i \lambda_i^4 \right |\\
&= \frac{\delta^4}{8} \left | \norm{\bv{A}}_F^2 - 2 \tr(\bv{A}^4) \right | \le \frac{\delta^4 n^4}{4}.
\end{align*}
And similarly:
\begin{align*}
\left |\delta^5 \sum_{i < j < k < l < m} \lambda_i \lambda_j \lambda_k \lambda_l \lambda_m \right |&= \frac{\delta^5}{30} \left | \sum_{i \neq j \neq k \neq l} \lambda_i^2 \lambda_j \lambda_k \lambda_l \right |\\
&= \frac{\delta^5}{30} \left | 2\sum_{i \neq j \neq k} \lambda_i^2 \lambda_j^2 \lambda_k + \sum_{i \neq j \neq k } \lambda_i^3 \lambda_j \lambda_k \right |\\
&= \frac{\delta^5}{30} \left | 5\sum_{i \neq j} \lambda_i^2 \lambda_j^3 + \sum_{i \neq j } \lambda_i^4 \lambda_j\right |\\
&= \frac{\delta^5}{30} \left | 5(\sum \lambda_i^2) (\sum \lambda_i^3) - 6\sum \lambda_i^5 \right |\\
&\le \frac{\delta^5 n^2}{6} \tr(\bv{A}^3) + \frac{\delta^5}{5} \lambda_1 \sum \lambda_i^4 \\
&\le \frac{\delta^5 n^2}{6} \tr(\bv{A}^3) + \frac{\delta^5 n^5}{5}.
\end{align*}
Finally, if we set $\delta = \frac{1}{10n^4}$ then we have:
\begin{align*}
\left |\sum_{k=4}^n \left ( \delta^k \cdot \sum_{i_1 < i_2 < ... < i_k} \lambda_{i_1} \lambda_{i_2}...\lambda_{i_k}\right ) \right | &\le \frac{\delta^4 n^4}{4} + \frac{\delta^5 n^5}{5} + \frac{\delta^5 n^2}{6} \tr(\bv{A}^3) + \sum_{k=6}^\infty (n^2\delta)^k\\
& \le \delta^3 \left ( \frac{1}{40} + \frac{1}{500} + \frac{1}{600}\tr(\bv{A}^3) + \left (\frac{1}{10^3} + \frac{1}{10^5} + ... \right)\right )\\
&\le \frac{\delta^3}{30} + \frac{\delta^3}{600} \tr(\bv{A}^3).
\end{align*}
We then write:
\begin{align*}
\det(\bv{B}) &= 1 - \frac{\delta^2 \norm{\bv{A}}_F^2}{2} + \frac{\delta^3 \tr(\bv{A}^3)}{3} \pm \frac{\delta^3}{30} \pm \frac{\delta^3}{600} \tr(\bv{A}^3).
\end{align*}
Since $1 \le \delta^3 \cdot 10^3 n^{12}$ and $\frac{\delta^2 \norm{\bv{A}}_F^2}{2} \le \delta^3 \cdot 5 n^6$
if we compute $X \in (1\pm c_1/n^{12})\det(\bv{B})$ for sufficiently small constant $c_1$ and subtract off $\left (1 - \frac{\delta^2 \norm{\bv{A}}_F^2}{2}\right)$, we will be able to determine if $\tr(\bv{A}^3) > 0$ and hence detect if $G$ has a triangle. So any algorithm approximating $\det(\bv{B})$ to $(1\pm \epsilon)$ error in $O(n^\gamma \epsilon^{-c})$ time yields a triangle detection algorithm running in $O(n^{\gamma + 12c})$ time.
\end{proof}
\section{Krylov and Power Methods preserve Invariant Norms}\label{sec:Krylovpreserve}
\newcommand\nc\newcommand
\nc\eps{\epsilon} \nc\PhI{\Phi_I}\nc\del{\delta}
\nc\Om{\bv{\Omega}}

In this section, we show that when the Krylov method~\cite{musco2015randomized} or the power method is 
 used to deflate the top singular vectors of the matrix,  any unitarily invariant norm 
 of the tail (remaining part of the spectrum) is preserved.
 Let $\bv{P}=\bv{Z}\bv{Z}^\top$ be the projector obtained for the top $k$ singular vectors of $\bv{A}$ using
 the Krylov method~\cite{musco2015randomized} or the power method. Then, for any  invariant norm 
 of the tail to be preserved, we just need to show the following:
 \begin{align}\label{mainBound}
\sigma_i({(\bv{I}-\bv{P})\bv{A}}) \le (1+\epsilon) \sigma_i({\bv{A}}-{\bv{A}}_k) + \frac{\epsilon}{n}\sigma_1({\bv{A}}).
\end{align}
 Since $\bv{P}$ is a $(1+\eps)$ approximation obtained
in terms of the spectral norm ,
we have by the spectral low rank approximation guarantee of \cite{musco2015randomized}
that $\sigma_1((\bv{I}-\bv{P})\bv{A}) \leq (1+\epsilon) \sigma_1(\bv{A}-\bv{A}_k)$,
where $\bv{A}_k$ is the best rank $k$ approximation of $\bv{A}$. 
Consider first any $i$ for which
$\sigma_i(\bv{A}-\bv{A}_k) \geq (1-\epsilon)\sigma_1(\bv{A}-\bv{A}_k) = (1-\epsilon) \sigma_{k+1}(\bv{A})$. Then we have, 
\[
 \sigma_i((\bv{I}-\bv{P})\bv{A}) \leq \sigma_1((\bv{I}-\bv{P})\bv{A}) \leq (1+\eps) \sigma_{k+1}(\bv{A}) 
 \leq \frac{(1+\eps)}{(1-\eps)} \sigma_i(\bv{A}-\bv{A}_k)\leq (1+3\eps)\sigma_i(\bv{A}-\bv{A}_k).
\]
For any $i$ with $\sigma_i(\bv{A}-\bv{A}_k) < (1-\epsilon) \sigma_{k+1}(\bv{A})$, i.e.,
with $\sigma_{i+k}({\bv{A}}) \le (1-\epsilon) \sigma_{k+1}({\bv{A}})$
there is a large ($> \epsilon$)
relative gap between this singular value and $\sigma_k$.
 We have by the min-max characterization of singular values:
\begin{align*}
\sigma_i(({\bv{I}}-{\bv{P}}){\bv{A}}) &= \min_{{\bv{Y}} | rank({\bv{Y}}) = n-i+1} \left ( \max_{\bv{y} \in span({\bv{Y}}) |
\norm{\bv{y}}_2 = 1} \norm{\bv{y}^T({\bv{I}}-{\bv{P}}){\bv{A}}}_2 \right )
\end{align*}
If we just set ${\bv{Y}} = {\bv{Z}}$, then we have:
\begin{align*}
\sigma_{n-k+1}(({\bv{I}}-{\bv{P}}){\bv{A}}) \le \max_{\bv{y} \in span({\bv{Z}}) | \norm{\bv{y}}_2 = 1}
\norm{\bv{y}^T({\bv{I}}-{\bv{P}}){\bv{A}}}_2 = 0.
\end{align*}
It follows that the bottom $k$ singular values are all $0$, and so equal to the
bottom $k$ singular values of $\sigma_i({\bv{A}}-{\bv{A}}_k)$.

Now, for $i < n-k+1$, 
let ${\bv{U}}_U$ denote the top $i+k-1$ singular vectors of ${\bv{A}}$ and ${\bv{U}}_L$ denote 
the bottom $n-(i+k-1)$ singular vectors. 
We set ${\bv{Y}} = [{\bv{Z}}, {\bv{U}}_L]$. First, we note that ${\bv{Y}}$ has 
$k + n- (i+k-1) = n-i+1$ columns and also $\rank({\bv{Y}}) = n - i + 1$. 
This is because, if we consider ${\bv{Y}}^T {\bv{Y}}$, the top left
$k \times k$ blocks is ${\bv{Z}}^T{\bv{Z}} = {\bv{I}}$ and the bottom right $n-i-k +1$ 
block is ${\bv{U}}_L^T {\bv{U}}_L = {\bv{I}}$. The off-diagonal entries in the top
right and bottom left blocks are all bounded by $1/\poly(n)$; the proof of this
latter statement is given in 
the latter part of this section, where we bound $\|\bv{Z}^T\bv{U}_L\|_2$. By the 
Gershgorin circle theorem~\cite{golub2012matrix}, all eigenvalues of ${\bv{Y}}^T{\bv{Y}}$ are in the 
range $1 \pm 1/\poly(n)$ and so the matrix is full rank.
So, we have:
\begin{align}\label{maxMinBound}
\sigma_{i}(({\bv{I}}-{\bv{P}})\bv{A}) \le \max_{\bv{y} \in span({\bv{Y}}) | \norm{\bv{y}}_2 = 1} \norm{\bv{y}^T({\bv{I}}-{\bv{P}}){\bv{A}}}_2.
\end{align}

Next, we can write $\bv{y} = [{\bv{Z}}, {\bv{U}}_L] \bv{w}$ for some $\bv{w}$. By our argument above, 
every singular value of $[{\bv{Z}}, {\bv{U}}_L]$ lies in $1 \pm 1/\poly(n)$. Then, we have
$\norm{\bv{w}}_2 \le 1 + 1/\poly(n)$. Splitting $\bv{w} = \bv{w}_1 + \bv{w}_2$, where $\bv{w}_1$
contains the first $k$ coordinates of the vector and $\bv{w}_2$ contains the rest, we have:
\begin{align*}
({\bv{I}}-{\bv{P}}) {\bv{y}} &= ({\bv{I}}-{\bv{P}}) {\bv{Z}} \bv{w}_1 + ({\bv{I}}-{\bv{P}}) {\bv{U}}_L \bv{w}_2\\
&= {0} + ({\bv{I}}-{\bv{P}}) {\bv{U}}_L \bv{w}_2\\
&= {\bv{U}}_L\bv{w}_2 - {\bv{Z}\bv{Z}}^T {\bv{U}}_L \bv{w}_2.
\end{align*} 
Then, we have
$$\norm{{\bv{Z}\bv{Z}}^T {\bv{U}}_L \bv{w}_2}_2 \le \norm{{\bv{Z}}^T {\bv{U}}_L}_2 \cdot \norm{\bv{w}_2}_2 \le 1/\poly(n),$$
where the inequality comes from bounding $\norm{{\bv{Z}}^T {\bv{U}}_L}_2$
using the fact that all its entries are at most $1/\poly(n)$ (see the latter part of this section),
and
bounding $\norm{\bv{w}_2}_2 \le \norm{\bv{w}}_2 \le 1+ 1/\poly(n)$.
Thus, finally we obtain:
\begin{align*}
\norm{{\bv{y}}^T ({\bv{I}}-{\bv{P}}) {\bv{A}}}_2 
&\le \norm{\bv{w}_2^T {\bv{U}}_L^T {\bv{A}}}_2 + \norm{\bv{w}_2^T {\bv{U}}_L^T {\bv{Z}\bv{Z}}^T {\bv{A}}}_2\\
&\le (1+1/\poly(n)) \cdot \norm{{\bv{U}}_L^T {\bv{A}}}_2 + 1/\poly(n) \cdot \norm{{\bv{A}}}_2\\
&\le  (1+1/\poly(n))\sigma_i({\bv{A}}-{\bv{A}}_k) + \frac{1}{\poly(n)}\sigma_1({\bv{A}}).
\end{align*}

Plugging into \eqref{maxMinBound} gives the proof of \eqref{mainBound}.
Note that we can assume $\frac{1}{\poly(n)} \le \frac{\epsilon}{n}$ 
since if $\epsilon = o(1/\poly(n))$, we can just compute the SVD. 

\paragraph{Bound for  $\norm{{\bv{Z}}^T {\bv{U}}_L}_2$ :} 
  Let $\bv{Z}$ be an orthonormal basis for $\bv{A}^q\Om$ from the power method
  (or $p_q(\bv{A})\Om$ from block Krylov method).
  Then, for any singular vector $\bv{u}_i$ with corresponding singular value
  $\sigma_i(\bv{A})$ with $\sigma_i(A) \leq (1-\eps)\sigma_k(\bv{A})$, we have $\|\bv{u}_i^T\bv{Z}\|_2 \leq 1/\poly(n)$.

To see this, we first write $\bv{A}^q\Om = \bv{Z} \bv{T} \bv{W}^T$ in its SVD form. Then, we have
$\|\bv{u}_i^T \bv{A}^q \Om\|_2 = \|\bv{u}_i^T \bv{Z} \bv{T}\|_2 \geq \|\bv{u}_i^T \bv{Z}\|_2 \tau_{k,k}$, where $\tau_{k,k}$
is the $k$th diagonal entry of $\bv{T}$. 
On the other hand, we also have $\|\bv{u}_i^T \bv{A}^q \Om\|_2 = \sigma_i(\bv{A})^q \|\bv{v}_i^T \Om\|_2$, 
where $\bv{v}_i^T$ is the right singular vector corresponding to $\bv{u}_i$ of $\bv{A}$.
Since $\Om$ is i.i.d. Gaussian, we have $\|\bv{v}_i^T \Om\|_2^2 = O(k \log n)$ 
with probability $1-1/n^2$ 
for a fixed $\bv{v}_i$ (see proof of Lemma 2.12 in~\cite{w14}). 
So, we can union bound over all $n$ right singular vectors 
and the relation holds for all right singular vectors $\bv{v}\in \bv{V}$. We condition on this event. 

This event implies that
$\|\bv{u}_i^T \bv{A}^q \Om\|_2 = O(\sigma_i(\bv{A})^q \sqrt{k \log n})$. It follows that, if we show
$\tau_{k,k} \geq  \sigma_i(\bv{A})^q \poly(n)$, then$ \|\bv{u}_i^T \bv{Z}\|_2 \leq 1/\poly(n)$.

Next, to show that $\tau_{k,k} \geq  \sigma_i(\bv{A})^q \poly(n)$, it suffices to show 
$\tau_{k,k} \geq \sigma_k(\bv{A})^q \sqrt{k}$. Indeed, note that 
for $q = \Theta((\log n) / \eps)$ 
(for the block Krylov method, we need to use higher number
of iterations than required, which is
$q = \Theta((\log n) / \sqrt{\eps})$~\cite{musco2015randomized}), 
by definition of $i$, we have $\sigma_k(\bv{A}^q)/\sigma_i(\bv{A}^q) \geq \poly(n)$.

For this, we have
$\tau_{k,k}=\sigma_k(\bv{U} \bv{\Sigma}^q \bv{V}^T \Om)\geq \sigma_k (\bv{U}_k \bv{U}_k^T \bv{U} \bv{\Sigma}^q \bv{V}^T \Om)$ 
since $\sigma_i(\bv{P}\bv{A}) \leq \sigma_i(\bv{A})$
for any projection matrix $\bv{P}$ and any matrix $\bv{A}$.
Then, we have
$\sigma_k (\bv{U}_k \bv{U}_k^T \bv{U} \bv{\Sigma}^q \bv{V}^T \Om)= \sigma_k (\bv{U}_k \bv{\Sigma}_k^q \bv{V}_k^T \Om)
= \sigma_k (\bv{\Sigma}_k^q \bv{V}_k^T \Om)$ 
by the definition of the SVD, and that the columns of $\bv{U}_k$ are orthonormal.
So, we get
$\tau_{k,k}\geq \sigma_k(\bv{\Sigma}_k^q \bv{H})$, where $\bv{H}=\bv{V}_k^T\Om$ is a $k \times k$
matrix with i.i.d. Gaussian distribution
due to the rotational invariance of the Gaussian distribution. 
Since the $k$-th singular value is the smallest singular value of $\bv{H}$ and $\bv{\Sigma}_k$, we then have
$\sigma_k(\bv{\Sigma}_k^q \bv{H})\geq \sigma_k^q(\bv{A}) \sigma_k(\bv{H}) \geq \sigma_k^q \sqrt{k}\cdot C$
with probability at least $9/10$, for an arbitrary constant $C>0$, using
standard properties of minimum singular values of squared Gaussian matrices
(see e.g., Fact 6 in section 4.3 of~\cite{w14}). 
Thus, we have $\tau_{k,k}\geq C\sqrt{k}\sigma_k^q(\bv{A})$, which completes the proof. 

\section{Additional Proofs}\label{sec:generalAppendix}

\begin{repclaim}{claim:hypo}
	Let $f: \R^+ \rightarrow \R^+$ be a $\delta_f$-multiplicatively smooth function on the range $[a,b]$. For any $x,y \in [a,b]$ and $c \in (0,\frac{1}{3\delta_f})$ $$y \in \left [(1-c)x, (1+c)x\right] \Rightarrow f(y) \in [(1-3\delta_f c) f(x), (1+3\delta_f c) f(x)].$$
\end{repclaim}
\begin{proof}
Let $R$ denote the range $[\min(x,y), \max(x,y)]$. For $y \in \left [(1-c)x, (1+c)x\right]$ we have:
\begin{align}\label{eq:z1}
|f(x) - f(y)| \le |x - y| \cdot \sup_{z \in R} |f'(z)| \le c x \cdot \frac{\delta_f\sup_{z \in R} f(z)}{\min(x,y)} \le \frac{c\delta_f\sup_{z \in R} f(z)}{1-c}.
\end{align}
Similarly, letting ${Z} = \sup_{z \in R} f(z)$ we have:
\begin{align*}
{Z} - f(x) \le | x - y | \frac{\delta_f{Z}}{\min(x,y)} \le \frac{c\delta_f {Z}}{1-c} \le \frac{c}{\delta_f}{1-1/3-1/3}
\end{align*}
and hence ${Z} \le \frac{1-c}{1-(\delta_f + 1)c} f(x)$. So plugging into \eqref{eq:z1} we have:
\begin{align*}
|f(x) - f(y)| \le \frac{c \delta_f }{1-(\delta_f+1)c} \cdot f(x) \le \frac{c\delta_f}{1-1/3 -1/3} f(x) =3\delta_f f(x)
\end{align*}
which gives the result.
\end{proof}

\end{document}